\newtheorem{theorem}{Theorem}
\newtheorem{corollary}{Corollary}[theorem]
\newtheorem{definition}{Definition}
\newtheorem{lemma}{Lemma}
\newtheorem{fact}{Fact}
\newtheorem{claim}{Claim}
\def\Hs{\mathsf{H}}
\def\M{\mathcal{M}}
\def\Z{\mathbb{Z}}
\def\C{\mathcal{C}}
\def\D{\mathcal{D}}
\def\B{\mathcal{B}}
\def\R{\mathcal{R}}
\def\L{\mathcal{L}}
\DeclareMathOperator*{\Motimes}{\text{\raisebox{0.25ex}{\scalebox{0.8}{$\bigotimes$}}}}
\newcommand{\ZZ}{\mathbb{Z}}
\newcommand{\be}{\begin{equation}}
\newcommand{\ee}{\end{equation}}
\newcommand{\bc}{\begin{center}}
\newcommand{\ec}{\end{center}}
\newcommand{\nin}{\noindent}
\newcommand{\non}{\nonumber}
\newcommand{\lo}{\overline}
\newcommand{\as}{\mathsf{a}}
\newcommand{\bs}{\mathsf{b}}
\newcommand{\cs}{\mathsf{c}}
\newcommand{\ds}{\mathsf{d}}
\newcommand{\fs}{\mathsf{f}}
\newcommand{\gs}{\mathsf{g}}
\newcommand{\cA}{\mathcal{A}}\newcommand{\cB}{\mathcal{B}}
\newcommand{\cC}{\mathcal{C}}
\definecolor{dualblue}{RGB}{3,101,192}
\newlength\lrvec@height
\newlength\lrvec@width
\newif\iflrvec@same@height
\def\lrvec{\@ifstar\slrvec@\lrvec@}
\newcommand{\slrvec@}[2][.4ex]{
  \lrvec@same@heighttrue
  \mathpalette\lrvec@@{{#1}{#2}}
}
\newcommand{\lrvec@}[2][.4ex]{
  \lrvec@same@heightfalse
  \mathpalette\lrvec@@{{#1}{#2}}
}
\def\lrvec@@#1#2{\lrvec@@@#1#2}
\def\lrvec@@@#1#2#3{%
  \iflrvec@same@height
    \settoheight{\lrvec@height}{$\m@th#1 \mathbf{T}#3$}
  \else
    \settoheight{\lrvec@height}{$\m@th#1#3$}
  \fi
  \settowidth{\lrvec@width}{$\m@th#1#3$}
  \kern.08em
  \raisebox{#2}{\raisebox{\lrvec@height}{\rlap{%
    \kern-.05em
    \begin{tikzpicture}[<-> /.tip={To[width=.4em, length=.2em]}]
      \draw [<->] (-.05em,0)--(\lrvec@width+.05em,0);
    \end{tikzpicture}%
  }}}%
  #3
  \kern.08em
}
\begin{document}

\title{A topological theory for qLDPC: non-Clifford gates and magic state fountain on homological product codes with constant rate and beyond the $N^{1/3}$ distance barrier}

\author{Guanyu Zhu}
\affiliation{IBM Quantum, T.J. Watson Research Center, Yorktown Heights, NY 10598 USA}

\begin{abstract}

We develop a topological theory for fault-tolerant quantum computation in quantum low-density parity-check (qLDPC) codes. We show that there exist hidden simplicial or CW complex structures encoding the topological data for all qLDPC and CSS codes obtained from product construction by generalizing the Freedman-Hastings code-to-manifold mapping. This is achieved by building manifolds  from the Tanner graphs of the skeleton classical or quantum codes, which further form a product manifold and an associated thickened product code defined on its triangulation. One can further deformation retract the manifold back to a CW complex which supports a non-topological code with minimal overhead suitable for near-term implementation. Both types of codes admit  cohomology operations including cup product which can induce non-Clifford gates. When applying this mapping to a 3D hypergraph product code obtained from the product of 3 copies of good classical expander codes, we obtain non-Clifford logical CCZ gates via constant depth circuits on a code with constant stabilizer weight $w=O(1)$,  constant rate $K=\Theta(N)$, and polynomial distance $D=\Omega(N^{1/3})$. When applied to logical CCZ on 3D homological product codes consisting of the product of a pair of good quantum and classical LDPC codes, we can further improve the distance  to $D=\Omega(\sqrt{N})$ exceeding the  $N^{1/3}$ distance barrier implied by the Bravyi-König bound for conventional topological codes with the aid of non-Euclidean geometries. 
Our work suggests that it is feasible to  apply native logical non-Clifford gates on qLDPC codes  or directly inject high-fidelity magic states as resources (`\textit{magic state fountain}') without the distillation process. For the homological product  construction, the fountain can inject $\Theta(\sqrt{N})$ magic states in parallel in a single round. 




\end{abstract}

\maketitle

\tableofcontents

\section{Introduction}

With the recent advancement of quantum computing technology,  we have entered the era of fault-tolerant quantum computing at the scale of $O(10)$ to $O(100)$ qubits \cite{Bravyi:2024wc}.  A fundamental question towards further scaling up fault-tolerant quantum computation is how to minimize the space-time overhead. 

In recent years, significant progress has been made on the theory of \textit{quantum low-density parity check}  (qLDPC) codes in terms low-overhead quantum information storage \cite{fiberbundlecode21,9490244,hastingswr21,pkldpc22,Quantum_tanner, lh22,guefficient22,dhlv23,lzdecoding23,gusingleshot23, Bravyi:2024wc}.  This includes the discovery of the \textit{asymptotically good qLDPC codes} by Panteleev and Kalachev \cite{pkldpc22} achieving the optimal storage with constant space overhead and linear distance (see also \cite{Quantum_tanner}).   Nevertheless, fault-tolerant quantum computation requires not only a quantum memory, but also logical operations on top of that.  A fundamental question is hence whether there exists an \textit{asymptotically good quantum processor} which has constant space-time overhead in the computation with parallelizable logical gates and also linear distance, or some construction closely approaches that \cite{gottesman14, nguyen2024quantum}.  

Currently, there have been some ongoing efforts on the study of fault-tolerant logical gates on qLDPC codes \cite{cohen22, huang2023homomorphic,  xu2024fast, cross2024improved, williamson2024low, swaroop2024universal}.   The majority of them focus on performing logical measurements such as lattice-surgery-based protocols \cite{cohen22,  cross2024improved, williamson2024low, swaroop2024universal} and homomorphic measurements \cite{huang2023homomorphic, xu2024fast} to implement logical Clifford gates.  Additional schemes of implementing non-Clifford gates such as the magic state distillation \cite{bravyi2012magic} are required to make the fault-tolerant computation universal.   A brute-force approach would be to inject and distill the magic states with 2D surface codes and then SWAP them into the qLDPC code block. However, the state injection in this case cannot be parallelizable without increasing the space overhead, and one has not fully leveraged the power of qLDPC codes.    Therefore, a parallelizable scheme for magic state injection in qLDPC codes is highly desirable. 

The alternative to magic state distillation is to apply native transversal non-Clifford gates or directly inject high-fidelity magic states in qLDPC without distillation (`\textit{magic state fountain}') \cite{zhu2023non},  which can eliminate the costly space-time overhead of performing multiple rounds of distillation.   A well-known scheme is to apply the transversal $T$ gate on a 3D color code \cite{bombin:2007eh, 
Bombin_2015, Kubica2015, Kubica:2015br, bombin2018transversal, bombin20182d}, or equivalently the transversal CCZ gate on three copies of 3D surface codes \cite{Kubica:2015br, Vasmer2019}.  However, such a scheme requires an additional $O(d)$ space-time overhead  compared to the 2D surface code. The code distance $d$ in this case scales as $d=O(N^{\frac{1}{3}})$ where $N$ is the total number of qubits in a single code block, which scales worse than a 2D surface code $d=O(\sqrt{N})$.  Detailed numerical comparisons have been made between these two approaches, and it was shown that the 3D color code only outperforms the surface-code magic state distillation at an error rate much lower than the threshold \cite{Kubica:2021}.  This fundamental $N^{\frac{1}{3}}$~-distance barrier is implied by the Bravyi-Konig bound \cite{Bravyi:2013dx} stating that for an $n$-dimensional topological code defined on an $n$-dimensional Euclidean lattice, the logical gates have to lie within the $n^\text{th}$ level of Clifford hierarchy.  Therefore, it is so far a significant challenge to find native logical non-Clifford gates that exceeds the $N^{\frac{1}{3}}$-distance barrier.  A crucial observation in this paper is that qLDPC codes essentially correspond to highly non-Euclidean geometry, and it is hence possible to go beyond the $N^{\frac{1}{3}}$-distance barrier, in particular,  a non-Clifford logical gate is found for a family of qLDPC codes with distance $D=\Omega(\sqrt{N})$.  

Another key insight in this work is that qLDPC code is not only efficient for quantum information storage, but also extremely efficient for producing and storing high-fidelity resource states such as magic states.   For example, when using a specific family of constant-rate qLDPC codes with code parameters $[[N, \Theta{(N)}, \Omega(\sqrt{N})]]$ as a magic state fountain, one can inject $\Theta(\sqrt{N})$ magic states in parallel in a single round with an effective distance $D=\Omega(\sqrt{N})$. 

The first scheme of applying non-Clifford logical gate to a high-rate qLDPC code has been proposed recently in Ref.~\cite{zhu2023non}, where collective logical CCZ gates have been realized on homological qLDPC codes defined on 3-manifolds (the quasi-hyperbolic codes) with almost-constant rate (up to logarithmic reduction) and logarithmic distance. A further improvement by a logarithmic factor and achieving a constant encoding rate along with a logarithmic code distance by combining the quasi-hyperbolic codes with the quantum rainbow code has also been proposed in Ref.~\cite{scruby2024quantum}. Moreover, It has been realized in Ref.~\cite{zhu2023non} that a logical CCZ gate implemented by a constant-depth circuit in three identical copies of homological qLDPC codes can be understood as a \textit{cohomology operation} corresponding to a 3-fold \textit{cup product}. Same correspondence also exists for the transversal $T$ gate in a 3D color code. Physically,  this logical gate corresponds to the emergent higher symmetry  \cite{gaiotto2014, benini2019, cordova2022, mcgreevy2022,  barkeshli2023codimension} in a topological quantum field theory (TQFT): the $\ZZ_2^3$ gauge theory,  as has been studied in  Refs.~\cite{barkeshli2023codimension, barkeshli2022higher}. Such a higher symmetry also corresponds to sweeping a gauged symmetry-protected topological (SPT) defect  \cite{Yoshida_gate_SPT_2015, Yoshida_global_symmetry_2016, Yoshida2017387,   Webster_gates_2018, barkeshli2023codimension, barkeshli2022higher}.   A recent work has further explored and  classified logical gates in quantum codes via cohomology operations, which can go beyond the $k$-fold cup products corresponding to the color-code paradigm \cite{Hsin2024:classifying}.  

In algebraic topology, the cohomology operations such as cup products are defined on a \textit{simplicial complex} structure \cite{Hatcher:2001ut} including the special case of a triangulated manifold.  The homological qLDPC codes defined on a manifold naturally admit such cohomology operations as well as the TQFT description.   On the other hand, a large class of qLDPC codes with desirable parameters are defined on a general chain complex that has a large expansion property, including the homological product of expander graphs \cite{Tillich:2014_hyergraph_product, Bravyi:2014bq} or high-dimensional expanders \cite{pkldpc22, lubotzky2018high}.    Nevertheless, a recent breakthrough by Freedman and Hastings \cite{freedman:2020_manifold_from_code} has unified these two different worlds: one relies on the systolic geometry of manifolds while the other focuses on the combinatorics of expanders.  In particular, they show that any qLDPC code that is sparsely liftable can be mapped to a manifold with minimal dimension 11 which has \textit{bounded local geometry} (the corresponding triangulation has bounded degree).  This code-to-manifold  mapping essentially erodes the distinction between general qLDPC (or even more generally CSS) codes and homological qLDPC codes defined on manifolds.  In particular, the expansion properties and code parameters in the expander-based qLDPC codes are inherited by the homological codes defined on the manifold produced by the mapping.   Nevertheless,  this mapping is only applicable to a code defined on a 2D (3-term) chain complex. In order to apply non-Clifford gate which is at least in the third level of Clifford hierarchy, one needs to use a quantum code defined on a 3D (4-term) chain complex.

In this paper, we further generalize the mapping in Ref.~\cite{freedman:2020_manifold_from_code} to a mapping from a classical code to a manifold, which has also been suggested in Ref.~\cite{freedman:2020_manifold_from_code}. The classical code is associated with a \textit{Tanner graph}, which is a bipartite graph with two types of  vertices corresponding to the bit and check variables respectively.  It is also equivalent to a \textit{hypergraph} where checks are placed on the vertices and bits on the hyperedges.  Unfortunately, a general hypergraph is not a simplicial complex, and hence does not simply  admit cohomology operation such as cup products. One could consider using a classical homological code defined on a graph where the bits are all placed on edges instead of hyperedges; however, it is proven that such classical codes cannot be good, i.e., having linear distance,  since the corresponding graph will have 1-cycles with a logarithmic upper bound in size \cite{Meshulam:2018}. One way around this is to use the Sipser-Spielman construction \cite{Sipser_Spielman} to place a local code on each vertex, which is mathematically equivalent to a sheaf \cite{Meshulam:2018}. However, cohomology operation such as cup product can only be defined on a specific type of sheaf codes satisfying certain combinatorial condition \cite{lin2024transversal, golowich2024quantum}. Another way out which is more generally applicable is to go to a high-dimensional topological expander \cite{lubotzky2018high} associated with a higher-dimensional simplicial complex (dim $> 2$) \cite{Meshulam:2018}. Indeed it is  possible to construct good classical codes on high-dimensional random simplicial complex, as shown in Ref.~\cite{10.13069/jacodesmath.617235} (based on the random complex constructions in Refs.~\cite{10.1007/s00454-017-9926-3, 10.48550/arxiv.1010.1400, 10.1007/s00493-006-0027-9, 10.48550/arxiv.math/0609773, 10.48550/arxiv.math/0609773}), although these codes are not LDPC. Here, we instead consider building a higher-dimensional manifold (8-manifold) following the general spirit in Ref.~\cite{freedman:2020_manifold_from_code}, and obtain the simplicial complex from its triangulation. We start with the corresponding Tanner graph (or equivalently a hypergraph) of a classical LDPC code as the skeleton and then `thicken' it into a manifold via handle construction. The manifold has a bounded local geometry due to the sparseness of the input Tanner graph equivalent to the LDPC condition, and hence admits a triangulation with bounded degree.  This in turn gives rise to families of good classical LDPC codes on simplicial complexes that  are high-dimensional topological expanders \cite{lubotzky2018high} without the need of local codes (Theorem \ref{theorem:good_code_on_manifold}), which is interesting in its own right.

\begin{figure}[t]
\includegraphics[width=1\columnwidth]{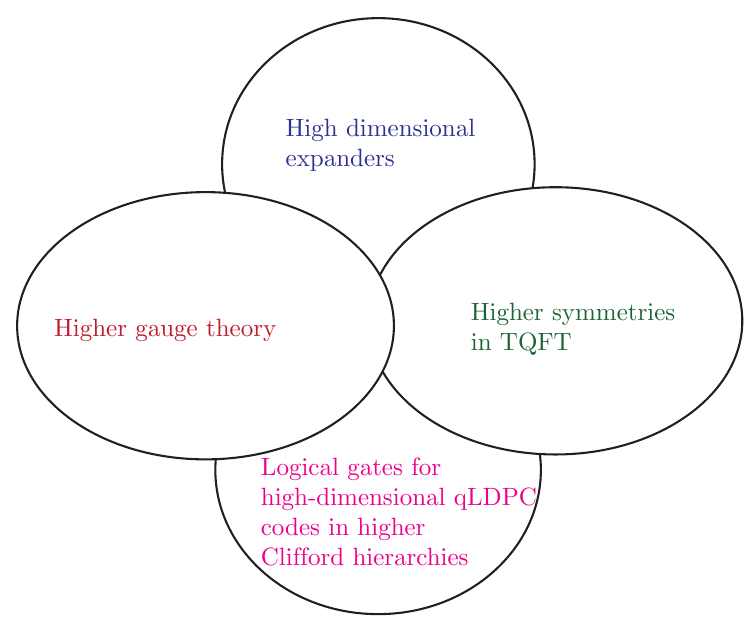}
\caption{Several `higher structures' across the area of physics, computer science and mathematics are connected through this work. }\label{fig:van-diagram}
\end{figure}

One can then use the manifolds built from classical or quantum codes as Legos to further construct a product manifold.  For an input qLDPC code obtained from a product construction (i.e., via a homological product \cite{Bravyi:2014bq} or more generally balanced product \cite{9490244}), one can use this method to build a thickened qLDPC code defined on the product of the manifolds built from either classical or quantum codes.  This gives access to the mapping of higher-dimensional qLDPC codes that go beyond a 2D (3-term) chain complex to manifolds.   

With the simplicial complex structure obtained from the manifold triangulation, we are now able to introduce cohomology operations including the triple cup products, which correspond to the emergent higher symmetries in a higher gauge theory equivalent to a qLDPC code with qubits placed on higher-dimensional simplices (dimensions equal or higher than two).  This gives rise to logical gates in higher Clifford hierarchies. Quite interestingly, we can see the deep connection between a few higher-dimensional structures across the area of physics, computer science and mathematics \cite{lubotzky2018high} through this work, as illustrated in Fig.~\ref{fig:van-diagram}.   

One technical aspect is that the original construction in Ref.~\cite{freedman:2020_manifold_from_code} considers high-dimensional manifolds (at least 11D) in order to have a separation in dimension with the spurious homologies,  which is important from the perspective of systolic geometry and also necessary to avoid short distance in the corresponding subspace code.  However, for the purpose of this paper, we can instead consider a subsystem-code encoding and treat the logical qubits with short logical operators as gauge qubits. In this way, we can use a manifold with much lower dimension (4D) obtained from the classical code to construct the qLDPC codes via product construction.  Practically this can further reduce the average/maximal degree of the triangulations and hence the average/maximal stabilizer weight of the constructed codes.         

In this work, we have obtained two types of qLDPC constructions.  The skeleton of the first construction is based on a 3D hypergraph product code \cite{Tillich:2014_hyergraph_product, Quintavalle:2020_3D} obtained from the homological product of three identical good classical expander codes.  One further thickens each factor classical code into a 4-manifold, and then obtain a thickened 3D hypergraph product code defined on the triangulation of a 12-manifold (qubits  placed on 4-simplices) with constant stabilizer weight $w=O(1)$, constant encoding rate (linear dimension) $K=\Theta(N)$, polynomial subsystem-code distance $D=\Omega(N^{\frac{1}{3}})$, where $K$ and $N$ represent the number of logical and physical qubits respectively.  Then a constant-depth circuit implementing the cohomology operation of a triple cup product between three identical copies of the thickened codes lead to a collective logical CCZ gate, consisting of $\Theta(N)$ CCZ's which equal the total number of $\ZZ_2$ triple intersection points, and  addressing $\Theta(N^{\frac{2}{3}})$ logical qubits (Theorem \ref{theorem:3Dhypergraph}).  The partial addressing issue is resolved in the second construction. 

We now consider the second construction, which is based on a 3D homological product code obtained from the product of a good classical expander code and a good quantum LDPC code.  The classical and quantum codes are then mapped to the 4-manifold and 11-manifold respectively, which gives rise to the thickened homological product code defined on a 15-manifold with constant stabilizer weight $w=O(1)$, constant encoding rate $K=\Theta(N)$, and subsystem-code distance $D=\Omega(\sqrt{N})$ which go beyond the $N^{\frac{1}{3}}$-distance barrier implied by the Bravyi-König bound for conventional topological codes defined on Euclidean lattices \cite{Bravyi:2013dx}. This is possible due to the fact that the good quantum LDPC code in Ref.~\cite{pkldpc22} is constructed from a twisted product (or equivalently a balanced product code). The corresponding product manifold built out of this is highly non-Euclidean and the cycles can have much larger size than the case of a Cartesian product. This thickened code is a tensor product of three copies of non-identical qLDPC codes corresponding to different higher gauge theories where the qubits are placed on 6-simplices, 2-simplices and 7-simplices respectively.  The triple-cup product cohomology operation between these three non-identical copies gives rise to a collective logical CCZ gates addressing all the $K=\Theta(N)$ logical qubits and contain in total $\Theta(N)$ CCZs (Theorem \ref{theorem:homological_product}).   

For the homological product code construction, we further investigate the magic state fountain scheme first envisioned in Ref.~\cite{zhu2023non}, where we can directly inject $\Theta(\sqrt{N})$ non-overlapping high-fidelity CCZ magic states with effective distance $\Omega(\sqrt{N})$ into the qLDPC code in parallel in each single round (Corollary \ref{corollary:homological}). Both the injection rate and fidelity (effective distance) outperform those for the 3D topological color codes defined on a 3D cube with code parameters $[[N,3, \Theta(N^{1/3})]]$, where one can only inject a single CCZ magic state with effective distance $\Theta(N^{1/3})$  (see Sec.~\ref{sec:fountain} for more detailed comparison). We further show how to perform gate teleportation to implement logical CCZ gates using these magic states as resources.     Although we have not reached the optimal $\Theta(N)$ injection rate per round, we note that currently it is not the bottleneck for fault-tolerant computation on qLDPC codes, since so far there is no fully parallelizable logical measurement scheme that can implement $\Theta(N)$ logical Clifford gates (generating the whole Clifford group) in a single logical cycle.  A naive estimate,  assuming non-overlapping logical Pauli measurements can be done in parallel, would lead to at most $O(\sqrt{N})$ logical gates per logical cycle for a qLDPC code with $O({\sqrt{N}})$ distance, which just coincides with the injection rate of the magic state fountain.  Further improvement in the injection rate to $\Theta(N)$ per round would require us to introduce more separable triple intersection structure into the manifold, likely via choosing more non-trivial maps during the handle attachment.    

The above discussion focuses on the asymptotic regime, while cautious readers may question the practicality in near-term implementation since the dimension looks high and the  overhead of subdividing the manifold into triangulations may be a large constant. To eliminate this worry, we further show that one can \textit{deformation retract} the manifold to a cellular chain complex $\L_c$ (often called CW complex \cite{Hatcher:2001ut}) as hinted in Ref.~\cite{freedman:2020_manifold_from_code} and elaborated in Ref.~\cite{guemard2025lifting}, which is isomorphic to the handle chain complex $\L_h$ used for the handle construction.  Each $k$-handle is retracted to its core---the $k$-cell. Due to the isomorphism, the classical or quantum code $\C$ defined on the CW complex is completely the same as the skeleton code $\bar{\C}$, while the hidden CW complex structure has mathematically well-defined cup product which can implement logical gates.  Note the code $\C$ on the CW complex is non-topological,  since $\L_c$ is no longer a discretization of the manifold like the triangulation. This compact realization makes the near-term implementation practical and has minimal overhead.   Interestingly, the isomorphism between $\L_c$ and $\L_h$ presdrves the Poincar\'e duality and $\L_c$ is hence a \textit{Poincar\'e complex}.  

For the conceptual understanding rather than practical purpose, we can also obtain a subspace code construction instead of using subsystem-code encoding for the thickened homological product.  This is achieved in the follow-up paper \cite{zhu2025transversal}.

The work is organized as follows.  In Sec.~\ref{sec:gates_simplicial_complex},  we introduce the general theory of logical gates implemented via cohomology operations on a CSS code defined on a simplicial complex including the case of triangulation of a manifold.  The theory uses an operator-valued cochain formalism, physically corresponding to a gauge field formalism, which has been previously introduced in Refs.~\cite{zhu2023non, barkeshli2023codimension, barkeshli2022higher}.  In particular, we show the construction of triple cup products in a higher gauge theory equivalent to three non-identical copies of CSS codes, which will be used in constructing the non-Clifford logical gates in the two qLDPC codes introduced in this paper. The formalism also shows explicitly how to construct the constant-depth circuits composed of overlapping physical CCZ gates which in term give rise to the logical CCZ gates. In Sec.~\ref{sec:hypergraph-product-code}, we introduce a 3D hypergraph product construction based on good random classical expander codes. In Sec.~\ref{sec:non-Clifford_hypergraph}, we introduce the technique of building manifolds from the Tanner graph of the skeleton classical codes using handle construction, including the 8-manifold construction following the Freedman-Hastings construction in Ref.~\cite{freedman:2020_manifold_from_code} and a modified lower-dimensional construction of 4-manifolds.  We also show the details of the mapping between the cycles/cocycles in the skeleton classical code and the thickened cyclces/cocycles in the corresponding manifold.   These then pave the way for the construction of the thickened 3D hypergraph product code and derive the scaling of the encoding rate and code distance.    
We then show the existence of non-trivial triple cup product structure between three cocycles which geometrically corresponds to the triple intersection of their Poincar\'e dual cycles.   This gives rise to the collective logical CCZ gates.   We then count the number of triple intersection points to estimate the number of CCZ's.  In Sec.~\ref{sec:good_product}, we introduce the thickened 3D homological product code construction, including both the subsystem and subspace code versions.  We then construct its logical CCZ gates via a triple cup product in the higher gauge theories.   We further introduce the magic state fountain scheme and show how to inject magic states in parallel and consume these states for the gate teleportation protocol to implement parallelizable logical CCZ gates, as well as to derive its injection rate. 
 We conclude our paper with the discussion and outlook of future directions and open problems. 

\textit{Note added---} During the preparation of this manuscript, we became aware of several other works on related topics \cite{golowich2024quantum, lin2024transversal, breuckmann2024cups}.  The main difference is that the construction in the above papers needs to impose certain local combinatorial conditions for the underlying codes, while the construction of the present paper is fully topological and is applicable to arbitrary input classical or quantum codes, including the Sipser-Spielman codes (sheafs) with local codes or randomly constructed expander codes without local codes.  This flexibility allows us to use the good qLDPC codes to break the $N^{1/3}$ distance barrier.
The construction in Ref.~\cite{lin2024transversal, golowich2024quantum} has further developed the idea of cup product and triple intersection points in Ref.~\cite{zhu2023non} to the context of quantum sheaf codes.  The specific construction of the algebraic codes in Ref.~\cite{golowich2024quantum} leads to a stabilizer weight $w=polylog(N)$ and has not yet fully satisfied the qLDPC condition. The encoding rate obtained in Ref.~\cite{golowich2024quantum} for a  logical CCZ gate is close to a constant, i.e., $K=\Theta(N^{1-\epsilon})$ and the distance is $O(N^\frac{1}{3})/polylog(N)$.   In contrast, the present paper has achieved constant stabilizer weight $w=O(1)$ and constant encoding rate $K=\Theta(N)$, and has also gone beyond the $N^{\frac{1}{3}}$-distance barrier and achieved an $\Omega(\sqrt{N})$ distance.  On the other hand, the scheme in Ref.~\cite{golowich2024quantum} has also achieved the $\gamma \rightarrow 0$ magic state distillation \cite{wills2024constant, nguyen2024good, golowich2024asymptotically}, which has not yet been achieved in the present paper.

\section{Non-Clifford logical gates on general simplicial  complexes via cohomology operation}\label{sec:gates_simplicial_complex}

In this section, we first describe CSS codes defined on arbitrary simplicial complexes (including the triangulations on a manifold) in the language of $\ZZ_2$ lattice gauge theory and then introduce the operator-valued cochain formalism (also called a gauge field formalism) to describe the cup product cohomology operations.  We then show how to use the cohomology operation to perform a constant-depth circuit corresponding to logical non-Clifford gates.

\subsection{Logical Clifford gates for a 2D complex}
We first consider a quantum  CSS code defined on a 2D simplicial complex $\mathcal{L}$.  The CSS code can be described by the following 2D (3-term) chain complex:
\be
C_{2} \xrightarrow[]{\partial_{2}=\mathbf{H}_Z^T} C_1 \xrightarrow[]{\partial_{1}=\mathbf{H}_X} C_{0},
\ee
where $\mathbf{H}_X$ and $\mathbf{H}_Z$ are the parity check matrix for the $X$- and $Z$-checks respectively, $C_i$ denotes the $i^\text{th}$ chain group, and $\partial_i$ denotes the $i^\text{th}$ boundary map.  The code space is defined by $\mathcal{C}= \mathbb{C}_2^{|H_1(\mathcal{L}; \ZZ_2)|}$, where $H_1(\mathcal{L};\ZZ_2)=\text{Ker}(\partial_1)/\text{Img}(\partial_2)$ is the 1st $\ZZ_2$-homology group.

A dual description of the same code is the following cochain complex:
\be
C^2 \xleftarrow[]{d_{1}=\mathbf{H}_Z} C^1 \xleftarrow[]{d_{0}=\mathbf{H}_X^T} C^{0},
\ee
where $C^i$ and $d_i$ denote the $i^\text{th}$ cochain group and coboundary operator respectively. For simplicity, we sometimes suppress the index $i$ and write the coboundary operator as $d$. In this dual description, the code space is defined by  $\mathcal{C}= \mathbb{C}_2^{|H^1(\mathcal{L}; \ZZ_2)|}$, where $H^1(\mathcal{L};\ZZ_2)=\text{Ker}(d_1)/\text{Img}(d_0)$ is the 1st $\ZZ_2$-cohomology group.

We now introduce an \textit{operator-valued cochain formalism} (also called a \textit{gauge field formalism}) to describe the operators in the code, which originates from a (2+1)D topological quantum field theory (TQFT): the $\ZZ_2$ lattice gauge theory.    We define the operator valued 1-cochain $\hat{a} \in C^1(\mathcal{L}; \ZZ_2)$, with its eigenvalues belonging to $\{0,1\}$. Here, the hat $\hat{\cdot}$ indicates that it is a quantum variable (operator). The coefficient of each edge (1-cell) $e$ in the 1-cochain corresponds to a Pauli-$Z$ operator as 
\be\label{eq:a_Pauli}
(-1)^{ \hat{a} (e)}= Z(e) \in \{-1,1\}.
\ee
Note that the Pauli operator has eigenvalues $\pm 1$ instead.
Physically, $a$ corresponds to the 1-form $\ZZ_2$ electric gauge field.  We note that the eigenstate of the operator-valued 1-cochain $\hat{a}$ is a cochain state $\ket{c^1}$ in the diagonal basis,  where the 1-cochain $c^1 \in C^1 (\L; \ZZ_2)$ (classical variable) stores the bit-string with value 0 or 1 on each 1-simplex, e.g. $\ket{c^1} = \ket{0110011\cdots}$.  We can express the relation as:
\be
\hat{a} \ket{c^1} = c^1 \ket{c^1},
\ee
where $c^1$ just stores the eigenvalues of $\hat{a}$. We can also equivalently express the operator-valued cochain as:
\be
\hat{a}= c^1 \ketbra{c^1}.
\ee

Similarly, there also exists a 1-cochain $\hat{b}$ corresponding to the magnetic $\ZZ_2$  gauge field, which is related to a Pauli-X operator on each edge $e$ as 
\be\label{eq:b_Pauli}
(-1)^{\hat{b} (e)}= X(e) \in \{-1,1\}.
\ee
For simplicity, we will omit the hat $\hat{\cdot}$ on $a$ and $b$, while the reader should keep in mind they are actually operators.

The coboundary operator, a lattice analog of the exterior derivative in the continuum case, acts on the $\ZZ_2$-valued 1-cochain as 
\be
(da)(f) = \sum_{e \subset f}a(e).
\ee
The $Z$-stabilizer of the CSS code localed on face $f$ can be expressed in terms of the operator-valued cochain as
\be\label{eq:Z_stabilizer}
S^Z_f = \prod_{e \subset f} Z(e)= (-1)^{\sum_{e \subset f}a(e)} = (-1)^{da(f)},
\ee
which is also considered as a flux term in the gauge theory. The $Z$-stabilizer condition $S^Z_f=1$ becomes the 0-flux condition \footnote{This can be considered as the lattice analogy of the 0-flux condition in the continuum $\vec{B}=\nabla \times \vec{a}$=0, where $\vec{a}$ is the quantum vector potential.}
\be
da(f)=0.  
\ee

The $X$-stabilizers on the vertex $v$ can be expressed as
\be
S^X_v = \prod_{e \supset v} X(e) = (-1)^{\sum_{e \supset v} b(e)}.
\ee

 The X-stabilizer condition $S^X_v=+1$ corresponds to Gauss's law (with zero charge) in the $\ZZ_2$ gauge theory and can be expressed as \footnote{This is a lattice analog of Gauss's law in the continuous space $Q=\int_\Sigma \vec{E} \cdot d\vec{S} =0$ or equivalently $\nabla \cdot \vec{E}=0$, where $\vec{E}$ is the quantum electric field and corresponds to the cochain $b$ in the lattice gauge theory.}
\be
\sum_{e \supset v} b(e)  = 0.
\ee
Furthermore, the anticommutation relation between the Pauli-$X$ and -$Z$  operator $X_eZ_{e'}=(-1)^{\delta_{e,e'}}Z_eX_{e'}$ leads to the following anticommutation relation:
\be\label{eq:anti-commutation}
  X_e (-1)^{f(a)}X_e=(-1)^{f(a+\bar{e})},
\ee
where $f$ is an aribtrary function of $a$ and $\tilde e$ is the indicator $1$-cochain that takes value $1$ on edge $e$ and 0 otherwise.

We now introduce cup products on a general $n$-dimensional simplicial complex $\mathcal{L}$.  The vertices on the simplicial complex $\L$ are assigned with a fixed global ordering.  
The cup product `$\cup$' of a $p$-cochain $\alpha^p$ and a $q$-cochain $\beta^q$ gives rise to a $(p+q)$-cochain denoted by $\alpha^p \cup \beta^q$ as expressed by the following bilinear map on the cochain groups:
\be
\cup :  C^p(\L) \times C^q(\L) \rightarrow C^{p+q}(\L).
\ee
One can explicitly define the cup product  between  $\alpha^p$ and $\beta^q$ evaluated on a $(p+q)$-simplex $[v_0,v_1,\cdots,v_{p+q}]$ as \cite{Hatcher:2001ut}: 
\begin{widetext}
\begin{align}\label{eq:cup_def}
  (\alpha^p \cup \beta^q)([v_0,v_1,\cdots, v_{p+q}]) 
=\alpha^p([v_0,v_1,\cdots, v_{p}])\beta^q([v_p, v_{p+1},\cdots, v_{p+q}])~. 
\end{align}	
\end{widetext}
Here, the arguments contain the labels of ordered vertices $v_i$ with the ordering $v_0 \prec v_1 \prec v_2 \cdots \prec v_{p+q}$. We illustrate the  $p=q=1$ case in Fig.~\ref{fig:cup_product_triangulation}(a). The cup product also induces a bilinear operation on cohomology:
\be
\cup :  H^p(\L) \times H^q(\L) \rightarrow H^{p+q}(\L).
\ee

We then introduce logical gates in two identical copies of CSS codes supported on a generic 2D simplicial complex $\mathcal{L}$ via cohomological operation.   One can consider the following unitary:
\be\label{eq:logical_CZ}
U = (-1)^{\int_{\Sigma_2} a \cup a'},
\ee
where $a$ and $a'$ correspond to the electric gauge field on the two copies respectively, and $\int_{\Sigma_2}$ is a discrete sum over all the simplices in a non-trivial 2-cycle $\Sigma_2$ of the simplicial  complex $\mathcal{L}$. This sum can also be viewed as a cycle-cocycle pairing between $\Sigma_2$ and $a\cup a'$, which is an inner product between the two vectors with $\ZZ_2$ coefficients associated with the cycle and cocycle respectively.  In the case that the simplicial complex $\L$ forms the triangulation of a 2-manifold $\M^2$, we just take $\Sigma_2=\M^2$ since the only non-trivial 2-cycle is the entire manifold:
\be
U = (-1)^{\int_{\M^2} a \cup a'},
\ee
where $\int_{\M^2}$ means summing over the simplices of the triangulation belonging to  the manifold $\M^2$.    Nevertheless, in the case of a general simplicial complex, the choice of $\Sigma_2$ is not unique and this also leads to the opportunity for targeted  logical gates acting on a subset of logical qubits, which will be studied in details in future works. 

We note that physically $U$ in Eq.~\eqref{eq:logical_CZ} is nothing but the partition function (discrete path integral) of a (1+1)D $\Z_2 \times Z_2$ symmetry-protected topological phase (SPT) corresponding to the type-II cocycle \cite{barkeshli2023codimension}.

We now present the following lemma:
\begin{lemma}\label{lemma_gate_1}
The unitary $U = (-1)^{\int_{\M^2} a \cup a'}$ acting on two copies of CSS codes defined on a 2D simplicial complex is a constant-depth local quantum  circuit  that implements the collective logical CZ gates.
\end{lemma}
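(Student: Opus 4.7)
The plan is to establish two claims separately: that $U$ can be written as a constant-depth local $CZ$ circuit, and that its action on the codespace agrees with a collective product of logical $CZ$ gates.

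First I would expand $U$ into physical gates. Evaluating \eqref{eq:cup_def} on each 2-simplex $[v_0,v_1,v_2]$ of $\M^2$ gives $(a\cup a')([v_0,v_1,v_2])=a([v_0,v_1])\,a'([v_1,v_2])$, and since $(-1)^{xy}$ is the diagonal action of $CZ$, the Pauli identifications \eqref{eq:a_Pauli} let me write
\begin{equation}
U=\prod_{[v_0,v_1,v_2]\subset\M^2} CZ\!\bigl([v_0,v_1],\,[v_1,v_2]'\bigr),
\end{equation}
one physical $CZ$ per 2-simplex, connecting an edge in copy~1 to an edge in copy~2. All these gates commute (they are diagonal), and bounded local geometry of the triangulation means each edge lies in $O(1)$ 2-simplices, so each physical qubit participates in $O(1)$ gates. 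A standard edge coloring of the constant-degree conflict graph schedules them in $O(1)$ layers of pairwise qubit-disjoint $CZ$s, giving a geometrically local constant-depth circuit.

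Next I would verify that $U$ preserves the codespace. Being diagonal, $U$ commutes with every $Z$-stabilizer \eqref{eq:Z_stabilizer}. For the $X$-stabilizer $S^X_v$, the anticommutation rule \eqref{eq:anti-commutation} together with $\sum_{e\supset v}\bar e=d\bar v$, where $\bar v\in C^0$ is the indicator 0-cochain of $v$, gives
\begin{equation}
S^X_v\,U\,S^X_v=U\cdot(-1)^{\int_{\M^2}d\bar v\,\cup\, a'}.
\end{equation}
The graded Leibniz rule $d(\bar v\cup a')=d\bar v\cup a'+\bar v\cup da'$ and Stokes' theorem $\int_{\M^2}d(\,\cdot\,)=0$ on a closed 2-manifold reduce the extra phase to $(-1)^{\int_{\M^2}\bar v\,\cup\, da'}$, which is the identity on the codespace by the $Z'$-stabilizer condition $da'=0$; the copy-2 $X$-stabilizers are handled symmetrically, so $U$ preserves the codespace.

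Finally I would compute the logical action. Choose bases $\{[\alpha_i]\}, \{[\beta_j]\}\subset H^1(\M^2;\ZZ_2)$ dual under the cup-product pairing, $\int_{\M^2}\alpha_i\cup\beta_j=\delta_{ij}$. Take the logical $\bar Z_i$ of copy~1 as the product of $Z$ on a cocycle representative of $\alpha_i$, and $\bar X_i$ as the product of $X$ on the Poincar\'e-dual 1-cycle $\gamma_i$; similarly for copy~2 using $\{\beta_j\}$. Since $U$ is diagonal, $\bar Z_i$ and $\bar Z'_j$ are preserved. Conjugating $\bar X_i$ through $U$ applies \eqref{eq:anti-commutation} once per edge of $\gamma_i$, shifting $a\to a+\bar\gamma_i$, and produces
\begin{equation}
U\bar X_i U^\dagger=\bar X_i\cdot(-1)^{\int_{\M^2}\bar\gamma_i\,\cup\, a'},
\end{equation}
whose extra factor is a product of $Z'$ operators along a cocycle Poincar\'e dual to $\gamma_i$, which equals $\bar Z'_i$ modulo $Z'$-stabilizers by the dual-basis normalization of the intersection form. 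Combined with the symmetric relation for $\bar X'_j$, this identifies $U$ with $\prod_i \overline{CZ}_{i,i'}$, the advertised collective logical $CZ$.

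The main obstacle I expect is this last step: pinning down the dictionary between the cup-product pairing of cochains and the intersection pairing of $X$-string and $Z$-string logical operators, confirming existence of cohomology bases that diagonalize the cup form, and absorbing the stabilizer ambiguity so that the extra phase is exactly $\bar Z'_i$ rather than merely cohomologous to it. On a closed oriented 2-manifold this is Poincar\'e duality with $\ZZ_2$ coefficients; on a general 2-simplicial complex one would replace $\M^2$ by a 2-cycle $\Sigma_2$, at which point Poincar\'e duality fails and the logical action becomes sensitive to the choice of $\Sigma_2$, the source of the "targeted logical gate" flexibility anticipated in the surrounding text.
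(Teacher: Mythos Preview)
Your first two steps---rewriting $U$ as one physical $CZ$ per 2-simplex and proving codespace preservation via Leibniz plus Stokes---are exactly the paper's argument; the edge-coloring scheduling is a nice detail the paper leaves implicit.

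For the logical action the paper takes a different and more direct route. Rather than conjugating $\bar X_i$, it works in the codespace where $da=da'=0$ and expands the operator-valued cochains over a cocycle basis, $a=\sum_{\alpha^1}\hat n_\alpha\,\alpha^1$ and $a'=\sum_{\beta^1}\hat m_\beta\,\beta^1$, with $\hat n_\alpha,\hat m_\beta$ the $\{0,1\}$-valued logical winding-number operators. Since $(-1)^{\hat n\hat m}=\overline{CZ}$, this gives $U=\prod_{\alpha^1,\beta^1}\overline{CZ}[(\alpha^1;1),(\beta^1;2)]^{\int\alpha^1\cup\beta^1}$ immediately, sidestepping the identification problem you flagged as your main obstacle: it never has to match a physical $Z'$-string to a particular logical representative.

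Your conjugation route can be made to work but has a genuine convention error, not just a cosmetic one. With the paper's stabilizers ($X$-checks on vertices, $Z$-checks on faces), a product of $X$'s is a logical operator only if supported on a 1-\emph{cocycle}; on a 1-cycle $\gamma_i$ it fails to commute with the $Z$-stabilizers $\prod_{e\subset f}Z(e)$ and is not a logical at all, so the conjugation computation does not even get started. Likewise $\bar Z_i$ must sit on a 1-cycle, not a cocycle. Swapping the roles---$\bar X_i$ on a cocycle $\alpha_i$, $\bar Z_i$ on its dual cycle---repairs the argument: the shift $a\to a+\alpha_i$ then preserves $da=0$, and $(-1)^{\int\alpha_i\cup a'}$ restricted to the codespace is the cohomological pairing of $[\alpha_i]$ with $[a']$, which equals the claimed $\bar Z'_i$ once the bases are chosen dual under the cup form.
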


\begin{proof}
The cup product evalued on each 2-simplex $[v_0,v_1,v_2]$ ($v_i$ represents the three vertices in the simplex) can be computed as
\be
(a \cup a') ([v_0,v_1,v_2]) = a([v_0,v_1])a'([v_1,v_2]),
\ee
as illustrated in Fig.~\ref{fig:cup_product_triangulation}(a).
We can hence re-express the unitary $U$ as
\begin{align}\label{eq:logical_gate_2D}
\non U=& (-1)^{\int_{[v_0v_1v_2]\in \Sigma_2} a([v_0,v_1])a'([v_1,v_2])} \\
=&\prod_{[v_0 v_1 v_2] \in \Sigma_2} \text{CZ}([v_0,v_1],[v_1,v_2]),
\end{align}
which shows that $U$ is a constant-depth circuit composed of many CZ gates.  For each 2-simplex, the CZ gate is between qubits supported on the edges $[v_0,v_1]$ in the first copy of the quantum code and $[v_1,v_2]$ on the second copy respectively.  The CZ gate in each 2-simplex can overlap with the CZ gates in the neighboring 2-simplexes, which means $U$ is not a strictly transversal gate but only a constant-depth circuit.

\begin{figure}[t]
\includegraphics[width=1\columnwidth]{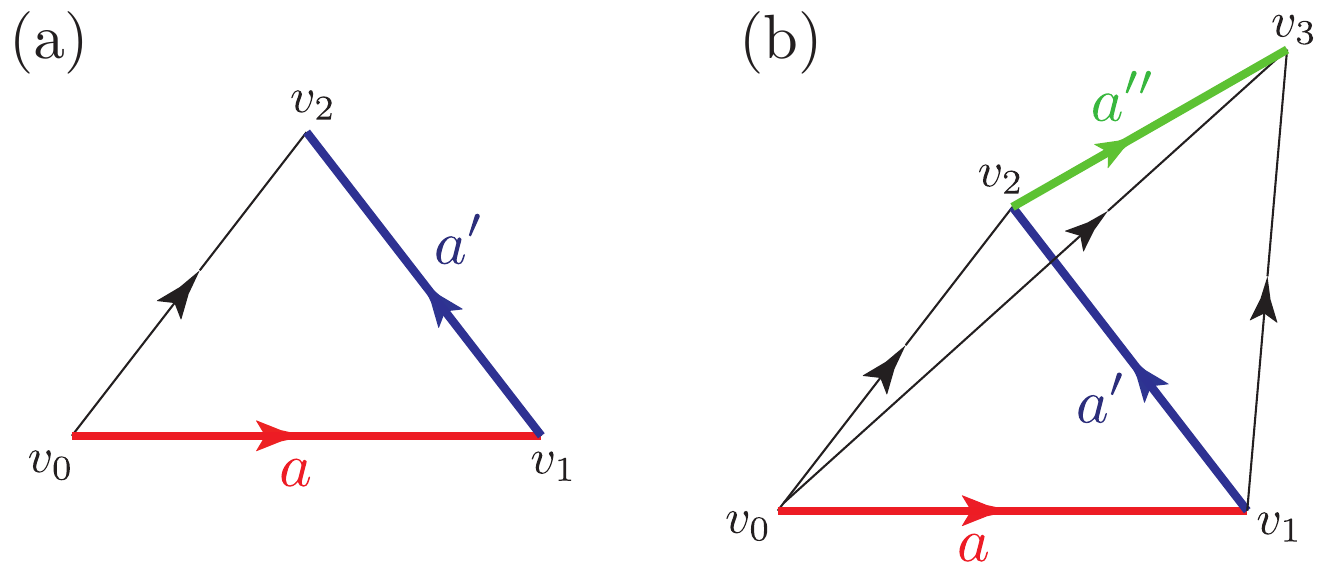}
\caption{Cup product definition on simplicial complexes, where the arrows point from vertices with lower order to vertices with higher order. (a)  Cup product of 1-cochains $a \cup a'$ on 2D simplicial complex. One takes product of cocycle values on the red and blue edges respectively in each 2-simplex (triangle). (b) Cup product of 1-cochains $a \cup a' \cup a''$ on a 3D simplicial complex. One takes the product of cocycle values on the red, blue, and green edges respectively in each 3-simplex (tetrahedron).}\label{fig:cup_product_triangulation}
\end{figure}

Now we need to check whether $U$ is a logical gate, meaning its action should preserve the code space, i.e., 
\be
U:\mathcal{C} \rightarrow \mathcal{C}.
\ee
An equivalent condition for the stabilizer code is that 
\be
P_\mathcal{C}[U, \mathcal{S}] P_\mathcal{C} = 0,
\ee
meaning that $U$ commutes with the stabilizer group $\mathcal{S}$ when projected to the code space $\mathcal{C}$.  Since $U$ is a diagonal gate, it clearly commutes with all the $Z$-stabilizers.   Now we only need to check the commutation relation for the $X$-stabilizers. 

We consider an $X$-stabilizer $S^X_{v;1}$$=$$\prod_{e \supset v} X^{(1)}(e)$ located at vertex $v$ in the 1st copy of the quantum code.   We then consider conjugating $U$ with $S^X_{v;1}$:
\begin{align}
\non & S^X_{v;1} U S^X_{v;1} \\
\non &= \prod_{e \supset v} X^{(1)}(e) \left[(-1)^{\int_{\Sigma_2} a \cup a'}\right]   \prod_{e' \supset v} X^{(1)}(e')  \\
\non &= (-1)^{\int_{\Sigma_2} (a +\sum_{e' \supset v}\bar{e}') \cup a'} \\
&= (-1)^{\int_{\Sigma_2} (a + d \bar{v}) \cup a'},
\end{align}
where we have used Eq.~\eqref{eq:anti-commutation} in the second equality.  In the last line, $d\bar{v}=\sum_{e' \supset v}\bar{e}'$ is the indicator 1-cochain taking value 1 on all the edges $e'$ connected to the vertex $v$ and 0 elsewhere, and $\bar{v}$ is the indicator cochain that is 1 at $v$ and 0 elsewhere.

We now require the following condition for $U$ to be a logical gate:
\be\label{eq:commuting_ground}
P_\mathcal{C} S^X_{v;1} U S^X_{v;1} P_\mathcal{C}= U,
\ee
where $P_\mathcal{C}$ is the projector to the code space $\mathcal{C}$. This condition means $U$ commutes with the stabilizer $S^X_{v;1}$ in the code space, which is then equivalent to the following condition:
\be\label{eq:gauge-invariance}
P_\mathcal{C} (-1)^{\int_{\Sigma_2} (a + d \bar{v}) \cup a'} P_\mathcal{C} = (-1)^{\int_{\Sigma_2} a \cup a'}. 
\ee
The condition in Eq.~\eqref{eq:gauge-invariance} should be considered as the gauge-invariance condition in the code space, since $a \rightarrow a+d \bar{v}$ is just a gauge transformation in the $\ZZ_2$ gauge theory \footnote{This is in analogy to the famous gauge transformation $\vec{a} \rightarrow \vec{a} +  \nabla \chi $ in Maxwell's theory or the corresponding gauge theory in the continuum, where $\vec{a}$ is the quantum vector potential. The gauge invariance of the magnetic field $B=\nabla \times \vec{a}=0$ is satisfied due to the Stokes theorem: $\nabla \times \nabla \chi =0$}.

In order to show the gauge invariance, we need to use the Leibniz rule for the cup product \cite{Hatcher:2001ut}:
\be\label{eq:Leibniz_rule}
d\bar{v}\cup a'= d(\bar{v} \cup a') + \bar{v} \cup d a'.
\ee
note that we have ignored the minus sign and replace it with the plus sign since these cochains are all $\Z_2$-valued (binary variables).   We hence have 
\begin{align}\label{eq:Stokes}
\non P_\mathcal{C}  \int_{_{\Sigma_2}} d\bar{v}\cup a' P_\mathcal{C} =& P_\mathcal{C} \left[\int_{_{\Sigma_2}} d(\bar{v} \cup a') + \int_{_{\Sigma_2}} \bar{v} \cup d a' \right] P_\mathcal{C}  \\
=& P_\mathcal{C} \int_{\partial{_{\Sigma_2}}}(\bar{v}\cup a') P_\mathcal{C}+ 0 =0,
\end{align}
where we have used the Stokes theorem in the second equality and the fact that we are considering a non-tirvial 2-cycle ${\Sigma_2}$ without boundaries, namely $\partial \Sigma_2=0$, to show that the first term is zero.  The second is zero due to the cocycle condition or physically the zeror-flux condition in the code space, i.e., $da'(f)=0$ on any face $f$, which corresponds to the stabilizer condition $S^Z_{f;2} = \prod_{e \subset f} Z^{(2)}(e) = (-1)^{da'(f)}=1$ according to Eq.~\eqref{eq:Z_stabilizer}. We have hence proved Eq.~\eqref{eq:commuting_ground} and \eqref{eq:gauge-invariance} and that $U$ is indeed gauge-invariant in the code space and hence a logical gate.  By symmetry, we can also verify the commutation relation with $S^{(2)}_{X; v}$ and $S^{(3)}_{X; v}$ in a similar way.   Therefore, the unitary $U$ is indeed a logical gate which keeps the code space invariant. 

\begin{figure}[t]
\includegraphics[width=1\columnwidth]{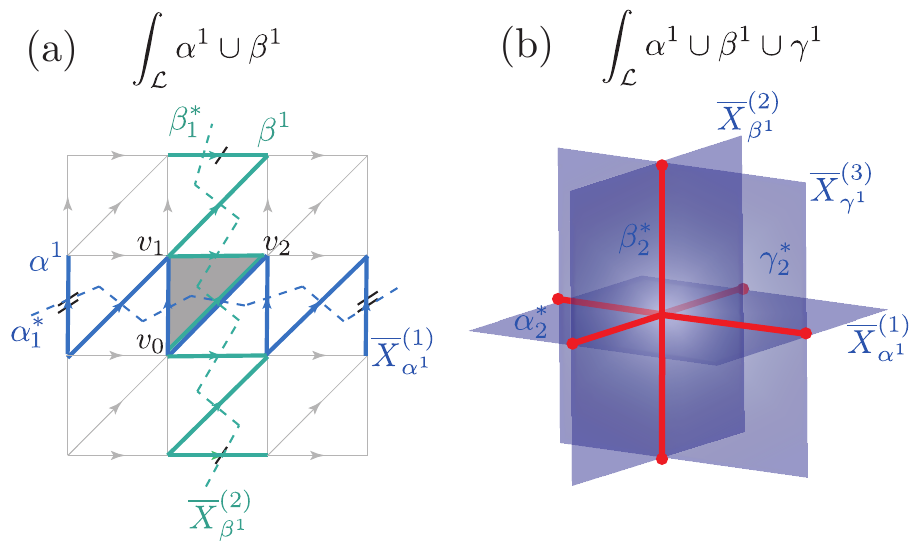}
\caption{(a) Illustration of the cup product and intersection in a 2D triangulation $\L$.  The 1-cocycle $\alpha^1$ (blue) and $\beta^1$ (green) corresponds to the support of the logical-$X$ operators $\lo{X}^{(1)}_{\alpha^1}$ and $\lo{X}^{(2)}_{\beta^1}$. The cup product $\alpha^1 \cup \beta^1$ evaluates non-trivially only on the highlighted (grey) triangle, hence the sum $\int_{\L}\alpha^1 \cup \beta^1=1$. This corresponds to a non-trivial intersection of the Poincar\'e dual 1-cycles $\alpha^*_1$ and $\beta^*_1$ (dashed lines) on the dual triangulation $\L^*$.   (b) The triple cup product sum $\int_{\L} \alpha^1 \cup \beta^1 \cup \gamma^1=1$ on a 3D triangulation $\L$, which corresponds a non-trivial triple intersection of the Poincar\'e dual 2-cycles (membranes) $\alpha^*_2$, $\beta^*_2$ and $\gamma^*_2$.  This can also be understood as the triple intersections of the logical-X membranes $\lo{X}^{(1)}_{\alpha^1}$, $\lo{X}^{(2)}_{\beta^1}$, and $\lo{X}^{(3)}_{\gamma^1}$.  }\label{fig:cup_product_illustrate}
\end{figure}

Mathematically, the gauge invariance condition is associated with the topological invariance of $U$ with respect to arbitrary deformation of the cocycles by coboundaries, which can hence also be called the \textit{coboundary invariance} and expressed as
\begin{align}
\non & P_\mathcal{C} (-1)^{\int_{\Sigma_2} (a^{(1)} + d \chi) \cup (a^{(2)} + d\lambda ) } P_\mathcal{C} \\
=& (-1)^{\int_{\Sigma_2} a^{(1)} \cup a^{(2)}},  
\end{align}
where $d\chi$ and $d\lambda$ are arbitrary coboundaries.

Now we show what type of logical gate this unitary $U$ corresponds to. Since the cochain $a$ and $a'$ in Eq.~\eqref{eq:logical_CZ} becomes cocycles in the code space due to the zero-flux condition $da=da'=0$, we can re-express them using the 1-cocycle basis $\{\alpha^1\}$ and $\{\beta^1\}$ for both copies of codes:
\be
a=\sum_{\alpha^1}\hat{n}_\alpha \alpha^1,  \quad a'=\sum_{\beta^1}\hat{m}_\beta \beta^1,
\ee
where the quantum variable $\hat{n}_\alpha$ and $\hat{m}_\beta$ with eigenvalues $\{0,1\}$ are the winding numbers for cocycles $\alpha^1$ and $\beta^1$ respectively.  
We can hence re-express $U$ as
\begin{align}
\non  U =& \prod_{\alpha^1, \beta^1} (-1)^{\int_{\Sigma_2} (\hat{n}_\alpha \alpha^1 )\cup (\hat{m}_\beta  \beta^1)}
   =\prod_{\alpha^1, \beta^1}  \left[(-1)^{ \hat{n}_\alpha \hat{m}_\beta}\right]^{\int_{\Sigma_2} \alpha^1 \cup \beta^1}\\
   =&\prod_{\alpha^1, \beta^1}  \overline{\text{CZ}}[(\alpha^1; 1), (\beta^1; 2)]^{\int_{\Sigma_2} {\alpha^1 \cup \beta^1}},
\end{align}
where the third equality has used the relation $(-1)^{\hat{n}\hat{m}}$$=$$\overline{\text{CZ}}$.  Here, $(\alpha^1; 1)$ and $(\beta^1; 2)$ are the labels of the logical qubit using the cocycle basis and the copy number.  The logical CZ gate between logical qubits $(\alpha^1; 1)$ and $(\beta^1; 2)$ is only non-trivial (not logical identity) if and only if $\int_{\Sigma_2} {\alpha^1 \cup \beta^1}$ evaluates non-trivially.

In the case that $\L$ is the triangulation of a closed 2-manifold $\M^2$, we can just take $\Sigma_2=\L$, i.e., the sum of 2-simplices on the entire 2-complex $\L$, which is equivalent to a 2-cycle since there is no boundary $\partial\L=0$.  The cup product sum now has a geometric interpretation as the $\ZZ_2$ intersection number of their Poincar\'e dual cycles:
\be
\int_{\L} {\alpha^1 \cup \beta^1} = | \alpha^*_1 \cap \beta^*_1|, 
\ee
as illustrated in Fig.~\ref{fig:cup_product_illustrate}(a).
This can also be interpreted as the $\ZZ_2$ intersection of the two logical-$X$ operators supported on the two cocycles $\lo{X}^{(1)}_{\alpha^1}$ and $\lo{X}^{(2)}_{\beta^1}$.

\end{proof}

\subsection{Logical non-Clifford gates for a 3D complex}
For a 3D simplicial complex $\L$, we still put qubits on the edge (1-cell),  $Z$-check on the face (2-cell), and X-check on the vertex (0-cell), which corresponds to the following chain complex:
\be
C_{3} \xrightarrow[]{\partial_3} C_{2} \xrightarrow[]{\partial_{2}=\mathbf{H}_Z^T} C_1 \xrightarrow[]{\partial_{1}=\mathbf{H}_X} C_{0}.
\ee
We still use 1-cochain $a$ and $b$ as the gauge field which corresponds to the Pauli $Z$ and $X$ operators respectively as in Eqs.~\eqref{eq:a_Pauli} and \eqref{eq:b_Pauli}.  

We can now define the desired unitary via cup products as \be\label{eq:logical_CCZ}
U = (-1)^{\int_{\Sigma_3} a \cup a' \cup a''},
\ee
In the above expression, $a$, $a'$ and $a''$ represent 1-cochains supported on each of the three copies of CSS codes respectively.

We now present the following lemma:
\begin{lemma}\label{lemma_gate_2}
The unitary $U = (-1)^{\int_{\Sigma_3} a \cup a' \cup a''}$ acting on three copies of CSS codes defined on a 3D simplicial complex is a constant-depth local quantum circuit that implements the collective logical CCZ gates.
\end{lemma}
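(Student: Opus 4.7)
The plan is to mirror the proof of Lemma \ref{lemma_gate_1}, now carrying out the computation with a triple cup product on a 3D simplicial complex. First I would use the definition \eqref{eq:cup_def} of the cup product to write, on each 3-simplex $[v_0,v_1,v_2,v_3]$,
\begin{equation}
(a\cup a' \cup a'')([v_0,v_1,v_2,v_3])=a([v_0,v_1])\,a'([v_1,v_2])\,a''([v_2,v_3]),
\end{equation}
as illustrated in Fig.~\ref{fig:cup_product_triangulation}(b). Substituting this into the exponent of $U$ and exponentiating each term separately expresses $U$ as a product of physical CCZ gates, one per 3-simplex in $\Sigma_3$, with the three qubits being those living on the ordered edges $[v_0,v_1]$, $[v_1,v_2]$, $[v_2,v_3]$ of the first, second and third code copy respectively. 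Because every edge is contained in only a bounded number of 3-simplices (the LDPC/bounded-geometry assumption inherited from the triangulation), the resulting circuit has constant depth; it is generically not strictly transversal since neighbouring tetrahedra share edges.

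Next I would verify that $U$ preserves the code space. Diagonality in the $Z$-basis gives commutation with every $Z$-stabilizer automatically. For the $X$-stabilizer $S^X_{v;1}=\prod_{e\supset v}X^{(1)}(e)$ in the first copy, I would use the anticommutation identity \eqref{eq:anti-commutation} to compute
\begin{equation}
S^X_{v;1}\,U\,S^X_{v;1}=(-1)^{\int_{\Sigma_3}(a+d\bar{v})\cup a'\cup a''}.
\end{equation}
The task then reduces to proving the gauge-invariance identity $P_\mathcal{C}\,(-1)^{\int_{\Sigma_3} d\bar{v}\cup a'\cup a''}\,P_\mathcal{C}=1$ on the code space, and symmetrically for the $X$-stabilizers of the other two copies.

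The main step is a Leibniz-plus-Stokes argument. Over $\ZZ_2$ the Leibniz rule for cup products gives
\begin{equation}
d(\bar{v}\cup a' \cup a'')=d\bar{v}\cup a' \cup a''+\bar{v}\cup da'\cup a''+\bar{v}\cup a'\cup da''.
\end{equation}
Integrating over the 3-cycle $\Sigma_3$, the left-hand side becomes a boundary integral $\int_{\partial\Sigma_3}(\bar{v}\cup a'\cup a'')=0$ by Stokes' theorem since $\partial\Sigma_3=0$, while the two remaining terms on the right vanish in the code space because the $Z$-stabilizer conditions enforce the flat-connection conditions $da'(f)=da''(f)=0$ on every face $f$. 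Rearranging yields $P_\mathcal{C}\int_{\Sigma_3}d\bar{v}\cup a'\cup a''\,P_\mathcal{C}=0$, so $U$ is indeed coboundary-invariant and hence a logical gate. The analogous computation with $S^X_{v;2}$ or $S^X_{v;3}$ is identical up to relabelling.

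Finally I would identify the logical action. Expanding the cocycles in a basis $a=\sum_{\alpha^1}\hat{n}_\alpha\alpha^1$, $a'=\sum_{\beta^1}\hat{m}_\beta\beta^1$, $a''=\sum_{\gamma^1}\hat{\ell}_\gamma\gamma^1$, trilinearity of $\cup$ yields
\begin{equation}
U=\prod_{\alpha^1,\beta^1,\gamma^1}\overline{\text{CCZ}}\bigl[(\alpha^1;1),(\beta^1;2),(\gamma^1;3)\bigr]^{\int_{\Sigma_3}\alpha^1\cup\beta^1\cup\gamma^1},
\end{equation}
where I have used $(-1)^{\hat n\hat m\hat\ell}=\overline{\text{CCZ}}$. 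When $\L$ triangulates a closed 3-manifold $\M^3$ and $\Sigma_3=\L$, the exponent $\int_{\L}\alpha^1\cup\beta^1\cup\gamma^1=|\alpha^*_2\cap\beta^*_2\cap\gamma^*_2|\bmod 2$ is the $\ZZ_2$ triple intersection of the Poincar\'e dual 2-cycles, equivalently the triple intersection of the three logical-$X$ membranes $\lo X^{(1)}_{\alpha^1}$, $\lo X^{(2)}_{\beta^1}$, $\lo X^{(3)}_{\gamma^1}$, as in Fig.~\ref{fig:cup_product_illustrate}(b). The conceptually hardest ingredient is the Leibniz/Stokes manipulation: one must be sure that the relevant Leibniz identity holds at the \emph{cochain} level (not merely on cohomology) so that the stabilizer-level cancellation is exact; once that is in hand, everything else is bookkeeping.
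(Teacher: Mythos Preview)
Your proposal is correct and follows essentially the same route as the paper's proof: evaluate the triple cup product on each 3-simplex to exhibit $U$ as a constant-depth product of physical CCZ gates, verify code-space preservation via the Leibniz rule and Stokes' theorem applied to $d\bar v\cup a'\cup a''$ (using $\partial\Sigma_3=0$ and the flatness conditions $da'=da''=0$ in the code space), and then expand in a cocycle basis to identify the logical action as a product of $\overline{\text{CCZ}}$'s weighted by the triple cup-product pairing, equivalently the $\ZZ_2$ triple intersection number of the Poincar\'e dual 2-cycles. The paper is slightly terser, simply referring back to the 2D argument for the Leibniz/Stokes step, whereas you write out the three-factor Leibniz identity explicitly; this is the same content.
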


\begin{proof}

In the case of simplicial 3-complex, the cup-product on each 3-simplex $[v_0,v_1, v_2, v_3]$  can be evaluated as
\be
(a \cup a' \cup a'') ([v_0,v_1,v_2, v_3]) = a([v_0,v_1])a'([v_1,v_2])a''([v_2,v_3]),
\ee
as illustrated in Fig.~\ref{fig:cup_product_triangulation}(b).
We can hence re-express the unitary $U$ as
\begin{align}\label{eq:logical_gate_3D}
\non U=& (-1)^{\int_{[v_0,v_1,v_2,v_3] \in \Sigma_3} a([v_0,v_1])a'([v_1,v_2])a''([v_2,v_3])} \\
=&\prod_{[v_0, v_1, v_2, v_3]\in \Sigma_3} \text{CCZ}([v_0,v_1],[v_1,v_2],[v_2,v_3]),
\end{align}
which explicitly shows the corresponding constant-depth circuit composed of many CCZ gates coupling the qubits in three copies of CSS codes. Similar to 2D case, one can also verify that $U$ preserves the code space $\C$ and is hence a logical gate as follows. 

We again consider an $X$-stabilizer $S^{(1)}_{X; v}$$=$$\prod_{e \supset v} X^{(1)}(e)$ in the 1st copy of the quantum code.   We then  conjugate $U$ with $S^{(1)}_{X; v}$:
\begin{align}
\non & S^{(1)}_{X; v} U S^{(1)}_{X; v} \\
\non =& \prod_{e \supset v} X^{(1)}(e) \left[(-1)^{\int_{\Sigma_2} a^{(1)} \cup a^{(2)} \cup a^{(3)}}\right]   \prod_{e' \supset v} X^{(1)}(e')  \\
\non =& (-1)^{\int_{\Sigma_2} (a^{(1)} +\sum_{e' \supset v}\tilde{e}')  \cup a^{(2)} \cup a^{(3)}} \\
=& (-1)^{\int_{\Sigma_2} (a^{(1)} + d \tilde{v}) \cup a^{(2)} \cup a^{(3)}}.
\end{align}

We now require the commutation condition Eq.~\eqref{eq:commuting_ground} to be satisfied such that $U$ is a logical gate, which is again equivalent to the gauge-invariance condition:
\be\label{eq:gauge-invariance2}
P_\mathcal{C} (-1)^{\int_{\mathcal{L}} (a^{(1)} + d \tilde{v}) \cup a^{(2)} \cup a^{(3)}} P_\mathcal{C} = (-1)^{\int_{\mathcal{L}} a^{(1)} \cup a^{(2)} \cup a^{(3)}}. 
\ee
The above gauge invariance can be proven using Eq.~\eqref{eq:Leibniz_rule} and \eqref{eq:Stokes}  as in the 2D case.

 By symmetry, we can also verify the commutation relation with $S^{(2)}_{X; v}$ and $S^{(3)}_{X; v}$ in a similar way. The unitary operator $U$ is hence a logical gate. 
 
Similar to the 2D case, we have also derived  the following coboundary-invarince condition:
\begin{align}
\non & P_\mathcal{C} (-1)^{\int_{\mathcal{L}} (a^{(1)} + d \chi) \cup (a^{(2)} + d\lambda ) \cup (a^{(3)}+d\eta)} P_\mathcal{C} \\
=& (-1)^{\int_{\mathcal{L}} a^{(1)} \cup a^{(2)} \cup a^{(3)}},  
\end{align}
where $d\chi$, $d\lambda$ and $d\eta$ are arbitrary coboundaries.

We note that physically $U$ in Eq.~\eqref{eq:logical_CCZ} is nothing but the partition function (discrete path integral) of a (2+1)D $\Z_2 \times \Z_2 \times \Z_2$ SPT corresponding to the type-III cocycle \cite{barkeshli2023codimension}.

To understand what logical gate $U$ corresponds to, we can re-express $U$ with the 1-cocycle basis $\{\alpha^1\}$, $\{\beta^1\}$ and $\{\gamma^1\}$ as
\begin{align}\label{eq:logical_CCZ}
 U =& \prod_{\alpha^1, \beta^1, \gamma^1} (-1)^{\int_{\Sigma_3} (\hat{n}_\alpha  \alpha^1 )\cup (\hat{m}_\beta  \beta^1) \cup (\hat{l}_\gamma  \gamma^1)} \cr
   =&\prod_{\alpha^1, \beta^1, \gamma^1}\left[(-1)^{ \hat{n}_\alpha \hat{m}_\beta \hat{l}_\gamma}\right]^{\int_{\Sigma_3} \alpha^1 \cup \beta^1 \cup \gamma^1} \cr
   =&\prod_{\alpha^1, \beta^1, \gamma^1} \overline{\text{CCZ}}[(\alpha^1; 1), (\beta^1; 2),(\gamma^1; 3)]^{\int_{\Sigma_3} {\alpha^1 \cup \beta^1 \cup \gamma^1}}. \cr
\end{align}
The logical CCZ gate between logical qubits $(\alpha^1; 1)$, $(\beta^1; 2)$ and $(\gamma^1; 3)$ is only non-trivial (not logical identity) if and only if $\int_{\Sigma_3} {\alpha^1 \cup \beta^1 \cup \gamma^1}$ evaluates non-trivially.

In the case that $\L$ is the triangulation of a 3-manifold $\M^3$, we can just take $\Sigma_3=\L$. The triple cup product sum now corresponds to the $\ZZ_2$ triple intersection number of the Poincar\'e dual 2-cycles:
\be
\int_{\L} {\alpha^1 \cup \beta^1 \cup \gamma^1} =  |\alpha^*_2 \cap \beta^*_2 \cap \gamma^*_2|,
\ee
as illustrated in Fig.~\ref{fig:cup_product_illustrate}(b).
This can also be interpreted as the $\ZZ_2$ triple intersection of the three logical-$X$ membrane operators supported on the three cocycles $\lo{X}^{(1)}_{\alpha^1}$,   $\lo{X}^{(2)}_{\beta^1}$ and $\lo{X}^{(3)}_{\gamma^1}$.
\end{proof}

\subsection{Generalization to $n$-dimensional complex and higher gauge theory}

We can be generalize the above results to $n$ copies of identical CSS codes supported on an $n$-dimensional simplicial complex $\L$ with qubits placed on the 1-simplices (edges). The corresponding 1-form gauge fields $a_{(i)}$ are operator-valued 1-cochains in the $i^\text{th}$ copy of code ($i=1,\cdots,n$). The constant-depth unitary circuit implementing the logical gate is hence:
\begin{equation}
 U=   (-1)^{\int_{\Sigma_n} a_{(1)}\cup a_{(2)}\cup \cdots a_{(n)}}~,
\end{equation}
where $\Sigma_n \in H_n(\L; \ZZ_2)$ is an abitrary $n$-cycle of $\L$.  In the case that the simplicial complex $\L$ forms the triangulation of a manifold $\M^n$, we can just take $\Sigma_n = \M^n$. 
One can verify that $U$ preserves the code space $\C$ in a similar way to the case of lower dimensions.

Now we start deriving the corresponding logical gate of the constant depth circuit $U$.  In the code space $\C$, all the gauge fields are cocycles satisfying $da_{(i)}=0$. We can hence re-express them using the 1-cocycle basis $\{\alpha^1_{(i)}\}$ of the cohomology group $H^1(\L^n;\Z_2)$ for the $i^\text{th}$ copy of codes: 
\begin{equation}
a_{(i)}=\sum_{\alpha^1_{(i)}}\hat{m}_{\alpha, i}\alpha^1_{(i)}~,
\end{equation}
where the quantum variable $\hat{m}_{\alpha, i}$ with eigenvalues $\{0,1\}$ are the winding numbers.  
The constant-depth circuit $U$ can hence be re-expressed as
\begin{align}
 U &=  (-1)^{\sum_{\{\alpha^1_{(i)}\}} \int_{\Sigma_n} (\hat{m}_{\alpha,1} \alpha^1_{(1)} )\cup (\hat{m}_{\alpha,2}  \alpha^1_{(2)}) \cup \cdots \cup (\hat{m}_{\alpha, n} \alpha^1_{(n)})}\cr
 \nonumber &  =\prod_{\{\alpha^1_{(i)}\}}\left[(-1)^{ \hat{m}_{\alpha,1} \hat{m}_{\alpha,2} \cdots \hat{m}_{\alpha, n}}\right]^{ \int_{\Sigma_n} {\alpha^1_{(1)} \cup \alpha^1_{(2)} \cup \cdots \cup \alpha^1_{(n)}}}\\
\nonumber   &=\prod_{\{\alpha^1_{(i)}\}} \overline{C^{n-1}Z}[\alpha^1_{(1)}, \alpha^1_{(2)}, \cdots, \alpha^1_{(n)} ]^{ \int_{\Sigma_n} {\alpha^1_{(1)} \cup \alpha^1_{(2)} \cup \cdots \cup \alpha^1_{(n)}}}~, \\
\end{align}
where we have used the relation $(-1)^{ \hat{m}_1 \hat{m}_2 \cdots \hat{m}_{n}}$$=$$\overline{C^{n-1}Z}$.  Here, $\alpha^1_{(i)}$ serve as the more compact labels of the logical qubits using the cocycle basis and the copy number $i$.  This logical $C^{n-1}Z$ gate is only non-trivial (not logical identity) if and only if the exponent $\int_{\Sigma_n}  {\alpha^1_{(1)} \cup \alpha^1_{(2)} \cup \cdots \cup \alpha^1_{(n)}}$ evaluates non-trivially.  In the case that the $n$D simplicial complex $\L$ is the triangulation of a manifold $\M^n$, we can just take $\Sigma_n=\M^n$.  The cup product sum in the exponent now corresponds to an $n$-fold $\ZZ_2$ intersection number of the Poincar\'e dual $(n-1)$-cycles $\alpha^*_{n-1; (i)}$: 
\begin{align}
& \int_{\M^n} {\alpha^1_{(1)} \cup \alpha^1_{(2)} \cup \cdots \cup \alpha^1_{(n)}} \cr
=& | \alpha^*_{n-1; (1)} \cap \alpha^*_{n-1; (2)} \cdots \cap \alpha^*_{n-1; (n)}|.
\end{align}

So far we have been focused on the case of 1-form gauge theories, which correspond to the CSS codes with qubits placed on the 1-simplices (edges) of the simplicial complex $\L$, i.e., associated with the 1-chain group $C_1$.    More generally, we also consider higher gauge theories, also called  $q$-form gauge theories, which corresponds to the CSS codes with qubits placed on the $q$-simplices of an $n$D simplicial complex $\L$, i.e., associated with the $q$-chain group $C_q$. The $X$-stabilizers and $Z$ stabilizers are placed on the $(q-1)$-simplices and $(q+1)$-simplices respectively. The corresponding chain complex is as follows: 
\begin{align}
&C_n \rightarrow \cdots \rightarrow C_{q+1} \xrightarrow[]{\partial_{q+1}=\mathbf{H}_Z^T} C_q \xrightarrow[]{\partial_{q}=\mathbf{H}_X} C_{q-1} \rightarrow \cdots \cr
&\qquad   \quad \quad \ Z\text{-stabilizer}  \qquad \   
 \text{qubit} \qquad X\text{-stabilizer}.
\end{align}
We then introduce the $q$-form electric gauge fields $(1 \le q \le n)$ as operator-valued $q$-cochains $\hat{a}^q \in C^q(\L; \ZZ_2)$. We note that the eigenstate of the operator-valued $q$-cochain $\hat{a}$ is a cochain state $\ket{c^q}$ in the diagonal basis:
\be
\hat{a}^q \ket{c^q} = c^q \ket{c^q},
\ee
where $c^q \in C^q(\L; \ZZ_2)$ (classical variable) just stores the eigenvalues of $\hat{a}^q$. We can also equivalently express the operator-valued cochain as:
\be
\hat{a}^q= c^q \ketbra{c^q}. 
\ee
\\
Again, for simplicity, we will omit the hat $\hat{\cdot}$ on $a^q$.

Now we investigate the corresponding cohomology operation.  As an example, we consider the triple cup product operation that will give rise to logical non-Clifford gate.   Since it is a $q$-form gauge theory, one can define $q-2$ types of CSS codes on the simplicial complex $\L$ or manifold. We consider the triple cup product operation on three copies of CSS codes (not necessarily identical copies)  with qubits putting on the $q_1$-, $q_2$- and $q_3$-simplices respectively satisfying $q_1+q_2+q_3=n$, with the associated electric gauge fields being $a^{q_1}$, $a^{q_2}$ and $a^{q_3}$ respectively.  The corresponding constant-depth circuit can be expressed as
\be\label{eq:cup_higher_form}
U = (-1)^{\int_{\Sigma_n} a^{q_1} \cup a^{q_2} \cup a^{q_3}},
\ee

We now present the following lemma:
\begin{lemma}\label{lemma_gate_3}
The unitary $U = (-1)^{\int_{\Sigma_n} a^{q_1} \cup a^{q_2} \cup a^{q_3}}$ (with $q_1+q_2+q_3=n$) acting on three copies of CSS code defined on an $n$D simplicial complex is a constant-depth local quantum circuit that implements the collective logical CCZ gates.
\end{lemma}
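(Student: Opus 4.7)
The plan is to mirror the proofs of Lemma~\ref{lemma_gate_1} and Lemma~\ref{lemma_gate_2}, generalized from $1$-cochains to $q_i$-cochains with $q_1+q_2+q_3=n$. First, I would evaluate the triple cup product on each $n$-simplex $[v_0,v_1,\ldots,v_n]$ using the defining formula
\begin{align}
(a^{q_1}\cup a^{q_2}\cup a^{q_3})([v_0,\ldots,v_n]) = a^{q_1}([v_0,\ldots,v_{q_1}])\, a^{q_2}([v_{q_1},\ldots,v_{q_1+q_2}])\, a^{q_3}([v_{q_1+q_2},\ldots,v_n]).
\end{align}
Substituting this into the exponent of $U$ immediately shows that $U$ decomposes as a product over $n$-simplices in $\Sigma_n$ of physical CCZ gates, where for each $n$-simplex the three qubits involved live on the leading, middle and trailing sub-simplices of dimension $q_1,q_2,q_3$ in three different code copies. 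Because each $q_i$-simplex is contained in only a bounded number of $n$-simplices (by the bounded local geometry / LDPC condition inherited from the triangulation), these CCZs have bounded overlap and can be organized into a constant-depth circuit.

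Next, I would check that $U$ preserves the code space. Since $U$ is diagonal, it commutes with all $Z$-stabilizers automatically, so only the $X$-stabilizers need attention. In copy~1, an $X$-stabilizer sits on a $(q_1{-}1)$-simplex $\sigma$ as $S^X_{\sigma;1}=\prod_{\tau\supset\sigma}X^{(1)}(\tau)$, where $\tau$ ranges over $q_1$-simplices containing $\sigma$. A direct generalization of Eq.~\eqref{eq:anti-commutation}, using the identity $\sum_{\tau\supset\sigma}\bar{\tau}=d\bar{\sigma}$ at the level of indicator cochains, gives
\begin{align}
S^X_{\sigma;1}\, U\, S^X_{\sigma;1} = (-1)^{\int_{\Sigma_n} (a^{q_1}+d\bar{\sigma})\cup a^{q_2}\cup a^{q_3}}.
\end{align}
The required gauge-invariance condition $P_\C S^X_{\sigma;1} U S^X_{\sigma;1} P_\C = U$ therefore reduces to showing that the cross-term $\int_{\Sigma_n} d\bar{\sigma}\cup a^{q_2}\cup a^{q_3}$ vanishes in the code space.

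For this I would invoke the graded Leibniz rule for cup product (over $\ZZ_2$ the signs disappear), applied twice, to write
\begin{align}
d\bar{\sigma}\cup a^{q_2}\cup a^{q_3} = d\bigl(\bar{\sigma}\cup a^{q_2}\cup a^{q_3}\bigr) + \bar{\sigma}\cup da^{q_2}\cup a^{q_3} + \bar{\sigma}\cup a^{q_2}\cup da^{q_3}.
\end{align}
In the code space the last two terms vanish by the cocycle conditions $da^{q_2}=da^{q_3}=0$, which are exactly the $Z$-stabilizer constraints in copies~2 and~3. The remaining exact piece integrates to zero over the closed $n$-cycle $\Sigma_n$ by the discrete Stokes theorem, since $\partial\Sigma_n=0$. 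An identical argument (using symmetry of the cup product under the cyclic structure of the formula, together with the same Leibniz–Stokes combination applied from the other side) disposes of the $X$-stabilizers in copies~2 and~3. Hence $U$ is gauge-invariant in $\C$ and is therefore a logical gate.

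Finally, to identify the logical action, I would expand $a^{q_i}=\sum_{\alpha^{q_i}_{(i)}}\hat{m}_{\alpha,i}\,\alpha^{q_i}_{(i)}$ in a basis of cocycle representatives of $H^{q_i}(\L;\ZZ_2)$, substitute into $U$, and use $(-1)^{\hat{m}_1\hat{m}_2\hat{m}_3}=\overline{\mathrm{CCZ}}$ to obtain
\begin{align}
U=\prod_{\alpha^{q_1}_{(1)},\alpha^{q_2}_{(2)},\alpha^{q_3}_{(3)}}\overline{\mathrm{CCZ}}\bigl[(\alpha^{q_1}_{(1)};1),(\alpha^{q_2}_{(2)};2),(\alpha^{q_3}_{(3)};3)\bigr]^{\int_{\Sigma_n}\alpha^{q_1}_{(1)}\cup\alpha^{q_2}_{(2)}\cup\alpha^{q_3}_{(3)}},
\end{align}
so the only non-trivial logical CCZs are those with non-vanishing triple cup integer. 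When $\L$ triangulates a manifold $\M^n$ and $\Sigma_n=\M^n$, this exponent is the $\ZZ_2$ triple intersection number of the Poincar\'e dual $(n-q_i)$-cycles. The main obstacle I anticipate is purely bookkeeping: verifying the higher-form anticommutation identity $S^X_{\sigma;i}\,(-1)^{f(a^{q_i})}\,S^X_{\sigma;i}=(-1)^{f(a^{q_i}+d\bar{\sigma})}$ carefully for $q_i>1$, and tracking which cocycle condition kills which Leibniz term; once these are in hand, the same cohomological skeleton used in the $n=2,3$ cases carries through verbatim.
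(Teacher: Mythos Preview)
Your proposal is correct and follows essentially the same approach as the paper: expand the cup product simplex-by-simplex to exhibit the physical CCZ factorization, verify code-space preservation via the Leibniz rule plus Stokes' theorem and the cocycle condition, and then decompose in a cohomology basis to read off the logical CCZ structure and its triple-intersection interpretation. In fact the paper's own proof is more terse---it simply states that code-space preservation ``follows trivially from the proof in Lemma~\ref{lemma_gate_2}''---so your explicit tracking of the higher-form anticommutation and the three Leibniz terms is, if anything, more complete than what the paper writes out.
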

\begin{proof}
The unitary $U$ in Eq.~\eqref{eq:cup_higher_form} can be evaluated according to the general definition of cup products in Eq.~\eqref{eq:cup_def} as below:
\begin{widetext}
\begin{align}\label{eq:logical_gate_higher_form}
\non U=& (-1)^{\int_{[v_0,v_1, \cdots, v_n] \in \Sigma_n} a([v_0,v_1, \cdots v_{q_1}])a'([v_{q_1}, \cdots, v_{q_1+q_2}])a''([v_{q_1+q_2}, \cdots, v_n])} \\
=&\prod_{[v_0,v_1, \cdots, v_n] \in \Sigma_n} \text{CCZ}([v_0,v_1, \cdots v_{q_1}],[v_{q_1}, \cdots, v_{q_1+q_2}], [v_{q_1+q_2}, \cdots, v_n]).
\end{align}
\end{widetext}
Therefore, $U$ is a constant-depth circuit composed of the product of physical CCZ gates.

The proof that shows $U$ is indeed a logical gate which preserves the codes space follows trivially from the proof in Lemma 
\ref{lemma_gate_2}.

Now we can re-express the circuit $U$ using the cohomology basis $\{\alpha^{q_1}\}$, $\{\beta^{q_2}\}$, and $\{\gamma^{q_3}\}$ to derive the corresponding logical gate:
\begin{align}\label{eq:logical_CCZ_higher}
 U =& \prod_{\alpha^{q_1}, \beta^{q_2}, \gamma^{q_3}} (-1)^{\int_{\Sigma_n} (\hat{n}_\alpha  \alpha^{q_1} )\cup (\hat{m}_\beta  \beta^{q_2}) \cup (\hat{l}_\gamma  \gamma^{q_3})} \cr
   =&\prod_{\alpha^{q_1}, \beta^{q_2}, \gamma^{q_3}}\left[(-1)^{ \hat{n}_\alpha \hat{m}_\beta \hat{l}_\gamma}\right]^{\int_{\Sigma_n} \alpha^{q_1} \cup \beta^{q_2} \cup \gamma^{q_3}} \cr
   =&\prod_{\alpha^{q_1}, \beta^{q_2}, \gamma^{q_3}} \overline{\text{CCZ}}[(\alpha^{q_1}; 1), (\beta^{q_2}; 2),(\gamma^{q_3}; 3)]^{\int_{\Sigma_n} {\alpha^{q_1} \cup \beta^{q_2} \cup \gamma^{q_3}}}. \cr
\end{align}
In the case that the $n$D simplicial complex $\L$ is the triangulation of a manifold $\M^n$, we can just take $\Sigma_n=\L$.  The cup product sum in the exponent now corresponds to an $n$-fold $\ZZ_2$ intersection number of the Poincar\'e dual cycles: 
\begin{align}
& \int_{\L} {\alpha^{q_1} \cup \beta^{q_2} \cup \gamma^{q_3}} =|\alpha^*_{n-q_1}  \cap \beta^*_{n-q_2} \cap  \gamma^*_{n-q_3}|.
\end{align}

\end{proof}
\nin We can easily generalize the triple cup product sum above to a $k$-fold cup product sum. 

All our qLPDC code constructions in this paper utilize the cohomology operations in such higher gauge theories to implement logical non-Clifford gates.

\section{Three-dimensional hypergraph product codes }\label{sec:hypergraph-product-code}

In order to achieve a quantum code with constant rate and power-law distance which supports logical non-Clifford gates, 
we construct a 3D hypergraph product code \cite{Tillich:2014_hyergraph_product, Quintavalle:2020_3D} obtained from the homological product \cite{Bravyi:2014bq} of three good identical classical codes. 

The good 1-cycle code $\mathcal{C}$ with parameters $[n,k,d]$ corresponds to the following chain complex $\mathcal{L}$:
\be\label{eq:classical_chain}
C_1 \xrightarrow[]{\partial_{1}=\mathsf{H}} C_{0},
\ee
where $\mathsf{H}$ is the parity check matrix of the classical code.  The bit and check variables are associated with $C_1$ and $C_0$ respectively. The length of the code (the number of bits) is denoted by $n=\text{dim}(V)$, where $V$ is the vector space spanned by all possible bit configurations represented by $n$-dimensional binary vectors. The code space is the subspace $\mathcal{C}=\text{Ker} \ \mathsf{{H}}=\text{Ker} \ \partial_1 = H_1(\L;\ZZ_2) \subset V$, where $H_1$ represents the 1st $\ZZ_2$-homology group. The code length (number of bits) is $n$. We have the dimension of the classical code equaling to the 1st $\ZZ_2$-Betti number $k=\text{dim}(\text{Ker}(\mathsf{H}))=\text{dim}(H_1(\L; \ZZ_2))=b_1$. For good classical code, the dimension is linear, i.e., $k=b_1=\Theta(n)$, while the distance equals the combinatorial $\ZZ_2$ 1-systole is also linear, i.e., 
\be
d=sys_1(\L;\ZZ_2)=\Omega(n), 
\ee
where we have used the following definition of the systole:
\begin{definition}\label{def:systole}
	We define the combinatorial $\ZZ_2$ $i$-systole $sys_i(\L;\ZZ_2)$ on the simplicial complex $\L$ as
\be
sys_i(\L; \ZZ_2) = \min\{|c_i|: c_i \neq 0 \in H_i(\L;\ZZ_2)\}
\ee
where the Hamming weight $|c_i|$ counts the number of $i$-simplices on which the $i$-cycle $c_i$ is non-zero. \footnote{Throughout this paper, the systole we discuss corresponds to the combinatorial systole including the manifold cases appeared later, where we consider the combinatorial systoles on the triangulation $\L$ of the manifold.}
\end{definition}

We also consider the transposed 0-cocycle code $\mathcal{C}^T=\text{Ker}(\mathsf{H}^T)=\text{Ker}(\delta^0)=\text{Ker}(H^0)$ with the parameters $[n^T,k^T,d^T]$, which is  described by the following chain complex
\be
C_1 \xleftarrow[]{\delta_{0}=\mathsf{H}^T} C_{0}.
\ee
The distance of the transposed code $\C^T$ equals the combinatorial $\ZZ_2$ 0-cosystole:
\be
d^T=sys^0(\L;\ZZ_2),
\ee
where we have used the following definition of the cosystole as:
\begin{definition}\label{def:cosystole}
We define the combinatorial $\ZZ_2$ $i$-cosystole $sys^i(\L;\ZZ_2)$ on the simplicial complex $\L$ as
\be
sys^i(\L; \ZZ_2) = \min\{|c^i|: c^i \neq 0 \in H^i(\L;\ZZ_2)\},
\ee
where the Hamming weight $|c^i|$ counts the number of $i$-simplices on which the $i$-cocycle $c_i$ is non-zero. 
\end{definition}

Now to optimize the performance of the logical CCZ gates, we want to make the transposed code $\mathcal{C}^T$ also a good code.  In order to achieve that, we start with a parity check matrix $\mathsf{H}$ of a good $[n,k,d]$ classical code with full rank. Such a code can be obtained from random construction of good LDPC code.  We then construct a new parity check matrix $\bar{\Hs}=\mathsf{H}^T \mathsf{H}$ and the associated code $\bar{\C}$ with the symmetric property that $\bar{\Hs}=\bar{\Hs}^T=\Hs^T \Hs$. Therefore, we have the corresponding code length $\bar{n}=\bar{n}^T =n$ due to the matrix multiplication rule and $\bar{k}=\bar{k}^T$. We can then obtain  the following useful lemma:  

\begin{lemma}\label{lemma:classical_code}
For a classical code $\bar{\mathcal{C}}$ with the parity check matrix $\bar{\Hs}=\Hs^T\Hs$, where $H$ is the parity check matrix of a good $[n,k,d]$ classical LDPC code with full rank, we have $\bar{\mathcal{C}}=\bar{\mathcal{C}}^T=\text{Ker}(\bar{\mathsf{H}})=\text{Ker}(\bar{\mathsf{H}}^T)=\text{Ker}(\mathsf{H})$.  Moreover, the code $\bar{\C}$ and its transposed code $\bar{\C}^T$ are  also good classical LDPC codes with linear dimension $(\bar{k}=\Theta(n))$ and linear distance $(\bar{d}=\Omega(n))$. 
\end{lemma}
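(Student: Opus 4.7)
The plan is to establish the chain of kernel equalities first, then read off the dimension, distance, and LDPC properties. Symmetry of $\bar{\Hs}$ is immediate: $\bar{\Hs}^T = (\Hs^T \Hs)^T = \Hs^T \Hs = \bar{\Hs}$, so $\text{Ker}(\bar{\Hs}) = \text{Ker}(\bar{\Hs}^T)$ and hence $\bar{\C} = \bar{\C}^T$.

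The crux is to show $\text{Ker}(\Hs^T \Hs) = \text{Ker}(\Hs)$ over $\FF_2$. The containment $\text{Ker}(\Hs) \subseteq \text{Ker}(\Hs^T \Hs)$ is trivial. For the reverse, I exploit the full-row-rank hypothesis: if $\Hs$ is $m \times n$ of rank $m$, then $\Hs^T$ also has rank $m$, and rank-nullity applied to $\Hs^T : \FF_2^m \to \FF_2^n$ gives $\dim \text{Ker}(\Hs^T) = m - m = 0$. Thus $\Hs^T \Hs x = 0$ forces $\Hs x \in \text{Ker}(\Hs^T) = \{0\}$, so $x \in \text{Ker}(\Hs)$, as desired.

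Given the kernel identity, the dimension bound is immediate: $\bar{k} = \dim \text{Ker}(\bar{\Hs}) = \dim \text{Ker}(\Hs) = k = \Theta(n)$, and the same holds for $\bar{k}^T$ by symmetry. The distance is even easier: as subsets of $\FF_2^n$, $\bar{\C} = \C$, so the minimum Hamming weights coincide and $\bar{d} = \bar{d}^T = d = \Omega(n)$. For the LDPC property, let $w_r, w_c = O(1)$ bound the row and column weights of $\Hs$. Then $(\Hs^T \Hs)_{ij} = \sum_k \Hs_{ki} \Hs_{kj}$ is nonzero only when some check $k$ contains both bits $i$ and $j$; the number of such $j$ for fixed $i$ is at most $w_c \cdot w_r = O(1)$, and by the symmetry of $\bar{\Hs}$ the column weight is bounded by the same constant.

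The main obstacle is the kernel identity in Step 2. Over $\RR$ this is a one-line consequence of positive-definiteness ($\|\Hs x\|^2 = 0 \Rightarrow \Hs x = 0$), but over $\FF_2$ the identity genuinely fails without full rank --- for instance, $\Hs = \bigl(\begin{smallmatrix} 1 & 1 \\ 1 & 1 \end{smallmatrix}\bigr)$ satisfies $\Hs^T \Hs = 0$, so $\text{Ker}(\Hs^T \Hs) = \FF_2^2$ while $\text{Ker}(\Hs)$ is only one-dimensional. The argument above shows that full row rank is precisely the hypothesis that restores the identity, by guaranteeing that $\Hs^T$ is injective; everything else in the proof is routine bookkeeping.
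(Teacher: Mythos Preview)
Your proof is correct and follows essentially the same route as the paper: both use rank--nullity on $\Hs^T$ to conclude $\text{Ker}(\Hs^T)=0$ from the full-row-rank hypothesis, deduce $\text{Ker}(\Hs^T\Hs)=\text{Ker}(\Hs)$, and then read off the dimension, distance, and LDPC properties from $\bar{\C}=\C$. Your treatment is in fact slightly more explicit than the paper's in bounding the row/column weights of $\Hs^T\Hs$ and in noting why the full-rank hypothesis is genuinely needed over $\FF_2$.
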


\begin{proof}
Since $\mathsf{H}$ has full rank, meaning there is no redundant checks (the rank now equals the number of rows which is the number of checks), we hence have the following identity through the rank-nullity theorem:
\begin{align}\label{eq:full_rank}
\non \text{rank}(\mathsf{H})=& \text{dim}(\text{Img}(\mathsf{H}))=\text{dim}(V)-\text{dim}(\text{Ker}(\mathsf{H})) \\
=& n-k = n^T, 
\end{align}
where $n^T = \text{dim}(V^T)$ is the number of checks in $\C$ equaling the number of bits in the transposed code $\C^T$ and $V^T$ the vector space associated with the check variables in $\C$ or equivalently the bit variables in $\C^T$. We hence have
\begin{align}
\text{rank}(\mathsf{H}^T)=&\text{rank}(\mathsf{H})=\text{dim}(V)-\text{dim}(\text{Ker}(\mathsf{H})) \cr
=&n^T-k^T=n^T,
\end{align}
where $k^T$ is the dimension of the transposed code and we have $k^T=0$.
This leads to $\text{Ker}(\mathsf{H}^T)=0$.  We thus have 
\be
\text{Ker}(\mathsf{H}^T \mathsf{H})=\text{Ker}(\mathsf{H}),
\ee
which leads to
\be
\text{Ker}(\bar{\mathsf{H}})=\text{Ker}(\bar{\mathsf{H}}^T)=\text{Ker}(\mathsf{H}).
\ee

This means both the 1-cycle code $\bar{\mathcal{C}}=\text{Ker}(\bar{\mathsf{H}})=H_1$ and the transposed 0-cocycle code $\bar{\mathcal{C}}^T= \text{Ker}(\bar{\mathsf{H}}^T)=H^0$ corresponds to the same good classical code with both linear dimension and linear distance.  In particular, we now have a linear scaling for both the $0^\text{th}$ and $1^\text{st}$ Betti numbers corresponding to the dimensions of the of $\bar{\C}$ and $\bar{\C}^T$ respectively: 
\begin{align}
&\bar{k}=b_1=\text{rank}(H_1(\L;\ZZ_2))=\Theta(n), \cr
&\bar{k}^T=b_0=\text{rank}(H_0(\L;\ZZ_2))=\text{rank}(H^0(\L;\ZZ_2))=\Theta(n), \cr
&
\end{align}
where we have used the isomorphism between the $i^\text{th}$ $\ZZ_2$ homology and cohomology groups $H_i(\L; \ZZ_2)$$\cong$$ H^i(\L; \ZZ_2)$ due to the universal coefficient theorem \cite{Hatcher:2001ut}.
Since $\bar{\C}=\bar{\C}^T= \C$, we have 
\be
\bar{d} = \bar{d}^T=d= \Omega(n).
\ee

Although the code space of $\bar{\C}$ and $\C$ is identical, their corresponding tanner graph is still different. Nevertheless, since both $\Hs^T$ and $\Hs$ are sparse matrices ($\C$ is a classical LDPC code), their multiplication $\Hs^T H=\bar{\Hs}$ is also a sparse matrix which means the code $\bar{\C}=\bar{\C}^T$ is also an LDPC code. 
\end{proof}

We note that here we focus on the case that the good classical code $\C=\text{Ker}(\Hs)$ with a full-rank parity check matrix is obtained from a random bipartite expander graph.   One can have the same construction using the Sipser-Spielman code \cite{Sipser_Spielman}, while the proof for the properties of $\bar{\Hs}=\Hs^T \Hs$ will be a bit different.  We will leave this discussion for future works or updated version of this paper.

We now take a homological product of three identical copies of code $\mathcal{\bar{C}}=\text{Ker}(\bar{\mathsf{H}})= \text{Ker}(\mathsf{H}^T\mathsf{H})$ to form a 3D hypergraph product code with parameters $[N, K, D]$, where $N=3n^3$. The total chain  complex of the 3D hypergraph product code has the product form   $\tilde{X}= X \otimes X' \otimes  X''$ and corresponds to the chain complex
\be
\tilde{C}_{3} \xrightarrow[]{\partial_3} \tilde{C}_{2} \xrightarrow[]{\partial_{2}=\mathsf{H}_Z^T} \tilde{C}_1 \xrightarrow[]{\partial_{1}=\mathsf{H}_X} \tilde{C}_{0}.
\ee
where $\tilde{C}_i$ denotes the $i$-chain group of the total complex $\tilde{X}$.
In particular, we have the following relation to the chain groups of the three classical codes:
\be
\tilde{C}_i =\bigoplus_{p+q+t=i} C_p \otimes C_q \otimes C_t.
\ee
We also express it explicitly as 
\begin{align}\label{eq:tensor_decomposition}
\non \tilde{C}_0=&C_0 \otimes C'_0 \otimes C''_0, \\ \non \tilde{C}_1=&C_1 \otimes C'_0 \otimes C''_0 + C_0 \otimes C'_1 \otimes C''_0 + C_0 \otimes C'_0 \otimes C''_1, \\
\non \tilde{C}_2=&(C_1 \otimes C'_1 \otimes C''_0) \oplus (C_0 \otimes C'_1 \otimes C''_1) \oplus (C_1 \otimes C'_0 \otimes C''_1) \\
\non \oplus & (C_2 \otimes C'_0 \otimes C''_0) \oplus (C_0 \otimes C'_2 \otimes C''_0) \oplus (C_0 \otimes C'_0 \otimes C''_2), \\  
\tilde{C}_3 =& C_1 \otimes C_1 \otimes C_1,
\end{align}
which tell us how to compose the vertices, edges, faces, and cubes with vertices and edges in the input classical code. 
For the new code, we can use the Künneth theorem
\begin{align}\label{eq:Kunneth_general}
\tilde{H}_i =& \bigoplus_{p+q+t=i} H_p \otimes H_q \otimes H_t \cr
\tilde{H}^i =& \bigoplus_{p+q+t=i} H^p \otimes H^q \otimes H^t
\end{align}
which gives rise to the corresponding 1st homology and cohomology, i.e.,
\begin{align}\label{eq:Kunneth}
\tilde{H}_1=& (H_1 \otimes H'_0 \otimes H''_0) \oplus (H_0 \otimes H'_1 \otimes H''_0) \cr 
&\oplus (H_0 \otimes H'_0 \otimes H''_1) \cr
\tilde{H}^1=& (H^1 \otimes H'^0 \otimes H''^0) \oplus (H^0 \otimes H'^1 \otimes H''^0) \cr
& \oplus (H^0 \otimes H'^0 \otimes H''^1).
\end{align}

One can quickly verify that the dimension, which equals the first Betti number, is still linear:
\begin{align}
K=&\tilde{b}_1 = b_1\cdot b'_0\cdot b''_0 + b_0\cdot b'_1 \cdot b''_0 + b_0 \cdot b'_0 \cdot b''_1 \cr
=& \Theta(n)\cdot \Theta(n) \cdot \Theta(n)= \Theta(N),
\end{align}
where $K$ and $N$ represent the total number of logical and physical qubits respectively.
The $Z$-distance is the combinatorial 1-systole 
\be
D_Z = sys_1(\mathcal{\tilde{L}}; \ZZ_2)= \Omega (n) = \Omega(N^{1/3}),
\ee
and the $X$-distance is the combinatorial 1-cosystole
\be
D_X = sys^1(\mathcal{\tilde{L}}; \ZZ_2)= \Omega (n^2) = \Omega(N^{2/3}),
\ee
where we have used the Künneth theorem in Eq.~\eqref{eq:tensor_decomposition}.
The overall distance is hence
\be
D= \text{min}(D_Z, D_X)=\Omega(N^{1/3}).
\ee

\section{Non-Clifford logical gates on thickened 3D hypergraph-product codes}\label{sec:non-Clifford_hypergraph}

\subsection{Building manifolds from classical codes}\label{sec:handle_construction}

\subsubsection{General introduction}

We generalize the code-to-manifold mapping in  Ref.~\cite{freedman:2020_manifold_from_code} from the case of quantum codes to the case of classical codes.  The manifolds built from the classical or quantum codes will then serve as the Legos of the product construction of the thickened qLDPC codes.

 Roughly speaking,  the construction of the manifold can be considered via a `\textit{plumber's view}' as  thickening the tanner graph of a skeleton classical code 
along extra-dimensions with the thickness being only $O(1)$ as illustrated by Fig.~\ref{fig:dictionary}, which hence only leads to a constant overhead of the thickened classical code defined on the manifold. 

A general classical expander code corresponds to a 2-term $\ZZ_2$-chain complex in Eq.~\eqref{eq:classical_chain} can be defined on a bipartite Tanner graph  or equivalently a  hypergraph. 
 When viewed as a Tanner graph $G_T=(V_B, V_C, E)$, the bits and checks are represented by the two types of vertices $V_B$ and $V_C$ represented by circles and squares  respectively, while the edge $E$ connecting them encodes the boundary map,  as illustrated in Fig.~\ref{fig:dictionary}(a,c,e).  When viewed as a hypergraph $G_h=(V, E_h)$, the checks are on the vertices $V$ while the bits are associated with a hyperedge $E_h$ composed by the edges connecting to a circle.   
In order to construct a manifold from this classical code, we first need to lift the $\ZZ_2$-chain complex to a $\Z$-chain complex. 
Here is the definition:
\begin{definition}\label{def:lift}\cite{freedman:2020_manifold_from_code} $\Z$-lift: 
Given a chain complex over $\ZZ_2$, defined by boundary operators $\partial_j: C_j \rightarrow C_{j-1}$ for some sequence of $C_j$, where the boundary operators are regarded as finite matrices with entries 0, 1, a \textit{lift} of that chain complex is a chain complex over $\ZZ$ defined by a sequence of lifted boundary operators $\hat{\partial}_j$, such for all $j$, $\hat{\partial}_j$ is a lift of ${\partial}_j$.
\end{definition}

 For classical code, one can just use a ``naive" lift, which maps $0$ mod $2$ to $0 \in \ZZ$ and $1$ mod $2$ to $1 \in \ZZ$  and obtain the following lifted chain complex $\hat{\mathcal{X}}$:
\be\label{eq:chain_lift}
\hat{C}_1 \xrightarrow[]{\hat{\partial}=\hat{\bar{\mathsf{H}}}} \hat{C}_{0},
\ee
where $\hat{\partial}$ and $\hat{\bar{\mathsf{H}}}$ are the lifted boundary map  and parity check matrices with $\ZZ$ coefficients respectively obtained from the $\ZZ_2$ parity check matrix $\bar{\mathsf{H}}$.

\begin{figure*}[hbt]
\includegraphics[width=1\textwidth]{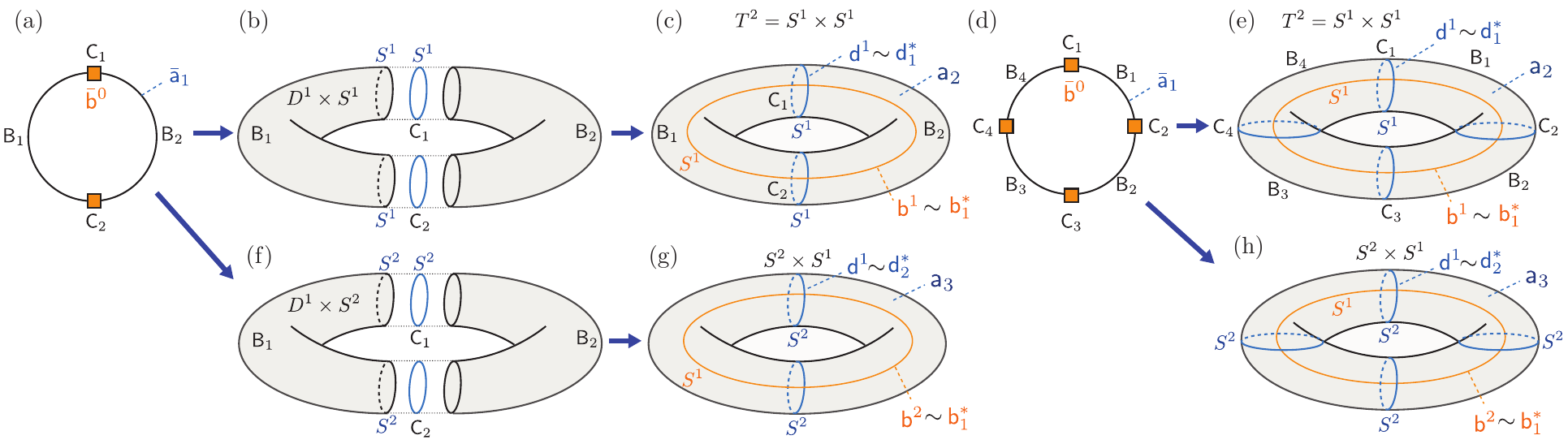}
\caption{(a) A toy example of a repetition code $\bar{\C}$ with two bits and two checks. One of the checks is redundant. The codeword of both $\bar{\C}$ and $\bar{\C}^T$ are highlighted, labeled as $\bar{\as}_1$ and $\bar{\bs}^0$ respectively. (b) The check is mapped to a ``1-cell" $S^1$ and the bit is mapped to a ``2-cell" $D^1 \times S^1$. (c) When  attaching the ``2-cells" to the ``1-cell" one obtains a torus. The codeword of $\bar{\C}$ is mapped to $\as_2$ while the codeword of  $\bar{\C}^T$ is mapped to $\bs^1 \sim \bs^*_1$. There exist a spurious 1-cocycle $\ds^1 \sim \ds^*_1$.  (d, e) An illustration of a general repetition code defined on a circle mapped to a torus. The spurious 1-cocycle $\ds^1 \sim \ds^*_1$ has only $O(1)$ size and will hence decrease the distance of the transposed code $\C^T$. (f) Map check to a ``2-cell" $S^2$, and bit to a ``3-cell" $D^1 \times S^2$.  (g) Attaching the cell together give rise to $S^2 \times S^1$.  The codeword of $\bar{\C}$ and  $\bar{\C}^T$ are mapped to 3-cycle $\as_3$ and 2-cocycles $\bs^2$ respectively, which have separated dimensions with the spurious 1-cocycle $\ds^1$. (h)  Mapping for the general repetition code. The spurious 1-cocycle $\ds^1$ does not affect the distance of the transposed code $\C^T$. }\label{fig:spurious_illustration}
\end{figure*}

As has been observed in Ref.~\cite{freedman:2020_manifold_from_code}, one cannot directly obtain the manifold and the underlying simplicial or cellular  complex structure from the above lifted chain complex.  The issue is that on a manifold or more generally a simplicial complex, and edge (1-cell) is always adjacent to two vertices (0-cells).  This is not true for a general two-term chain complex corresponding to a hypergraph where a hyperedge can be adjacent to more than two vertices. 

In order to resolve this issue, we need to promote the bits from 1-cells (edges) to higher-dimensional cells. An obvious attempt is to promote the bits to 2-cells (faces), and checks to 1-cells.  Note that,  topologically speaking, a $k$-cell corresponds to a $k$-dimensional ball $D^k$. We will sometimes also put the quotation mark to the word as ``$k$-cell" to refer to an object that is not a $k$-ball following the convention in Ref.~\cite{freedman:2020_manifold_from_code}. It should be considered as a dressed cell that is itself composed of cells of dimension equal and lower than $k$. We then attach the ``2-cells" to ``1-cells" according to the lifted boundary map $\hat{\partial}$ in order to build a cellular complex, which can be thickened to a manifold with a boundary, and attached to a copy of itself to reach a closed manifold.  The ``1-cell" we associate to each check $\mathsf{C}_j$ is formed by attaching a 1-cell $D^1$ (interval) to a 0-cell $D^0$ (point), which becomes a circle $S^1$.  The ``2-cell" associated to the bit $\mathsf{B}_i$ is formed by removing $f(i)$ 2-disks on a 2-sphere, i.e., $S^2\backslash\sqcup_{m=1}^{f(i)}D^2_m$, where $f(i)$ specifies the total number of checks that the bit $\mathsf{B}_i$ is connected to given by the lifted boundary map $\hat{\partial}$, and $\sqcup$ represents a disjoint union.

However, such a construction may lead to spurious homologies which do not correspond to the homology group in the input skeleton classical code, as has been pointed out by Ref.~\cite{freedman:2020_manifold_from_code}.   We consider the following toy example which is also mentioned in Ref.~\cite{freedman:2020_manifold_from_code}.   We consider the following lifted boundary map $\hat{\alpha}=\hat{\bar{\Hs}}$  obtained from a classical code according to Eq.~\eqref{eq:chain_lift} with two bits in $\hat{C}_1$ and two checks in $\hat{C}_0$:
\be
\hat{\bar{\Hs}}= 
\begin{pmatrix}
     1 & 1 \\
     1 & 1
   \end{pmatrix}.
\ee
The associated tanner graph is shown in Fig.~\ref{fig:spurious_illustration}(a). Note that both checks $\mathsf{C}_1$ and $\mathsf{C}_2$ check the same set of bits $\mathsf{B}_1$ and $\mathsf{B}_2$, so one of the checks is redundant, and this redundancy gives rise to 0th-Betti number $\bar{b}_0=1$. On the other hand the 1st-Betti number is $\bar{b}_1=1$ since the code only encode 1 logical bit. Now applying the above scheme of shifting the dimensions, we promote each check originally associated with $\hat{C}_0$ to a ``1-cell" $S^1$, as illustrated in Fig.~\ref{fig:spurious_illustration}(b). We then promote each bit originally associated with $\hat{C}_1$ to a ``2-cell", which is formed by a 2-sphere with two 2-disks being removed: $S^2 \backslash (D^2 \sqcup D^2)$, equivalent to a cylinder $D^1 \times S^1$. We now attach each ``2-cells" to both ``1-cells" by gluing them along the $S^1$, this gives rise to a torus $T^2=S^1 \times S^1$ as illustrated in Fig.~\ref{fig:spurious_illustration}(c).   The second betti number is hence $b_2=b_0=1$ corresponds to the 1 logical bit, since the torus has only 1 connected component. There is hence no issue for the codeword 1-cycle $\bar{\as}_1$ in the skeleton classical code $\bar{\C}$, which is mapped to the 2-cycle $\as_2$ on the torus, 
as illustrated in Fig.~\ref{fig:spurious_illustration}(b). Now the issue is that the 1st Betti number is $b_1=2$ coming from both the logitudinal (orange) and meridian (blue) 1-cycles $S^1$. However, the 0th Betti number corresponding to the redundancy of the check in the input classical code $\bar{\C}$ is only $\bar{b}_0=1$, so we have got a spurious 1st-homology in the manifold we construct.  

We can see that this spurious homology is also present in the more general case of a repetition code defined on a circle as illustrated in Fig.~\ref{fig:spurious_illustration}(d).  The main problem is that the codeword of the transposed skeleton code $\bar{\C}^T=\text{Ker}(\bar{\Hs}^T)$, namely the 1-cocycle $\bar{\bs}^0$ is mapped to the 1-cocycle $\bar{\bs}^1 \in H^1(T^2; \ZZ_2)$ on the torus.  
However, there is a spurious 1-cocycle $\ds^1$ with $O(1)$ size (in both the discrete and continuous sense), whose Poincar\'e dual ${\bar{\bs}}^*_1$ is the meridian (blue) 1-cycle $S^1$ (blue), which gives rise to an $O(1)$ 1-cosystole $sys^1(T^2; \ZZ_2)$  on the torus.   This leads to the consequence that the transposed classical code on the torus $\C^T=H^1(T^2; \ZZ_2)$ has an O(1) distance if we consider the code as a subspace code.  When taking the product of the manifolds built from the skeleton classical codes, the spurious cycles/cocycles may lead to small systole/cocystoles at the dimension of the logical cycles/cocycles, which then leads to short distance.    We note that this is not a dead end, since one can use the subsystem-code idea to treat those short spurious cycles/cocycles as the logical operators of gauge qubits.   We will return to this approach in Sec.~\ref{sec:4-manifold} to Sec.~\ref{sec:magic_rate_hypergraph}.

Instead, we will first go to higher dimensions and associate bits to ``3-cells", and checks to ``2-cells", which will lead to a clear separation between the dimensions of the logical (co)cyclces having large (co)systoles with the spurious (co)cycles having $O(1)$ (co)systoles, following the approach in Ref.~\cite{freedman:2020_manifold_from_code}.  

As we can see for the toy example of the classical skeleton code in Fig.~\ref{fig:spurious_illustration}(a), $\mathsf{C}_j$ $\mathsf{B}_i$ are mapped to ``2-cell" $S^2$ and ``3-cell" $D^1 \times S^2$ respectively as shown in Fig.~\ref{fig:spurious_illustration}(b). Now when attaching the ``3-cells" to the ``2-cells" and realize a 3-manifold $S^2 \times S^1$.  We can see now the second and third Betti numbers are $b_2=b_3=1$,  which faithfully corresponding to the Betti numbers of the skeleton classical code $\bar{b}_0 =\bar{b}_1=1$ with a shift in dimension.  The spurious homology still lies in dimension 1, which gives rises to $b_1=1$ orignated from the longitudinal (orange) 1-cycle $S^1$.   All these properties remain the same in the more general case of a repetition code on a circle as illustrated in Fig.~\ref{fig:spurious_illustration}(d, h).
The codeword of the skeleton classical code $\bar{\C}$, i.e., 1-cycle $\bar{\as}_1$ is mapped to 3-cycle $\as_3$ on the 3-manifold, which in this case is just the entire 3-manifold $S^2 \times S^1$. On the otherhand, the codeword of the transposed skeleton code $\bar{\C}^T$, i.e., 0-cocycle $\bar{\bs}^0$ is mapped to the 2-cocycle $
\as^2$ whose Poincar\'e dual 1-cycle $\as^*_1$ is supported on the longitudinal (orange) cycle $S^1$ as illustrated in Fig.~\ref{fig:spurious_illustration}(h).  
Therefore, both the 3-systole $sys_3(S^2 \times S^1; \ZZ_2)$ and 2-cosystole $sys^2(S^2 \times S^1; \ZZ_2)$, which correspond to the distances of the classical code $\C=H_3(S^2 \times S^1; \ZZ_2)$ and the transposed code $\C^T=H^2(S^2 \times S^1; \Z_2)$, are large and have the same scaling $\Theta(n)$ as the skeleton classical code $\bar{\C}$, where $n$ is the total number of bits in $\bar{\C}$. Note that the 2-systole $sys_2(S^2 \times S^1; \ZZ_2)$ has only $O(1)$ size due to the short meridian (blue) 2-cycle $S^2$. However, the 2-cycle does not corresponds to the codeword of $\C$ or $\C^T$. We hence obtain a classical code $\C$ along with its transposed code  $\C^T$ on the 3-manifold  which have the same code parameter scaling as the skeleton classical code $\bar{\C}$ and its transpose $\bar{\C}^T$. Also note that there exists a short $O(1)$-size spurious 1-cocycle $\ds^1$, whose Poincar\'e dual 2-cycle $\ds^*_2$ is supported on the $S^2$ (blue) of size $O(1)$. Nonetheless, its dimension 1 is separated from dimension 3 and 2 of the logical cycles and cocycles of $\C$ and $\C^T$ and hence does not cause any issue for the classical code.

Up to now, the above 3-manifold construction works perfectly for the repetition code. However, issues arise for more general classical codes where the check $\mathsf{C}_i$ can be connected to more than two bits $\mathsf{B}_i$. The problem is that in a 3-manifold, a ``2-cell" $S^2$ cannot be attached by more than two ``3-cells" $D^1 \times S^2$, which is related to the fact that in the corresponding triangulation a 2-simplex cannot be attached by more than two 3-simplices.  To avoid this problem, one needs to go to  higher dimensions as will be described in the following subsection.

\subsubsection{Handle construction of the 8-manifold}\label{sec:handle_8-manifold}

In the following, instead of associating the bits and checks in the skeleton classical code to the ``$k$-cells" and ``$(k-1)$-cells" and build a cellular complex which can then be thickened into a manifold as mentioned previously, we pursue an alternative approach via a handle construction of the manifold by associating the bits and checks to $k$-handles and $(k-1)$-handles, which is conceptually much easier.  After building the manifold $\M$, we will obtain a new thickened classical LDPC code defined on the triangulation $\L$ of the manifold $\M$ with the bits and checks defined on the $k$-simplices and $(k-1)$-simplices, which can then be used as the building blocks (along with the triangulated manifolds built from the quantum code shown in  Sec.~\ref{sec:good_product}) to construct thickened qLDPC codes by taking the product of the corresponding manifold triangulations.  The qubits and checks in these qLDPC codes will be placed on the simplices on the corresponding triangulations. This will be the approach in the entire Sec.~\ref{sec:non-Clifford_hypergraph} and Sec.~\ref{sec:good_product}.

Although the classical or quantum codes built on the triangulation will on have a constant overhead compared to the input skeleton codes,  later in Sec.~\ref{sec:deformation_retraction} we will choose an even more compact construction by deformation retracting the manifolds back to a cellular complex, more commonly referred to as a \textit{CW complex} in the literature \cite{Hatcher:2001ut}. This is achieved by retracting each $i$-handle in the manifold into a $i$-cell in the CW complex $\L_c$.  In this way, the bits and checks in the skelton classical code $\bar{\C}$ are directly associated to the $k$-cells and $(k-1)$-cells instead of the simplices, and the new classical code  defined on the CW complex $\L_c$ is completely the same as the input skeleton code $\bar{\C}$, and the same for the quantum code constructed in Sec.~\ref{sec:good_product}.  In this way, the constructed product quantum codes will have a minimum overhead compared to the product of the skeleton codes.

Back to the handle construction, an $r$-dimensional $k$-handle is a pair
\begin{equation}
  h_k=(D^k \times D^{r-k}, \partial D^k \times D^{r-k}),  
\end{equation}
which represents a $r$-dimensional manifold $D^k \times D^{r-k}$ along with its attaching region $\partial D^k \times D^{r-k} = S^{k-1} \times D^{r-k}$, where $D^k$ and $S^{k-1}$ represent a $k$-dimensional ball and a ($k-1)$-dimensional sphere respectively.  We call $D^k$ and $D^{r-k}$ the \textit{core} and \textit{co-core}  of a $k$-handle respectively, as illustrated in Fig.~\ref{fig:handle_introduction}(a). We also define the co-attaching region as $D^k \times \partial D^{r-k}$. Roughly speaking, a $k$-handle can be considered as a thickened $k$-cell. Sometimes we will just write the $k$-handle as $h_k=D^k \times D^{r-k}$ with the attaching region not expressed explicitly for conciseness.   Handlebodies are unions of a sequence of handles where each handle is attached (glued) along its attaching region to the previous union of handles.   We can build the handlebodies starting with $0$-handles, which are $r$-balls ($r$-disks) with no attaching region, i.e., $h_0=(0 \times D^{r}, \varnothing)$. One then step by step attaches handles of increasing indices $k$ to the previously constructed handlebodies until attaching the $r$-handles $h_{r}=(D^{r} \times 0, S^{r-1} \times 0)$ which has the empty co-attaching region to close the $r$-manifold.

\begin{figure}[t]
    \centering
    \includegraphics[width=1\linewidth]{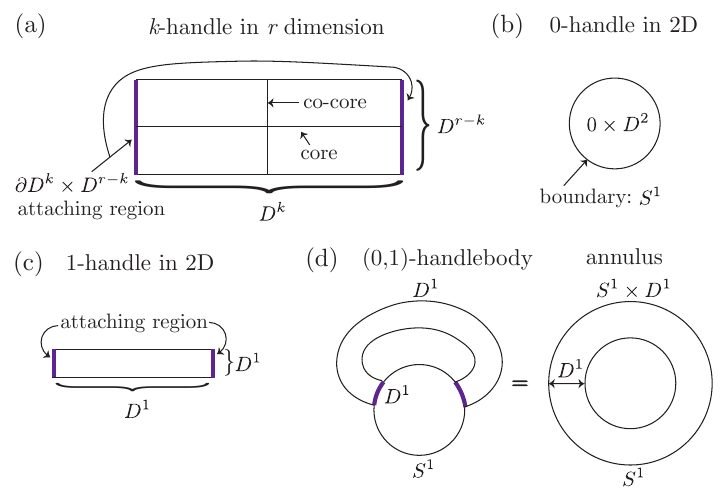}
    \caption{(a) Anatomy of the $k$-handle in $r$ dimension. (b) A 0-handle in 2D $0 \times D^2$ with no attaching region. The boundary is $S^1$. (c) The 1-handle $D^1 \times D^1$ in 2D with attaching region $D^1$ (purple).  (d) Attach the 1-handle to the boundary of the 0-handle along the attaching region $D^1$ (purple).  This gives rise to a (0,1)-handlebody homeomorphic to an  annulus $S^1 \times D^1$.  }
    \label{fig:handle_introduction}
\end{figure}

\begin{figure*}[hbt]
\includegraphics[width=1\textwidth]{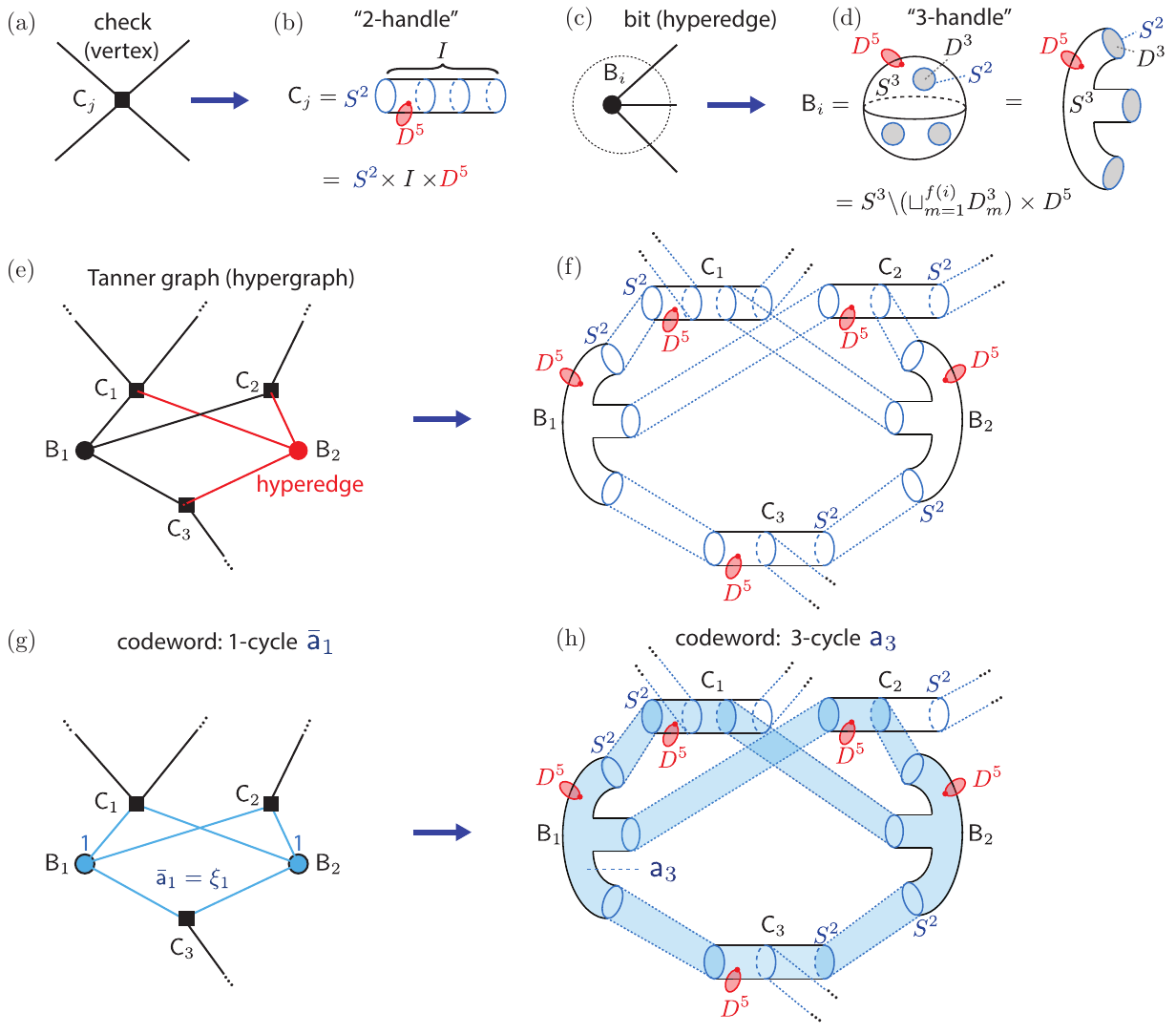}
\caption{A plumber's view of classical codes. (a) A check $\mathsf{C}_j$ on the vertex (square).  (b) The check $\mathsf{C}_j$ is mapped to a ``2-handle" $\mathsf{C}_j=S^2 \times I \times D^5$.  (c) A bit $\mathsf{B}_i$ on a hyperedge composed of multiple edges connected to a single circle.  (d) The bit  $\mathsf{B}_i$ is mapped to a ``3-handle" corresponding to a 3-sphere $S^3$ with multiple 3-disks (3-balls) $D^3$ being removed and then thickened by $D^5$. It has multiple legs which can be connected to the neighboring ``2-handles".  (e) A bipartite Tanner graph with two types of vertices corresponding to checks (squares) and bits (circles) respectively.  Equivalently one can consider it as a hypergraph where the checks are placed on the vertices (square) and the bits placed on the hyperedge (red). (f) The Tanner graph is thickened to a handlebody where the ``3-handles" are attached to the adjacent ``2-handles" according to the boundary map in the skeleton Tanner graph and via gluing them along the attaching region $S^2 \times D^5$ of the ``3-handles".  (g) A codeword of the classical code $\bar{\C}$ is a 1-cycle $\bar{\as}_1$ supported on a subset of hyperedges (blue). (h) The codeword is thickened to a 3-cycle $\as_3$ (blue) in the 3-handlebody $H$.  }\label{fig:dictionary}
\end{figure*}

Here we will construct the following handle  chain complex $\L_h$ corresponding to the handle decomposition of an 8-dimensional manifold  $\M^{8}$ \footnote{We note that $r=7$ should also work. We choose $r \ge 8$ since it is conceptually simpler:  the left and right portions are separated by trivial group $0$ and trivial boundary maps in the middle.}:
\begin{align}\label{eq:handle_chain_complex}
&C_{8} \rightarrow \cdots \rightarrow C_{4} \rightarrow C_3 \xrightarrow[]{\hat{\partial} \sim \bar{\mathsf{H}}} C_{2} \rightarrow C_1 \rightarrow \cdots , \cr
&\qquad   \quad \ \ \  \qquad \qquad    
 \text{bit} \quad \quad  \text{check}  \cr 
\end{align}
where $C_i=\text{span}_\ZZ(i\text{-}\text{handles})$. The boundary map $\partial_4$ instructing how the 4-handles are attached to the 3-handles can be obtained from the information in the lifted parity check matrix $\hat{\partial}=\bar{\mathsf{H}}$ (omit the hat for simplicity).  This means  the portion of the above handle chain complex $C_3 \xrightarrow[]{\hat{\partial} \sim \bar{\mathsf{H}}} C_{2}$ is isomorphic to the lifted chain complex Eq.~\eqref{eq:chain_lift}    obtained from the classical code. 
Here, the 4-chain group $C_4$ is trivial (meaning there is no 4-handles in this decomposition), and we can re-express the above handle chain complex $\L_h$ as
\begin{align}\label{eq:long_chain}
& C_8 \rightarrow C_7 \rightarrow C_{6} \xrightarrow[]{\hat{\partial}^T \sim \bar{\mathsf{H}}^T}  C_5 \rightarrow 0  \rightarrow C_3 \xrightarrow[]{\hat{\partial} \sim \bar{\mathsf{H}}} C_{2} \rightarrow C_1 \rightarrow C_0 \cr
\end{align}
Note that there is a symmetry between $C_3$ and $C_5$, and between $C_2$ and $C_6$, which is a reflection of Poincar\'e duality of a manifold.  We can further write down the dual handle chain complex $\L^*_h$:  
\begin{align}\label{eq:long_chain_dual}
 \cdots \leftarrow C_1^* \leftarrow C_{2}^* \xleftarrow[]{\hat{\partial} \sim \bar{\mathsf{H}}}  C^*_3 \leftarrow 0  \leftarrow C^*_5 \xleftarrow[]{\hat{\partial}^T  \sim \bar{\mathsf{H}}^T} C^*_{6} & \leftarrow C^*_7 \cdots \cr 
\end{align}
Note that the dual $i$-handle in dimension $r$ corresponds to an $(r-i)$-handle. We can see that $\hat{\partial} =\bar{\Hs}$ instructs  the attachment between the dual 3-handles and the dual 2-handles, which are equivalent to the 5-handles and 6-handles respectively. 

As hinted in Ref.~\cite{freedman:2020_manifold_from_code} and further elaborated in Ref.~\cite{guemard2025lifting} (see also Ref.~\cite{guemard2025lifts}),  the above handle chain complex $\L_h$ can be turned into a cellular chain complex $\L_c$  via a deformation retraction, which retracts the $k$-handles to $k$-cells. 
The cellular chain complex is often called a \textit{CW complex} in the literature \cite{Hatcher:2001ut} with the following formal definition:
\begin{definition}
A CW complex is built inductively by attaching 
$k$-cells (copies of open 
$k$-balls $D^k$) via attaching maps from their boundaries 
$S^{k-1}$ into the $(k-1)$-skeleton with the following conditions:
\begin{enumerate}
\item Closure-finite (C): The closure of each cell meets only finitely many other cells.
\item	
Weak topology (W): A set is closed if and only if its intersection with each cell-closure is closed.
\end{enumerate}

\end{definition}
\nin Therefore, the handle chain complex and the CW (cellular) complex are isomorphic to each other: $\L_h \cong \L_c$, as will be used later in Sec.~\ref{sec:deformation_retraction}.  

Note that we can always shift left portion of $\L_h$ in Eq.~\eqref{eq:long_chain} towards the left by inserting more trivial group 0 in the middle chains.  This will increase the total space dimension to $r>8$.

We now introduce the detailed procedure of a handle construction of the smooth 8-manifold $\M^{8}$.

Following Ref.~\cite{freedman:2020_manifold_from_code}, we first build a 3-handlebody $H$, which corresponds to a 8-manifold with boundary that is the union of $k$-handles of indices $k=0,1,2, 3$.   This can be viewed as the right part of the handle chain complex in Eq.~\eqref{eq:long_chain} starting from $C_3$.  We then take another identical copy of the 3-handlebody $H^*$, which can be viewed as the upside-down 3-handlebody corresponding to the left part of the dual handle chain complex in Eq.~\eqref{eq:long_chain_dual}. We hence obtain the closed manifold $\M^{8}=\mathcal{D}H$ as the double of $H$ by gluing the two copies $H$ and $H^*$ along their boundary $\partial H$ with an identity map $id_{\partial H}$, i.e., 
\be
\mathcal{D}H=H \cup_{id_{\partial H}} H^*. 
\ee
In the second copy the dual $k$-handles are equivalent to the $(8-k)$ handles such that we obtain the full handle chain complex in Eq.~\eqref{eq:long_chain_dual}.   It is through this doubling process that one builds in the Poincar\'e  duality structure of the manifold.

Instead of attaching handles with standard order of increasing indexes,  we will follow the approach in Ref.~\cite{freedman:2020_manifold_from_code} and attach dressed  $k$-handles \footnote{Note that the naming convention of `dressed handle' was first introduced in Ref.~\cite{guemard2025lifting}.} that are themselves handlebodies consisting of handles of with indices equaling or lower than $k$ such as 0, 1 and 2.   In this way, some lower-indexes handles will be introduced after some higher-index handles, while the handles can be reordered by indexes through handle sliding. For simplicity, we will just denote the dressed $k$-handle as ``$k$-handle" following the convention in Ref.~\cite{freedman:2020_manifold_from_code}, where the use of quotation mark is similar to that for ``$k$-cell". The ``$k$-handle" can be expressed as
\be\label{eq:dressed-k-handle}
\tilde{h}_k = (N^k \times D^{r-k} , \partial N^k \times D^{r-k}),
\ee
with $r=8$ for this subsection. Here $N^k$ represents a dressed core.  In the following, we will attach ``$k$-handles" with the order of increasing indexes.  Note that each dressed ``2-handle" and ``3-handle" only contain exactly one bare 2-handle and 3-handle respectively.  Therefore the lifted boundary map $\hat{\partial} \sim \bar{\Hs}$, which instructs how the bare 3-handles are attached the bare 2-handles, also specifies how the dressed ``3-handles" are attached to the dressed ``2-handles".  Anatomy of the handle structure of the dressed handles will be discussed in details in Sec.~\ref{sec:retraction_detail}.

We first assign a 0-handle $h_0=0 \times D^{8}=D^{8}$ to each check in the classical code, and then attach a 2-handle $h_2=D^2~\times~D^6$ to the boundary of each 0-handle $h_0$ [the 2D analog of attaching a 1-handle to a 0-handle is illustrated in Fig.~\ref{fig:handle_introduction}(b-d)].  The attaching region of $h_2$ is $\partial D^2 \times D^6 = S^1 \times D^6$, and the attaching map in this case is
\be
S^1 \times D^6 \hookrightarrow  \partial D^{8}=S^7.
\ee
We hence get a $(0,2)$-handlebody $\sqcup_{j=1}^{n^T} (S^2 \times D^6)_j \equiv \sqcup_{j=1}^{n^T} \mathsf{C}_j$, where $\sqcup$ represents a disjoint union,  $\mathsf{C}_j$ represents the ``2-handle" corresponding to the $j^\text{th}$ check, and $n^T$ is the number of checks, with the dressed call being $N^2=S^2$. Note that this ``2-handle" is composed of one 0-handle and one 2-handle. We also call $\mathsf{C}_j$ the check ``handle", which can also be re-written as $\mathsf{C}_j = (S^2 \times I \times D^5)_j$ as illustrated in Fig.~\ref{fig:dictionary}(a,b), with $I$ being an interval. Note that when choosing the symmetric parity check matrix $\bar{\mathsf{H}}= \Hs^T \Hs $ according to Sec.~\ref{sec:hypergraph-product-code}, we have $n^T=n$ equal to the number of bits.

We then attach ``3-handles" $\mathsf{B}_i$ to each ``2-handles" $\mathsf{C}_j$,  where $\mathsf{B}_i$ refers to the ``3-handles" corresponding to the $i^\text{th}$ bit (also call them bit ``handles"). The attachment is determined by the lifted boundary map $\hat{\partial}$ from the classical code as
\be\label{eq:attaching_rule}
\hat{\partial}\mathsf{B}_i= \sum_{j \in I_i} \mathsf{C}_j,
\ee
where $I_i$ indexes all the $\mathsf{C}_j$ incident to $\mathsf{B}_i$.   Note that all the coefficient on the right of the above equation is $+1$ since we have chosen the naive lift that maps all $1 \mod 2$ to $1 \in \ZZ$.  This data is also encoded in the Tanner graph or equivalently the hypergraph of the skeleton classical code as illustrated in the correspondence between Fig.~\ref{fig:dictionary}(e) and (f).  
As shown in Fig.~\ref{fig:dictionary}(c,d), each ``3-handle" has the form $\mathsf{B}_i=N_i^3 \times D^5$ where the $D^3$ in the standard 3-handle is replaced by a punctured 3-sphere  $N^3_i=S^3 \backslash \sqcup_{m=1}^{f(i)}D^3_m$ as a dressed core. Here, $f(i)=|I_i|$ represents the number of check ``handles" $\mathsf{C}_j$ that the $i^\text{th}$ bit ``handle" is incident to given by the lifted boundary map $\hat{\partial}$.   The attaching region of a ``3-handle" $\mathsf{B}_i$ is $ \sqcup_{m=1}^{f(i)} (\partial  D^3 \times D^5)_{i, m} = \sqcup_{m=1}^{f(i)} (S^2 \times D^5)_{i,m}$, similar to the case of the standard 3-handle.  We then glue the attaching region of the ``3-handles" to the boundary of the ``2-handles" $\partial \mathsf{C}_j =(S^2 \times \partial D^6)_j =(S^2 \times S^5)_j$ [see Fig.~\ref{fig:dictionary}(f)].  The corresponding attaching map from $B_i$  to the adjacent $\mathsf{C}_j$ is hence:
\be
(S^2 \times D^5)_{i, m}  \hookrightarrow (S^2 \times S^5)_j, 
\ee
with the attaching rule specified by Eq.~\eqref{eq:attaching_rule}.  We hence have built the desired 3-handlebody $H$ from the classical code.  After taking the double of $H$, we obtain the closed 8-manifold $\M^{8}=\mathcal{D}H$ as mentioned above.  

Now we show how the logical information is mapped from the skeleton classical code to the thickened code on the manifold $\M^{8}$. A codeword of the skeleton classical code $\bar{\C}$ corresponds to the 1-cycle $\bar{\mathsf{a}}_1 =\xi_1$ on the hypergraph $G_h$, and is illustrated in Fig.~\ref{fig:dictionary}(g), where the highlighted hyperedges are the support of the codeword.  Now the 1-cycle $\xi_1$ forms the skeleton of the codeword in the manifold code $\C=H_3(\M^8; \ZZ_2)$ corresponding to the 3-cycle $\mathsf{a}_3$ as well as its Poincar\'e dual 5-cocycle ${\mathsf{a}^*}^5$, as illustrated in Fig.~\ref{fig:dictionary}(g,h).  Namely, there exists the following mapping:
\be\label{eq:3-cycle_mapping}
\bar{\mathsf{a}}_1  \rightarrow \mathsf{a}_3 \sim {\mathsf{a}^*}^5. 
\ee
On the other hand, the codeword of the transposed skeleton classical code $\bar{\C}^T$ corresponding to the 0-cocycle $\bar{\bs}^0$ on the hypergraph $G_h$ is promoted to the 2-cocycle $\bs^2$ as well as its Poincar\'e dual 6-cycle $\mathsf{b}^*_6$, in the transposed manifold code $\C^T=H^2(\M^8; \ZZ_2)$, i.e.,
\be\label{eq:2-cocycle_mapping}
\bar{\mathsf{b}}^0  \rightarrow \mathsf{b}^2 \sim \mathsf{b}^*_6.
\ee
Besides, the short 0-cycles and 1-cocycle with $O(1)$ size in $\bar{\C}$ are mapped to the 1-cycles and 2-cocycles as well as their Poincar\'e duals in $\M^8$ respectively, i.e., 
\be
\bar{\mathsf{b}}_0  \rightarrow \mathsf{b}_2 \sim {\mathsf{b}^*}^6, \quad \bar{\mathsf{a}}^1  \rightarrow \mathsf{a}^3 \sim {\mathsf{a}_5^*}.
\ee

\subsubsection{Asymptotically good classical LDPC codes built on manifolds}

\begin{theorem}\label{theorem:manifold_scaling}
The $r$-dimensional triangulated manifold $\M^r$ $(r \ge 8)$ built from the handle construction with the input of Tanner graph $G_T$ of a classical LDPC code $\bar{\C}=\text{Ker}(\bar{\Hs})$ and its transposed code $\bar{\C}^T=\text{Ker}(\bar{\Hs}^T)$ with parameters $[n,k,d]$ and $[n^T,k^T,d^T]$ respectively  satisfies the following properties:
\begin{enumerate}
\item
$\M^r$ has a bounded local geometry, i.e., each vertex in its triangulation is adjacent to $O(1)$ simplices.  
\item 
$\M^r$ contains $\Theta(n)$ number of $k$-simplices $(0 \le k \le r)$, i.e., $\text{dim}(C_k)=\Theta(n)$, where $C_k$ represents the $k^\text{th}$ chain group. 
\item  $b_3=\dim[H_3(\M^r;\ZZ_2)]=k$; \\  $b_2=\dim [H^2(\M^r;\ZZ_2)]=k^T$
\item $sys_3(\M^r;\ZZ_2)= \Omega(d)$,  $sys^2(\M^r;\ZZ_2)=\Omega(d^T)$, where $d$ and $d^T$ are the distances of $\bar{\C}$ and $\bar{\C}^T$ respectively. 
\end{enumerate}
\end{theorem}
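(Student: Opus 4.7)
My plan is to verify the four claims in turn, leaning on the explicit handle decomposition of $\M^r$ described in Sec.~\ref{sec:handle_8-manifold} and the isomorphism $\L_h \cong \L_c$ between the handle chain complex and its retracted CW complex $\L_c$. For concreteness I will write the argument for $r=8$; the $r>8$ case follows by inserting trivial chain groups, which affects neither simplex counts, Betti numbers, nor systoles in the relevant degrees.

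For (1), the LDPC hypothesis on $\bar\C$ means $\bar\Hs$ has $O(1)$ nonzero entries in every row and column. Hence each ``2-handle'' $\mathsf{C}_j$ is attached by $O(1)$ ``3-handles'' $\mathsf{B}_i$, and each $\mathsf{B}_i$ has $f(i)=O(1)$ legs. Every dressed handle $N^k\times D^{r-k}$ is thus combinatorially of bounded type, with only finitely many attaching patterns up to isomorphism. Triangulating each such handle once and gluing along matched attaching regions yields a triangulation of $\M^r$ in which every vertex is adjacent to $O(1)$ simplices. For (2), the total number of ``$2$-handles'' and ``$3$-handles'' is $n^T+n=\Theta(n)$ (using $n^T=n$ from Lemma~\ref{lemma:classical_code}); doubling only doubles this, and the triangulation of each handle contributes $O(1)$ simplices of every dimension, giving $\dim C_k=\Theta(n)$ for all $k$.

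For (3), I would write down the cellular chain complex of $\L_c\cong\L_h$ for $\M^8=\mathcal{D}H$. After the doubling the nontrivial chain groups live in degrees $0,2,3,5,6,8$, and the only nontrivial boundary maps are $\partial_3=\hat\partial\sim\bar\Hs$ (from the attaching of ``3-handles'' to ``2-handles'' in $H$) and, by Poincaré duality for the double, $\partial_6\sim\bar\Hs^T$. All intermediate boundary maps vanish because the adjacent chain groups are zero. Therefore
\begin{align}
H_3(\M^8;\ZZ_2) &= \ker\bar\Hs = \bar\C,\\
H_2(\M^8;\ZZ_2) &= \mathrm{coker}(\bar\Hs),
\end{align}
so $b_3=\bar k=k$ and, since $\dim\mathrm{coker}(\bar\Hs)=n^T-\mathrm{rank}(\bar\Hs^T)=\dim\ker\bar\Hs^T=k^T$, also $b_2=k^T$. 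Cohomology over $\ZZ_2$ agrees by the universal coefficient theorem.

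Claim (4) is the substantive step. The upper bound comes from the explicit correspondences Eqs.~\eqref{eq:3-cycle_mapping}--\eqref{eq:2-cocycle_mapping}: a minimum-weight codeword $\bar\as_1$ of $\bar\C$ of weight $d$ thickens to a $3$-cycle $\as_3$ supported on the $O(1)$ $3$-simplices inside each of the corresponding ``$3$-handles'' $\mathsf{B}_i$, hence of Hamming weight $O(d)$, and similarly for the $2$-cocycle coming from $\bar\C^T$. For the lower bound I would argue by a retraction/skeleton-projection: because every $k$-handle of index $k\le 3$ is attached to the lower-index handlebody via a map whose image lies in a space of low dimension, any $3$-cycle $z_3\in Z_3(\M^8;\ZZ_2)$ can be pushed off the dual (high-index) handles $h_5,h_6,h_8$ without increasing its weight, landing it in $H\subset\M^8$. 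On $H$, deformation retract along the co-cores of the $3$-handles to reduce $z_3$ to a cellular $3$-chain whose support is a union of cores $D^3$ of ``$3$-handles''; this chain is closed in $\L_c$ iff the underlying set of bit indices forms a codeword of $\bar\C$. Since each core contributes $O(1)$ simplices, the original weight is at least a constant times the classical weight, hence $\Omega(d)$. The analogous argument for $2$-cocycles, using Poincaré duality to exchange cycles and cocycles and the same retraction on the dual handlebody, gives $sys^2(\M^r;\ZZ_2)=\Omega(d^T)$. The main obstacle I expect is making this ``push off high-index handles and retract to the skeleton'' argument precise at the $\ZZ_2$-chain level without losing a constant factor or accidentally picking up spurious cycles from the dressed cores; this is exactly the content that Freedman-Hastings handled by choosing the dimension gap $r\ge 8$ large enough so that the spurious homology (of the attaching spheres and punctured spheres) sits in degrees disjoint from $2$ and $3$.
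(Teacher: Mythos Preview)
Your outline for (1) and (2) matches the paper's approach; the paper adds an explicit ``overlay'' subdivision algorithm to match handle boundaries, but the idea is the same.

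For (3), your claim that the nontrivial cellular chain groups live only in degrees $0,2,3,5,6,8$ is incorrect: each dressed ``3-handle'' $\mathsf{B}_i$ contains $f(i)-1$ genuine 1-handles connecting its $f(i)$ boundary 2-spheres (see the anatomy in Sec.~\ref{sec:retraction_detail}), so $C_1\neq 0$ and dually $C_7\neq 0$. Your conclusion $H_3=\ker\bar\Hs$, $H_2=\mathrm{coker}\,\bar\Hs$ is still right, but for a different reason: $\partial_2=0$ because each 2-cell is a disk attached by a constant map to a single 0-cell, not because $C_1$ vanishes.

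For (4), the step ``push off the dual high-index handles without increasing weight'' is where you diverge from the paper and where the real gap sits. General position lets you make a smooth 3-cycle disjoint from the cocores of the 5-, 6-, 8-handles, but the \emph{simplicial} weight is not controlled under such isotopies, so the inequality you want does not follow. The paper never pushes the cycle into $H$. Instead it uses that $C_4(\L_c)=0$, so every class in $H_3(\L_c;\ZZ_2)$ has a \emph{unique} cellular representative, namely a codeword of $\bar\C$. Because the deformation retraction $\M^r\to\L_c$ is handle-wise (each doubled ``3-handle'' $\D\mathsf{B}_i=N^3_i\times S^{r-3}$ retracts onto its own core), the set of doubled ``3-handles'' that any simplicial representative $\as_3$ intersects must contain the support of that unique codeword. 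The codeword has weight $\ge d$, and each such handle contributes at least one 3-simplex to $\as_3$, so $|\as_3|\ge d$ directly---no weight-preserving push-off is needed. This ``handle-support is a class invariant'' observation is the missing ingredient that replaces the step you correctly flagged as the obstacle.
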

\nin (Note that the proof of the above theorem is essentially contained in the proof of Theorem 1.2.1 in Ref.~\cite{freedman:2020_manifold_from_code} which considers the case of mapping a quantum LDPC code to a manifold with bounded geometry. The proof uses the language of geometry and only deals with the codeword related to cycle but not the one related to cocycle.  Here, we instead provide a proof using a combinatorial language more suitable for the QI audience.)

\begin{proof}
In the handle construction, we associate each bit to a dressed ``3-handle" $\mathsf{B}_i$ and each check to a dressed ``2-handle" $\mathsf{C}_j$.  The attaching region of $\mathsf{B}_i$ is attached to the boundary of  $\mathsf{C}_j$ according to the lifted boundary map $\hat{\partial}=\bar{\Hs}$. Now one can try to build a triangulation with $O(1)$ simplices for each ``3-handle" and ``2-handle", and then attach them together to build the 3-handlebody $H$ and then glue it to another upside-down copy $H^*$ to form a double $\D H$ which gives rise to the triangulation $\L$ of the $r$-manifold $\M^r$ with only an O(1) overall  overhead (i.e., the number of simplices are proportional to the number of bits $n$ in the skeleton code $\bar{\C}$ with an $O(1)$ constant, equivalent to property 2 above). 

To show this is possible, we give a concrete algorithm to build the desired triangulation $\L$.  Note that each ``3-handle" and ``2-handle" has a boundary.  This includes the attaching regions of the ``3-handle" $S^2 \times D^{r-3}$, which is attached to the boundary of the ``2-handle"   $\partial \mathsf{C}_j =S^2 \times S^{r-3}$.  After attaching all the ``3-handles" to the ``2-handles", one obtains the 3-handlebody $H$ which is a 3-manifold with boundaries $\partial H$. This boundary is killed when taking the double, which glues $H$ with the upside-down copy $H^*$ along their common boundary  $\partial H =\partial H^*$.  

The algorithm of generating the triangulation $\L^r$ is listed as follows:

\textbf{Algorithm 1} (generate $\L^r$)
\begin{enumerate}
    \item    
We first generate a finite triangulation on each ``2-handle" and ``3-handle", denoted by $\L_2$ and $\L_3$ respectively.  This can be done with some standard triangulation algorithms such as Delauney triangulation (see CGAL library \cite{cgal:eb-24b}).  Note that both ``2-handles" and ``3-handles" are manifolds with simple topology:   $\mathsf{C}_j = S^2 \times D^{r-2}$ and $\mathsf{B}_i= (S^3 \backslash \sqcup_{m=1}^{f(i)}D^3_m) \times D^{r-3} $. Therefore both of them can hence be triangulated with $O(1)$ simplices, namely they remain a constant when scaling up the skeleton classical code $\bar{\C}$ by increasing the number of bits $n$.  The LDPC condition of  $\bar{\C}$ make sure that the number of $\mathsf{C}_j$-``handles" $f(i)$ that a $\mathsf{B}_i$ is connected is upper-bounded by some constant, such that the number of simplices to triangulate $\mathsf{B}_i$ can remain $O(1)$.   
    \item
We then attach the ``3-handles" $\mathsf{B}_i$ to ``2-handles" $\mathsf{C}_j$ along the attaching region $S^2 \times D^{r-3}$.  Since the subcomplexes on the boundaries of  $\mathsf{B}_i$ and $\mathsf{C}_j$, denoted by $\partial\L_2$ and $\partial\L_3$, may not be the same, one needs to re-triangulate $\L_2$ and $\L_3$ to match their boundaries ($\partial \L_2 = \partial \L_3$)  before attaching them by identifying the boundary subcomplexes.  As we will show below, the re-triangulation keeps the number of simplices in each ``handle" to be an $O(1)$ constant.  This gives rise to the 3-handlebody $H$ with a triangulation $\L_H$. 

\item 
We glue the 3-handlebody $H$ with its upside-down copy $H^*$ along their common boundary $\partial H = \partial H^*$.  Since $H$ and $H^*$ are identical and have the same triangulation, i.e., $\L_H = \L_{H^*}$,  the subcomplexes on their common boundaries are also the same, i.e., $\partial \L_H = \partial\L_{H^*}$.  We can hence identify the boundary subcomplexes $\partial \L_H$ and $\partial \L_{H^*}$ to glue the handlebodies together, which generates the triangulation $\L$ of the entire manifold $\M^r$.     
\end{enumerate}

\begin{figure*}[hbt]
\includegraphics[width=0.8\textwidth]{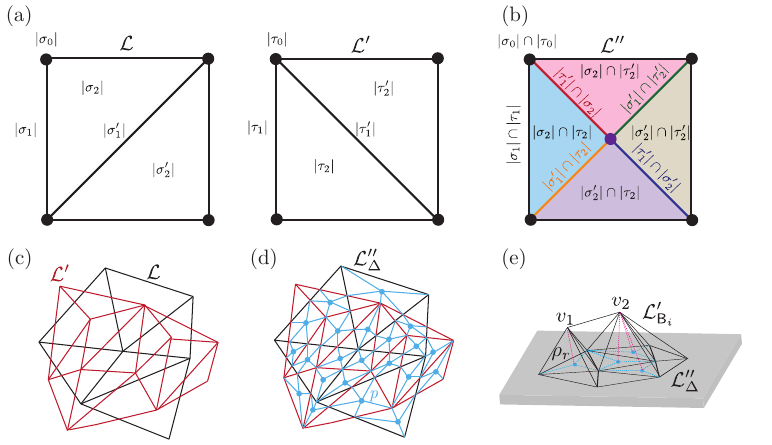}
\caption{(a) Two different triangulations $\L$ and $\L'$ of the same space with labeled 0-simplices, 1-simplices and 2-simplices.  (b) Overlay $\L$ and $\L'$ together. The intersection of the original simplices gives rise to new simplices which form a common subdivision $\L''$ of both $\L$ and $\L'$.  (c) Overlay two more general triangulations $\L$ (black) and $\L'$ (red), which gives rise to a cellular complex. (d) For any cell in the cellular complex that is not a simplex, triangulate it by inserting a vertex $p$ in the center and coning it by connecting it to all the original vertices of the cell. This gives rise to a triangulation $\L''_\Delta$ which forms a common subdivision of both $\L$ and $\L'$.  (e) When the boundary triangulation of a ``handle" gets subdivided (blue dashed line), one can triangulate each bulk $r$-cell $\rho_r$ which is not a simplex  simply by connecting the added vertex to the only vertex of $\rho_r$ that is not on the boundary.   }\label{fig:re-triangulation}
\end{figure*}

We now elaborate on the re-triangulation and boundary matching  procedure in step 2, and show why it keeps the number of simplices being an $O(1)$ constant.   We shall use the following fact:
\begin{fact}\label{fact:common_subdivision}
Any two triangulations of the same piecewise-linear manifold have a common subdivision. 
\end{fact}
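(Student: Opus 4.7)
The plan is to construct the common subdivision by superimposing $\L$ and $\L'$ on $\M$ and then triangulating the resulting cellular decomposition, following the overlay-then-cone strategy depicted in Fig.~\ref{fig:re-triangulation}. First I would work locally in a PL coordinate chart, where by definition of a PL manifold both $\L$ and $\L'$ restrict to finite simplicial complexes that are affinely embedded in $\RR^r$. In such a chart every simplex of $\L$ and every simplex of $\L'$ is a convex polytope, so pairwise intersections of simplices are themselves convex polytopes, and the collection of all such intersections yields a finite polyhedral cell decomposition $\L''$ of the chart that refines both restricted triangulations simultaneously.

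Next I would triangulate $\L''$ into a simplicial complex $\L''_\Delta$ by induction on the cell dimension $k$. The base case $k=0$ is immediate since $0$-cells are already $0$-simplices. For the inductive step, assume the $(k-1)$-skeleton has been triangulated compatibly with the cell structure; then for each $k$-cell $\rho_k$ that is not already a simplex, insert a vertex $p$ in its interior (for instance the barycenter) and cone the already-triangulated boundary $\partial \rho_k$ to $p$, as shown in Fig.~\ref{fig:re-triangulation}(d). Convexity of $\rho_k$ guarantees that the cone construction is a valid simplicial triangulation, and running through all dimensions $k=1,\dots,r$ produces a simplicial complex $\L''_\Delta$ that subdivides both $\L$ and $\L'$ in the chart.

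The third step is to patch these local refinements into a global simplicial complex on $\M$. The overlay cell decomposition is defined intrinsically via intersections of simplices of $\L$ and $\L'$, so it agrees on overlapping charts. Choosing the coning vertices canonically (e.g.\ barycenters) makes the simplicial refinements agree as well, yielding a global common subdivision $\L''$ of $\M$. Compactness of $\M$ (or the bounded local geometry assumption in our setting) ensures finiteness throughout.

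The main obstacle is that the entire construction depends crucially on the PL structure of $\M$: the intersection of two affinely embedded simplices in $\RR^r$ is guaranteed to be a convex polytope only because the PL atlas has piecewise-linear transition maps. For purely topological triangulations this fails in general (the classical failure of the Hauptvermutung in dimensions $\geq 4$), but in our setting every triangulation arising from the handle construction is PL by design---built out of standard triangulations of handles glued along PL identification maps---so the fact applies without complication.
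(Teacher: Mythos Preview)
Your proposal is correct and follows essentially the same overlay-then-cone strategy as the paper's Algorithm~3: form the intersection cell complex $\L''$ from pairwise intersections of simplices, then triangulate inductively on cell dimension by coning each non-simplex cell to an interior barycenter. Your additional discussion of working in local PL charts and the relevance of the Hauptvermutung is a welcome elaboration that the paper leaves implicit, but the core construction is the same.
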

Note that the triangulated manifold $\M^r$ we have constructed  here is a piecewise-linear manifold.
This fact is proven abstractly in Ref.~\cite{Rourke_book}. Here we give a constructive proof of this fact based on an ``overlay" algorithm to generate such common subdivision, which can be the subroutain in Algorithm 1.  A subdivison is defined as follows:
\begin{definition}
    The triangulation $\L''$ is a subdivision of a triangulation $\L$, denoted by $\L''  \triangleleft \L$, if all the simplices in  $\L''$ is contained in some simplices in $\L$. 
\end{definition}

As a warm-up, we first start with the  simple case for 2D triangulations:

\textbf{Algorithm 2} (\textit{2D overlay})
\begin{enumerate}
    \item 
\textit{Pairwise intersection}: for the input of two different triangulations $\L$ and $\L'$ of the same piecewise-linear 2-manifold $\M^2$,
one can overlay them on top of each other, as illustrated in Fig.~\ref{fig:re-triangulation}(a,b,c). This gives rise to an intersection cell complex:
\be
 \L'' = \{|\sigma_i| \cap |\tau_j| \neq \varnothing : \sigma_i \in \L, \tau_j \in \L' \ (\text{for} \ i,j=0, 1,2)  \},
\ee

which consists of a collection of cells that are the intersection of the simplices in $\L$ and $\L'$.  Here, $|\sigma_i|$ stands for a geometric simplex located on the manifold $\M^2$, while $\sigma_i$ represents an abstract simplex of $\L$.
For example, $|\sigma_1| \cap |\tau_1|$ represents the vertex at the intersection of the two edges $|\sigma_1|$ and $|\tau_1|$ from the different triangulations,  $|\sigma_2| \cap |\tau_1|$ represents the part of an edge $|\tau_1|$ lies inside $|\sigma_2|$, and $|\sigma_2| \cap |\tau_2|$ represents the 2-cell at the intersection of simplices $|\sigma_2|$ and $|\tau_2|$, as shown in Fig.~\ref{fig:re-triangulation}(b).  
    \item  
\textit{Subdivision by coning}: one then further triangulate the cell complex $\L''$ into a simplical complex (triangulation) $\L''_\Delta$.  For each 2-cell $\rho_2$  in $\L''$ that is not a 2-simplex (triangle), we can pick a point $p$ in its interior (e.g., its barycenter) as an apex and then subdivide the 2-cell $\rho_2$ by \textit{coning}, i.e., connecting the vertices of each edge $[v_i, v_{i+1}]$ in $\rho_2$ to the apex $p$ by lines to form a 2-simplex $[v_i, v_{i+1}, p]$, as illustrated in Fig~\ref{fig:re-triangulation}(d) by blue dots and lines.  
\end{enumerate}

We then present the ``overlay" algorithm for general $r$-dimensional piecewise-linear manifolds:

\textbf{Algorithm 3} (\textit{$r$-dimensional overlay})
\begin{enumerate}
\item
\textit{Pairwise intersection}: for two different triangulations $\L$ and $\L'$ of the same piecewise-linear $r$-manifold $\M^r$,
one can overlay them and obtain an intersection cell complex:
\be
 \L'' = \{|\sigma_i| \cap |\tau_j| \neq \varnothing : \sigma_i \in \L, \tau_j \in \L' \ (\text{for} \ i,j=0, 1, \cdots r)  \},
\ee
where each $i$-cell $\rho_i \in \L''$ is an $i$-dimensional convex polytope.

\item
\textit{Subdivision by coning}:   triangulate $\L$ by induction on cell dimension $k=0, 1, \cdots r$.
\begin{itemize}
    \item For $k=0$: the 0-cells $\rho_0$'s are already 0-simplices (vertices).
    \item  For $k=1, 2, \cdots , r$:  suppose every cell of dimension $< k$ is triangulated and the triangulations agree on shared faces.
For each $k$-cell  $\rho_k \in \L''$ that is not yet a $k$-simplex, pick a point $p_k$ in its relative interior (e.g., its barycenter).  \textit{Cone} the already-triangulated boundary $\partial \rho_k$ to the apex $p_k$, which means that for each $(k-1)$-cell on the boundary $\rho_{k-1} \in \partial \rho_k$, we connect it to $p_k$ to form a $k$-simplex $[\rho_{k-1}, p]$ subdividing $\rho_k$. This yields a triangulation of $\rho_k$ that matches its neighbors (because all use the same triangulation on shared faces by the inductive hypothesis).
\item
After the above $r$ iterations, we obtain a \textit{common subdivision} $\L''_\Delta$ satisfying both $\L''_\Delta \triangleleft \L$ and $\L''_\Delta \triangleleft \L'$.

\end{itemize}
\end{enumerate}

When attaching the ``3-handles" $\mathsf{B}_i$ to ``2-handles" $\mathsf{C}_j$ along the attaching region $S^2 \times D^{r-3}$, we need to match their boundary triangulations on the attaching regions, denoted by $\partial\L_{\mathsf{B}_i}|_{S^2 \times D^{r-3}} =\L$ and  $\partial \L_{\mathsf{C}_j}|_{S^2 \times D^{r-3}}= \L'$ respectively. Since $\L$ and $\L'$ could be different triangulations, we need to convert them to a common subdivision $\L''_\Delta$ so they can match in the attaching region. We subdivide the boundary triangulations on both sides $\L$ and $\L'$ into $\L''_\Delta$ [illustrated in Fig.~\ref{fig:re-triangulation}(e) with the subdivision of the bottom boundary represented by blue dashed lines], which modifies both $\L_{\mathsf{B}_i}$ and $\L_{\mathsf{C}_j}$ on their boundary near the attaching region and yield new cellulations $\L_{\mathsf{B}_i}'$ and $\L_{\mathsf{C}_j}'$ respectively. Note that  in the new cellulations on both sides, some $r$-cells $\rho_r$ next to the subdivided boundary triangulations $\L''_\Delta$  are non-longer simplices.  Therefore, we can subdivide these $r$-cells $\rho_r$.  This can be simply achieved by connecting all the newly introduced vertices in the common subdivision $\L''_\Delta$ to the only vertex of $\rho_r$ that is not located on the boundary  $\L''_\Delta$, as illustrated by the pink dahsed lines in Fig.~\ref{fig:re-triangulation}(e). Now we can attach the ``3-handles" $\mathsf{B}_i$ to ``2-handles" $\mathsf{C}_j$ along their common boundary triangulation $\L''_\Delta$ on the attaching region. We hence complete the re-triangulation and boundary matching procedure in step 2 of Algorithm 1.  Note that the subdivision procedure above only introduces an $O(1)$ constant overhead in each ``2-handles" and ``3-handles".         

When completing Algorithm 1, we have successfully built the triangulation $\L^r$ of the entire $r$-manifold $\M^r$.  We first prove property 1 of Theorem \ref{theorem:manifold_scaling}, i.e., the bounded local geometry of $\M^r$, which is equivalent to the bounded degree of vertices in the triangulation $\L^r$.  Note that in each step of Algorithm 1 generating the triangulation $\L^r$, including its subroutines, the property of the bounded vertex degree is preserved:

We start with a finite triangualtion with $O(1)$ simplices in each ``2-handle" and ``3-handle" (gauranteed by the LDPC property of the skeleton classical code $\bar{\C}$) which has bounded vertex degree.  We then attach the ``3-handles" to the ``2-handles", and the boundary matching procedure via the common subdivision using Algorithm 3 also just introduces an $O(1)$ additional simplices in each ``handle".  Finally, the doubling process also just identify finite number of simplices on the boundary of each pair of identical ``handles". We hence conclude the final triangulation $\L^r$ has bounded vertex degree. 

We then prove property 2 of  Theorem \ref{theorem:manifold_scaling}.  In each step of Algorithm 1 and its subroutines, the number of $k$-simplices ($0 \le k \le r$) in each ``handle" remains an $O(1)$ constant. There are hence only $\Theta(n)$ $k$-simplices in $\L^r$,  corresponding to an $O(1)$ constant  overhead.

To prove property 3, since the handle chain complex $\L_h$ in Eq.~\eqref{eq:handle_chain_complex} contains all the homology information of the manifold $\M^r$, we can use the isomorphism between the lifted  chain complex $\hat{\mathcal{X}}$ of the skeleton classical code Eq.~\eqref{eq:chain_lift}  and a portion of the handle chain complex $\L_h$, which gives rise to the following isomorphism between the homology groups when taking the $\ZZ_2$ coefficients:
\begin{align}
H_3(\M^r; \ZZ_2) \cong H_1(G_h; \ZZ_2),   H^3(\M^r; \ZZ_2) \cong H^1(G_h; \ZZ_2)     \cr
\non H_2(\M^r; \ZZ_2) \cong H_0(G_h; \ZZ_2),   H^2(\M^r; \ZZ_2) \cong H^0(G_h; \ZZ_2).     \cr
\end{align}
This gives rise to the Betti number relation: 
\begin{align}
b_3(\M^r; \ZZ_2)=&\bar{b}_1(G_h; \ZZ_2)=k, \cr
b_2(\M^r; \ZZ_2)=&\bar{b}_0(G_h; \ZZ_2)=k^T.
\end{align}

Finally, we prove property 4 about the (co)systole scaling of the manifold.   First, we need to establish a precise combinatorial relation between every $\ZZ_2$ 3-cycle class $[\as_3]$ in $\M^r$ and the corresponding $\ZZ_2$ 1-cycle class $[\bar{\as}_1]$ in the skeleton classical code $\bar{\C}$,  and between every $\ZZ_2$ 2-cocycle class $[\bs^2]$  and the corresponding $\ZZ_2$ 0-cocycle class $[\bar{\bs}^0]$ in $\bar{\C}$ [see Eqs.~\eqref{eq:3-cycle_mapping} and \eqref{eq:2-cocycle_mapping}].    This can be achieved by deformation retraction from the manifold $\M^r$ and its handle complex $\L_h$ to a CW (cellular) complex $\L_c$, which are isomorphic to each other $\L_h \cong \L_c$.  As mentioned before, one can retract every $k$-handle $D^k \times D^{r-k}$ to its core $D^k$ that becomes a $k$-cell. This also effectively retracts the dressed ``$k$-handle" to its dressed core $N^k$ [see Eq.~\eqref{eq:dressed-k-handle}], which can be considered as a dressed ``$k$-cell" (see Sec.~\ref{sec:deformation_retraction} for details).   As shown in Eq.~\eqref{eq:long_chain}, a portion of the CW  complex $\L_c$ is just isomorphic to the lifted chain complex $\hat{\mathcal{X}}$ of the skeleton classical code $\bar{\C}$. Therefore, the 3-cycle $\tilde{\as}_3$ on the CW  complex $\L_c$ is supported on a set of 3-cells contained in the dressed ``3-cells" $\{N^3_i\}$ completely determined by
the 1-cycle $\bar{\as}_1$ in $\hat{\mathcal{X}}$, where the minimum number of 3-cells in $\{N^3_i\}$ is   the code distance $d$ of $\bar{\C}$.  When pulling the unique 3-cycle $\tilde{\as}_3$ in the CW  complex $\L_c$ back to the 3-cycle class $[\as_3]$ in the triangulated manifold $\M^r$, we know that there must exist some representative $\dot{\as}_3$ in the class $[\as_3]$ that travels through the set of ``3-handles" $\{\mathsf{B}_i = N^3_i \times D^{r-3}\}$ containing the set of dressed cores  $\{N^3_i\}$ mentioned above, such that it is deformed to $\tilde{\as}_3$ under the deformation retraction.   More concretely, the representative $\dot{\as}_3$ is the union of \textit{extended cores} of a set of ``3-handles" $\{\mathsf{B}_i \}$, which is also called a \textit{3-spine}.  Here, extended cores means continuing the dressed core $N^3_i$ of the ``3-handle" $\mathsf{B}_i$ to the ``2-handles" $\mathsf{C}_j$ which it attaches to \cite{freedman:2020_manifold_from_code}.  The 3-spine and extended cores of ``3-handles" have already been illustrated in Fig.~\ref{fig:dictionary_modified}(h), which is highlighted in blue (see Fig.~\ref{fig:cycle_simplicial} for the lower-dimensional example).

\begin{figure}[t]
    \centering
    \includegraphics[width=1\linewidth]{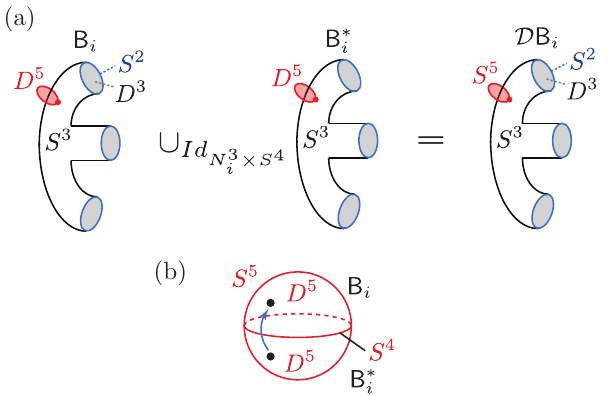}
    \caption{(a) The doubleed ``3-handle" $\D\mathsf{B}_i$ is realized by gluing the ``3-handle" $\mathsf{B}_i=N^3_i \times D^5$ with its identical upside-down copy $\mathsf{B}^*_i$ along their common boundary $N^3_i \times S^4$. The boudled ``3-handle" hence becomes $\D\mathsf{B}_i= N^3_i \times S^5$.  (b) We have used the fact that two copies of $D^5$ glued along their common boundary $S^4$ becomes $S^5$. One can transport the dressed core $N^3_i$ from the ``south-hemisphere" $\mathsf{B}^*_i$ to the `north-hemisphere" $\mathsf{B}_i$ as indicated by the arrow.}
    \label{fig:doubled_handle}
\end{figure}

Note that manifold $\M^r$ only contain dressed ``2-handles" $\mathsf{C}_j$ and ``3-handles" $\mathsf{B}_i$ in the 3-handlebody $H$, as well as their dual in the upside-down handlebody $H^*$, i.e., the ``$(r-2)$-handles" $\mathsf{C}^*_j$ and ``$(r-3)$-handles" $\mathsf{B}^*_i$. Now during the doubling process,  the ``2-handles" and ``3-handles" are glued together with their dual ``handles" respectively, which form the doubled ``handles" denoted by $\D{\mathsf{C}_j}$ and $\D{\mathsf{B}_i}$ respectively.   We note that the processes of handle attachment and doubling commute,  which means one can also first construct the doubled handles and then attach them together.   For example, a pair of identical ``3-handles" $\mathsf{B}_i$ and $\mathsf{B}^*_i$ in the 8-manifold ($r=8$) case can be glued together to the doubled handle:
\be
\D \mathsf{B}_i = (N^3_i \times D^5) \cup_{Id_{N^3_i \times S^4}} (N^3_i \times D^5) = N^3_i \times S^5,
\ee
as illustrated in Fig.~\ref{fig:doubled_handle}(a).
Here, we have used fact that two 5-balls $D^5$ glued along their common $S^4$ boundary forms a 5-sphere $S^5$, i.e., $D^5 \cup_{Id_{S^4}} D^5 = S^5$, as illustrated in  Fig.~\ref{fig:doubled_handle}(b).  We will illustrate the  $\D{\mathsf{C}_j}$ with its lower-dimensional version in the 4-manifold constructed in Sec.~\ref{sec:4-manifold} (see Fig.~\ref{fig:Double_illustration}).

Due to the geometric constraint from the product structure of the ``3-handles", i.e., $N^3_i \times D^{r-3}$ ($N^3_i$ being the dressed core),  any 3-cycle $\as_3$ in the class $[\as_3]$ traveling through a set of doubled ``3-handles" $\{\D\mathsf{B}_i\}$ can always be deformed into the homologous representative $\dot{\as}_3$ (3-spine) which only travels through the same number of ``3-handles" $\{\mathsf{B}_i\}$ contained in $\{\D\mathsf{B}_i\}=\{N_i \times S^5\}$ but not their dual ``handles" $\{\mathsf{B}^*_j\}$.  This can be understood as deforming the dressed core $N^3_i$ along the thickened direction $S^5$, by going from the ``south hemisphere" $\mathsf{B}^*_i$ to the ``north hemisphere" $\mathsf{B}_i$, as illustrated in  Fig.~\ref{fig:doubled_handle}(b).  Note that if the 3-cycle $\as_3$ travels through the ``south hemisphere" $\mathsf{B}^*_i$  in a collection of adjacent doubled ``3-handle" $\{\D\mathsf{B}_i\}$, one should collectively deform the collection of adjacent extended cores of  $\{\D\mathsf{B}_i\}$ to the ``north-hemisphere" $\mathsf{B}_i$.  This collective deformation will finally move the entire $\as_3$ to the 3-spine $\dot{\as}_3$ which are completely located on the ``north-hemisphere".

Now through the correspondence between $\dot{\as}_3$ and $\tilde{\as}$ in the CW  complex  $\L_c$ under deformation retraction,  we know that $\dot{\as}_3$ should travel through at least $d$ ``3-handles" corresponding to the set $\{\mathsf{B}_i\}$ contained in the set $\{\D\mathsf{B}_i\}$. Now when deforming $\dot{\as}_3$ to any homologous 3-cycle $\as_3$ in the same class $[\as_3]$, we know that $\as_3$ cannot go through a different set of doubled ``3-handles"  $\{\D\mathsf{B}'_j\}$.  Assuming it can, then $\as_3$ can always be deformed into a representative   $\as'_3$ that travels through the set of ``3-handles" $\{\mathsf{B}'_j\}$ that are different from the original set $\{\mathsf{B}_i\}$.  In that case, when applying deformation retraction, $\as'_3$ will be mapped to a different 3-cycle $\tilde{\as}' \neq \tilde{\as}$, this suggests that $\as'_3$ is in a different class of $\as_3$, i.e., $[\as'_3] \neq [\as_3]$, which leads to a contradiction.  Therefore, same as the special representative $\dot{\as}_3$, any other representative $\as_3$ should also travel through at least $d$ doubled ``3-handles".  

Now consider the triangulation $\L^r$, any representative $\as_3$ must travel through at least $d$ simplices.  We hence obtain 
\be
sys_3(\M^r; \ZZ_2) = \min_{\as_3 \neq 0 \in H_3(\M^r; \ZZ_2)} \{|\as_3|\}= \Omega(d),
\ee
(see Definition \ref{def:systole}).
In a completely analogous manner, we can prove that any non-trivial 2-cocycle $\bs^2$ must travel through at least $d^T$ doubled ``2-handles" $\D\mathsf{B}_i$ according to its correspondence to the 2-cocycle $\tilde{\bs}^2$ in the CW  complex $\L_c$ under deformation retraction.  We hence obtain 
$d$ simplices.  We hence obtain 
\be
sys^2(\M^r; \ZZ_2) = \min_{\bs^2 \neq 0 \in H^2(\M^r; \ZZ_2)} \{|\bs^2|\}= \Omega(d^T).
\ee
(see Definition \ref{def:cosystole}).

\end{proof}
Note the above way to show that the minimal (co)cycle length $\as_3$ and $\bs^2$ in the manifold is proportional to the cycle in the skeleton classical code $\bar{\as}_1$ and $\bar{\bs}^0$ can be straightforwardly adapted to the situation of the lower-dimensional 4-manifold construction in Sec.~\ref{sec:4-manifold}, and we will not repeat this argument again later.

Based on the results in Theorem \ref{theorem:manifold_scaling}, we can obtain the following theorem about the classical code defined on $\M^r$:
\begin{theorem}\label{theorem:good_code_on_manifold}
From the Tanner graph of any input skeleton classical LDPC code $\bar{\C}=\text{Ker} (\bar{\Hs})$ that both itself and its transposed code $\bar{\C}^T=\text{Ker} (\bar{\Hs}^T)$ are asymptotically good, one can define a classical LDPC code $\C = H_3(\L^r; \ZZ_2)$ on the triangulation $\L^r$ of the manifold $\M^r$ that both itself and its transpose $\C^T$ are asymptotically good.  
\end{theorem}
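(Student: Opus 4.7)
The plan is to directly read off the four properties of Theorem~\ref{theorem:manifold_scaling} and repackage them into the code parameters $[N,K,D]$ of $\C$ and $\C^T$. Concretely, I would take the classical code $\C$ on $\L^r$ to be the 3-cycle code whose bits live on the 3-simplices and whose parity checks are the 2-simplices with check matrix given by $\partial_3:C_3\to C_2$, so that $\C=\mathrm{Ker}(\partial_3)=H_3(\L^r;\ZZ_2)$ (there are no 3-boundaries coming from $C_4$ in the relevant part of the complex, cf.\ Eq.~\eqref{eq:long_chain}). Similarly, $\C^T$ is the 2-cocycle code with bits on 2-simplices and the coboundary $d_2:C^2\to C^3$ as check matrix, giving $\C^T=\mathrm{Ker}(d_2)=H^2(\L^r;\ZZ_2)$ (again no contribution from $C^1$ because of the triviality of that slot in the middle of Eq.~\eqref{eq:long_chain}).

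Next I would extract the parameters. By property 2 of Theorem~\ref{theorem:manifold_scaling}, the code lengths are $N=\dim(C_3)=\Theta(n)$ and $N^T=\dim(C_2)=\Theta(n)$, so both codes have length linear in $n$. The dimensions follow from property 3: for $\C$ we have $K=\dim H_3(\L^r;\ZZ_2)=k=\Theta(n)=\Theta(N)$, and using the universal coefficient theorem $H^2(\L^r;\ZZ_2)\cong H_2(\L^r;\ZZ_2)$ we get $K^T=\dim H^2(\L^r;\ZZ_2)=k^T=\Theta(n)=\Theta(N^T)$, since the hypothesis states both $\bar\C$ and $\bar\C^T$ are asymptotically good. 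The distances follow from property 4 together with the identification of distance with combinatorial (co)systole: $D=sys_3(\L^r;\ZZ_2)=\Omega(d)=\Omega(n)=\Omega(N)$ and $D^T=sys^2(\L^r;\ZZ_2)=\Omega(d^T)=\Omega(n)=\Omega(N^T)$.

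Finally, I would verify the LDPC condition. Property 1 of Theorem~\ref{theorem:manifold_scaling} states that the triangulation $\L^r$ has bounded local geometry, i.e.\ every vertex is incident to $O(1)$ simplices. In a simplicial complex, the number of 2-faces of any 3-simplex is $\binom{4}{3}=4$, and the number of 3-simplices sharing a given 2-simplex is bounded by the local degree, hence $O(1)$. The same argument applies to the coboundary $d_2$, bounding the weight of each row and column of both $\partial_3$ and $d_2$ by an $O(1)$ constant depending only on the bounded local geometry, so both $\C$ and $\C^T$ are LDPC. Combining these four items gives $\C$ with parameters $[N=\Theta(n),\,K=\Theta(N),\,D=\Omega(N)]$ and the analogous statement for $\C^T$, which is exactly the asymptotic goodness claimed.

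I do not expect a real obstacle here, since the theorem is essentially a corollary of Theorem~\ref{theorem:manifold_scaling}; the only subtle point is making sure that the portion of the handle/triangulation chain complex used to define $\C$ is genuinely $C_3\xrightarrow{\partial_3}C_2$ with trivial $C_4$, so that $\C=H_3$ without having to quotient, and symmetrically for $\C^T$ via $H^2$. This is guaranteed by the shape of Eq.~\eqref{eq:long_chain}, where the slot between $C_5$ and $C_3$ is the trivial group $0$, and by Poincar\'e duality which ensures the dual statement for $C_2\leftarrow C_3$ upstairs.
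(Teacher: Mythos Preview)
Your overall strategy---reading off the code parameters from the four properties of Theorem~\ref{theorem:manifold_scaling}---is exactly the paper's approach, whose proof is literally ``this is a direct application of Theorem~\ref{theorem:manifold_scaling}.'' However, there is a real gap in your formalization: you conflate the handle chain complex of Eq.~\eqref{eq:long_chain} with the simplicial chain complex of the triangulation $\L^r$. These compute the same homology but have very different chain groups and boundary maps.

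Equation~\eqref{eq:long_chain} is the \emph{handle} chain complex $\L_h$ (equivalently the CW complex $\L_c$), in which $C_i$ is spanned by $i$-handles and indeed $C_4=0$ by construction. The triangulation $\L^r$ of $\M^r$ is a different chain complex: by property~2 of Theorem~\ref{theorem:manifold_scaling} it has $\dim C_k(\L^r)=\Theta(n)$ for every $0\le k\le r$, so in particular the simplicial $C_4$ is huge and $\partial_4$ is far from zero. Your claimed identity $\mathrm{Ker}(\partial_3)=H_3(\L^r;\ZZ_2)$ therefore fails on the nose: $\mathrm{Ker}(\partial_3)$ contains all $3$-boundaries, and the boundary of a single $4$-simplex is a nonzero $3$-cycle of weight $\binom{5}{4}=5$. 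If you literally set $\C=\mathrm{Ker}(\partial_3)$ with the simplicial $\partial_3$ as parity-check matrix, the minimum distance collapses to $O(1)$, not $\Omega(n)$. The same problem hits $\C^T=\mathrm{Ker}(d_2)$: the $2$-coboundary $d_1(\bar e)$ of a single edge-indicator is a nonzero element of $\mathrm{Ker}(d_2)$ whose weight is $O(1)$ by bounded local geometry. (Your parenthetical about ``triviality of that slot'' for $C^1$ is also off: in Eq.~\eqref{eq:long_chain} the trivial slot is $C_4$, while $C_1\ne 0$.)

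The statement should be read with $\C=H_3(\L^r;\ZZ_2)$ taken literally as the homology group---codewords are equivalence classes of $3$-cycles modulo boundaries, and distance is the combinatorial $3$-systole $sys_3$---so that Theorem~\ref{theorem:manifold_scaling} hands you $N$, $K$, $D$ directly. With that interpretation your extraction of the rate and the LDPC condition from properties 1--3 is correct.
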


\begin{proof}
This is a direct application of Theorem \ref{theorem:manifold_scaling}.  The existence of such skeleton classical code $\bar{\C}$ is proven by the explicit construction in Lemma \ref{lemma:classical_code}.
\end{proof}
Note that a special case of the above theorem is that the skeleton classical code $\bar{\C}$ is asymptotically good but not true for its transposed code $\bar{\C}^T$. 

Theorem \ref{theorem:good_code_on_manifold} has accomplished the construction of an asymptotically good classical code on a high-dimensional simplicial complex without local codes (no sheaf structure), which is also a high-dimensional expander.   This improves on the results in Refs.~\cite{10.13069/jacodesmath.617235, 10.1007/s00454-017-9926-3, 10.48550/arxiv.1010.1400, 10.1007/s00493-006-0027-9, 10.48550/arxiv.math/0609773, 10.48550/arxiv.math/0609773} which give rise to good classical codes on high-dimensional simplicial complexes but are not LDPC.

\subsection{Lower-dimensional construction of 4-manifolds and the corresponding thickened classical codes}\label{sec:4-manifold}

\subsubsection{The modified construction}

\begin{figure*}[t]
\includegraphics[width=1\textwidth]{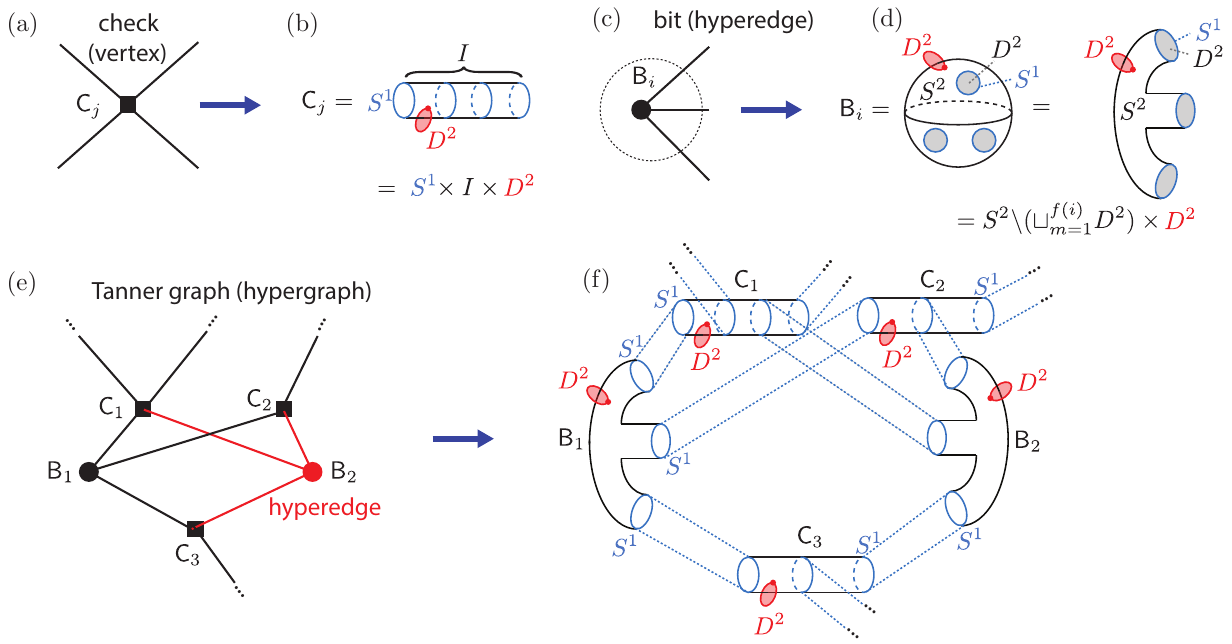}
\caption{(a,b)  The check $\mathsf{C}_j$ is mapped to a ``1-handle" $\mathsf{C}_j=S^1 \times I \times D^2$.  (c,d)  The bit  $\mathsf{B}_i$ is mapped to a ``2-handle" corresponding to a 2-sphere $S^4$ with multiple 2-disks $D^2$ being removed and then further thickened by $D^2$.  (e) A Tanner graph or equivalently a hypergraph. (f) The Tanner graph is thickened to a handlebody where the ``2-handles" are attached to the adjacent ``1-handles" according to the boundary map in the skeleton Tanner graph and via gluing them along the attaching region $S^1 \times D^2$ of the ``2-handles". }\label{fig:dictionary_modified}
\end{figure*}

\begin{figure*}[t]
\includegraphics[width=1\textwidth]{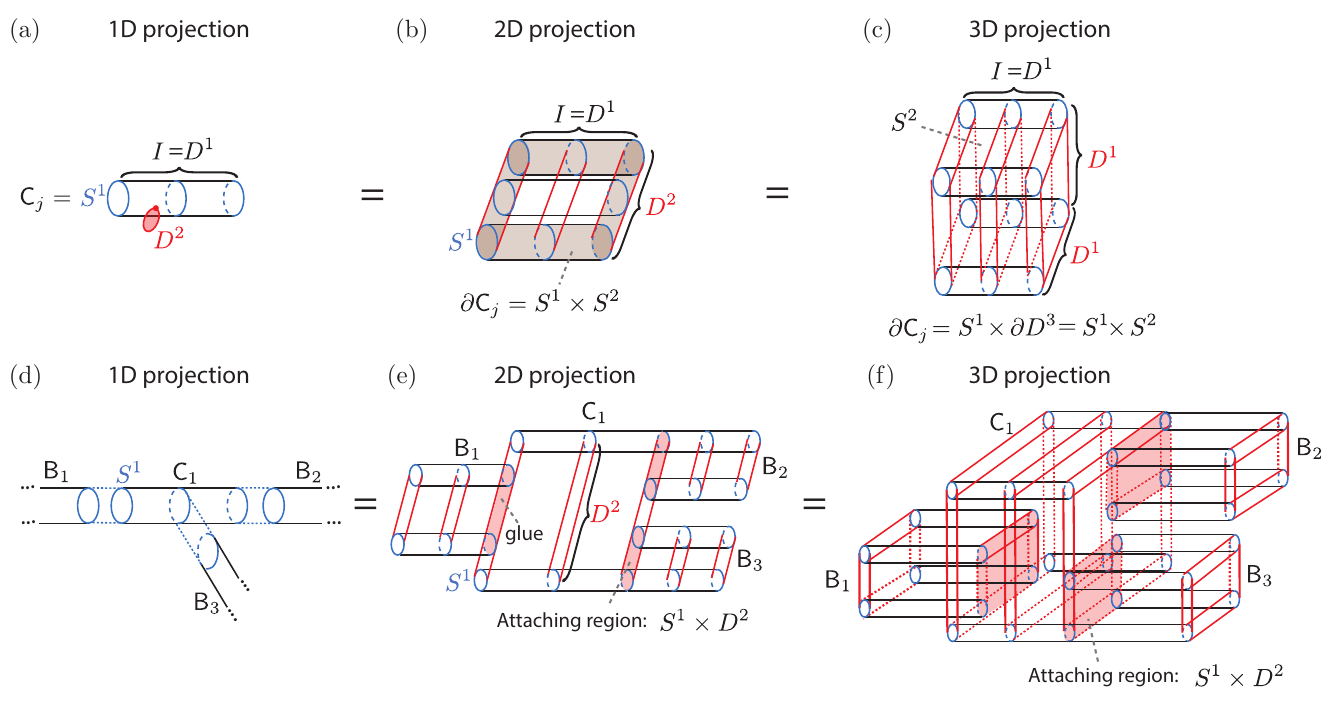}
\caption{(a) The ``1-handle'' $\mathsf{C}_j$ represented in the 1D projection, where both $D^2$ and $S^1$ should be viewed as along extra dimensions. (b) 2D projection: $S^1$ can be displaced along $I=D^1$ or $D^2$. (c) 3D projection:  $S^1$ can be  displaced along $D^1$ in three different directions. One can view the ``1-handle'' as a 3D cube thickened along $S^1$ in the extra dimension. The boundary of the cube is a 2-sphere $S^2$, while the boundary of the ``1-handle" is a 2-sphere thickened along $S^1$, i.e., $S^1 \times S^2$.  (d)  Illustration of three ``2-handles" $\mathsf{B}_1$,  $\mathsf{B}_2$ and $\mathsf{B}_3$ attached to a ``1-handle" $\mathsf{C}_1$ in the 1D projection.  (e) 2D projection: the attaching regions $S^1 \times D^2$ (highlighted) of three ``2-handles" are glued to the boundary of the ``1-handle". (f) 3D projection: one can visualize the attaching region (highlighted) as a thickened 2-disk: $S^1 \times D^2$.     
}\label{fig:branching_illustration}
\end{figure*}

\begin{figure*}[hbt]
\includegraphics[width=1\textwidth]{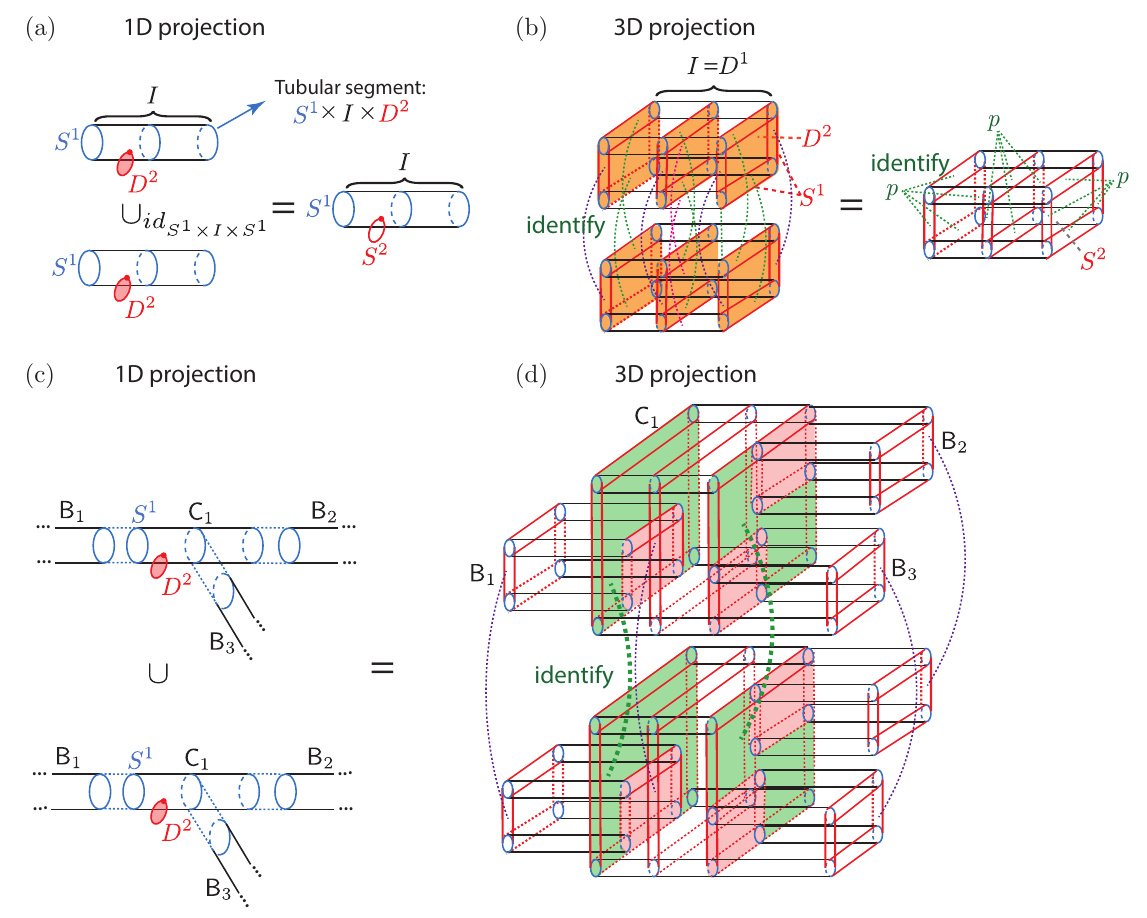}
\caption{(a) Illustration of the double of a tubular segment $S^1 \times I \times D^2$ in the 1D projection. Two $D^2$'s are glued along their common $S^1$ boundary and form an $S^2$.  The double hence becomes $S^1 \times I \times S^2$.   (b) Illustration of the double of a tubular segment in the 3D projection. Note that the vertical 2-disks $D^2$ in the two copies are highlighted and glued in pair along the common $S^1$ boundary. This becomes equivalent to a single copy where the boundary of each 2-disk $D^2$ are identified to a single point and hence becomes $S^2$.  (c)  Illustration of the double of two junction regions where three ``2-handles" are attached to a ``1-handle". (d) Visualizing the junction region in the 3D projection: the 2-disks $D^2$ including the attaching regions (highlighted in red) in the two copies are identified in the same way as the tubular segmenet in (b). The boundary regions of the 1-handle $\mathsf{C_j}$ outside the attaching regions (highlighted in green) in each copy are identified with each other like a wormhole.  }\label{fig:Double_illustration}
\end{figure*}

  In order to lower the dimensions of the constructed manifold for practical purpose (although we will see in Sec.~\ref{sec:deformation_retraction} that the dimension may not be major concern in the context of CW complex), we will now modify the manifold construction in Ref.~\cite{freedman:2020_manifold_from_code} to a 4-manifold $\M^4$ where the bits are now placed on 2-simplices (2-cells). From the perspective of systolic geometry, the lowering of the manifold dimension can lead to spurious cycles/cocycles with $O(1)$ size which makes the systole only $O(1)$, as has been discussed in Ref.~\cite{freedman:2020_manifold_from_code}.  However, as will be elaborated later, one can use a subsystem code idea that chooses proper cycle/cocycle basis to encode the logical information, such that the distance in the subsystem code is still large.

The modified construction still follows the handle construction as discussed in Sec.~\ref{sec:handle_construction}.    We choose the ``1-handles" corresponding to the checks as 
\be
\mathsf{C}_j=(S^1 \times D^3)_j = (S^1 \times I \times D^2)_j, 
\ee
as shown in Fig.~\ref{fig:dictionary_modified}(a,b).  Similarly, we choose the ``2-handles'' corresponding to the bits as the thickened  punctured 2-spheres 
\be
\mathsf{B}_i=\left(S^2 \backslash (\sqcup_{m=1}^{f(i)}D^2) \times D^2 \right)_i, 
\ee 
as illustrated in Fig.~\ref{fig:dictionary_modified}(c,d).  Similar to the 8-manifold construction above, we attach the handle according to the lifted boundary map $\hat{\partial}$ and Eq.~\eqref{eq:attaching_rule}, which is also encoded in the Tanner graph in Fig.~\ref{fig:dictionary_modified}(c).  The attaching map from the ``2-handles" $\textsf{B}_i$ to the ``1-handles" $\textsf{C}_j$ corresponds to disjoint embedding of the attaching regions:
\be\label{eq:attaching_map_modified}
(S^1 \times D^2)_{i, m}  \hookrightarrow \partial(S^1 \times D^3)_j= (S^1 \times S^2)_j, 
\ee
 as illustrated abstractly in Fig.~\ref{fig:dictionary_modified}(f) and concretely in Fig.~\ref{fig:branching_illustration}.   The $\mathsf{C}_j$-``handle", which has dimension 4, is illustrated in Fig.~\ref{fig:dictionary_modified}(a) with a 1D projection, can also be shown with a 2D and 3D projection respectively as in Fig.~\ref{fig:branching_illustration}(b,c).  In the 2D projection in (b), the cylinder $S^1 \times I$ is displaced along the $D^2$ direction, where the shaded region represents the boundary of the $\mathsf{C}_j$-``handle": $\partial \mathsf{C}_j =S^1 \times \partial D^3 = S^1 \times S^2$.  In the 3D projection in (c), the cylinder is displaced along two $D^1$ directions, which gives rise to a thickened cube, i.e., cube $D^3$ times a circle $S^1$ in the extra dimension. Now the boundary of this cube is $\partial D^3=S^2$, which gives rise to the boundary of the $\mathsf{C}_j$-``handle" $S^1 \times S^2$.   We then visualize the attaching map in Eq.~\eqref{eq:attaching_map_modified} abstractly illustrated in Fig.~\ref{fig:branching_illustration}(d) with the concrete 2D projection in Fig.~\ref{fig:branching_illustration}(e), where the $\mathsf{B}_1$, $\mathsf{B}_2$ and $\mathsf{B}_3$ ``handles" are attached (glued) to the boundary of the $\mathsf{C}_1$-handle.   The attaching regions $S^1 \times D^2$ are highlighted in red. A  more concrete illustration is shown via the 3D projection in Fig.~\ref{fig:branching_illustration}(f),  where one can clearly see that the $\mathsf{B}_i$-``handles" are attached to the boundary of the thickened cube $S^1 \times D^3$ (as a thickened 2-sphere $S^1 \times S^2$) with the attaching regions $S^1 \times D^2$ being highlighted.

 So far, we have built the desired 2-handlebody $H$ from the classical code, which is essentially a thickened Tanner graph.  We then take the double of $H$ to obtain the closed 4-manifold $\M^{4}=\mathcal{D}H$ similar to the previous construction of the 8-manifold.  Consider a tubular segment $S^1 \times I \times D^2$ (i.e., a tubular neighborhood of $S^1 \times I$) either inside the $\mathsf{C}_j$- or the $\mathsf{B}_i$-handles, the double produces 
 \be
(S^1 \times I \times D^2) \cup_{id_{S^1\times I \times S^1}} (S^1 \times I \times D^2) =S^1 \times I \times S^2
 \ee
 since two disks glued along their common $S^1$ boundary form a sphere, i.e.,  $D^2 \cup_{id_{\partial D_2=S^1}} D^2 = S^2$, as shown in Fig.~\ref{fig:Double_illustration}(a).  More concrete  illustration with 3D projection is shown in Fig.~\ref{fig:Double_illustration}(b), where each square region (orange) $D^2$ has a boundary $S^1$ which is identified to the $S^1$ boundary in the other identical copy.  At the junction where multiple $\mathsf{B}_i$-``handles" are attached to a single $\mathsf{C}_j$-``handles" as illustrated abstractly in Fig.~\ref{fig:Double_illustration}(c) and more concretely with the 3D projection in Fig.~\ref{fig:Double_illustration}(d),  the only additional care that needs to be taken is on the left and right boundaries of the $\mathsf{C}_j$-``handle" where we attach the $\mathsf{B}_i$ ``handles". Note that the regions (green) outside the attaching regions (red), i.e., $(S^1 \times D^2) \backslash \sqcup_m {(S^1 \times D^2)_m} $ are boundaries in the 2-handlebody $H$, which are hence identified with the same regions in the identical copy as shown in Fig.~\ref{fig:Double_illustration}(d).   The rest of the regions are all tubular segments $S^1 \times I \times D^2$ which can be identified using the rules in Fig.~\ref{fig:Double_illustration}(a,b) as described above.

We now reach the following lemma: 
\begin{lemma}\label{lemma:thickened_code}
The $r$-dimensional manifold $\M^r$ $(r \ge 4)$ built from handle construction with the input of the Tanner graph $G_T$ of a classical LDPC code $\bar{\C}$ with $n$ bits satisfies the following properties:
\begin{enumerate}
\item
$\M^r$ has a bounded local geometry, i.e., each vertex in its triangulation is adjacent to $O(1)$ simplices.  
\item 
$\M^r$ contains $\Theta(n)$ total number of $k$-simplices $(0 \le k \le r)$, i.e., $\text{dim}(C_k)=\Theta(n)$, where $C_k$ represents the $k^\text{th}$ chain group. 
\end{enumerate}
\end{lemma}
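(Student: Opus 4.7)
The plan is to follow Algorithm 1 and Algorithm 3 from the proof of Theorem \ref{theorem:manifold_scaling} essentially verbatim, since the two assertions here---bounded local geometry and $\Theta(n)$ simplices in each dimension---depend only on the combinatorics of handle attachment and doubling, not on the dimensional separation between logical and spurious (co)cycles that drove the systole estimates in the 8-manifold case. In particular, because the lemma makes no claim about Betti numbers, systoles, or cosystoles, the short spurious cycles and cocycles noted in Fig.~\ref{fig:spurious_illustration} for the low-dimensional construction are irrelevant for this statement and need not be controlled here.

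First I would triangulate each dressed ``1-handle'' $\mathsf{C}_j = S^1 \times D^{r-1}$ and each dressed ``2-handle'' $\mathsf{B}_i = (S^2 \setminus \sqcup_{m=1}^{f(i)} D^2) \times D^{r-2}$ independently using a standard finite triangulation (e.g., a Delaunay triangulation as in the CGAL library). Both pieces are piecewise-linear manifolds of very simple topology, and the LDPC property of $\bar{\C}$ provides a universal upper bound $w$ on $f(i)$, so the number of punctures in each $S^2$ is bounded. Consequently every dressed handle admits a triangulation with at most $\kappa(w) = O(1)$ simplices, and each local vertex degree is bounded by a constant depending only on $w$ and $r$.

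Next I would glue the ``2-handles'' to the ``1-handles'' along the attaching regions $(S^1 \times D^{r-2})_{i,m} \hookrightarrow (S^1 \times S^{r-2})_j$. Since the induced boundary triangulations on the two sides may disagree, I would invoke Algorithm 3 (pairwise intersection of the two triangulations followed by coning-based subdivision) to produce a common subdivision of the attaching region, and then extend the subdivision inward by coning any boundary-adjacent $r$-cell that becomes non-simplicial to its unique interior vertex, exactly as in Fig.~\ref{fig:re-triangulation}(e). This step is local to each incident pair $(\mathsf{B}_i,\mathsf{C}_j)$, introduces only an $O(1)$ overhead of simplices per handle, and preserves bounded vertex degree. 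Doubling $H$ to obtain $\M^r = \D H$ then identifies the matched boundary triangulations $\partial \L_H = \partial \L_{H^*}$ without any further subdivision, at worst doubling the local vertex degree on the seam, which remains bounded.

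Summing contributions, the total number of $k$-simplices in $\L$ is a bounded constant times the number of dressed handles, which is $\Theta(n)$ by the LDPC condition; similarly, every vertex in $\L$ lies in the closure of at most $O(1)$ handles and therefore meets $O(1)$ simplices. I do not anticipate a genuine obstacle beyond carefully bookkeeping the combinatorial constants, since this is a direct specialization of the previously established algorithm to the low-dimensional dressed handles $S^1 \times D^{r-1}$ and $(S^2 \setminus \sqcup D^2) \times D^{r-2}$. The more delicate issue---that the short $S^1$ factor in each $\mathsf{C}_j$ and the meridional $S^2$ factors produce $O(1)$-size spurious (co)cycles in low dimension---does not affect properties 1 and 2, and will be handled separately through the subsystem-code encoding in the subsequent subsections.
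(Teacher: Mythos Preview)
Your proposal is correct and matches the paper's approach exactly: the paper's proof of this lemma is the single sentence ``The proof is the same as the proof for Theorem \ref{theorem:manifold_scaling},'' and you have correctly identified that properties 1 and 2 follow from Algorithm 1 and Algorithm 3 with the dressed handles specialized to $S^1 \times D^{r-1}$ and $(S^2 \setminus \sqcup D^2) \times D^{r-2}$, while the systole/cosystole and Betti-number statements (properties 3 and 4 of Theorem \ref{theorem:manifold_scaling}) are simply dropped here.
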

The proof is the same as the proof for Theorem \ref{theorem:manifold_scaling}.


\subsubsection{Cycle and cocycle mapping}\label{sec:cycle_mapping}

\begin{figure*}[t]
\includegraphics[width=2\columnwidth]{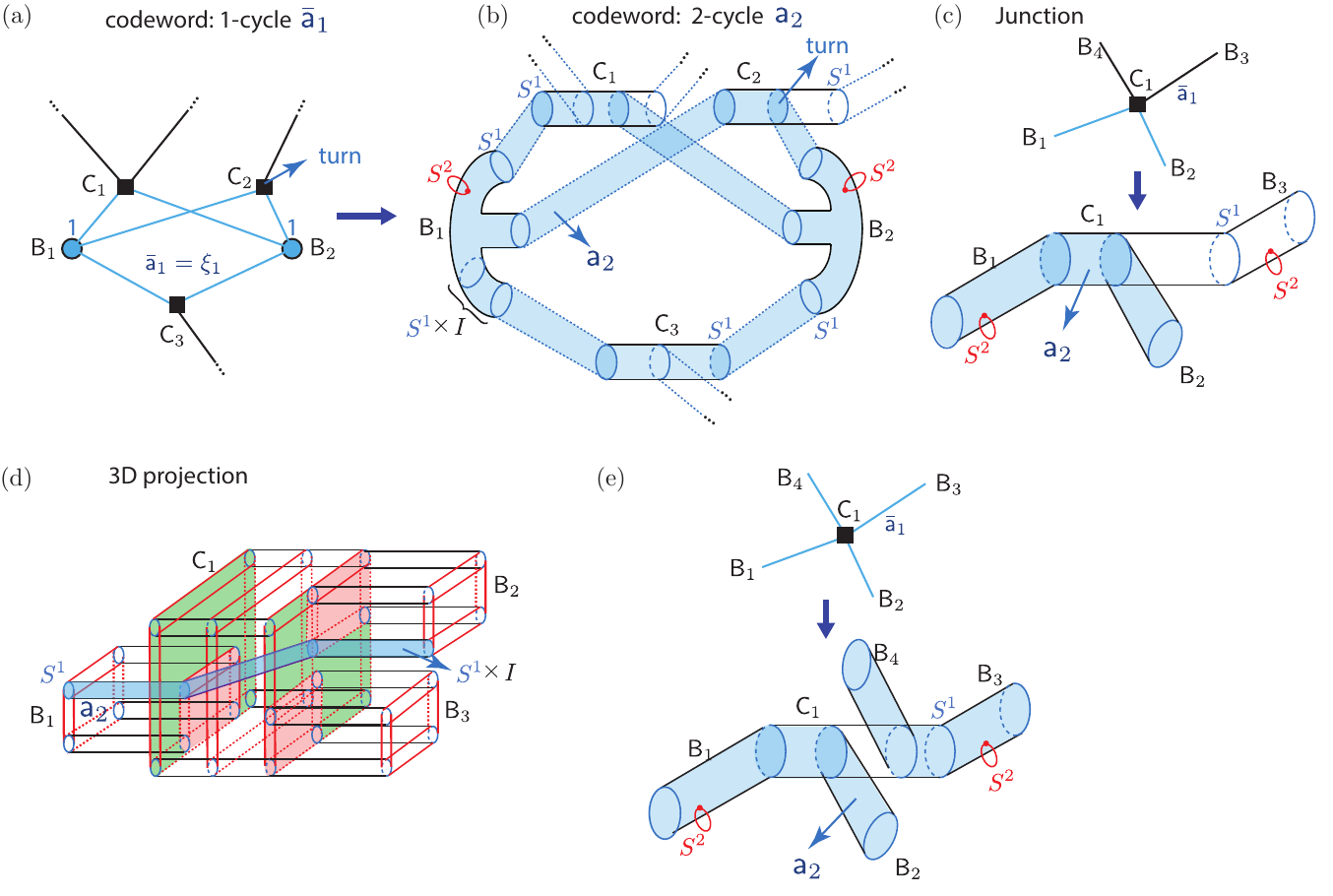}
\caption{(a) The codeword as a 1-cycle $\bar{\as}_1$ in the skeleton hypergraph. The 1-cycle can turn at the check (square vertex).  (b) Visulization of the 4-manifold, where each tubular segment is thickened along $S^2$. The codeword is thickened into a 2-cycle $\as_2$ in the manifold.   (c) The thickened 2-cycle codeword can turn at the junction where multiple ``2-handles" are attached to a single ``1-handle". (d)  Geometric understanding of the turning of 4-cycle in the 3D projection.  The 2-cycle can be viewed as the worldsheet (trajectory) of $S^1$. The $S^1$ can be transported along the extra dimensions.  (e) The codeword in the skeleton hypergraph always occupy even number of hyperedges (bits) connected to a vertex (check).  When there are more than two hyperedges being occupied in the skeleton hypergraph, one can always resolve the corresponding thickened 2-cycles in the manifolds via the equivalence relation in homology.   }\label{fig:cycle_simplicial}
\end{figure*}

There is a following mapping between the basis cycles/cocycles in the skeleton classical code and those in the thickened simplicial LDPC code defined on the triangulation $\L$ of the 4-manifold $\M^4$:
\begin{align}
\bar{\textsf{a}}_1 \rightarrow \mathsf{a}_2, \quad 
\bar{\textsf{b}}_0 \rightarrow \mathsf{b}_1,  \quad
\bar{\textsf{a}}^1 \rightarrow \mathsf{a}^2,  \quad  \bar{\textsf{b}}^0 \rightarrow \mathsf{b}^1,  
\end{align}
which belong to the cycle/cocycle basis $\{ \mathsf{a}_2\}$,  $\{ \mathsf{b}_1\}$, $\{ \mathsf{a}^2\}$ and $\{ \mathsf{b}^1\}$.
Due to the introduction of Poincar\'e duality $H_k(\L; \ZZ_2)$$\cong$$H^{4-k}(\L^*; \ZZ_2)$ from the double of the 4-handlebody,  the above cycles and cocycles all have their Poincar\'e dual cocycles and cycles on the dual triangulation $\L^*$: 
\be\label{eq:combined_isomorphism}
\mathsf{a}_2 \sim \mathsf{a}^{*2}, \quad 
\mathsf{b}_1 \sim \mathsf{b}^{*3},  \quad
\mathsf{a}^2 \sim \mathsf{a}^*_2,  \quad  
\mathsf{b}^1 \sim \mathsf{b}^*_3. 
\ee
These new cyclce/cocycle classes in the dual triangulation $\L^*$ also have corresponding classes in the original triangulation $\L$ due to the isomorphism $H_k(\L; \ZZ_2) \cong H_k(\L^*; \ZZ_2) \cong H_k(\M^4; \ZZ_2)$ and $H^k(\L; \ZZ_2) \cong H^k(\L^*; \ZZ_2) \cong H^k(\M^4; \ZZ_2)$, which have essentially the same support in the continuous picture of the manifold $\M$.  For simplicity, we also use the same set of notations to represent these corresponding cycles/cocycles in the original triangulation $\L$, i.e., $\mathsf{a}^{*2}, \mathsf{b}^{*3}, \mathsf{a}^*_2$ and $\mathsf{b}^*_3$. 

Besides the Poincar\'e duality isomorphism, since we are considering $\ZZ_2$ homology, there is an additional isomorphism between the $k^\text{th}$ $\ZZ_2$-homology and cohomology in the same triangulation $\L$, i.e., $H_k(\L; \ZZ_2)$$\cong$$H^{k}(\L; \ZZ_2)$ due to the universal coefficient theorem \cite{Hatcher:2001ut}.  One can consider the $k^\text{th}$ homology group $H_k$ and $k^\text{th}$ cohomology group $H^k$ as $\ZZ_2$ vector spaces.  The above isomorphism corresponds to the following pairing (inner product) of the basis vector ($k$-cycle $\mathsf{c}_k$) and the dual basis vector ($k$-cocycle $\mathsf{c'}^k$):
\be\label{eq:conjugate_relation}
\int_{\M^4} (\mathsf{c}_k)(\mathsf{c'}^k) \equiv \int_{\mathsf{c}_k} \mathsf{c'}^k = |\mathsf{c}_k \cap \mathsf{c'}^k| = \delta_{\mathsf{c}, \mathsf{c'}},
\ee
where in the second expression we sum the $\ZZ_2$-coefficients of the cocycle $c'^k$ over the cycle $\mathsf{c}_k$ and is equivalent to the number of overlap $k$-simplices between the support of the cycle-cocycle pair.   We call the cycle and cocycle with the same label $\mathsf{c}$, i.e.,  $\mathsf{c}_k$ and  $\mathsf{c}^k$ (and hence overlapping on single $k$-simplex) a conjugate pair, since later they will be used to compose the logical-$Z$ and logical-$X$ operators respectively (which are conjugate variables) in the product construction.   In our example, we have the following conjugate pairs:  $\mathsf{a}_2 \leftrightarrow \mathsf{a}^2$ and $\mathsf{b}_1 \leftrightarrow \mathsf{b}^1$.

Now combining the above isomorphism with Poincar\'e duality, we have 
\be
H_k(\L; \ZZ_2) \cong H^{4-k}(\L^*; \ZZ_2)\cong  H_{4-k}(\L^*; \ZZ_2)\cong H_{4-k}(\L; \ZZ_2).
\ee
This leads to a pair of dual cycles on $\L$:  $\mathsf{c}_k$ and $\mathsf{c}^*_{4-k}$, which has a non-trivial $\ZZ_2$ intersection with each other, namely
\be
|\mathsf{c}_k \cap \mathsf{c}^*_{4-k}| = 1.
\ee
More generally, for any pair of basis cycles, one has
\be
|\mathsf{c}_k \cap \mathsf{c'}^*_{4-k}| = \delta_{\cs, \cs'},
\ee
which is equivalent to the overlapping relation of the conjugated cycle-cocycle pair in Eq.~\eqref{eq:conjugate_relation}.
Note that the above intersection condition can be re-written as the cup product sum of their Poincar\'e dual cocycles:
\be
\int_{\M^4} \mathsf{c}^{*4-k} \cup \mathsf{c'}^{k}= \delta_{\cs, \cs'},
\ee
which is related to the isomorphism $H^k(\L^*; \ZZ_2) \cong H^{4-k}(\L^*; \ZZ_2)$.   Note the above discussion about the various isomorphism also apply to general dimension $r$, including the 8D manifold $(r=8)$ studied in Sec.~\ref{sec:handle_8-manifold}.

In the following, we analyze the properties of each type of cycles or cocycles as well as their Poincar\'e duals, along with illustrations of the underlying geometry.  Note that these analysis and illustration can be easily generalized to higher dimension $r$ by proper increasing of the dimension of the core and co-core of the handles, such as the $r=8$ case in Sec.~\ref{sec:handle_8-manifold}. The properties for $r\ge 8$ are similar to the 4-manifold case studied in this subsection.
\begin{enumerate}
\item 
\textit{2-cycles:}

We first investigate the codeword of the classical code $\bar{\C}$ associated to 1-cycle $\bar{\textsf{a}}_1$, which is mapped to the 2-cycle $\mathsf{a}_2$  in the triangulation $\L$ of the manifold $\M^4$, as illustrated in Fig.~\ref{fig:cycle_simplicial}(a,b).  The 1-cycle $\bar{\textsf{a}}_1 = \xi_1$ can be considered as a sub-hypergraph of the hypergraph $G_H$, which forms the skeleton of the 2-cycle $\mathsf{a}_2$ in the thickened code. More concretely, this means the thickened 2-cycle $\mathsf{a}_2$ is the boundary of the tubular neighborhood of $\xi_1$, which has the form 
\be
\mathsf{a}_2|_\tau = \xi_1|_{\tau} \times \partial D^2=\xi_1|_{\tau} \times S^1= S^1\times I
\ee
in a local tubular segment $\tau=S^1 \times I \times S^2$, where $\mathsf{a}_2|_\tau$ and $\xi_1|_{\tau}=I$ dnotes the segment of $\mathsf{a}_2$ and $\xi_1$ supported within $\tau$.  Since each $\mathsf{B}_i$ ``4-handle" corresponds to a single hyperedge in the hypergraph, the 2-cycle needs to completely experience the entire $\mathsf{B}_i$ ``2-handle" it goes through, as shown in Fig.~\ref{fig:cycle_simplicial}(b).   We emphasize that although $\bar{\textsf{a}}_1 = \xi_1$ is a non-geometric 1-cycle defined on a hypergraph which is not a simplicial complex, the thickened 2-cycle $\as_2$ is a geometric cycle defined on a simplicial complex.
We have hence turned the classical code into a geometric object by shifting to higher dimensions.

Another property of the 1-cycle codeword of the classical code is that it can \textit{turn} at the check vertex $\mathsf{C}_j$, meaning that the cycle can occupy only a portion of the hyperedges connected to the check vertex, as illustrated in Fig.~\ref{fig:cycle_simplicial}(a).  Therefore, the thickened 2-cycle codeword should also be able to turn at a $\mathsf{C}_j$-``handle", as illustrated in Fig.~\ref{fig:cycle_simplicial}(b, c).  The turning location corresponds to the junction where more than two $\mathsf{C}_j$-``handles" are attached to a $\mathsf{B}_i$-``handle".  The turning is more concretely illustrated with the 3D projection in Fig.~\ref{fig:cycle_simplicial}(d), where the 2-cycle $\as_2$ can occupy a tubular region $S^1 \times I$ in this segment which goes directly to one of the other $\mathsf{B}_i$-``handles". 

Now for the 2-cycle $\as_2$ in our construction,  it can only occupy  even number of attached $\mathsf{B}_i$-``handles" at each junction to satisfy the cycle condition $\partial_2 \as_2 = 0$.  When there are more than two $\mathsf{B}_i$-``handles" being occupied, one can resolve the 2-cycle by pairing up the occupied $\mathsf{B}_i$-``handles" in an arbitrary way, as illustrated in Fig.~\ref{fig:cycle_simplicial}(e).

\item 
\textit{1-cocycles:}

\begin{figure*}[t]
\includegraphics[width=2\columnwidth] {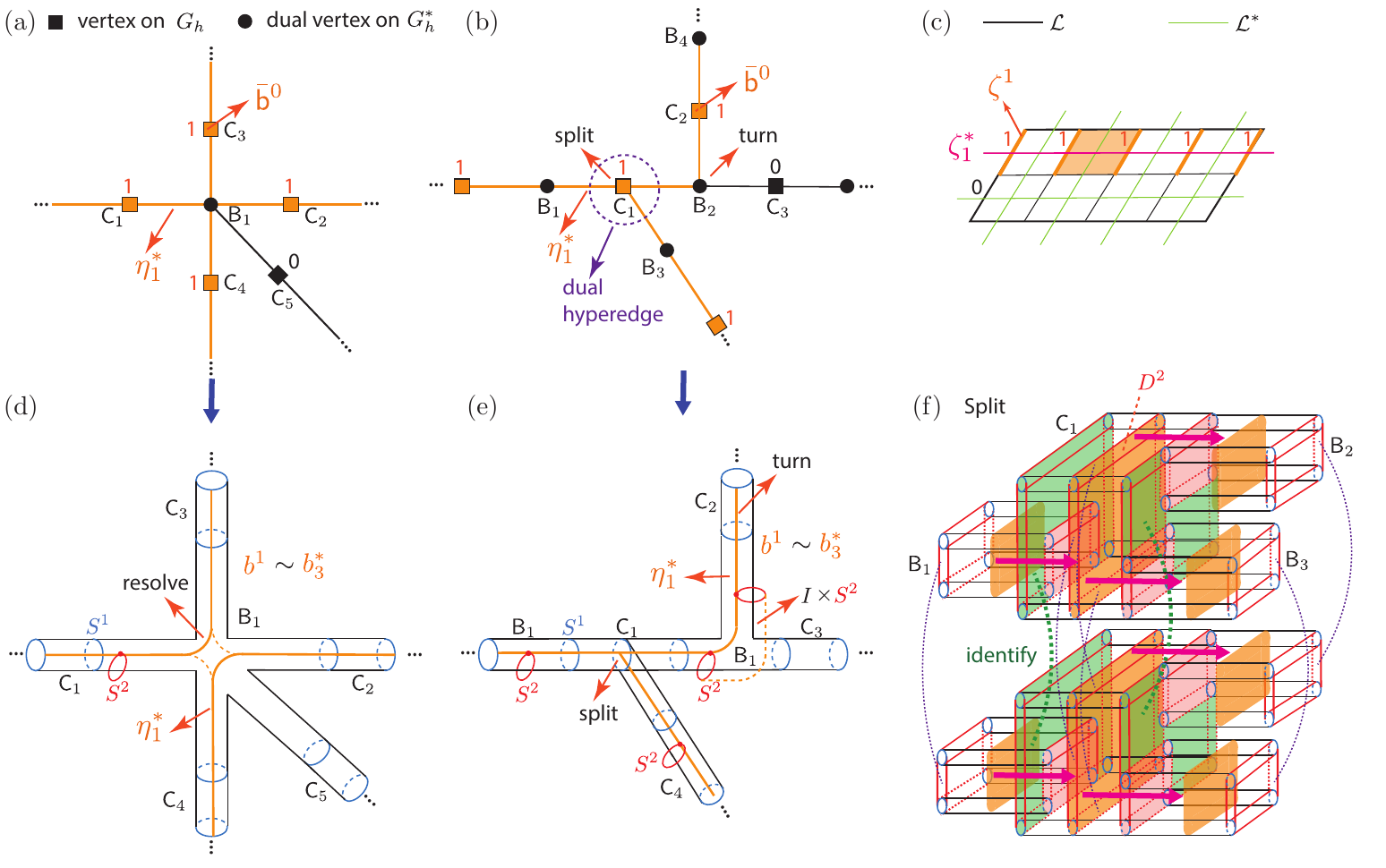}
\caption{(a) Illustration of the 0-cocycle $\bar{\bs}^0$ in the skeleton hypergraph $G_h$ occupying vertices (square) and its dual 1-cycle $\eta^*_1$ occupying the hyperedges in the dual hypergraph $G^*_h$.  (b) The 0-cocycle can turn at a hyperedge (bit) and has to split into all branches at a vertex (check). (c) Illustration of a cocycle and its Poincar\'e dual cycle on a 2D square complex. (d) In the 4-manifold, the thickened 1-cocycle and its dual 3-cycle locally look like $I \times S^2$ in a tubular segmenet, and can be resolved using the equivalence relation of cocycles when going into more than two legs in a $\mathsf{B}_i$-``handle.   (e)  The turning of the 1-cocycle in a $\mathsf{B}_i$-``handle" and splitting of the 1-cocycle in the junction region of the $\mathsf{C}_j$-``handle". (f) Understanding the geometry of the splitting of 1-cocycle in the 3D projection.  The 1-cocycle and its dual 3-cycle can be viewed as the worldsheet (trajectory) of a pair of vertical 2-disks $D^2$ glued along their common $S^1$ boundary.  When moving across the boundary of the $\mathsf{C}_1$-``handle", a pair of 2-disks can be created or annihilated at the ``wormhole" regions (highlighted in green) where the boundaries of the two copies are identified. During this moving process, a smaller vertical 2-disk hence first becomes a larger 2-disk ane then splits into two smaller 2-disks. }\label{fig:cocycle_property}
\end{figure*}

We then consider the codeword of the transposed classical code $\bar{\C}^T$ associated with the 0-cocycle $\bar{\bs}^0$, which is mapped to the 1-cocycle  $\bs^1$  in the thickened code defined on the triangulation $\L$ of the manifold $\M^4$, as illustrated in Fig.~\ref{fig:cocycle_property}.  

The 0-cocycle $\bar{\bs}^0$ occupies vertices $v$ (orange squares) on the hypergraph $G_h$. In the dual hypergraph $G_h^*$ where the vertex and hyperedge is interchanged, $\bar{\bs}^0$ corresponds to a dual 1-cycle $\eta^*_1 \in H_1(G^*_H; \ZZ_2)$ occupying the dual hyperedges $e_h^*$, as illustrated by orange lines in  Fig.~\ref{fig:cocycle_property}(a).   

An important property is that the 0-cocycle $\bar{\bs}^0$ can turn at a bit $\mathsf{B}_i$ (circle), as illustrated in  Fig.~\ref{fig:cocycle_property}(b). In the dual hypergraph picture, the dual 1-cycle   $\eta^*_1$ turns at the dual vertex  $\mathsf{B}_i$.  Note that in order for $\bar{\bs}^0$ to be a valid codeword of  $\bar{\C}^T$ or equivalently a valid 0-cocycle,  at each bit $\mathsf{B}_i$, there has to be even number of adjacent checks being occupied by $\bar{\bs}^0$ [see Fig.~\ref{fig:cocycle_property}(a,b)] in order to satisfy the parity constraint in $\bar{\C}^T$ or equivalently the zero coboundary condition at the corresponding hyperedge $e_{h,i}$, i.e., $d \bar{\bs}^0 (e_{h, i})$$=$$0$.   In the dual hypergraph picture, this means a single dual vertex (circle) has to be adjacent to even number of occupied dual hyperedges.

\begin{figure*}[t]
\includegraphics[width=1.3\columnwidth]{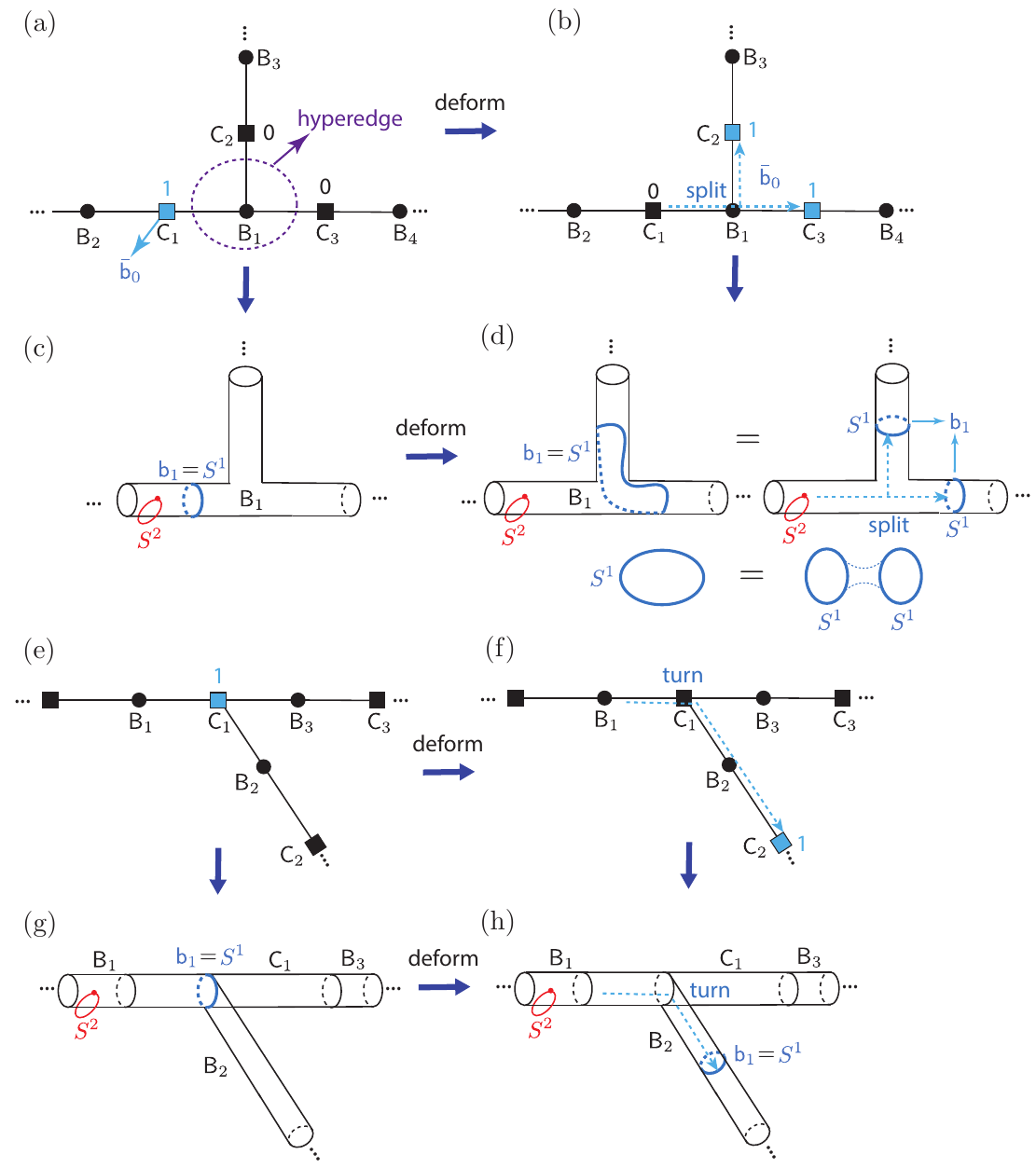}
\caption{(a, b) A 0-cycle occupying a single vertex (check) in the skeleton hypergraph can be deformed into a representative occupying the rest of vertices connected to the same hyperedge $\mathsf{B}_1$. (c,d) In the corresponding manifold, the thickened 1-cycle $S^1$ can be deformed and split into two $S^1$ when going through a $\mathsf{B}_i$ handles with more than two legs according to the homology relation shown below. (e,f) The 0-cycle representative can turn at the check location (square vertex). (g,h) The thickened 1-cycle can also turn at the junction region, with geometric interpretation essentially illustrated in Fig.~\ref{fig:cycle_simplicial}(d). }\label{fig:three_cycle_property}
\end{figure*}

Another important property is that the 0-cocycle $\bar{\bs}^0$ (codeword of $\bar{\C}^T$) has to split into all directions at each check $\mathsf{C}_j$ to satisfy the even-parity condition at the neighboring $\mathsf{B}_i$, which plays the role of a parity check in the transposed code $\bar{\C}^T$.    This is also equivalent to satisfying the 0-coboundary condition at the corresponding hyperedge $e_{h,i}$, i.e., $d \bar{\bs}^0 (e_{h, i})$$=$$0$.  Note that this is consistent with the dual hypergraph description, since in $G_h^*$, the check $\mathsf{C}_j$ plays the role of a dual hyperedge $e_{h,j}^*$.  Therefore, all legs of this dual hyperedge has to be occupied by the dual 1-cycle $\eta^*_1$. 

We then investigate the corresponding 1-cocycle in the thickened code.  It is more convenient to describe these 1-cocycles geometrically with their Poincaré dual cycles on the dual triangulation $\L^*$. 
An example of a 2D cellular (square) complex $\L$ (black) and its dual complex $\L^*$ (green) is illustrated in Fig.~\ref{fig:cocycle_property}(c).

\begin{figure*}[hbt]
\includegraphics[width=1.5\columnwidth]{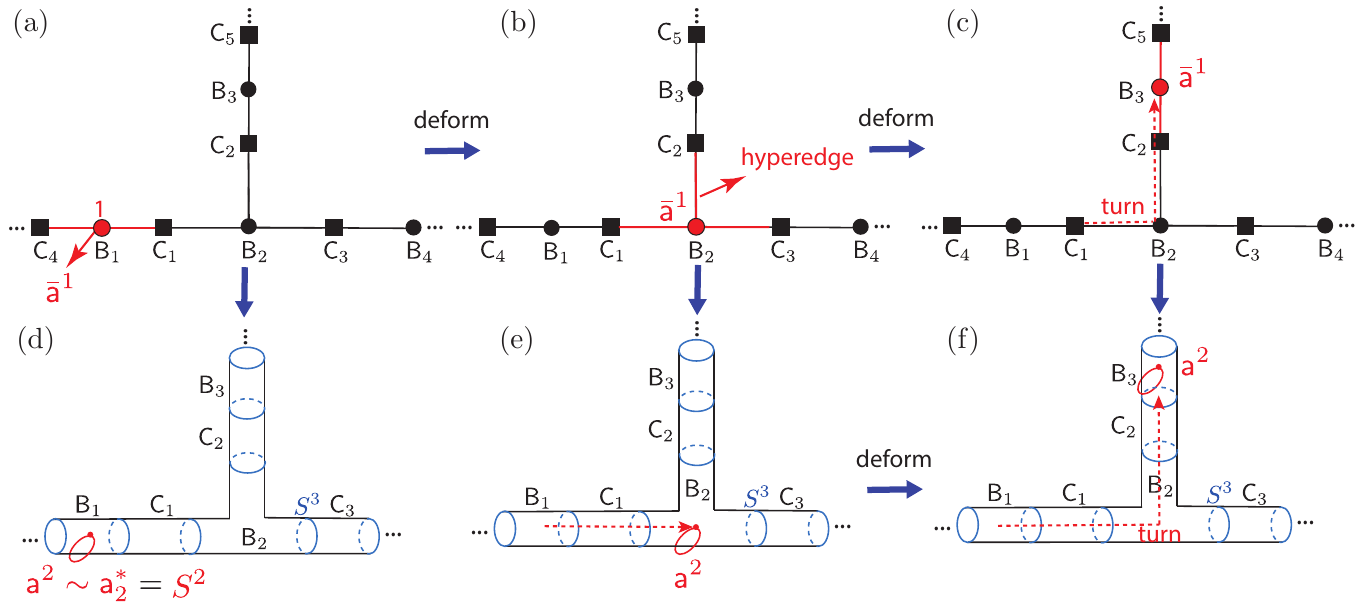}
\caption{(a,b,c) A 1-cocycle representative occupying a single hyperedge in the skeleton hypergraph can turn at a hyperedge (bit). (d,e,f) In the corresponding manifold, the thickened 2-cocycle $S^2$ can also turn in a $\mathsf{B}_i$-``handle" when there are more then two legs. }\label{fig:four_cocycle_property}
\end{figure*}

\begin{figure}[hbt]
\includegraphics[width=1\columnwidth]{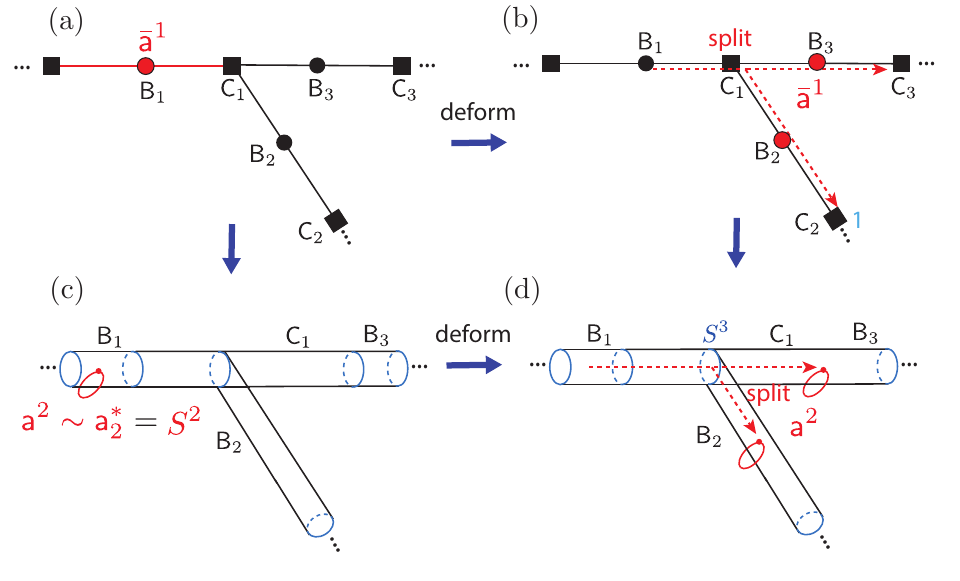}
\caption{(a,b) When deforming a 1-cycle representative on a single hyperedge across the check (square vertex), it has two splits and occupying the rest of the bits (hyperedges) connected to the check. (c, d) In the corresponding manifold, the thickened 2-cocycle also has to split at the junction region.  The geometric interpretation has essentially been illustrated in Fig.~\ref{fig:cocycle_property}(f).  }\label{fig:four_cocycle_interface}
\end{figure}

As shown in Fig.~\ref{fig:cocycle_property}(d), the dual 1-cycle $\eta^*_1$ in the skeleton classical code forms the skeleton of the 1-cocycle $\bs^1$ and its dual 3-cycle $\bs^*_3$, which are the boundary of the tubular neighborhood of $\eta^*_1$ (i.e., locally as $\eta^*_1|_\tau \times D^3$). In any tubular segment  $\tau = S^1 \times I \times S^2$, the 1-cocycle and its dual 3-cycle is supported on the thickened cycle:
\be
\bs^1|_\tau \sim \bs^*_3|_\tau =\eta^*_1|_{\tau} \times S^2=I \times S^2, 
\ee
as illustrated in Fig.~\ref{fig:cocycle_property}(d).
On the $\mathsf{B}_i$ with multiple legs, there can be crossing of $\eta^*_1$ as has been shown in Fig.~\ref{fig:cocycle_property}(a).   Since there has to be an even number of branches being occupied, we can always resolve $\eta^*_1$ using the homology equivalence relation.   As illustrated in Fig.~\ref{fig:cocycle_property}(d), one can either connect the top to the left and the bottom to the right  (solid line), or instead the top to the right and the bootom to the left (dashed line).   The thickened cocycle $\bs^1$ and its dual cycle $\bs^*_3$ can be reconnected the same way. 

Same as its skeleton $\eta^*_1$,  the 1-cocycle $\bs^1$ and the dual 3-cycle $\bs^*_3$ can turn at the $\mathsf{B}_i$-``handle" since locally (in a small tubular segment $\tau$) they are just a 1-cycle thickened along the 2-sphere $S^2$, namely $\eta^*_1|_{\tau} \times S^2$.  They also have to split into all branches at the $\mathsf{C}_j$-``handle" with a junction, as illustrated in Fig.~\ref{fig:cocycle_property}(e). The splitting of 1-cocycle $\bs^1$ and the dual 3-cycle $\bs^*_3$ is illustrated with the doubled picture and 3D projection in  Fig.~\ref{fig:cocycle_property}(f).   The dual 3-cycle $\bs^*_3$ can be interpreted as the world-volume (moving trajectory) of the two sphere $S^2$, which is allowed to split.  In the doubled picture, two disks $D^2$ (orange) are identified along their common $S^1$ boundary to form the 2-sphere $S^2$.  We can see that when we move the two glued disks across the attaching regions (red) of the $\mathsf{B}_1$-``handle", it becomes two larger glued disks.  This is because on the two identified boundaries (green) of the 2-handlebody $H$ which behave like wormholes, a pair of glued disks (in the green regions) can be created together which compensates the difference between the smaller disks on the left and larger disks on the right.  When we further move these two large glued disks towards the right and across the attaching region (red) of the $\mathsf{B}_2$- and $\mathsf{B}_3$-``handles", we see that the membrane in the green regions get annihilated due to the identification, and the larger glued disks split into  two pairs of smaller glued disks in the $\mathsf{B}_2$- and $\mathsf{B}_3$-``handles" repsectively.  The 1-cocycle and its dual 3-cycle $\bs^*_3$ hence splits at the junction.

\item
\textit{1-cycles:}

We now investigate the 0-cycle $\bar{\bs}_0$ in the classical code $\bar{\C}$ which is mapped to the 1-cycle $\bs_1$ in the thickened code, as illustrated in Fig.~\ref{fig:three_cycle_property}.    In contrast to $\bar{\as}_1$ and $\bar{\as}^0$ which are codeword of $\bar{\C}=\bar{\C}^T=\text{Ker}(\Hs^T\Hs)$ and hence have minimum size $\Omega(n)$,  $\bar{\bs}_0$ can have only $O(1)$ size.  The corresponding 1-cycle $\bs_1$ can hence also just have $O(1)$ minimum size.   Since $\bar{\bs}_0 \in {H}_0(G_h)= \text{Ker}({\partial}_0)/\text{Img}({\partial}_1) $ and $\partial_0$ maps all 0-chain to 0,  any single vertex $v \in G_h$ is a valid 0-cycle.  Due to the isomorphism $H_0(G_h; \ZZ_2) \cong H^0(G_h; \ZZ_2)$ from the universal coefficient theorem, we have the $0^\text{th}$ Betti number equaling the linear dimension of the transposed code $\bar{\C}^T$, i.e.,
\begin{align}
\text{dim}(H_0(G_h; \ZZ_2))=\text{dim}(H^0(G_h; \ZZ_2))=\bar{k}^T=\bar{k}=\Theta(n).
\end{align}
Therefore, there are $\Theta(n)$ equivalence classes of $\bar{b}_0$.  Since $\bar{b}_0$ is mapped to $\bs_1$, there are also $\Theta(n)$ equivalence classes of $\bs_1$. 

Distinct from the case of $\bar{\as}_1$ and $\bar{\as}^0$  which only has a unique representative,  in each equivalence class of $\bar{\bs}_0$, there are equivalent representatives differing by a 1-boundary, i.e., 
\be
\bar{\bs}_0=\bar{\bs}_0+{\partial}_1 {\chi}_1
\ee
(${\chi}_1$ is any 1-chain),  due to the definition of $\bar{H}_0(G_h)$ above which mods out $\text{Img}({\partial}_1)$.   As illustrated in Fig.~\ref{fig:three_cycle_property}(a,b) a 0-cycle $\bar{\bs}_0$ (blue) on the check $\mathsf{C}_1$ (vertex) is deformed to an equivalent representative occupying two vertices $\mathsf{C}_2$ and $\mathsf{C}_3$ by adding a 1-boundary $\bar{\partial}_1 e_{h, 1}=v_1 + v_2 + v_3$, where $e_{h, 1}$ represents the hyperedge associated with bit $\mathsf{B}_1$, and $v_j$ the vertex corresponds to check $\mathsf{C}_j$.  The deformation trajectory needs to split to the complementary branches at a bit $\mathsf{B}_i$ (hyperedge).   When mapped to the manifold $\M^4$,  the corresponding thickened 1-cycle
has the following equivalence relation up to a adding a 2-boundary:
\be
{\bs}_1={\bs}_1+ \partial_2 \chi_2
\ee
($\chi_2$ is any 2-chain).  Therefore, the 1-cycle 
$\mathsf{b}_1 = S^1$ can be deformed to a larger $\mathsf{b}_1 = S^1$ going into all the remaining legs when the $\mathsf{B}_1$ bifurcate, and then being split into two $S^3$, as shown in Fig.~\ref{fig:three_cycle_property}(c,d).   This splitting is due to the recoupling relation in homology, which has been illustrated in the lower panel of (d).  
Since the 2-cycle $\as_2$ discussed above can be considered as the worldsheet (moving trajectory) of the 1-cycle $\mathsf{b}_1 = S^1$ here, the above splitting picture of $\mathsf{b}_1$ is consistent with the splitting of $\as_2$ into all legs in the $\mathsf{B}_1$-``handle" as illustrated in Fig.~\ref{fig:cycle_simplicial}(b).  

Moreover, at the check $\mathsf{C_1}$, the 0-cycle $\bar{\bs}_0$ can be deformed and turn to any branch as shown in  Fig.~\ref{fig:three_cycle_property}(e,f).   In the illustrated example, the deformation is achieved by adding a 1-boundary ${\partial}_1 e_2=v_1+v_2$,  where $e_2$ is the edge corresponding to bit $\mathsf{B}_2$, and $v_1$ and $v_2$ are the vertices associated with checks $\mathsf{C}_1$ and $\mathsf{C}_2$. The  0-cycle $\bar{\bs}_0$ hence turns to the lower branch.  Similarly, the  0-cycle can also be deformed to the right branch.    When mapped to the manifold $\M^4$, the 1-cycle $\mathsf{b}_1 = S^1$ can be deformed and turn at the junction of the $\C_j$-``handle" to any branch, as illustrated in Fig.~\ref{fig:three_cycle_property}(g,h). Since the 2-cycle $\as_2=S^1$ can be considered as the worldsheet (moving trajectory) of the 1-cycle $\mathsf{b}_1$  as mentioned above,  we can understand this turning with the more  concrete 3D projection picture in Fig.~\ref{fig:cycle_simplicial}(d), where the worldsheet of $S^1$ turns at the junction.

\item
\textit{2-cocycles:} 

We now consider the 1-cocycle $\bar{\as}^1$ in the classical code $\bar{\C}$ which is mapped to the 2-cocycle $\as^2$ in the thickened code defined on $\M^4$, as illustrated in Fig.~\ref{fig:four_cocycle_property} and \ref{fig:four_cocycle_interface}. Similar to the case of 0-cycle $\bar{\bs}_0$ and the corresponding 1-cycle $\bs_1$, the 1-cocycle $\bar{\as}^1$ and the corresponding 2-cocycle $\as^2$ also have $O(1)$ minimum size and $\Theta(n)$ equivalence classes.

Note that since $\bar{\as}^1$$=$$H^1(G_h; \ZZ_2)$$=$$\text{Ker}({d}^1)/\text{Img}({d}^0)$ and $\bar{d}^1$ acts trivially, any single hyperedge in $G_h$ is a valid 1-cocycle $\bar{\as}^1=\bar{e}_{h,i}$ ($\bar{e}_{h,i}$ is the indicator 1-cochain at hyperedge ${e}_{h,i}$), as shown in Fig.~\ref{fig:four_cocycle_property}(a). There are equiavlent representatives differing by the 0-coboundary:
\be
\bar{\as}^1 = \bar{\as}^1 + d^0 {\chi}^0,
\ee
where ${\chi}^0$ is any 0-cocycle.  As illustrated in Fig.~\ref{fig:four_cocycle_property}(a,b), a representative of $\bar{\as}^1$ occupying the bit $\mathsf{B}_1$ (edge $e_1$) can be moved to the neighboring bit $\mathsf{B}_2$ (hyperedge $e_{h,2}$) by adding a 0-coboundary of the indicator cochain $\bar{v}_1$ at $\mathsf{C}_1$ (vertex):  $d^0 \bar{v}_1= \bar{e}_{1}+\bar{e}_{h,2}$.  Furthermore, $\bar{\as}^1$ can turn at $\mathsf{B}_2$ to the upper branch and moves to $\mathsf{B}_3$ by adding a 0-coboundary of the indicator cochain  at $\bar{v}_2$ at $\mathsf{C}_2$: $d^0 \bar{v}_2= \bar{e}_{h,2}+\bar{e}_{3}$, as illustrated in Fig.~\ref{fig:four_cocycle_property}(c). When mapped to the 4-manifold $\M^4$, the corresponding 2-cocycle and its dual 2-cycle $\as^2 \sim \as^*_2 = S^2$ can also turn at the $\mathsf{B}_2$-``handle" since it can be freely deformed to any point on the $\mathsf{B}_2$-``handle",  as illustrated in Fig.~\ref{fig:four_cocycle_property}(d-f).   Since the 1-cocycle $\bs^1$ and its dual 3-cycle $\bs^*_3$ discussed previously can be considered as the worldsheet (moving trajectory) of the 2-cocycle and its dual 2-cycle  $\as^2 \sim \as^*_2 = S^2$, this turning is consistent with the turning of the 1-cocycle $\bs^1 \sim \bs^*_3$ illustrated in Fig.~\ref{fig:cocycle_property}(e).

Moreover, when moving through a check $\mathsf{C}_1$, the 1-cocycle $\bar{\as}^1$ needs to be split into all the remaining branches by adding a 0-coboundary of an indicator 0-cochain at $\mathsf{C}_1$ (vertex), i.e., $d^0 \bar{v}_1$, as illustrated in Fig.~\ref{fig:four_cocycle_interface}(a,b).   The corresponding 2-cocycle and its dual 2-cycle $\as^2 \sim \as^*_2=S^2$ also splits at the junction in the $\mathsf{C}_1$-``handle" into all remaining branches, as shown in Fig.~\ref{fig:four_cocycle_interface}(c,d).   As mentioned above, the worldsheet  of 2-cocycle and its dual 2-cycle $\as^2 \sim \as^*_2 = S^2$ corresponds to the 1-cocycle and its dual 3-cycle $\bs^1 \sim \bs^*_3$ discussed before.  Hence the splitting of  $S^2$ has already been illustrated concretely in the 3D projection picture in Fig.~\ref{fig:cocycle_property}(f).

\item
\textit{0-cocycles, 4-cycles and 0-cycles:}

Besides the doubling of existing cyclces and cocycles in the skeleton classical code due to the introduction of Poincar\'e duality, the construction of the 4-manifold $\M^4$ out of the skeleton classical code also introduces new emergent cycles and cocycles which were not present in the classical code.
One of them is the 0-cocycle $\cs^0$ and its Poincar\'e dual 4-cycle $\cs^*_4$ which are supported on the entire manifold $\M^4$: the 0-cocycle $\cs^0$ occupies all the vertices in $\M^4$, while the 4-cycle $\cs^*_4$ occupies all the 4-simplexes in $\M^4$.

The 0-cocycle $\cs^0$ has a conjugate 0-cycle $\cs_0$.   Any 0-cycle occupying a single vertex $v$ in $\M^4$ is a valid representative of $\cs_0$.  According to Eq.~\eqref{eq:conjugate_relation}, the conjugate pair overlap at a single vertex $v$ as:
\be
\int_{\mathsf{c}_0} \mathsf{c}^0 = |\mathsf{c}_0 \cap \mathsf{c}^0| = 1,
\ee
which also leads to the following non-trivial cup product between $\cs^0$ and the Poincar\'e dual of $\cs_0$, i.e., $\cs^{*4}$:
\be
\int_{\M^4} \cs^0 \cup \cs^{*4} = |\cs^*_4 \cap \cs_0|=1,
\ee
where the first equality shows its equivalence to the $\ZZ_2$ intersection between the Poincar\'e dual cycles.

\item
\textit{Spurious cycles and cocycles:}

There also exist spurious cycles in the manifold which were not present in the skeleton classical code. For each class of 1-cycle $\bs_1=S^1 \in H_1(\L; \ZZ_2)$, one can thicken it along the $S^2$ direction and get $\mathsf{f}_3 =\bs_1 \times S^2=S^1 \times S^2$ which also has $O(1)$ size.   This means there also exist $\Theta(n)$ equivalence classes in the 3-cycle basis $\{\mathsf{f}_3\}$.   

Due to the isomorphism in Eq.~\eqref{eq:combined_isomorphism},  we also have the dual 1-cycle $\mathsf{f}^*_1 \in H_1(\L; \ZZ_2)$ which intersects with the 3-cycle $\mathsf{f}_3$ at a single point, i.e., $|\mathsf{f}_3 \cap \mathsf{f}^*_1|=1$.
The size of $\mathsf{f}^*_1$ is unclear. Nevertheless, since it is a 1-cycle on a 4D simplicial complex, it may have an upper bound of $O(\log n)$ size obtained from the case of random simplicial complex. 

Beside these pairs of dual cycles, there are also their Poincar\'e dual cocycles $\fs^{*1}$ and $\fs^3$, which have the same size and number of equivalence classes.  

\end{enumerate}

\subsection{Thickened 3D hypergraph-product code and its code parameter scaling}

When taking a homological product of three identical copies of classical codes defined on the triangulation of the 4-manifolds constructed above, we get a thickened 3D hypergraph product defined on   
the product simplicial complex $\tilde{\L}=\L \otimes \L' \otimes \L''$ which forms the triangulation of the product manifold $\widetilde{\M}^{12}=\M^4 \times \M^{4'} \times \M^{4''}$. Here, $\L$, $\L'$ and $\L''$ represent the simplicial complex (triangulation) associated with the three factor 4-manifolds $\M^4$, $\M^{4'}$ and $\M^{4''}$.   In particular, we place the qubits on the 4-simplexes of the triangulation $\tilde{\L}$. Therefore, the logical-$Z$ operators are hence associated with the 4-cycles, while the logical-$X$ operator are associated with the 4-cocycles or equivalently their Poincar\'e dual 8-cycles in the dual triangulation $\L^*$. 

Due to the presence of spurious cyclces/cocycles as discussed in the last subsection, the 4-systole and 4-cosystle can end up being small, which can give rise to $O(1)$ code distance.  However, instead of using the conventional subspace code to encode the logical information into the manifold $\widetilde{\M}^{12}$, one can use the more flexible subsystem code idea to encode logical information only into a subset of homology/cohomology classes.  

One can get more intuition from the following example illustrated in Fig.~\ref{fig:sub_system_encoding}.  We start with a torus $T^2$ having 1-systole of size $L$, i.e., $\text{min}(|\alpha_1|)=\text{min}(|\beta_1|)=L$, where $\alpha_1$ and $\beta_1$ represent the longitudinal and meridian 1-cycles.  We then remove two small disks on $T^2$ and glue a small handle on that to form a genus-2 surface $\Sigma_2$, which introduces two additional cycles $\alpha'_1$ and $\beta'_1$ both with $O(1)$ size. We can choose the following homology basis for this surface: $\{\alpha_1, \beta_1, \alpha'_1, \beta'_1\}$. This gives rise to four pairs of conjugate logical operators encoding four logical qubits as illustrated in Fig.~\ref{fig:sub_system_encoding}, where the logical-$X$ and logical-$Z$ operators of the same logical qubit are defined on a pair of dual basis cycles intersecting with each other, such as $\lo{X}_{\alpha_1}$ and $\lo{Z}_{\beta_1}$ where $|\alpha_1 \cap \beta_1|=1$. Since the shortest 1-cycle has only $O(1)$ size, the 1-systole is hence $O(1)$.  If one defines a conventional subspace code with the code space being $\C= 2^{|H_1(\Sigma_2;\ZZ_2)|}$, the code distance is only $d=O(1)$.   However, we can treat the logical qubits associated with short cycles as gauge qubits (which do not store any information), and choose only a subset of basis cycles to encode the information, e.g., $\{\alpha_1, \beta_1\}$ composed of only the long cycles and encoding two logical qubits corresponding to the dual pair of logical operators:  $\lo{X}_{\alpha_1}$ and $\lo{Z}_{\beta_1}$,   $\lo{Z}_{\alpha_1}$ and $\lo{X}_{\beta_1}$.    Note that any $Z$ or $X$ errors along the two short cycles $\alpha'_1$ and $\beta'_1$ does not intersect with the long cycles $\alpha_1$ and $\beta_1$ and hence cannot induce a logical error in the subsystem code. Therefore, the distance of the subsystem code is still determined by the shortest length of the large cycles, i.e., $d=L$.

\begin{figure}[hbt]
\includegraphics[width=1\columnwidth]{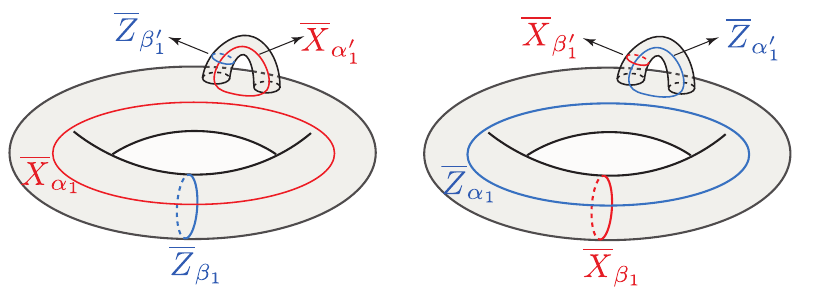}
\caption{Illustration of the subsystem encoding of the homological code defined on a manifold. With a chosen homology basis, two logical qubits are encoded into a dual pair of mutually intersecting cycles $\alpha_1$ and $\beta_1$ with minimum size $O(L)$, while the other two logical qubits are encoded into a dual pair of $\alpha'_1$ and $\beta'_1$ cycles with minimum size $O(1)$. One can then set the logical qubits supported on the short cycles as gauge qubits and hence do not store quantum information in them.   For the rest of the two logical qubits, there subsystem code distance is still large, i.e., $O(L)$, despite the systole in this manifold is only $O(1)$ due to the presence of short cycles.}\label{fig:sub_system_encoding}
\end{figure}

For general situations, we have the following lemma:
\begin{lemma}\label{lemma:subsystem}
    For a homological quantum code defined on the triangulation of a  $k$-manifold $\M^k$, one can define a subsystem code by associating the logical-$Z$ operators with a subset of an $i^\text{th}$ homology basis $\{\alpha_i\}$ and the conjugate logical-$X$  operators on the dual subset of $(k-i)^\text{th}$ homology basis $\{\beta^*_{k-i}\}$ satisfying the intersection relation $|\alpha_i \cap \beta^*_{k-i}|= \delta_{\alpha, \beta}$.  The distance of the subsystem code is hence $d=\text{min}(\text{min}\{|\alpha_i|\}, \text{min}\{|\beta^*_{k-i}|\})$.
\end{lemma}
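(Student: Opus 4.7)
The plan is to unpack the subsystem-code structure explicitly, identify the stabilizer/gauge/bare-logical decomposition in terms of the homology basis, and then reduce the distance claim to an intersection-theoretic statement about cycles on $\M^k$. First I would fix bases. Pick a symplectic basis of $H_i(\M^k;\ZZ_2) \times H^i(\M^k;\ZZ_2)$ (equivalently, conjugate pairs of $i$-cycles and dual $(k-i)$-cycles satisfying $|\alpha_i \cap \beta^*_{k-i}| = \delta_{\alpha,\beta}$) and split the index set into $S$ (the ``information'' indices singled out in the statement) and its complement $\bar S$ (the ``gauge'' indices). The bare logical operators are then $\{\lo Z_\alpha = \lo Z_{\alpha_i} : \alpha \in S\}$ and $\{\lo X_\beta = \lo X_{\beta^*_{k-i}} : \beta \in S\}$, while the gauge group $\mathcal{G}$ is defined as the group generated by the stabilizers $\mathcal{S}$ together with the representatives of the unselected pairs $\{\lo Z_{\alpha_i}, \lo X_{\beta^*_{k-i}}\}_{(\alpha,\beta) \in \bar S}$.

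Next I would verify the subsystem-code commutation axioms. Bare logicals on $S$ pair symplectically among themselves by the chosen intersection relation, and they commute with every generator of $\mathcal{G}$: they commute with stabilizers since $Z$-stabilizers are coboundaries (resp.\ $X$-stabilizers are boundaries) and cycle-coboundary pairings vanish, and they commute with the $\bar S$ ``gauge logicals'' precisely because the symplectic basis is block-diagonal across $S$ and $\bar S$. This step shows that the information is faithfully stored in the $|S|$-pair subsystem, while the $\bar S$-pair subsystem acts as gauge qubits that carry no data.

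For the distance, I would invoke the standard subsystem formula $d = \min_{E \notin \mathcal{G},\, [E,\mathcal{S}]=0} |E|$. Suppose $E_Z$ is a $Z$-type cycle in $Z_i$ that commutes with all $X$-stabilizers (so $E_Z$ is an $i$-cycle) and is not in $\mathcal{G}$. Expanding $[E_Z]$ in the chosen basis, non-triviality modulo $\mathcal{G}$ forces at least one coefficient on an $\alpha \in S$ to be nonzero; by the pairing $|E_Z \cap \beta^*_{k-i}| = 1$ for the corresponding dual $\beta \in S$, $E_Z$ anticommutes with the bare logical $\lo X_{\beta^*_{k-i}}$. Any representative of this coset has weight at least $\min\{|\alpha_i| : \alpha \in S\}$ because adding stabilizers (boundaries) and gauge $Z$-logicals (supported on $\bar S$ cycles) cannot change its homology class within the quotient $H_i/\mathrm{span}_{\bar S}$, so the minimum weight is attained by the lightest cycle representing a non-trivial class of $S$. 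The symmetric argument for $X$-type errors using the $(k-i)$-cocycle picture gives $\min\{|\beta^*_{k-i}| : \beta \in S\}$. Taking the minimum of the two bounds yields the claimed distance, and these bounds are saturated by the minimum-weight homology representatives themselves.

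The delicate step—and the main obstacle—is the last one: in a subsystem code the distance is the \emph{gauge-reduced} minimum weight, so one must rule out that multiplication by the short gauge logicals in $\bar S$ could shorten a long bare logical in $S$ below $\min\{|\alpha_i|\}$. This is why the hypothesis of the lemma insists on a basis with the disjoint intersection pattern $|\alpha_i \cap \beta^*_{k-i}| = \delta_{\alpha,\beta}$: the resulting block-diagonal symplectic form ensures that the $\bar S$ gauge orbit of a bare logical in $S$ never crosses into a different $S$-coset, so the minimum weight over the coset is bounded below by the homological minimum. The genus-$2$ example in Fig.~\ref{fig:sub_system_encoding} is exactly the toy model of this phenomenon and serves as the guiding picture for the general construction.
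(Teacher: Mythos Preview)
Your argument follows the same skeleton as the paper's: the intersection relation $|\alpha_i\cap\beta^*_{k-i}|=\delta_{\alpha,\beta}$ yields block-diagonal anticommutation between bare $Z$- and $X$-logicals, and one reads off the distance from the minimum-weight representatives in $S$. The paper's proof is in fact considerably shorter than yours---it never sets up the gauge group explicitly, and simply asserts that ``only the $X$-errors wrapped around the dual cycle $\alpha^*_{k-i}$ can flip the eigenvalue of $\lo Z_{\alpha_i}$,'' then declares $d_Z=\min\{|\alpha_i|\}$. You go further by invoking the standard subsystem formula $d=\min_{E\notin\cG,\,[E,\cS]=0}|E|$ and correctly isolating the gauge-dressing step as the crux.

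However, your resolution of that step is not complete, and the paper's proof glosses over it entirely. The block-diagonal symplectic pairing only tells you that the gauge orbit of a bare logical preserves its image in $H_i/\mathrm{span}_{\bar S}$; it does \emph{not} bound the minimum weight over that orbit from below by $\min_{\alpha\in S}|\alpha|$. A dressed $Z$-logical is any $i$-cycle $c$ with $[c]\in(\mathrm{span}_S\setminus\{0\})+\mathrm{span}_{\bar S}$, and nothing in the symplectic data rules out $|c|<\min_{\alpha\in S}|\alpha|$ when $[c]=\alpha+\gamma$ with $\gamma\neq 0$. In the genus-$2$ picture this does not happen because the short handle is geometrically disjoint from the long cycles, but that is a metric fact, not a consequence of the intersection pairing. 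So the formula in the lemma is really an upper bound on the subsystem distance; the matching lower bound needs extra structure. In the paper's applications (Theorems~\ref{theorem:3Dhypergraph} and~\ref{theorem:homological_product}) this is harmless, because the K\"unneth product bounds they derive there apply to \emph{every} cycle with nonzero $S$-component, not just the pure basis representatives---but that is where the real work happens, not in this lemma.
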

\begin{proof}
Due to the intersection relation $|\alpha_i \cap \beta^*_{k-i}|= \delta_{\alpha, \beta}$, any logical-$Z$ operator supported on the basis cycle ${\alpha_i}$ only anticommutes with the logical-$X$ operator supported on the dual cycle $\alpha^*_{k-i}$, namely 
\be
\lo{Z}_{\alpha_i} \lo{X}_{\alpha^*_{k-i}}=-\lo{X}_{\alpha^*_{k-i}} \lo{Z}_{\alpha_i}.
\ee
Therefore, only the $X$-errors wrapped around the dual cycle $\alpha^*_{k-i}$ can flip the eigenvalue of the logical operator $\lo{Z}_{\alpha_i}$.   Similarly, on the $Z$-errors wrapped around the cycle $\alpha_i$ can flip the eigenvalue of the logical operator  $\lo{X}_{\alpha^*_{k-i}}$.

Therefore, the $Z$-distance of this subsystem code is the smallest size of all basis $i$-cycles in the subset $\{\alpha_i\}$,  i.e., $d_Z=\text{min}\{|\alpha_i|\}$ (which takes the minimum size among all representatives of the basis $i$-cycles in the subset $\{\alpha_i\}$), while the $X$-distance is the smallest size of all basis $(k-i)$-cycles in the subset $\{\beta^*_{k-i}\}$, i.e., $d_X=\text{min}\{|\beta^*_{k-i}|\}$. We hence obtain the overall code distance as
\be
d= \text{min}(d_X, d_Z)=\text{min}(\text{min}\{|\alpha_i|\}, \text{min}\{|\beta^*_{k-i}|\}).
\ee
\end{proof}

Based on the above lemma and Lemma \ref{lemma:thickened_code}, we can reach the following theorem:
\begin{theorem}\label{theorem:3Dhypergraph}
Given a skeleton classical code $\bar{\C}=\text{Ker}(\Hs^T \Hs)$, where $\Hs$ is a full-rank parity check matrix of a good classical LDPC code with prameters $[n,\Theta(n),\Omega(n)]$, the homological product of three identical copies of 4-manifolds $\M^4$ obtained from the handle construction with the input of $\bar{\C}$ gives rise to an $[[N, \Theta(N), \Omega(N^{1/3})]]$ qLDPC code with constant stabilizer weight $\omega=O(1)$. 
\end{theorem}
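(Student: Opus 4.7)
The plan is to assemble the four claimed parameters in parallel: the count of physical qubits from the chain-level Künneth formula for the product triangulation, the count of logical qubits from the homological Künneth formula applied to the Betti numbers of each factor $\M^4$, the subsystem-code distance bound via a carefully chosen logical basis that neutralizes the $O(1)$-size spurious 1-cycles inherent to the low-dimensional thickening, and the stabilizer weight bound from the bounded local geometry guaranteed by Lemma~\ref{lemma:thickened_code}.

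First, since qubits sit on the 4-simplices of the product triangulation $\widetilde{\L}=\L\otimes\L'\otimes\L''$ of the 12-manifold $\widetilde{\M}^{12}=\M^4\times\M^{4'}\times\M^{4''}$, I would apply the chain-level Künneth decomposition
\be
\tilde{C}_4=\bigoplus_{p+q+t=4}C_p\otimes C'_q\otimes C''_t,
\ee
which is a constant number of summands (the ordered triples with $p+q+t=4$ and $0\le p,q,t\le 4$), each of dimension $\Theta(n^3)$ by Lemma~\ref{lemma:thickened_code}, giving $N=\Theta(n^3)$. For the logical count I would read off the Betti numbers of $\M^4$ from the cycle/cocycle dictionary of Section~\ref{sec:cycle_mapping}: $b_0=b_4=1$, $b_2=\bar{k}=\Theta(n)$ from the lift $\bar{\as}_1\mapsto\as_2$ of classical codewords, $b_1=\bar{k}^T=\Theta(n)$ from the lift $\bar{\bs}_0\mapsto\bs_1$ (equivalently the spurious $S^1$ cycles of the check ``1-handles''), and $b_3=\Theta(n)$ by Poincar\'e duality. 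The homological Künneth formula $\tilde{b}_4=\sum_{p+q+t=4}b_p b'_q b''_t$ is then dominated by the three ordered triples $(2,1,1)$, $(1,2,1)$, $(1,1,2)$, each contributing $n\cdot n\cdot n=n^3$, so $K=\tilde{b}_4=\Theta(n^3)=\Theta(N)$.

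Next, I would establish the subsystem distance using Lemma~\ref{lemma:subsystem}. I would declare the logical-$Z$ basis to consist of the 4-cycles $\tilde{\alpha}_4=\alpha_2\otimes\beta_1^{(2)}\otimes\beta_1^{(3)}$ together with its two cyclic permutations, where $\alpha_2$ ranges over the logical codeword 2-cycles of one factor and the $\beta_1^{(i)}$ range over the spurious 1-cycle basis of the other two factors, and take the conjugate logical-$X$ basis to be the Poincar\'e duals $\alpha^*_2\otimes\beta^{*(2)}_3\otimes\beta^{*(3)}_3$. The central sub-claim is that any representative of $[\tilde{\alpha}_4]$ has weight bounded below by a product of factor systoles,
\be
|\tilde{\alpha}_4|\ge sys_2(\{\alpha_2\};\M^4)\cdot sys_1(\{\beta_1\};\M^4)^2=\Omega(n)\cdot\Omega(1)\cdot\Omega(1)=\Omega(n).
\ee
The bound $sys_2(\{\alpha_2\};\M^4)=\Omega(n)$ follows from the deformation-retraction argument underlying Theorem~\ref{theorem:manifold_scaling}, adapted to the 4-manifold: retracting every handle to its core produces a CW complex $\L_c$ isomorphic to the lifted skeleton chain complex $\hat{\mathcal{X}}$, so minimum cycle weights in the logical class coincide with the classical distance $d=\Omega(n)$. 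Since $N=\Theta(n^3)$, this yields $D_Z=\Omega(N^{1/3})$, and a parallel argument on dual cocycles handles the $X$-distance. Finally, $\omega=O(1)$ is immediate because bounded vertex degree of each factor triangulation (Lemma~\ref{lemma:thickened_code}) propagates to the product, so every $X$- and $Z$-stabilizer is supported on $O(1)$ qubits.

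The hard part will be elevating the product-of-systoles heuristic to a rigorous lower bound on the minimum weight across the entire homology class $[\tilde{\alpha}_4]$, once deformations by $\tilde{C}_5$-boundaries and by the gauge group generated by the non-logical (spurious) classes are permitted. I would attack this by exploiting the Künneth splitting of cycles modulo boundaries into Cartesian summands and then invoking the factor-wise systole bounds; the companion subtlety is verifying that the dual cocycle basis for logical-$X$ operators realizes weight $\Omega(N^{1/3})$ in each class rather than collapsing under coboundary deformations, which I would control by intertwining coboundary deformations in $\widetilde{\M}^{12}$ with those of the factor CW complexes $\L_c$.
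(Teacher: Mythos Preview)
Your proposal is essentially the same strategy as the paper's proof: Künneth at chain level for $N$, Künneth at homology level for $K$, Lemma~\ref{lemma:subsystem} with a hand-picked logical basis for $D$, and Lemma~\ref{lemma:thickened_code} for $\omega$. The level of rigor you target on the product-of-systoles step matches what the paper actually writes (it too asserts $\min|\mathsf{A}_4|=\min|\as_2|\cdot\min|\bs'_1|\cdot\min|\bs''_1|$ without a detailed reduction).

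One substantive difference: you declare only the $(2,1,1)$ K\"unneth sector as logical, whereas the paper declares three sectors
\[
\mathsf{A}_4=\as_2\otimes\bs'_1\otimes\bs''_1,\qquad \mathsf{B}_4=\as_2\otimes\as'_2\otimes\cs''_0,\qquad \mathsf{C}_4=\bs^*_3\otimes\bs'_1\otimes\cs''_0
\]
(and permutations). For the bare parameter statement of Theorem~\ref{theorem:3Dhypergraph} your narrower choice is fine---the $(2,1,1)$ sector already gives $\Theta(n^3)=\Theta(N)$ logical qubits, and your $X$-distance bound even improves to $\Omega(N^{2/3})$. But be aware that the paper's broader choice is not arbitrary: the logical CCZ gates in the subsequent Theorem~\ref{theorem:triple_intersection_12D} act precisely on the $\mathsf{B}$- and $\mathsf{C}$-type logical qubits (the triple-cup-product cocycles are of shape $\as^2\otimes\cs'^0\otimes{\as^*}''^2$ and $\bs^1\otimes\cs'^0\otimes{\bs^*}''^3$), so if you gauge those sectors out you lose the non-Clifford gate. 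If your sole aim is Theorem~\ref{theorem:3Dhypergraph}, your basis works; if you want the code of Theorem~\ref{theorem:3Dhypergraph} to also carry the CCZ, you must enlarge the logical set as the paper does and then separately bound $\min|\mathsf{B}_4|$ and $\min|\mathsf{C}_4|$ (both are $\Omega(N^{1/3})$, with $\mathsf{B}^4$ being the $X$-bottleneck at $\Omega(N^{1/3})$).

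Minor bookkeeping: you write $K=\tilde b_4$ but then encode only the $(2,1,1)$ sector as logical; strictly, in the subsystem picture $K$ counts only the declared logical classes, not the full Betti number. Since $(2,1,1)$ already dominates $\tilde b_4$, the $\Theta(N)$ scaling is unaffected, but the statement should be made consistent.
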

\begin{proof}
According to the Künneth formula Eq.~\eqref{eq:Kunneth_general}, the homology/cohomology groups of the triple product manifold $\widetilde{\M}^{12}=\M^4 \times \M'^4 \times \M''^4$ can be decomposed as a direct sum of the homology/cohomology groups of the factor manifold:
\begin{align}\label{eq:Kunneth_manifold}
\tilde{H}_4=& (H_2 \otimes H'_1 \otimes H''_1) \oplus (H_2 \otimes H'_2 \otimes H''_0) \cr 
&\oplus (H_3 \otimes H'_1 \otimes H''_0) \oplus \text{perm.}\cr
\tilde{H}^4=& (H^2 \otimes H'^1 \otimes H''^1) \oplus (H^2 \otimes H'^2 \otimes H''^0) \cr 
&\oplus (H^3 \otimes H'^1 \otimes H''^0) \oplus \text{perm.},
\end{align}
where ``perm." stands for permutations.   From Poincar\'e duality, the 4th cohomology group is isomorphic to the 8th homology group, i.e.,
\begin{align}
\tilde{H}^4 \cong \tilde{H}_8=& (H_2 \otimes H'_3 \otimes H''_3) \oplus (H_2 \otimes H'_2 \otimes H''_4) \cr 
&\oplus (H_1 \otimes H'_3 \otimes H''_4) \oplus \text{perm.},
\end{align}

From the above expression, we see that the contributions to the homology/cohomology can be divided into three groups $(2,1,1)$, $(2,2,0)$ and $(3,1,0)$ (and their permutations).  Since the 1-cycle and 0-cycle of the skeleton classical code $\bar{\C}$ (corresponding to chain complex $X$) is mapped to 2-cycle and 1-cycle in $\M^4$, we have the $\ZZ_2$ Betti number:
\begin{align}
\tilde{b}_2=&\text{dim}(H_2(\M^4; \ZZ_2))=b_1=\text{dim}(H_1(X; \ZZ_2))=\Theta(n), \cr
\tilde{b}_1=&\text{dim}(H_1(\M^4; \ZZ_2))=b_0=\text{dim}(H_0(X; \ZZ_2))=\Theta(n),
\end{align}
where we have used ${b}_1={b}_0=\bar{k}=\Theta(n)$ from Lemma~\ref{lemma:classical_code}. 
We then apply the Künneth theorem to the Betti number and obtain the total number of logical qubits $K$:
\begin{align}
K=&\tilde{b}_4 = b_2\cdot b'_1\cdot b''_1 + b_2\cdot b'_2 \cdot b''_0 + b_3 \cdot b'_1 \cdot b''_0 +\text{perm.} \cr
=& \Theta(n)\cdot \Theta(n) \cdot \Theta(n) + \Theta(n)\cdot \Theta(n) \cdot 1 + \Theta(n)\cdot \Theta(n) \cdot 1  \cr
=& \Theta(n^3)= \Theta(N).
\end{align}
Here, $N=\Theta (n) \cdot \Theta (n) \cdot \Theta (n)=\Theta (n^3)$ are the total number of qubits (4-simplices) in the qLDPC codes defined on the triangulation of $\tilde{M}^{12}$, which is given by condition~2 in Lemma \ref{lemma:thickened_code}.

Following Lemma \ref{lemma:subsystem}, now we select a subset of 4-cycle basis and its dual 8-cycle basis to form a subsystem code. In particular, we compose these basis cycles in $\tilde{\M}^{12}$ with all the basis cycles of $\M^4$ introduced in Sec.~\ref{sec:cycle_mapping} except those spurious cycles/cocycles. According to the Künneth formula in Eq.~\eqref{eq:Kunneth_manifold} the basis 4-cycles in the chosen subset can be divided into three groups:   
\begin{align}
\mathsf{A}_4=&\as_2 \otimes \bs'_1 \otimes \bs''_1  \quad \text{(and perm.)}   \cr
\mathsf{B}_4=&\as_2 \otimes \as'_2 \otimes \cs''_0  \quad \text{(and perm.)}   \cr
\mathsf{C}_4=&\bs^*_3 \otimes \bs'_1 \otimes \cs''_0  \quad \text{(and perm.)}. 
\end{align}
Their conjugate 4-cocyles are
\begin{align}\label{eq:conjugate_basis_cycle}
\mathsf{A}^4=&\as^2 \otimes {\bs'}^1 \otimes {\bs''}^1  \quad \text{(and perm.)}   \cr
\mathsf{B}^4=&\as^2 \otimes {\as'}^2 \otimes {\cs''}^0  \quad \text{(and perm.)}   \cr
\mathsf{C}^4=&\bs^{*3} \otimes {\bs'}^1 \otimes {\cs''}^0  \quad \text{(and perm.)}.  
\end{align}
The corresponding dual basis 8-cycles are given by Poincar\'e duality as
\begin{align}\label{eq:dual_basis_cycle}
\mathsf{A}^*_8=& \as^*_2 \otimes {\bs^*_3}' \otimes {\bs^*_3}''  \quad \text{(and perm.)}   \cr
\mathsf{B}^*_8=& \as^*_2 \otimes {\as^*_2}' \otimes {\cs^*_4}''  \quad \text{(and perm.)}   \cr
\mathsf{C}^*_8=& \bs_1 \otimes {\bs^*_3}' \otimes {\cs^*_4}''  \quad \text{(and perm.)}.  
\end{align}
For any basis cycle in the first group $\{\mathsf{A}_4\}$, we have 
\begin{align}
\text{min}(|\mathsf{A}_4|)=& \text{min}(|\as_2|) \cdot \text{min}(| \bs'_1|) \cdot \text{min}(|\bs''_1|) \cr
\ge& \text{min}(|\bar{\as}_1|) \cdot \text{min}(| \bar{\bs}'_0|) \cdot \text{min}(|\bar{\bs}''_0|)  \cr
=& \Omega(n)\cdot \Omega(1) \cdot \Omega(1) \cr 
=& \Omega(n)=\Omega(N^{\frac{1}{3}}).
\end{align}
Here we have used the fact that the 2-cycle $\as_2$ is a thickened version of the 1-cycle skeleton $\bar{\as}_1=\xi_1$ in the classical code $\bar{\C}$ (see Sec.~\ref{sec:cycle_mapping} ), which has a lower bound in size $\text{min}|\as_2| \ge \text{min}|\bar{\as}_1|= \Omega(n)$. This can be proven in the same way as the proof of Theorem \ref{theorem:manifold_scaling}.  Moreover, the 1-cycle $\bs_1$ is a thickened version of the 0-cycle $\bar{\bs}_0$ in the skeleton classical code, and has size $O(1)$.
For any basis cycle in the second group, we have 
\begin{align}
\text{min}(|\mathsf{B}_4|)=& \text{min}(|\as_2|) \cdot \text{min}(| \as'_2|) \cdot \text{min}(|\cs''_0|) \cr
\ge& \text{min}(|\bar{\as}_1|) \cdot \text{min}(| \bar{\as}'_1|) \cdot 1  \cr
=& \Omega(n)\cdot \Omega(n)  \cr 
=& \Omega(n^2)=\Omega(N^{\frac{2}{3}}),
\end{align}
where we have used the fact that $\cs_0''$ is only a single vertex in the triangulation of 4-manifold ${\M^4}''$ and hence $|\cs_0''|$=1.
For any basis cycle in the third group, we have 
\begin{align}
\text{min}(|\mathsf{C}_4|)=& \text{min}(|\bs^*_3|) \cdot \text{min}(| \bs'_1|) \cdot \text{min}(|\cs''_0|) \cr
=& \text{min}(|\bs^1|) \cdot \text{min}(| \bs'_1|) \cdot 1 \cr
\ge& \text{min}(|\bar{\bs}^0|) \cdot \text{min}(| \bar{\bs}'_0|)   \cr
=& \Omega(n)\cdot \Omega(1) \cr 
=& \Omega(n)=\Omega(N^{\frac{1}{3}}),
\end{align}
where we have used the fact that $\bs_3^*$ is the Poincar\'e dual of $\bs^1$ and hence has the same size $\text{min}(|\bs^*_3|)=\text{min}(|\bs_1|)$.
Now combining the results from the above three groups, we obtain the logical-$Z$ distance according to Lemma \ref{lemma:subsystem}:
\be
d_Z= \text{min}(\{|\textsf{A}_4|\}, \{|\textsf{B}_4|\}, \{|\textsf{C}_4|\}) = \Omega(N^{\frac{1}{3}}).
\ee

We then consider the size of the dual basis 8-cycles in Eq.~\eqref{eq:dual_basis_cycle}, or equivalently that of the conjugate 4-cocycles in Eq.~\eqref{eq:conjugate_basis_cycle}.  For any basis cocycle in the first group, we have 
\begin{align}
\text{min}(|\mathsf{A}^*_8|)=&\text{min}(|\mathsf{A}^4|) = \text{min}(|\as^2|) \cdot \text{min}(| {\bs'}^1 |) \cdot \text{min}(|{\bs''}^1|) \cr
\ge& \text{min}(|\bar{\as}^1|) \cdot \text{min}(| {{\overline{\bs}}'}^0|) \cdot \text{min}(|{\overline{\bs}''}^0|)  \cr
=& \Omega(1)\cdot \Omega(n) \cdot \Omega(n) \cr 
=& \Omega(n^2)=\Omega(N^{\frac{2}{3}}),
\end{align}
where we have used fact that $\as^2$ and $\bs^1$ are  thickened versions of the $O(1)$ size skeleton 1-cocycle $\bar{\as}^1$ and the $\Omega(n)$-size skeleton 0-cocycle $\bar{\bs}^0$ in the classical code $\bar{\C}$ respectively.
For any basis cocycle in  the second group, we have
\begin{align}
\text{min}(|\mathsf{B}_8|)=\text{min}(|\mathsf{B}^4|)=& \text{min}(|\as^2|) \cdot \text{min}(| \as'^2|) \cdot \text{min}(|\cs''^0|) \cr
\ge& \text{min}(|\bar{\as}^1|) \cdot \text{min}(| \bar{\as}'^1|) \cdot \text{min}(|\cs''^0|)  \cr
=& \Omega(1)\cdot \Omega(1) \cdot \Omega(n)  \cr 
=& \Omega(n)=\Omega(N^{\frac{1}{3}}),
\end{align}
where we have used the fact that $\cs''^0 \sim {\cs^*}''_4$ is supported on the entire 4-manifold $\M^4$ and hence has the size $|\cs''^0| = |{\cs^*}''_4|= \Omega(n)$.
For any basis cocycle in the third group, we have
\begin{align}
\text{min}(|\mathsf{C}^*_8|)=& \text{min}(|\mathsf{C}^4|)= \text{min}(|\bs^{*3}|) \cdot \text{min}(| \bs'^1|) \cdot \text{min}(|\cs''^0|) \cr
=& \text{min}(|\bs_{1}|) \cdot \text{min}(| \bs'^1|) \cdot \text{min}(|\cs''^0|) \cr
\ge& \text{min}(|\bar{\bs}_0|) \cdot \text{min}(| \bar{\bs}'^0|) \cdot \text{min}(|\cs''^0|) \cr
=& \Omega(1)\cdot \Omega(n) \cdot \Omega(n) \cr 
=& \Omega(n^2)=\Omega(N^{\frac{2}{3}}),
\end{align}
where we have used the fact that the 3-cocycle and its dual 1-cycle $\bs^{*3} \sim \bs_1$ is a thickened version of the $O(1)$-size skeleton 0-cycle $\bar{\bs}_0$ in the classical code $\bar{\C}$.
Combining the results from the above three groups, we obtain the logical-$X$ distance according to Lemma \ref{lemma:subsystem}:
\be
d_X= \text{min}(\{|\textsf{A}^*_8|\}, \{|\textsf{B}^*_8|\}, \{|\textsf{C}^*_8|\}) = \Omega(N^{\frac{1}{3}}).
\ee
We then get the overall distance of the code as
\be
d=\text{min}(d_X, d_Z)=\Omega(N^{\frac{1}{3}}).
\ee
Finally the constant stabilizer weight $\omega = O(1)$ is equivalent to the bounded local geometry of the product manifold $\tilde{\M}^{12}=\M^4 \times \M'^4 \times \M''^4$, which is in turn given by the bounded local geometry of the factor manifolds $\M^4$,  $\M'^4$ and  $\M''^4$ according to Lemma \ref{lemma:thickened_code}.

\end{proof}

\subsection{Triple cup product and logical non-Clifford gates on the thickened 3D hypergraph product codes}

We can now evaluate cup products in three copies of identical codes defined on manifold $\tilde{\M}^{12}$ and obtain the following theorem:
\begin{theorem}\label{theorem:triple_intersection_12D}
There exist a family of thickened 3D hypergraph product codes $\tilde{\C}$ defined on the triangulation of a 12-manifold  $\tilde{\M}^{12}$ with rate $K=\Theta(N)$, subsystem-code distance $D=\Omega(N^{1/3})$ and constant stabilizer weight $w=O(1)$,  such that a constant-depth circuit implementing the cohomology operation of a triple cup product on three identical copies of $\tilde{\C}$ give rise to $\Theta(N)$ non-Clifford logical CCZ gates. 
\end{theorem}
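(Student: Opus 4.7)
The code-parameter half of the claim — namely $K=\Theta(N)$, $D=\Omega(N^{1/3})$ and $w=O(1)$ — is exactly Theorem \ref{theorem:3Dhypergraph} applied to $\tilde{\M}^{12}=\M^4\times \M'^4\times \M''^4$, so I would quote it directly. For the gate half, I would observe that the qubits of $\tilde{\C}$ sit on the $4$-simplices of the $12$-dimensional triangulation $\tilde{\L}$ and that $4+4+4=12$, so Lemma \ref{lemma_gate_3} applies verbatim with $(q_1,q_2,q_3)=(4,4,4)$ and $n=12$. This yields the constant-depth circuit
\begin{equation}
U = (-1)^{\int_{\tilde{\M}^{12}} a^4\cup a'^4\cup a''^4}
\end{equation}
assembled from overlapping physical CCZ gates, one per $12$-simplex, which preserves the codespace and acts on the logical information as
\begin{equation}
\bar{U} = \prod_{\alpha,\beta,\gamma}\overline{\text{CCZ}}\bigl[(\alpha;1),(\beta;2),(\gamma;3)\bigr]^{\int_{\tilde{\M}^{12}}\alpha\cup\beta\cup\gamma},
\end{equation}
with $(\alpha,\beta,\gamma)$ ranging over the subsystem $4$-cocycle basis identified in Theorem \ref{theorem:3Dhypergraph}. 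Counting nontrivial logical CCZs is then equivalent to counting Künneth-basis triples whose triple cup product is odd on $[\tilde{\M}^{12}]$.

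To perform this count I would Künneth-decompose $H^4(\tilde{\M}^{12};\ZZ_2)=\bigoplus_{p_1+p_2+p_3=4}H^{p_1}\otimes H^{p_2}\otimes H^{p_3}$ and use the fact that cup products commute with Künneth for Cartesian products. For Künneth-homogeneous classes the triple pairing factorizes as a product of three triple-intersection numbers on the factor $\M^4$'s, and is nonzero only when the degrees on each factor sum to $4$. The dominant contribution arises from the $S_3$-orbit of the type assignment $\bigl((2,1,1),(1,2,1),(1,1,2)\bigr)$: on each factor the pairing reduces to $\int_{\M^4}\as^2\cup\bs^1\cup\bs'^1 = |\as^*_2\cap\bs^*_3\cap\bs'^*_3|$, and the spine/extended-core description of Sec. \ref{sec:cycle_mapping} turns this into a count of ``$3$-handles'' traversed simultaneously by the thickened codeword $2$-cycle and both thickened $3$-cycles, which is $\Theta(1)$ per compatible triple. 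Combined with the basis count $b_2\cdot b_1\cdot b_1 = \Theta(n)\cdot\Theta(n)\cdot\Theta(n)=\Theta(N)$ of compatible triples, this produces $\Theta(N)$ nontrivial logical CCZ gates; the same bookkeeping shows that only $\Theta(n^2)=\Theta(N^{2/3})$ distinct logical indices appear in any role, matching the partial addressability announced in the introduction.

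The main obstacle will be verifying this factor-wise intersection count. One needs to show (i) that the extended-core spine representatives intersect transversely in $\M^4$ so that the $\ZZ_2$ triple intersection is in fact odd rather than an even cancellation of multiplicities, and (ii) that the subleading Künneth type distributions $(2,2,0)$, $(3,1,0)$ and $(4,0,0)$ together with their permutations contribute at most $O(N)$ extra CCZs and do not cancel the leading $(2,1,1)$ count. Both are combinatorial-geometric questions controlled by the handle-through-handle intersection dictionary developed in Secs. \ref{sec:handle_construction} and \ref{sec:cycle_mapping}; the global vertex ordering on $\tilde{\L}$ needed for cup products to be simplicially well-defined is obtained by fixing a vertex ordering on each factor triangulation and extending it lexicographically to the product.
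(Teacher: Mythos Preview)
Your proposal correctly invokes Theorem \ref{theorem:3Dhypergraph} and Lemma \ref{lemma_gate_3}, but the cup-product count has a genuine gap that the paper handles very differently.

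You claim the dominant contribution comes from K\"unneth type $\bigl((2,1,1),(1,2,1),(1,1,2)\bigr)$, i.e.\ from three $\mathsf{A}^4$-type basis cocycles. On each factor $\M^4$ this requires $\int_{\M^4}\as^2\cup\bs^1\cup\bs'^1 \neq 0$, which is \emph{not} established anywhere in Sec.~\ref{sec:cycle_mapping}. Your heuristic --- counting handles ``traversed simultaneously by the thickened codeword $2$-cycle and both thickened $3$-cycles'' --- conflates $\as^2$ with $\as^{*2}$: the Poincar\'e-dual $2$-cycle of $\as^2$ is the $O(1)$-size meridian sphere $\as^*_2=S^2$ localized in a single handle, not the $\Theta(n)$-size codeword cycle $\as_2$. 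Whether that localized $S^2$ hits both $\bs^*_3$ and $\bs'^*_3$ depends on cohomology-ring data of $\M^4$ (namely the bilinear map $H^1\times H^1\to H^2$) that the paper never computes and that you cannot read off from the spine picture. Your addressability claim is also internally inconsistent: the $\mathsf{A}^4$-sector carries $\Theta(n^3)=\Theta(N)$ basis elements, not the $\Theta(n^2)=\Theta(N^{2/3})$ you quote.

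The paper bypasses this entirely. It takes $\alpha^4,\beta^4,\gamma^4$ of K\"unneth types $(2,0,2),(2,2,0),(0,2,2)$ (the $\mathsf{B}^4$-sector) and $(1,0,3),(3,1,0),(0,3,1)$ (the $\mathsf{C}^4$-sector), so that on \emph{every} factor one of the three classes is the ring unit $\cs^0$. The factor-wise triple product then collapses to the Poincar\'e pairing $\int_{\M^4}\as^2\cup\as^{*2}\cup\cs^0=|\as^*_2\cap\as_2\cap\cs^*_4|=1$ (respectively $\int_{\M^4}\bs^1\cup\bs^{*3}\cup\cs^0=1$), which is nondegenerate by construction and requires no further ring computation. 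The $\Theta(N)$ CCZ count and the $\Theta(N^{2/3})$ addressability then follow from pairing combinatorics alone (Sec.~\ref{sec:magic_rate_hypergraph}): $\Theta(n^2)$ choices for $\alpha^4$, $\Theta(n)$ compatible $\beta^4$ per $\alpha^4$, and $\gamma^4$ uniquely fixed by duality. What you relegated to the ``subleading'' $(2,2,0)$ and $(3,1,0)$ contributions is in fact the entire argument.
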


\begin{proof}
We consider the case of the logical CCZ gate, where a triple cup product of operator-valued 4-cochains from three copies of thickened 3D hypergraph product codes are summed over the product simplicial complex $\tilde{\L}=\L \otimes \L' \otimes \L''$ forming the triangulation of the product manifold $\widetilde{\M}^{12}=\M^4 \times \M^{4'} \times \M^{4''}$. According to Lemma \ref{lemma_gate_3} and Eq.~\eqref{eq:logical_CCZ_higher},  we have the following unitary implementing the logical gate:
\begin{align}\label{eq:unitary_12D_simplicial}
&U= (-1)^{\int_{\widetilde{\M}^{12}} a^{4}_{(1)} \cup a^{4}_{(2)} \cup a^{4}_{(3)}}    \cr
=& \prod_{\alpha^{4}, \beta^{4}, \gamma^{4}} \overline{\text{CCZ}}[(\alpha^{4}; 1), (\beta^{4}; 2),(\gamma^{4}; 3)]^{\int_{\widetilde{\M}^{12}} {\alpha^{4} \cup \beta^{4} \cup \gamma^{4}}}, \cr
\end{align}
where $a^{4}_{(i)}$ represents an operator-valued 4-cocycles in the $i^\text{th}$ copy of thickened 3D hypergraph product code, and $\alpha^{4}, \beta^{4}, \gamma^{4}$ are 4-cocycles from a cohomology basis $\{\alpha^{4}\}$, $\{\beta^{4}\}$ and $\{\gamma^{4}\}$  for copy 1, 2 and 3 respectively.  

Now it is clear that $U$ is a logical gate since it maps the code back to itself, as proven in Sec.~\ref{sec:gates_simplicial_complex}. Nevertheless, for the logical gate to be non-trivial, i.e., not a logical identity, we need to make sure the triple cup product sum  $\int_{\widetilde{\M}^{12}} {\alpha^{4} \cup \beta^{4} \cup \gamma^{4}}$ in the exponent of Eq.~\eqref{eq:unitary_12D_simplicial} evaluates non-trivially. There are two types of choices of triplets of cocycles satisfy this condition.   

We first consider the following cohomology classes using the Künneth theorem:
\begin{align}\label{eq:cocycle_class1}
\alpha^{4} =& \as^2 \otimes \cs'^0 \otimes {{\as^{*}}''}^2    \cr
\beta^{4} =& \as^{*2} \otimes \as'^2  \otimes \cs''^0   \cr
\gamma^{4} =& \cs^0 \otimes {{\as^{*}}'}^2 \otimes \as''^2 .  
\end{align}
Here, $\as^2$, $\as'^2 $ and $\as''^2$ are three arbitrary 2-cocycles from the 2nd cohomology basis $\{\as^2\}$, $\{\as'^2\}$ and $\{\as''^2\}$ in the factor 4-manifolds $\M^{4}$, $\M'^{4}$ and $\M''^{4}$. Due to Poincar\'e duality, the basis 2-cocyle $\as^2$ has a unique dual basis 2-cocycle denoted by $\as^{*2}$, satisfying the intersection condition:
\be
\int_{\M^4} \as^2 \cup \as^{*2} \equiv \int_{\L} \as^2 \cup \as^{*2} = |\as^*_2 \cap \as_2|=1,
\ee
where the sum over the 4-manifold $\M^4$ corresponds to the sum over its triangulation $\L$ in the discrete description.
Similarly, ${{\as^{*}}'}^2$ and ${{\as^{*}}''}^2$ are the unique dual basis 2-cocycles of $\as^2$ and $\as'^2 $ respectively.   In addition, we have used the unique 0-cocyle class $\cs^0$,  $\cs'^0$ and $\cs''^0$ from each 4-manifold whose Poincar\'e dual $\cs^*_4$, $\cs'^*_4$ and $\cs''^*_4$ are the cocycles enclosing the entire 4-manifolds $\M^4$, $\M'^4$ and $\M''^4$.

Within the factor 4-manifold $\M^4$, there is a non-trivial triple cup product and the triple intersection structure:
\be\label{eq:non-trivial_triple_class}
\as^2 \cup \as^{*2} \cup \cs^0 \neq 0 \in H^4(\M^4; \ZZ_2),
\ee
and
\begin{align}\label{eq:non-trivial_triple}
&\int_{\M^4} \as^2 \cup \as^{*2} \cup \cs^0 \equiv \int_{\L} \as^2 \cup \as^{*2} \cup \cs^0 \cr
&= |\as^*_2 \cap \as_2 \cap \cs^*_4|=1.
\end{align}
The interpretation of the triple intersection is that the Poincar\'e dual pair $\as^*_2$ and $\as_2$ intersect at a single point which in term intersects with the 4-cycle $\cs^*_4$ wrapping around the entire 4-manifold $\M^4$ at a single point.  Similar triple cup product sturcture also occurs in the second and third factor manifold $\M'^4$ and $\M''^4$.

 Using the Künneth theorem for cup product, we can decompose the triple cup product into its tensor component as:
 \begin{align}
 & {\alpha^{4} \cup \beta^{4} \cup \gamma^{4}} \cr
 =& (\as^2 \otimes \cs'^0 \otimes {{\as^{*}}''}^2) \cup (\as^{*2} \otimes \as'^2  \otimes \cs''^0 ) 
\cup ( \cs^0 \otimes {{\as^{*}}'}^2 \otimes \as''^2)  \cr
=&  ( \as^2 \cup \as^{*2} \cup \cs^0) \otimes (\cs'^0 \cup \as'^2  \cup {{\as^*}'}^2 )  \otimes  ({{\as^{*}}''}^2 \cup \cs''^0 \cup \as''^2) \cr
\neq & 0 \in H^{12}(\widetilde{M}^{12}; \ZZ_2).
 \end{align}
The non-triviality of the 12-cocycle $\alpha^{4} \cup \beta^{4} \cup \gamma^{4} \neq 0 $ comes from the fact that each tensor component of the triple cup product in the second equality, e.g., $\as^2 \cup \as^{*2} \cup \cs^0$, is in a non-trivial class  according to Eq.~\eqref{eq:non-trivial_triple_class} for the factor manifold $\M^4$ and the similar expression for the second and third factor manifold $\M'^4$ and $\M''^4$. 
   
Therefore,  we can hence re-express this triple cup product sum in the exponent of Eq.~\eqref{eq:unitary_12D_simplicial} as:
\begin{align}\label{eq:simplicial_Kunneth}
& \int_{\widetilde{\M}^{12}} {\alpha^{4} \cup \beta^{4} \cup \gamma^{4}} \cr
=&  \int_{\widetilde{\M}^{12}} (\as^2 \otimes \cs'^0 \otimes {{\as^{*}}''}^2) \cup (\as^{*2} \otimes \as'^2  \otimes \cs''^0 ) \cr
&\cup ( \cs^0 \otimes {{\as^{*}}'}^2 \otimes \as''^2) \cr
=&  \int_{\M^4} ( \as^2 \cup \as^{*2} \cup \cs^0) \cdot \int_{\M'^4} (\cs'^0 \cup \as'^2  \cup {{\as^*}'}^2 ) \cr
& \cdot \int_{\M''^4} ({{\as^{*}}''}^2 \cup \cs''^0 \cup \as''^2)\cr
=& 1\cdot 1 \cdot 1 =1,
\end{align}
where in the third equality we have used Eq.~\eqref{eq:non-trivial_triple} and its analogy for the other two factor manifolds to give the non-trivial triple cup product within each factor manifold $\M^4$, $\M'^4$ and $\M''^4$.   

There exist another set of cohomology classes which have non-trivial cup products:
\begin{align}\label{eq:cocycle_class2}
\tilde{\alpha}^{4} =& {\bs}^1  \otimes \cs'^0 \otimes {{\bs^*}''}^3    \cr
\tilde{\beta}^{4} =&  \bs^{*3} \otimes {\bs'}^1 \otimes {\cs''}^0   \cr
\tilde{\gamma}^{4} =&  \cs^0 \otimes {{\bs^*}'}^3 \otimes {\bs''}^1,  
\end{align}
where $\bs^1$ and $\bs^{*3}$ are a pair of dual basis cocycles of the manifold in $\M^4$ satisfying the intersection condition:
\be
\int_{\M^4} \bs^1 \cup \bs^{*3}  = |\bs^*_3 \cap \bs_1|=1.
\ee
Similarly, there exist the other two pairs of dual basis cocycles $({\bs'}^1, {{\bs^*}'}^3)$ and $({\bs''}^1, {{\bs^*}''}^3)$ in $\M'^4$ and $\M''^4$ respectively. We hence have the following non-trivial triple cup product sum using the Künneth theorem:
\begin{align}\label{eq:simplicial_Kunneth_second}
& \int_{\widetilde{\M}^{12}} {\tilde{\alpha}^{4} \cup \tilde{\beta}^{4} \cup \tilde{\gamma}^{4}} \cr
=&  \int_{\M^4} ({\bs}^1  \cup \bs^{*3} \cup \cs^0 ) \cdot \int_{\M'^4} (\cs'^0 \cup {\bs'}^1 \cup {{\bs^*}'}^3 ) \cr
& \cdot \int_{\M''^4} ({{\bs^*}''}^3 \cup \cs''^0 \cup {\bs''}^1)\cr
=& 1.
\end{align}

Based on the above two types of non-trivial triple cup product sum, the logical gate $U$ in Eq.~\eqref{eq:unitary_12D_simplicial} implements non-trivial collective logical CCZ gates.   
\end{proof}

\subsection{Magic rate and logical gate structure: counting the number of logical CCZ's from the number of triple intersection points }\label{sec:magic_rate_hypergraph}

As has been shown in Ref.~\cite{zhu2023non} and also above, the logical gate structure is completely determined by the triple intersection (cup product) structure in the underlying manifold $\tilde{\M}^{12}$. In particular, the number of logical CCZs is determined by the number of $\ZZ_2$ triple intersection points in $\tilde{\M}^{12}$.

Now we have two types of contributions to the $\ZZ_2$ triple intersection points given by Eq.~\eqref{eq:cocycle_class1} and Eq.~\eqref{eq:cocycle_class2} respectively.  For the first type  [Eq.~\eqref{eq:cocycle_class1}] and the basis 4-cocycle $\alpha^4=\as^2 \otimes \cs'^0 \otimes {{\as^{*}}''}^2$, we recall that the number of basis cocycle classes in the set $\{\as^2\}$ and $\{{{\as^{*}}''}^2\}$ are both $\Theta(n)$, while there is only a unique class of $\cs'^0$.   Therefore, the number of basis cocycle classes in the subset $\{\alpha^4\}$ is $\Theta(n^2)=\Theta(N^{\frac{2}{3}})$.  Similarly, there are $\Theta(N^{\frac{2}{3}})$ basis cocycle classes in the subset $\{\beta^4\}$ and $\{\gamma^4\}$. For the second type  [Eq.~\eqref{eq:cocycle_class2}], there are also $\Theta(N^{\frac{2}{3}})$ cocycle classes in the subsets $\{\tilde{\alpha}^4\}$, $\{\tilde{\beta}^4\}$, and $\{\tilde{\gamma}^4\}$   from similar reasoning.  In sum, the logical non-Clifford gate $U$ addresses $\Theta(N^{\frac{2}{3}})$ logical qubits in the qLDPC code, with the addressing rate being $r_A=\Theta(N^{\frac{2}{3}})/N=\Theta(1/N^{\frac{1}{3}})$.  

Now we further investigate the logical gate connectivity and total  number of logical CCZ's.  For the first type [Eq.~\eqref{eq:cocycle_class1}] and a given basis cocycle $\alpha^4$, i.e., with fixed choice of $\as^2$ and ${\as^*}''^2$ ($\Theta(N^{2/3})$ possible choices), there are $\Theta(n)=\Theta(N^{\frac{1}{3}})$ possible choices of  basis cocycles  $\beta^4$ since $\as^{*2}$ is fixed by the dual of $\as^2$ due to Poincar\'e duality while there are $\Theta(n)$ possible choices of $\as'^2$.   Now for a given pair of $\alpha^4$ and $\beta^4$,   there is a unique choice of basis cocycle $\gamma^4$ since both ${{\as^{*}}'}^2$ and $\as''^2$ are fixed to be the dual cocycles of $\as'^2$ and  ${\as^*}''^2$.  In other words, each $\alpha^4$ is coupled to $\Theta(n)=\Theta(N^{\frac{1}{3}})$ pairs of $\beta^4$ and $\gamma^4$ via logical CCZ's.  By symmetry, each $\beta^4$ is also coupled to  $\Theta(N^{\frac{1}{3}})$ pairs of $\gamma^4$ and $\alpha^4$, while each $\gamma^4$ is also coupled to  $\Theta(N^{\frac{1}{3}})$ pairs of $\alpha^4$ and $\beta^4$.  For the second type [Eq.~\eqref{eq:cocycle_class2}] involving cocycle classes $\tilde{\alpha}^4$, $\tilde{\beta}^4$, and $\tilde{\gamma}^4$, the logical gate structure is completely the same as those for $\alpha^4$, $\beta^4$ and $\gamma^4$ in the first type based on similar reasoning. 
Now the total number of logical CCZ's implemented by $U$ is hence $n_\text{CCZ}=\Theta(N^{\frac{2}{3}}) \cdot \Theta(N^{\frac{1}{3}}) = \Theta(N)$, which gives rise to the magic rate $r_M=n_\text{CCZ}/N = O(1)$.

We see that one drawback of the current scheme is that it only addresses a fraction of the logical qubits, i.e., $\Theta(N^{\frac{2}{3}})$.   This issue will be cured in the alternative scheme in the Sec.~\ref{sec:good_product} which takes a homological product of a good quantum LDPC code and a good classical LDPC code.  

We also note that although there exists $\Theta(N)$ CCZ gates, one can only extract  
$\Theta(N^{1/3})$ non-overlapping CCZ gates, which hence allows fault-tolerantly preparing $\Theta(N^{1/3})$ non-overlapping CCZ magic states in a single shot by turning off some of the logical qubit (set those in the $\lo{\ket{0}}$  state while others in the $\lo{\ket{+}}$ state).  This type of protocol is called a `\textit{magic state fountain}' \cite{zhu2023non} and will be elaborated in Sec.~\ref{sec:fountain} for the thickened homological product code.
We summarize the above result in the following corollary of Theorem \ref{theorem:triple_intersection_12D}:
\begin{corollary}\label{corollary:hypergraph}
	For the family of thickened 3D hypergraph product codes defined on the triangulation of a 12-manifold from Theorem \ref{theorem:triple_intersection_12D}, one can use a constant-depth circuit to fault-tolerantly prepare $\Theta(N^{1/3})$ non-overlapping logical CCZ magic states with distance $\Omega(N^{1/3})$ in a single shot without disttilation. 
\end{corollary}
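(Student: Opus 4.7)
The plan is to apply the constant-depth circuit $U$ of Theorem~\ref{theorem:triple_intersection_12D} to a carefully chosen product of logical $\lo{\ket{0}}$ and $\lo{\ket{+}}$ states on the three code copies, so that $U$ restricts to a disjoint product of $\Theta(N^{1/3})$ logical CCZ gates, each producing one encoded CCZ magic state.

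First, I recall from Eq.~\eqref{eq:unitary_12D_simplicial} that $U$ decomposes into a product of logical CCZ gates indexed by triples $(\alpha^4,\beta^4,\gamma^4)$ of dual basis 4-cocycles with non-vanishing triple cup product. By Eq.~\eqref{eq:cocycle_class1} and the counting in Sec.~\ref{sec:magic_rate_hypergraph}, these triples are parameterized by $(x,y,z)\in[n]^3$ with $n=\Theta(N^{1/3})$, where $x,y,z$ index basis 2-cocycles of the three identical factor 4-manifolds, and the three logical qubits the CCZ acts on carry the pair labels $(x,z)$, $(x,y)$, $(y,z)$ in copies $1,2,3$ respectively. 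Fixing a common bijection between basis cocycles across factors, I select the diagonal family of $n$ triples $\{(x,x,x):x\in[n]\}$, involving the logical qubit $(x,x)$ in each copy. Since distinct values of $x$ give distinct pair labels in every copy, the $n$ selected triples are pairwise vertex-disjoint in the logical CCZ hypergraph.

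The fountain protocol then consists of (i) fault-tolerantly preparing the $3n$ selected logical qubits in $\lo{\ket{+}}$ and all other addressed logical qubits in $\lo{\ket{0}}$ -- both are CSS stabilizer states preparable with the full code distance $\Omega(N^{1/3})$ -- and (ii) executing $U$ as a constant-depth physical circuit of CCZ gates. Since $\mathrm{CCZ}$ acts as the identity whenever any of its inputs is $\lo{\ket{0}}$, only the $n$ diagonal logical CCZs remain active, producing
\begin{equation*}
\bigotimes_{x=1}^{n}\lo{\mathrm{CCZ}}_{(x,x,x)}\,\lo{\ket{+}}^{\otimes 3n}\otimes\lo{\ket{0}}^{\otimes\mathrm{rest}},
\end{equation*}
i.e., $\Theta(N^{1/3})$ disjoint encoded CCZ magic states. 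The effective distance of each is $\Omega(N^{1/3})$, because the logical operators of the participating qubits have Hamming weight $\Omega(N^{1/3})$ by Theorem~\ref{theorem:3Dhypergraph} and the constant-depth Clifford-plus-CCZ injection circuit can spread any single physical fault to only $O(1)$ physical sites.

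The main point requiring care is the closure check that no non-selected logical CCZ in $U$ is inadvertently triggered. For the first K\"unneth contribution~\eqref{eq:cocycle_class1}, the three constraints $(x,z)=(x_0,x_0)$, $(x,y)=(x_0,x_0)$, $(y,z)=(x_0,x_0)$ on an active triple force $x=y=z=x_0$, so only the chosen diagonal triples fire; for the second K\"unneth contribution~\eqref{eq:cocycle_class2}, the addressed logical qubits correspond to disjoint cohomology classes and are all left in $\lo{\ket{0}}$, so none of its CCZs is triggered. Together these confirm that the output is exactly the claimed product of $\Theta(N^{1/3})$ non-overlapping encoded CCZ magic states, establishing the corollary.
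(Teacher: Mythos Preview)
Your proof is correct and follows the same magic-state-fountain approach the paper sketches: prepare selected logical qubits in $\lo{\ket{+}}$, set the rest to $\lo{\ket{0}}$, and apply the constant-depth $U$ so that only a disjoint subfamily of logical CCZs survives. The paper itself does not give an explicit construction for this corollary---it simply asserts that $\Theta(N^{1/3})$ non-overlapping CCZs can be extracted and refers forward to Sec.~\ref{sec:fountain} (which treats the homological-product case instead). Your diagonal selection $\{(x,x,x):x\in[n]\}$ and the accompanying closure check (forcing $x=y=z$ for any triggered type-1 triple, and killing all type-2 triples via $\lo{\ket{0}}$ on the $\bs$-sector qubits) make explicit what the paper leaves implicit, so your argument is in fact more detailed than the paper's own treatment while remaining on the same track.
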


\subsection{Alternative construction with the product of 8-manifolds}

From Sec.~\ref{sec:4-manifold} to  \ref{sec:magic_rate_hypergraph} we have adhered to the product construction based on lower-dimensional 4-manifolds.   Here, we present alternative constructions using the 8-manifolds introduced in Sec.~\ref{sec:handle_construction} which has a separation between the logical cycles/cocycles coming from the input classical code and the spurious cycles/cocycles following the original style in Ref.~\cite{freedman:2020_manifold_from_code}.  The purpose of this section is to show that the logical non-Clifford gate properties for this family of constructions can work at various dimensions with flexibility.  

The thickened 3D hypergraph product in this case is built on the triangulation of a 21-manifold:  $\widetilde{\M}^{24}=\M^8 \times \M'^8 \times \M''^8$.   For each factor 8-manifold, there is the following mapping between the cycles and cocycls in the skeleton classical code $\bar{\C}$ and the cycles and cocycles in the 8-manifold:
\be
\bar{\textsf{a}}_1 \rightarrow \mathsf{a}_3, \quad 
\bar{\textsf{b}}_0 \rightarrow \mathsf{b}_2,  \quad
\bar{\textsf{a}}^1 \rightarrow \mathsf{a}^3,  \quad  \bar{\textsf{b}}^0 \rightarrow \mathsf{b}^2.  
\ee
Note that $\mathsf{a}_3$ and $\mathsf{b}^2$ correspond to the codeword of the input skeleton classical code $\bar{\C}$ and its transpose code $\bar{\C}^T$ respectively.   Meanwhile, due to the doubling, there also exist there Poincar\'e duals:
\be
\mathsf{a}_3 \sim {\mathsf{a}^*}^5, \quad \mathsf{b}_2 \sim {\mathsf{b}^*}^6, \quad \mathsf{a}^3 \sim \mathsf{a}^*_5, \quad   \mathsf{b}^2 \sim \mathsf{b}^*_6.
\ee

We now state the following theorem in analogy to Theorem \ref{theorem:triple_intersection_12D}:
\begin{theorem}\label{theorem:triple_intersection_24D}
There exist a family of thickened 3D hypergraph product codes $\tilde{\C}$ defined on the triangulation of a 24-manifold  $\tilde{\M}^{24}$ with rate $K=\Theta(N)$, subsystem-code distance $D=\Omega(N^{1/3})$ and constant stabilizer weight $w=O(1)$,  such that a constant-depth circuit implementing the cohomology operation of a triple cup product on three identical copies of $\tilde{\C}$ give rise to a non-Clifford logical gate. 
\end{theorem}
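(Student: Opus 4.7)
The plan is to mirror the proof of Theorem~\ref{theorem:triple_intersection_12D}, replacing each factor 4-manifold by the 8-manifold $\M^8$ built from the Freedman-Hastings handle construction of Sec.~\ref{sec:handle_8-manifold}, and placing the qubits on the 8-simplices of the triangulation of $\tilde{\M}^{24}=\M^8\times{\M}'^8\times{\M}''^8$. With this placement, logical-$Z$ operators correspond to 8-cycles and logical-$X$ operators to 8-cocycles (equivalently 16-cycles on the dual triangulation), and the triple cup product of three 8-cochains lands in the top class $H^{24}(\tilde{\M}^{24};\ZZ_2)$ as required by Lemma~\ref{lemma_gate_3} with $q_1=q_2=q_3=8$.

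First I would invoke Theorem~\ref{theorem:manifold_scaling} on each factor $\M^8$ to harvest the four properties: bounded local geometry (hence stabilizer weight $w=O(1)$ for the product after the standard product triangulation), $\Theta(n)$ simplices at every level (so the total number of 8-simplices is $N=\Theta(n^3)$), $b_3=b_2=\Theta(n)$, and $\mathrm{sys}_3(\M^8;\ZZ_2),\mathrm{sys}^2(\M^8;\ZZ_2)=\Omega(n)$. The K\"unneth formula applied to $\tilde{b}_8$ then gives $K\ge b_3\cdot b'_3\cdot b''_2=\Theta(n^3)=\Theta(N)$, so $K=\Theta(N)$.

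Next, following Lemma~\ref{lemma:subsystem}, I would build the subsystem basis of 8-cycles and of their conjugate 8-cocycles from the K\"unneth decomposition $H_8=\bigoplus_{p+q+r=8}H_p\otimes H_q\otimes H_r$ (and analogously for $H^8$), keeping only those summands in which every factor is either the unique top class $\cs^*_8$, the unique bottom class $\cs_0$, or one of the ``large'' classes $\{\as_3\}$ and $\{\bs^*_6\}$ of size $\Omega(n)$ coming from the codeword mapping of Sec.~\ref{sec:handle_8-manifold}, and treating the rest as gauge qubits. Because any single large factor already forces a minimum Hamming weight of $\Omega(n)$ in the tensor product, this yields subsystem-code distance $D=\Omega(n)=\Omega(N^{1/3})$.

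Finally, for the non-trivial triple cup product I would take the direct analogs of Eqs.~\eqref{eq:cocycle_class1} and~\eqref{eq:cocycle_class2} in degree $8$,
\begin{align}
\alpha^{8} &= \as^{3}\otimes\cs'^{0}\otimes{{\as^{*}}''}^{5}, \cr
\beta^{8}  &= \as^{*5}\otimes\as'^{3}\otimes\cs''^{0}, \cr
\gamma^{8} &= \cs^{0}\otimes{{\as^{*}}'}^{5}\otimes\as''^{3},
\end{align}
together with the companion family built from the conjugate pair $(\bs^{2},\bs^{*6})$, whose degrees also add up to $8$ in every slot. The K\"unneth formula for cup products factorizes $\int_{\tilde{\M}^{24}}\alpha^{8}\cup\beta^{8}\cup\gamma^{8}$ into three factorwise top-degree intersections of the form $\int_{\M^{8}}\as^{3}\cup\as^{*5}\cup\cs^{0}=|\as^{*}_{5}\cap\as_{3}\cap\cs^{*}_{8}|=1$, each of which is non-trivial by Poincar\'e duality. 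The main obstacle I anticipate is not any of these individual steps but the bookkeeping in the distance argument: relative to the 12-dimensional case, $\M^{8}$ carries extra nontrivial homology in degrees $5$ and $6$ coming from the doubling, so the K\"unneth decomposition of $H_8$ and $H_{16}$ has many more summands, and one must check class-by-class that every selected subsystem basis element (and each gauge-equivalent representative obtained by adding boundaries through short spurious cycles) retains weight $\Omega(n)$, mirroring the three-group case analysis in the proof of Theorem~\ref{theorem:3Dhypergraph}.
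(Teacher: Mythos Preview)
Your overall strategy matches the paper's proof closely, and the cup-product triples you write down (and the factorwise evaluation via K\"unneth and Poincar\'e duality) are essentially the same as the paper's. The gap is in your subsystem-basis step.

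You propose to keep only those K\"unneth summands whose three factors each lie in $\{\cs^*_8,\cs_0,\as_3,\bs^*_6\}$. But these classes have degrees $\{8,0,3,6\}$, and the only partition of $8$ into three parts drawn from this set is $8+0+0$, which yields at most $O(1)$ logical qubits---not $\Theta(N)$. To reach $K=\Theta(N)$ the paper is forced to also admit the \emph{small} factors $\bs_2$ and $\as^*_5$ (each of size $O(1)$) and then organize the basis into three groups
\[
\mathsf{A}_8=\as_3\otimes\as'_3\otimes\bs''_2,\quad
\mathsf{B}_8=\as_3\otimes{\as^*_5}'\otimes\cs''_0,\quad
\mathsf{C}_8=\bs_2\otimes{\bs^*_6}'\otimes\cs''_0
\]
(and permutations), checking for each group separately that both the cycle and its conjugate cocycle have at least one $\Omega(n)$ factor. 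Your principle ``any single large factor forces weight $\Omega(n)$'' is the right intuition, but it must be applied to a larger list of allowed factors than you wrote down; the point is not that every factor is large, but that at least one factor on \emph{each} side (cycle and conjugate cocycle) is large.

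There is a concrete inconsistency this creates: the logical qubits addressed by your own triples $\alpha^8,\beta^8,\gamma^8$ have conjugate $8$-cycles such as $\alpha_8=\as_3\otimes\cs'_0\otimes{\as^*_5}''$, which uses the factor $\as^*_5$ that you have explicitly excluded from the subsystem. As written, your CCZ gate therefore acts on qubits you have relegated to the gauge sector. Replacing your factor list by the paper's three-group basis (or any equivalent enumeration that admits $\bs_2$ and $\as^*_5$ while retaining at least one large factor per side) fixes this, and the rest of your argument goes through.
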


\begin{proof}
In the product manifold $\widetilde{M}^{24}$, we place the qubits on the 8-cells. The logical-$Z$ and -$X$ operators hence correspond to 8-cycles and 8-cocycles (equivanetly the dual 16-cycles) respectively.
According to the Kunneth formula, we have
\be
\tilde{H}_8= \bigoplus_{i+j+k=8} (H_i \otimes H'_j \otimes H''_k).
\ee
Therefore, the number of logical qubits which equals 8th Betti number scales as 
\be\label{eq:Betti-8}
K=\tilde{b}_8 \ge b_3 \cdot b'_3 \cdot b'_2 =\Theta(n^3)=\Theta(N),
\ee
where we have get a lower bound with a combination of the dimensions of $\bar{\C}$ and $\bar{\C}^T$. 

Now due to the presence of spurious cycles $\ds_1$ and cocyles $\ds^1$, as well as their Poincar\'e dual ${\ds^*}^7$ and $\ds^*_7$, their product can form spurious 8-cycles which could have short support such as $O(1)$. For example, the following 8-cycles can be short:
\be
\non  \ds_1 \otimes {\ds_7^*}' \otimes \cs''_0, \quad  \ds_1 \otimes {\ds_1}' \otimes \bs_6^*  
\ee
where $\cs_0$ is the 0-cycle.  To resolve this issue, we can again use the subsystem encoding following Lemma \ref{lemma:subsystem}.   In particular, we select the following basis cycles  with large size, which can be divided into three groups:   
\begin{align}
\mathsf{A}_8=&\as_3 \otimes \as'_3 \otimes \bs''_2  \quad \text{(and perm.)}   \cr
\mathsf{B}_8=&\as_3 \otimes {\as_5^*}' \otimes \cs''_0  \quad \text{(and perm.)}   \cr
\mathsf{C}_8=&\bs_2 \otimes {\bs_6^*}' \otimes \cs''_0  \quad \text{(and perm.)}, 
\end{align}
where the first group $\mathsf{A}_8$ contributes to the lower bound on the 8th Betti number in Eq.~\eqref{eq:Betti-8}.  Similarly,  we have the following basis cocycles with large size, 
\begin{align}
\mathsf{A}^8=&\as^3 \otimes {\as'}^3 \otimes {\bs''}^2  \quad \text{(and perm.)}   \cr
\mathsf{B}^8=&\as^3 \otimes {{\as^*}'}^5 \otimes {\cs''}^0  \quad \text{(and perm.)}   \cr
\mathsf{C}^8=&\bs^2 \otimes {{\bs^*}'}^6 \otimes {\cs''}^0  \quad \text{(and perm.)}. 
\end{align}

We can bound the $Z$-distance as follows:
\begin{align}
\text{min}(|\mathsf{A}_8|)=& \text{min}(|\as_3|) \cdot \text{min}(| \as'_3|) \cdot \text{min}(|\bs''_2|) \cr
\ge& \text{min}(|\bar{\as}_1|) \cdot \text{min}(| \bar{\as}'_1|) \cdot \text{min}(|\bar{\bs}''_0|)  \cr
=& \Omega(n)\cdot \Omega(n) \cdot \Omega(1) = \Omega(N^{\frac{2}{3}}),
\end{align}

\begin{align}
\text{min}(|\mathsf{B}_8|)=& \text{min}(|\as_3|) \cdot \text{min}(| {\as_5^*}'|) \cdot \text{min}(|\cs''_0|) \cr
=&  \text{min}(|\as_3|) \cdot \text{min}(| \as'^3| \cdot 1 \cr
\ge& \text{min}(|\bar{\as}_1|) \cdot \text{min}(| \bar{\as}'^1|)   \cr
=& \Omega(n) \cdot 1 =\Omega(N^{\frac{1}{3}}),
\end{align}
\begin{align}
\text{min}(|\mathsf{C}_8|)=& \text{min}(|\bs_2|) \cdot \text{min}(| {\bs_6^*}'|) \cdot \text{min}(|\cs''_0|) \cr
=& \text{min}(|\bs_2|) \cdot \text{min}(| \bs'^2|) \cdot 1 \cr
\ge& \text{min}(|\bar{\bs}_0|) \cdot \text{min}(| {{\overline{\bs}}'}^0|)   \cr
=&  \Omega(1) \cdot  \Omega(n) = \Omega(N^{\frac{1}{3}}),
\end{align}
\be
d_Z= \text{min}(\{|\textsf{A}_8|\}, \{|\textsf{B}_8|\}, \{|\textsf{C}_8|\}) = \Omega(N^{\frac{1}{3}}).
\ee
Similarly, we can bound the $X$-distance as follows:
\begin{align}
\text{min}(|\mathsf{A}^8|) =& \text{min}(|\as^3|) \cdot \text{min}(| {\bs'}^3 |) \cdot \text{min}(|{\bs''}^2|) \cr
\ge& \text{min}(|\bar{\as}^1|) \cdot \text{min}(| {{\overline{\as}}'}^1|) \cdot \text{min}(|{\overline{\bs}''}^0|)  \cr
=& \Omega(1)\cdot \Omega(1) \cdot \Omega(n) =\Omega(N^{\frac{2}{3}}),  \cr
\end{align}

\begin{align}
\text{min}(|\mathsf{B}^8|)=& \text{min}(|\as^3|) \cdot \text{min}(| {\as^*}'^5|) \cdot \text{min}(|\cs''^0|) \cr
=& \text{min}(|\as^3|) \cdot \text{min}(| {\as}'_3|) \cdot \text{min}(|\cs''^0|) \cr
\ge& \text{min}(|\bar{\as}^1|) \cdot \text{min}(| \bar{\as}'_1|) \cdot \text{min}(|\cs''^0|)  \cr
=& \Omega(1)\cdot \Omega(n) \cdot \Omega(n) =\Omega(N^{\frac{2}{3}}),
\end{align}

\begin{align}
 \text{min}(|\mathsf{C}^8|) =& \text{min}(|\bs^{2}|) \cdot \text{min}(| {\bs^*}'^6|) \cdot \text{min}(|\cs''^0|) \cr
=& \text{min}(|\bs^{2}|) \cdot \text{min}(| \bs'_2|) \cdot \text{min}(|\cs''^0|) \cr
\ge& \text{min}(|\bar{\bs}^0|) \cdot \text{min}(| \bar{\bs}'_0|) \cdot \text{min}(|\cs''^0|) \cr
=& \Omega(n) \cdot \Omega(1)  \cdot \Omega(n) =\Omega(N^{\frac{2}{3}}).
\end{align}
\be
d_X= \text{min}(\{|\textsf{A}^8|\}, \{|\textsf{B}^8|\}, \{|\textsf{C}^8|\}) = \Omega(N^{\frac{1}{3}}).
\ee
We hence obtain the overall distance bound:
\be
d= \text{min}(d_Z, d_X) = \Omega(N^{\frac{1}{3}}).
\ee

We now investigate the logical CCZ gates.
According to Lemma \ref{lemma_gate_3} and Eq.~\eqref{eq:logical_CCZ_higher},  we have the following unitary implementing the logical gate:
\begin{align}\label{eq:unitary_24D_simplicial}
&U= (-1)^{\int_{\widetilde{\M}^{24}} a^{8}_{(1)} \cup a^{8}_{(2)} \cup a^{8}_{(3)}}    \cr
=& \prod_{\alpha^{8}, \beta^{8}, \gamma^{8}} \overline{\text{CCZ}}[(\alpha^{8}; 1), (\beta^{8}; 2),(\gamma^{8}; 3)]^{\int_{\widetilde{\M}^{24}} {\alpha^{8} \cup \beta^{8} \cup \gamma^{8}}}, \cr
\end{align}
where $a^{8}_{(i)}$ represents an operator-valued 8-cocycles in the $i^\text{th}$ copy of thickened 3D hypergraph product code, and $\alpha^{8}, \beta^{8}, \gamma^{8}$ are 8-cocycles from cohomology basis $\{\alpha^{8}\}$, $\{\beta^{8}\}$ and $\{\gamma^{8}\}$  for copy 1, 2 and 3 respectively.  

We first consider the following basis cocycles:
\begin{align}\label{eq:cocycle_class1}
\alpha^{8} =& {\as^*}^5 \otimes \cs'^0 \otimes \as''^3    \cr
\beta^{8} =& \as^{3} \otimes {\as^*}'^5  \otimes \cs''^0   \cr
\gamma^{8} =& \cs^0 \otimes \as'^3 \otimes {{\as^{*}}''}^5 .  
\end{align}
According to the Kunneth formula for cup product, we can decompose the triple cup product to its tensor component as:
\begin{align}
& {\alpha^{8} \cup \beta^{8} \cup \gamma^{8}} \cr
=& ({\as^*}^5 \otimes \cs'^0 \otimes \as''^3) \cup (\as^{3} \otimes {\as^*}'^5  \otimes \cs''^0) \cup (\cs^0 \otimes \as'^3 \otimes {{\as^{*}}''}^5) \cr
=& ( {\as^*}^5 \cup \as^{3} \cup \cs^0) \otimes (\cs'^0   \cup {{\as^*}'}^5 \cup \as'^3 ) \otimes
 (\as''^3    \cup \cs''^0 \cup {{\as^{*}}''}^5 ) \cr
\neq & 0 \in H^{24}(\widetilde{M}^{24}; \ZZ_2).
\end{align}
We hence have the non-trivial triple intersection:
\begin{align}
& \int_{\widetilde{\M}^{24}} {\alpha^{8} \cup \beta^{8} \cup \gamma^{8}} \cr
=&  \int_{\M^8} ( {\as^*}^5 \cup \as^{3} \cup \cs^0) \cdot \int_{\M'^8} (\cs'^0   \cup {{\as^*}'}^5 \cup \as'^3 ) \cr
& \cdot \int_{\M''^8} (\as''^3    \cup \cs''^0 \cup {{\as^{*}}''}^5 )\cr
=& 1\cdot 1 \cdot 1 =1.
\end{align}
In addition, we also consider the following basis cocycles:
\begin{align}
\tilde{\alpha}^{8} =& \bs^{*6}  \otimes \cs'^0 \otimes {{\bs}''}^2    \cr
\tilde{\beta}^{8} =&  \bs^{2} \otimes  {{\bs^*}'}^6 \otimes {\cs''}^0   \cr
\tilde{\gamma}^{8} =&  \cs^0 \otimes {{\bs}'}^2 \otimes {{\bs^*}''}^6,  
\end{align}
which gives rise to the following non-trivial triple intersection:
\begin{align}
& \int_{\widetilde{\M}^{24}} {\tilde{\alpha}^{8} \cup \tilde{\beta}^{8} \cup \tilde{\gamma}^{8}} \cr
=&  \int_{\M^8} ( {\bs^*}^6 \cup \bs^{2} \cup \cs^0) \cdot \int_{\M'^8} (\cs'^0   \cup {{\bs^*}'}^6 \cup \bs'^2 ) \cr
& \cdot \int_{\M''^8} (\bs''^2    \cup \cs''^0 \cup {{\bs^{*}}''}^6 )\cr
=& 1\cdot 1 \cdot 1 =1.
\end{align}
Both types of non-trivial triple intersections give rise to non-trivial logical CCZ gates according to Eq.~\eqref{eq:unitary_24D_simplicial}.  
\end{proof}

 Note that the logical gate structure is completely isomorphic to the construction with products of three 4-manifolds stated in Sec.~\ref{sec:magic_rate_hypergraph}, which differs only by shifting the (co)cycle dimensions.

\section{Parallelizable non-Clifford logical  gates and magic state fountain on constant-rate 3D homological product codes with $\Omega(\sqrt{N})$-distance }\label{sec:good_product}

\subsection{Constructing the thickened 3D  homological product codes and logical non-Clifford gates}

We first consider a 3D homological product code that is the homological product \cite{Bravyi:2014bq} of a good classical LDPC code $\bar{\C}$ with $n$ bits and a good quantum LDPC code $\bar{\C}'$ by Panteleev and Kalachev \cite{pkldpc22} with $m=\Theta(n)$ qubits. Here, 3D just means the underlying chain complex is a 3D (4-term) chain complex. We also note that in this construction we can use any good classical code, either from random bipartite expander graphs or the Sipser-Spielman construction \cite{Sipser_Spielman}, and there is no need to pick a symmetric parity-check matrix $\mathsf{H}^T\mathsf{H}$. We then construct thickened homological product codes by mapping the good classical LDPC code $\bar{\C}$ into a 4-manifold $\M^4$ (as discussed in Sec.~\ref{sec:4-manifold}) and the good qLDPC code $\bar{\C}'$ into an 11-manifold $\M'^{11}$  using the Freedman-Hastings mapping \cite{freedman:2020_manifold_from_code}.  The thickened homological product codes are then defined on the triangulation of the 15D product manifold $\tilde{\M}^{15}=\M^4 \times \M'^{11}$.

The details of the Freedman-Hastings mapping from the quantum code to an 11-manifold can be found in Ref.~\cite{freedman:2020_manifold_from_code}.   Besides the procedure introduced in Sec.~\ref{sec:handle_construction} about attaching ``4-handles" to the ``3-handles" according to the boundary map obtained from the $X$-checks, one needs to further attach ``5-handles" to the ``4-handles" according to the boundary map obtained from the $Z$-checks. One then obtains a 5-handlebody $H$, and taking the double of two identical copies of $H$ will produce the 11-manifold $\M'^{11}=\mathcal{D}H$.  The logical-$Z$ operators are supported on the 4-cycles, which has been visualized in Fig.~\ref{fig:dictionary}(h) inside the handlebody with the replacement of $D^6$ to $D^7$.  The logical-$X$ operators are supported on the dual 7-cycles due to Poincar\'e duality, and can be visualized in the same way as the 4-cycles in  Fig.~\ref{fig:dictionary}(h) with the additional trick of using the dual handlebody, i.e., the other upsidedown copy of handlebody in the double construction built from handles with dual indices similar to the situation in Eq.~\eqref{eq:long_chain_dual}.   

We now introduce the following theorem essentially obtained from Ref.~\cite{freedman:2020_manifold_from_code} (Theorem 1.2.1) with the additional input from Ref.~\cite{pkldpc22},  which has also been used to construct the 3D local code with optimal code parameters in Ref.~\cite{portnoy2023local} (restated as Theorem~5) \footnote{\label{footnote:freedman}As has been clarified below Theorem 5 in Ref.~\cite{portnoy2023local}, the original Theorem 1.2.1 in Ref.~\cite{freedman:2020_manifold_from_code} has a $polylog(m)$ reduction in the rate and distance due to the additional requirement that the underlying manifold is simply connected for the interest of systolic geometry.  When dropping this additional requirement which is unnecessary for the present paper, the proof in Ref.~\cite{freedman:2020_manifold_from_code} gives the optimal parameters without the $polylog(m)$ reduction.}: 
\begin{theorem}\label{theorem:FH} $($Freedman and Hastings \cite{freedman:2020_manifold_from_code}$)$
Given the good qLDPC code $\bar{\C'}$ from Ref.~\cite{pkldpc22} with the parameters $[[m, \Theta(m), \Theta(m)]]$ as an input, it can be mapped to  a good homological qLDPC code with the parameters $[[\Theta(m), \Theta(m), \Theta(m)]]$ defined on the triangulation of an 11-manifold $\M'^{11}$ with bounded local geometry and with its 4-systole and 7-systole corresponding to the logical-$Z$ and -$X$ distance respectively and both having  size $\Theta(m)$.
\end{theorem}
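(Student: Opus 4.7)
The plan is to adapt the handle-construction approach from Section \ref{sec:handle_construction} and Theorem \ref{theorem:manifold_scaling} to the three-term (rather than two-term) chain complex of a quantum CSS code. Writing the $\ZZ_2$-chain complex of $\bar{\C}'$ as $\hat{C}_2 \xrightarrow{\mathsf{H}_Z^T} \hat{C}_1 \xrightarrow{\mathsf{H}_X} \hat{C}_0$ (with $X$-checks, qubits, and $Z$-checks associated with $\hat{C}_0$, $\hat{C}_1$, $\hat{C}_2$ respectively), I would first produce a sparse integer lift of this complex, i.e., integer matrices $\hat{\partial}_1$, $\hat{\partial}_2$ reducing mod~$2$ to $\mathsf{H}_X$, $\mathsf{H}_Z^T$ with $\hat{\partial}_1 \hat{\partial}_2 = 0$ over $\ZZ$ and all row/column sums bounded. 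For the Panteleev-Kalachev code this is available thanks to its algebraic (lifted-product) origin; this is precisely the ``sparse liftability'' condition on which the Freedman-Hastings theorem hinges.

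Second, using $r = 11$, I would assign a dressed ``$3$-handle'' to each $X$-check, a dressed ``$4$-handle'' to each qubit, and a dressed ``$5$-handle'' to each $Z$-check, with attaching maps that realize $\hat{\partial}_1$ (gluing $4$-handles to $3$-handles) and $\hat{\partial}_2$ (gluing $5$-handles to $4$-handles) exactly as in Section \ref{sec:handle_8-manifold}. The cores take the form $N^k_i = S^{k-1}\setminus\bigsqcup D^{k-1}_m$ (punctured spheres) whose punctures are glued to the boundaries of the adjacent lower-index handles; LDPC-ness of $\bar{\C}'$ caps the number of punctures by $O(1)$. This produces a $5$-handlebody $H$; doubling along its boundary, $\M'^{11} = H \cup_{\partial H} H^*$, yields a closed smooth $11$-manifold. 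Triangulating each handle individually with $O(1)$ simplices and gluing along common subdivisions of the attaching regions (via Algorithm~1/Algorithm~3 of the excerpt) then gives a bounded-degree triangulation of $\M'^{11}$ with $\Theta(m)$ simplices in each dimension.

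Third, I would read off the code parameters of the homological code on this triangulation. Because the handle chain complex $\L_h$ deformation-retracts onto a CW complex $\L_c$ isomorphic to the lifted $\ZZ$-chain complex (after the shift $k \mapsto k+3$), we obtain $H_4(\M'^{11};\ZZ_2) \cong H_1(\L;\ZZ_2)$ and $H^4(\M'^{11};\ZZ_2) \cong H^1(\L;\ZZ_2)$ for $\L$ the chain complex of $\bar{\C}'$; the Betti number and its Poincar\'e dual together yield $K = \Theta(m)$. For the systole bounds, I would adapt the ``spine'' argument from the proof of Theorem \ref{theorem:manifold_scaling}: any non-trivial $4$-cycle $\as_4$ in $\M'^{11}$ is homologous to a spine that traverses the extended cores of a set of ``$4$-handles'' in bijection with the support of its skeleton $1$-cycle $\bar{\as}_1 \in H_1(\L;\ZZ_2)$, and collective deformations along the $S^6$-thickening move spurious detours in $H^*$ back into $H$; Poincar\'e duality dualizes this to the $7$-cosystole/$7$-systole bound, giving $sys_4(\M'^{11};\ZZ_2) = \Omega(m)$ and $sys_7(\M'^{11};\ZZ_2) = \Omega(m)$.

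The main obstacle is the sparse integer lift: unlike the classical case where the ``naive'' lift (replacing each $1 \bmod 2$ by $1\in\ZZ$) automatically satisfies $\hat\partial=\bar{\Hs}$ with no further condition, in the quantum case one also needs the lifted $\hat{\partial}_1\hat{\partial}_2=0$ to hold \emph{over $\ZZ$} while keeping both $\hat{\partial}_1$ and $\hat{\partial}_2$ sparse. Showing this for the Panteleev-Kalachev codes requires exhibiting an explicit $\ZZ$ (or group-algebra) structure underneath the $\ZZ_2$ parity checks, which is available because these codes are built from balanced products over non-abelian groups with lifts to $\ZZ[G]$. Once the sparse lift is in hand, the handle construction, the doubling, and the systole analysis proceed by the same combinatorial template already used in the proof of Theorem~\ref{theorem:manifold_scaling}, with the only new ingredient being the extra ``$5$-handle'' layer, which does not affect the bounded-geometry or systole arguments. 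The improvement by removing the $\mathrm{polylog}(m)$ factor present in the original statement of \cite{freedman:2020_manifold_from_code} comes from dropping the simply-connectedness requirement (see footnote~\ref{footnote:freedman}), which in turn removes the cost of surgering out $\pi_1$.
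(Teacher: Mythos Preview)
The paper does not supply its own proof of this theorem: it is stated as a citation of Theorem~1.2.1 in Freedman--Hastings~\cite{freedman:2020_manifold_from_code} (combined with the input code of~\cite{pkldpc22}), with the footnote explaining why the $\mathrm{polylog}(m)$ factor can be dropped when simple-connectedness is not required. The only ingredient the paper actually supplies is the sparse liftability of the Panteleev--Kalachev code, proved in Appendix~\ref{app:sparse-liftability} via the balanced-product structure. Your sketch is therefore not being compared against a proof in the paper but against the cited construction, and it tracks that construction faithfully: the three-level handle attachment (``3-handles'' for $X$-checks, ``4-handles'' for qubits, ``5-handles'' for $Z$-checks), doubling to a closed $11$-manifold, bounded-geometry triangulation via the overlay/common-subdivision algorithm, and the spine/retraction argument for the systole bounds are exactly the Freedman--Hastings template that the paper already rehearsed in the classical case (Theorem~\ref{theorem:manifold_scaling}). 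You also correctly single out sparse liftability as the one genuinely nontrivial step and point to the balanced-product origin of the code, which is precisely what the appendix exploits.

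One small inaccuracy: the dressed core of a ``$k$-handle'' in this construction is a punctured $k$-sphere $N^k_i = S^k \setminus \bigsqcup_m D^k_m$, not $S^{k-1}\setminus\bigsqcup D^{k-1}_m$ (compare the paper's $N^3_i = S^3 \setminus \bigsqcup D^3_m$ for the classical case). This does not affect the logic of your outline.
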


We note that although Ref.~\cite{portnoy2023local} uses the good LDPC code from Ref.~\cite{pkldpc22} as the input for the mapping, it has not justified the sparse liftability of this code. It is known that the good LDPC code in Ref.~\cite{pkldpc22} is a balanced product code from a pair of classical codes.  In Appendix \ref{app:sparse-liftability}, we prove that all balanced product codes of classical codes have sparse liftability.

We now consider the three following cohomology classes in the product manifold $\widetilde{\M}^{15}=\M^4 \times \M'^{11}$:
\begin{align}\label{eq:cocycle_list}
\alpha^{6} =& \as^{*2} \otimes \fs'^4,     \cr
\beta^{2} =& \as^{2} \otimes  \gs'^0,    \cr
\gamma^{7} =& \cs^0 \otimes {{\fs^*}'}^7,  
\end{align}
where $\as^{*2} \sim \as_2$ and $\as^2$  are a pair of dual cocycles with minimum size  $\Omega(n)$ and $\Omega(1)$ respectively in $\M^4$; $\fs'^4$ and ${{\fs^*}'}^7 \sim \fs'_4$ are a pair of dual cocycles  in $\M'^{11}$  which corresponds to the logical-$Z$ and -$X$ operators in the qLDPC code $\bar{\C}'$. Their minimum size corresponds to the 4-systoles and 7-systoles and both having size $\Omega(m)=\Omega(n)$; $c^0$ and $g'^0$ are the unique 0-cocycle class in $\M^4$ and $\M'^{11}$ respectively and both have size $\Omega(n)$.  We have the following non-trivial triple cup product using the Künneth theorem:
\begin{align}
 & \alpha^6 \cup \beta^2 \cup \gamma^7  \cr
 =& (\as^{*2} \otimes \fs'^4) \cup (\as^{2} \otimes  \gs'^0) \cup (\cs^0 \otimes {{\fs^*}'}^7) \cr
 =& (\as^2 \cup \as^{*2} \cup \cs^0) \otimes (\fs'^4 \cup \gs'^0 \cup {{\fs^*}'}^7) \neq 0 \in H^{15}(\widetilde{M}^{15}; \ZZ_2), \cr    
\end{align}
and hence the non-trivial triple intersection
\begin{align}\label{eq:triple_intersection}
& \int_{\widetilde{\M}^{15}} \alpha^6 \cup \beta^2 \cup \gamma^7 \cr
=&     \int_{\M^4}(\as^2 \cup \as^{*2} \cup \cs^0) \cdot \int_{\M'^{11}}(\fs'^4 \cup \gs'^0 \cup {{\fs^*}'}^7)        \cr
=&1                    
\end{align}

Based on the above triple intersection property we can construct three non-identical copies of qLDPC codes $\C_{(1)}$, $\C_{(2)}$, and $\C_{(3)}$ all defined on the same triangulation of 15-manifold $\widetilde{\M}^{15}$ with the qubits placed on 6-simplices, 3-simplices, and 7-simplices respectively. We also define a total code involving all the three copies of codes as $\tilde{\C}=\C_{(1)} \otimes \C_{(2)} \otimes \C_{(3)}$. Therefore, the basis cocycle classes $\alpha^6$, $\beta^3$ and $\gamma^7$ correspond to the support of logical-$X$ operators in the three copies of codes respectively, while their conjugate basis cycle classes $\alpha_6$, $\beta_3$ and $\gamma_7$ correspond to the support of the logical-$Z$ operators.    Since there exist spurious cycles in both the manifold $\M^4$ and $\M'^{11}$ \footnote{For the quantum code defined on the 11-manifold $\M'^{11}$, the spurious 1-cycles/cocycles and the dual 11-cycles/cocycles are separated from the 4-cycles and 7-cycles where logical $Z$ and $X$ are encoded respectively.} constructed from the classical and quantum codes respectively, we also choose the subsystem-code encoding, and only encode the logical-X operators into the cohomology basis subset $\{\alpha^6\}$,  $\{\beta^2\}$, and $\{\gamma^7\}$ in the three copies repsectively, with their conjugate logical-$Z$ operators encoded into the conjugate homology basis subset $\{\alpha_6\}$,  $\{\beta_2\}$, and $\{\gamma_7\}$ respectively, which are decomposed by the Künneth theorem as:
\begin{align}\label{eq:cycle_list}
\alpha_{6} =& \as^*_{2} \otimes \fs'_4,     \cr
\beta_{2} =& \as_{2} \otimes  \gs'_0,    \cr
\gamma_{7} =& \cs_0 \otimes {\fs^*_7}'.  
\end{align}

We hence introduce the following theorem:
\begin{theorem}\label{theorem:homological_product}
There exist a family of thickened 3D homological  product codes $\tilde{\C}$ defined on the triangulation of a 15-manifold  $\tilde{\M}^{15}$ with encoding rate $K=\Theta(N)$, subsystem-code distance $D=\Omega(\sqrt{N})$ and constant stabilizer weight $w=O(1)$,  such that a constant-depth circuit implementing the cohomology operation of a triple cup product give rise to $\Theta(N)$ non-Clifford logical CCZ gates on $\tilde{\C}$.     
\end{theorem}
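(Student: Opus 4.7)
The plan is to verify each ingredient of the claim by combining the two manifold-construction results already established in the paper and then applying the cup-product framework of Section~\ref{sec:gates_simplicial_complex}. First I would invoke Lemma~\ref{lemma:thickened_code} to obtain the $4$-manifold $\M^4$ (bounded local geometry, $\Theta(n)$ simplices in each dimension) built from a good classical code $\bar{\C}$ with $n$ bits, and Theorem~\ref{theorem:FH} to obtain the $11$-manifold $\M'^{11}$ (bounded local geometry, $\Theta(m)=\Theta(n)$ simplices per dimension, $4$-systole and $7$-systole both of size $\Omega(n)$) built from the good qLDPC code $\bar{\C}'$. The triangulation of the product $\tilde{\M}^{15}=\M^4\times\M'^{11}$ inherits bounded local geometry and contains $\Theta(n^2)$ $k$-simplices in every dimension $k$, so the three copies together use $N=\Theta(n^2)$ physical qubits and the stabilizer weight is $w=O(1)$.

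Next I would count the logical dimension via the Künneth formula applied to the three cohomology subsets chosen in Eq.~\eqref{eq:cocycle_list}. The first copy (qubits on $6$-simplices) contributes $\dim H^2(\M^4)\cdot\dim H^4(\M'^{11})=\Theta(n)\cdot\Theta(n)=\Theta(n^2)$ logical qubits, since the $\{\as^{*2}\}$ basis is in bijection with the codewords of $\bar{\C}$ (via Poincar\'e duality and the cycle mapping of Sec.~\ref{sec:cycle_mapping}), and the $\{\fs'^4\}$ basis is in bijection with the $Z$-logicals of $\bar{\C}'$. The second and third copies each contribute $\Theta(n)\cdot 1=\Theta(n)$ logical qubits, so $K=\Theta(n^2)=\Theta(N)$.

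For the distance I would invoke Lemma~\ref{lemma:subsystem} with the cycle/cocycle subsets in Eqs.~\eqref{eq:cocycle_list}--\eqref{eq:cycle_list}, treating all short spurious classes as gauge qubits. The weight of every representative of a cross-product class in $\tilde{\M}^{15}$ is bounded below by the product of the minimal factor weights (a Künneth-type systolic bound that can be proven exactly as in the analogous step of Theorem~\ref{theorem:3Dhypergraph}). Concretely, $|\alpha_6|_{\min}\ge\min|\as^*_2|\cdot\min|\fs'_4|=\Omega(n)\cdot\Omega(n)=\Omega(N)$ and $|\alpha^6|_{\min}\ge\Omega(n)\cdot\Omega(n)=\Omega(N)$; for the second and third copies the $0$-(co)cycle factor contributes either $1$ (for cycles $\gs'_0,\cs_0$) or $\Theta(n)$ (for cocycles $\gs'^0,\cs^0$), giving $|\beta_2|_{\min},|\beta^2|_{\min},|\gamma_7|_{\min},|\gamma^7|_{\min}=\Omega(n)=\Omega(\sqrt{N})$. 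The bottleneck is $\Omega(\sqrt{N})$, so $D=\Omega(\sqrt{N})$.

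Finally I would apply Lemma~\ref{lemma_gate_3} with $(q_1,q_2,q_3)=(6,2,7)$ summing to $15=\dim\tilde{\M}^{15}$, yielding the constant-depth circuit $U=(-1)^{\int_{\tilde{\M}^{15}} a^6_{(1)}\cup a^2_{(2)}\cup a^7_{(3)}}$, a product of physical CCZs with supports specified by Eq.~\eqref{eq:logical_gate_higher_form}. The non-triviality of Eq.~\eqref{eq:triple_intersection} ensures each factor cup product is non-zero, so $U$ is a genuine logical non-Clifford gate. To count CCZs I would enumerate which logical triples $((\alpha^6;1),(\beta^2;2),(\gamma^7;3))$ give non-vanishing triple intersection: Poincar\'e duality on each factor forces $\beta^2$ to carry the unique dual $\as^2$ of $\as^{*2}$ and $\gamma^7$ to carry the unique dual $\fs^{*'7}$ of $\fs'^4$, while $\cs^0$ and $\gs'^0$ are uniquely fixed. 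Thus each of the $\Theta(n^2)=\Theta(N)$ logical qubits in copy~$1$ triggers exactly one CCZ, giving $\Theta(N)$ logical CCZs in total. The main obstacle I expect is rigorously certifying the cross-product systole lower bound in the combinatorial product complex, i.e.\ ruling out homologous representatives with strictly smaller support than the Künneth product of minimal factor weights; this can be handled by adapting the deformation-retraction/spine argument in the proof of Theorem~\ref{theorem:manifold_scaling} factor-wise.
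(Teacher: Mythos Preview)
Your approach mirrors the paper's proof essentially step for step: invoke Lemma~\ref{lemma:thickened_code} and Theorem~\ref{theorem:FH} for the factor manifolds, use bounded local geometry for $w=O(1)$ and $N=\Theta(n^2)$, count logical qubits via K\"unneth on the subsets in Eq.~\eqref{eq:cocycle_list}, bound the distance via Lemma~\ref{lemma:subsystem} on the conjugate pairs in Eqs.~\eqref{eq:cocycle_list}--\eqref{eq:cycle_list}, and obtain the collective logical CCZ from Lemma~\ref{lemma_gate_3} together with the non-trivial triple intersection Eq.~\eqref{eq:triple_intersection}; the CCZ count via Poincar\'e-duality uniqueness is also exactly what the paper does.

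There is one numerical slip in your distance accounting: the factor $\as^*_2$ appearing in $\alpha_6=\as^*_2\otimes\fs'_4$ is the Poincar\'e dual \emph{cycle} of the short $2$-cocycle $\as^2$, and therefore has minimum size $\Omega(1)$, not $\Omega(n)$ (it is the distinct class $\as_2\sim\as^{*2}$ that carries size $\Omega(n)$). Hence $|\alpha_6|_{\min}\ge\Omega(1)\cdot\Omega(n)=\Omega(\sqrt{N})$ rather than $\Omega(N)$. This does not affect your final conclusion, since copies~$2$ and~$3$ already bottleneck the overall distance at $\Omega(\sqrt{N})$, but it does mean the $Z$-distance of copy~$1$ is itself only $\Omega(\sqrt{N})$, which is what the paper records.
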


\begin{proof}
We first estimate the total number of qubits in each code copy, which equals the total number of 6-simplices, 2-simplices and 7-simplices on $\tilde{\M}^{15}$ respectively.  Due to Lemma \ref{lemma:thickened_code} and  Theorem \ref{theorem:FH}, the factor manifold $\M^4$ and $\M'^{11}$ both have bounded local geometry, i.e., each vertex in its triangulation is adjacent to $O(1)$ $k$-simplices, the product manifold $\tilde{\M}^{15}=\M^4 \times \M'^{11}$ also has bounded local geometry.  Therefore, the number of $k$-simplices in these manifolds are all proportional to the number of vertices and the volume of the manifold.  Using the fact that $\tilde{\M}^{15}$ is the product of $\M^4$ and $\M'^{11}$, we know the number of $k$-simplices in $\tilde{\M}^{15}$, i.e., the dimension of the $k$-chain groups,  all scale as 
\be
\text{dim}(C_k) = \Theta(n) \cdot \Theta(m) \cdot \Theta(n^2) = \Theta(N), 
\ee
where $N$ is the total number of vertices in $\tilde{M}^{15}$.

Now since both cocycles basis $\{\as^2\}$ and $\{\fs'^4\}$ have dimension $\Theta(n)$, which come from the code dimension of the input classical code $\bar{C}$ and quantum code $\bar{C}'$, the cocycle basis subset $\{\alpha^6\}$ has dimension $\Theta(n^2)$ according to the Künneth theorem.  Therefore, the first copy of qLDPC code $\C_{(1)}$ with qubits placed on 6-simplices has linear dimension  $K^{(1)}=\Theta(n^2)=\Theta(N)$, i.e., constant encoding rate.  For the second and third copies of qLDPC codes $\C_{(2)}$ and $\C_{(3)}$, since both the cocycle basis $\{\as^{*2}\}$ and $\{{{\fs^{*}}'}^7\}$ have dimension $\Theta(n)$, which again come from the code dimension of the classical and quantum codes, the code dimension is hence $K^{(2)}=K^{(3)}=\Theta(n)=\Theta(\sqrt{N})$ for both the second and third copies. When summing up all the logical qubits in these three code blocks, the overall encoding rate of the total code $\tilde{C}=\C_{(1)} \otimes \C_{(2)} \otimes \C_{(3)}$ is still constant, i.e.,
\be
r_E=\frac{K}{\Theta(N)}= \frac{K^{(1)}+K^{(2)}+K^{(3)}}{\Theta(N)}= O(1).
\ee
We hence call the first qLDPC code copy the memory register which is used for information storage with a constant encoding rate, and the second and third qLDPC code copies the ancilla registers which are used to assist the memory register for doing logical non-Clifford gates. 

We then bound the distance of the subsystem-code encoding using Lemma~\ref{lemma:subsystem}.  For the first code copy $\C_{(1)}$,   the logical-$Z$ distance is determined by the minimum size of any basis 6-cycle in the basis subset $\{\alpha_6\}$ defined in Eq.~\eqref{eq:cycle_list}, i.e.,
\begin{align}
 \text{min}(|\alpha_6|) =& \text{min}(|\as^*_2|) \cdot \text{min}(|\fs'_4|) \cr
=& \Omega(1) \cdot \Omega(m) = \Omega(n) = \Omega(\sqrt{N}),
\end{align}
where we have used the bound of the cycle/cocycle length $|\as^*_2|=|\as^2|=\Omega(1)$ in the thickened good classical LDPC code  as discussed in the proof of Theorem~\ref{theorem:3Dhypergraph} and the $Z$-distance bound (4-systole) of the thickened good qLDPC code for $|\fs'_4|=\Omega(m)$ according to Theorem~\ref{theorem:FH}. We hence have
\be
d_Z^{(1)}=\text{min}(\{ \alpha_6 \}) = \Omega(\sqrt{N}).
\ee
Meanwhile, the $X$-distance is determined by the minimum size of any conjugate basis 6-cocycle in the basis subset $\{\alpha^6\}$ defined in Eq.~\eqref{eq:cocycle_list},  or equivalently that of its Poincar\'e dual basis 9-cycle in the basis subset $\{\alpha^*_9\}$:
\begin{align}
 \text{min}(|\alpha^6|)=&  \text{min}(|\alpha^*_9|) = \text{min}(|\as^{*2}|) \cdot \text{min}(|\fs'^4|) \cr
=& \text{min}(|\as_{2}|) \cdot \text{min}(|{\fs_7^{*}}'|) =\Omega(n) \cdot \Omega(m) \cr
=& \Omega(n^2) = \Omega(\sqrt{N}),
\end{align}
where we have used the distance bound of the thickened classical code, i.e., $|\as_2|=|{\as^*}^2|=\Omega(n)$, as well as the $X$-distance bound (7-systole) of the thickend good qLDPC code for $|f'^4|=|{\fs_7^{*}}'|=\Omega(m)$.
We hence have 
\be
d_X^{(1)}=\text{min}(\{ |\alpha^6| \})=\text{min}(\{|\alpha^*_9|\}) = \Omega(N),
\ee
and the overall code distance:
\be
d^{(1)}= \text{min}(d^{(1)}_X, d^{(1)}_Z) = \Omega(\sqrt{N}).
\ee

For the second code copy $\C_{(2)}$, the $Z$-distance is determined by the minimum size of any basis 2-cycle in the basis subset $\{\beta_2\}$: 
\begin{align}
 \text{min}(|\beta_2|) =& \text{min}(|\as_2|) \cdot \text{min}(|\gs'_0|) \cr
=& \Omega(n) \cdot 1 =  \Omega(\sqrt{N}),
\end{align}
where we have used the fact that $\gs'_0$ is a single vertex in the triangulation of the 11-manifold $\M^{11}$.  We hence have
\be
d^{(2)}_Z= \text{min}(\{\beta_2\}) = \Omega(\sqrt{N}).
\ee
Meanwhile, the $X$-distance is determined by the minimum size of any conjugate basis 2-cocycle in the basis subset $\{\beta^2\}$, or equivalently that of its Poincar\'e dual basis 13-cycle in the basis subset $\{\beta^*_{13}\}$
\begin{align}
 \text{min}(|\beta^2|)=&  \text{min}(|\beta^*_13|) = \text{min}(|\as^{2}|) \cdot \text{min}(|\gs'^0|) \cr
=&  \Omega(1) \cdot \Omega(m) = \Omega(n) = \Omega(\sqrt{N}),
\end{align}
where we have used the fact that $|\as^2|=\Omega(1)$ in the thickened classical code, and $|g'^0|=|{g^*_{11}}'|=\Omega(m)$ occupies the entire 11-manifold $\M'^{11}$.  We hence have
\be
d_X^{(2)}=\text{min}(\{ |\alpha^2| \})=\text{min}(\{|\alpha^*_{13}|\}) = \Omega(\sqrt{N}),
\ee
and the overall distance is hence
\be
d^{(2)}= \text{min}(d^{(2)}_X, d^{(2)}_Z) =  \Omega(\sqrt{N}).
\ee

For the third copy $\C_{(3)}$,  the 0-cycle $\cs_0$ is a single vertex in the thickened classical code and hence has $|\cs_0|=1$;  its conjugate 0-cocycle (Poincar\'e dual to a 4-cycle) $\cs^0 \sim \cs^*_4$  occupies the entire 4-manifold $\M^4$ which leads to $|\cs^0| = |\cs^*_4| = 1$.  Meanwhile, $|{f^*_7}'|$ and $|{f^*}'^7|=|f'_4|$ correspond to the 7-systole and 4-systole of the 11-manifold $\M'^{11}$.  This similarly gives rise to 
\begin{align}
d^{(3)}_Z =& \min{|\gamma_7|}= \Omega(\sqrt{N}), \cr
d^{(3)}_X =& \min{|\gamma^7|}=\min{|\gamma^*_8|} = \Omega(\sqrt{N}) \cr
d^{(3)}=& \text{min}(d^{(3)}_X, d^{(3)}_Z) =  \Omega(\sqrt{N}).
\cr
\end{align}
The overall distance of the total code $\tilde{\C}=\C_1 \otimes \C_2 \otimes \C_3$ is hence $D=\text{min}(d^{(1)}, d^{(2)}, d^{(3)})= \Omega(\sqrt{N})$.

Finally, we implement the following constant-depth circuits corresponding to the triple cup product of a higher gauge theory defined on the 15-manifold $\tilde{\M}^{15}$ according to Lemma \ref{lemma_gate_3} and  Eq.~\eqref{eq:logical_CCZ_higher}:
\begin{align}\label{eq:unitary_15D_simplicial}
&U= (-1)^{\int_{\widetilde{\M}^{15}} a^{6}_{(1)} \cup a^{2}_{(2)} \cup a^{7}_{(3)}}    \cr
=& \prod_{\alpha^{6}, \beta^{2}, \gamma^{7}} \overline{\text{CCZ}}[(\alpha^{6}; 1), (\beta^{2}; 2),(\gamma^{7}; 3)]^{\int_{\widetilde{\M}^{15}} {\alpha^{6} \cup \beta^{2} \cup \gamma^{7}}}. \cr
\end{align}
Since the triple cup product sum in the exponent is non-trivial, we obtain a non-Clifford logical gate. 

\end{proof}

We note that we can also have an alternative construction with the good classical code being mapped to the 8-manifold $M^8$, which gives rise to the product manifold $\widetilde{\M}^{19}= \M^4 \times \M'^{11}$.  The triple intersection and logical CCZ structure will be similar.

\subsection{Magic state fountain and logical gate structure}\label{sec:fountain}

\begin{figure}[t]
\includegraphics[width=1\columnwidth]{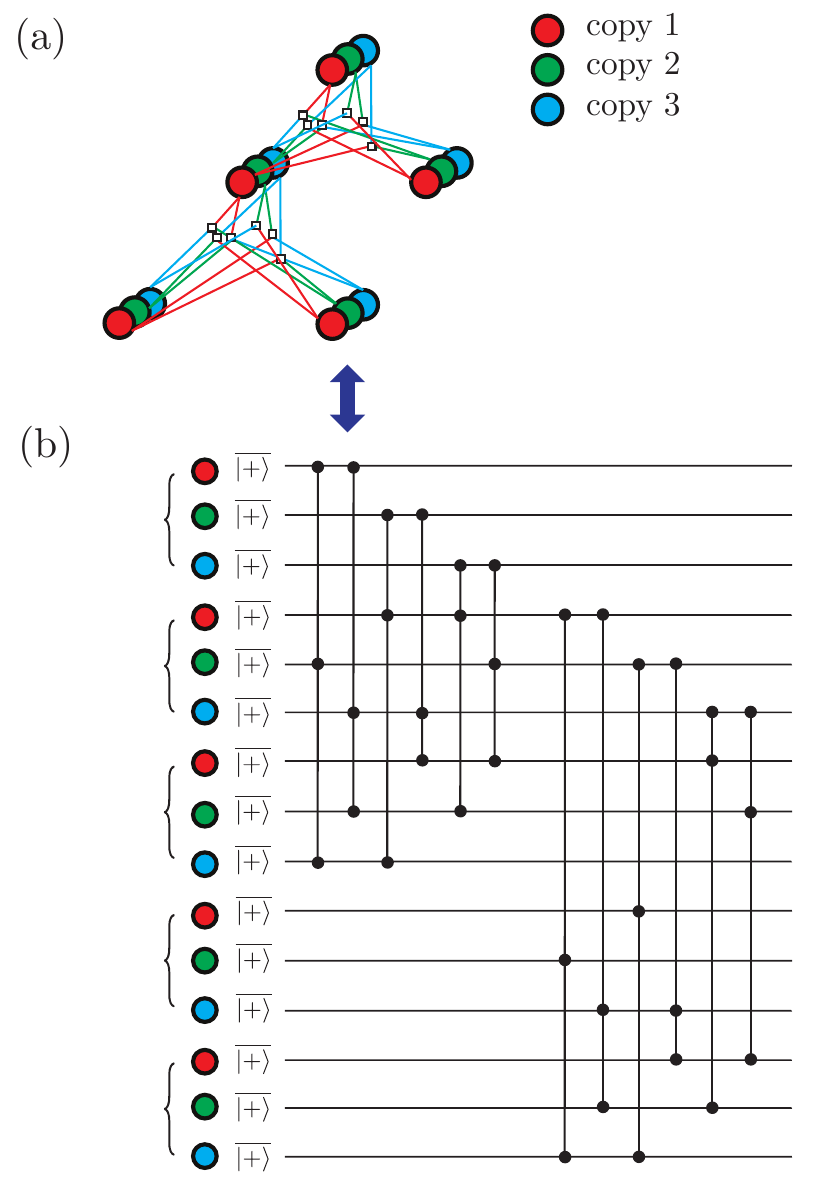}
\caption{ (a)  Illustration of an interaction hypergraph involving three identical copies of codes, where each vertex represents a logical qubit and each hyperedge (3-way junction) represents a logical CCZ gate acting on the code.  (b) When initializing all the logical qubits into the logical state $\lo{\ket{+}}$, the application of the logical gate $U$ effectively applies the logical CCZ's illustrated in the circuit and hence injects a hypergraph magic state. }\label{fig:hypergraph_magic_state}
\end{figure}

\begin{figure*}[t]
 \includegraphics[width=1.6\columnwidth]{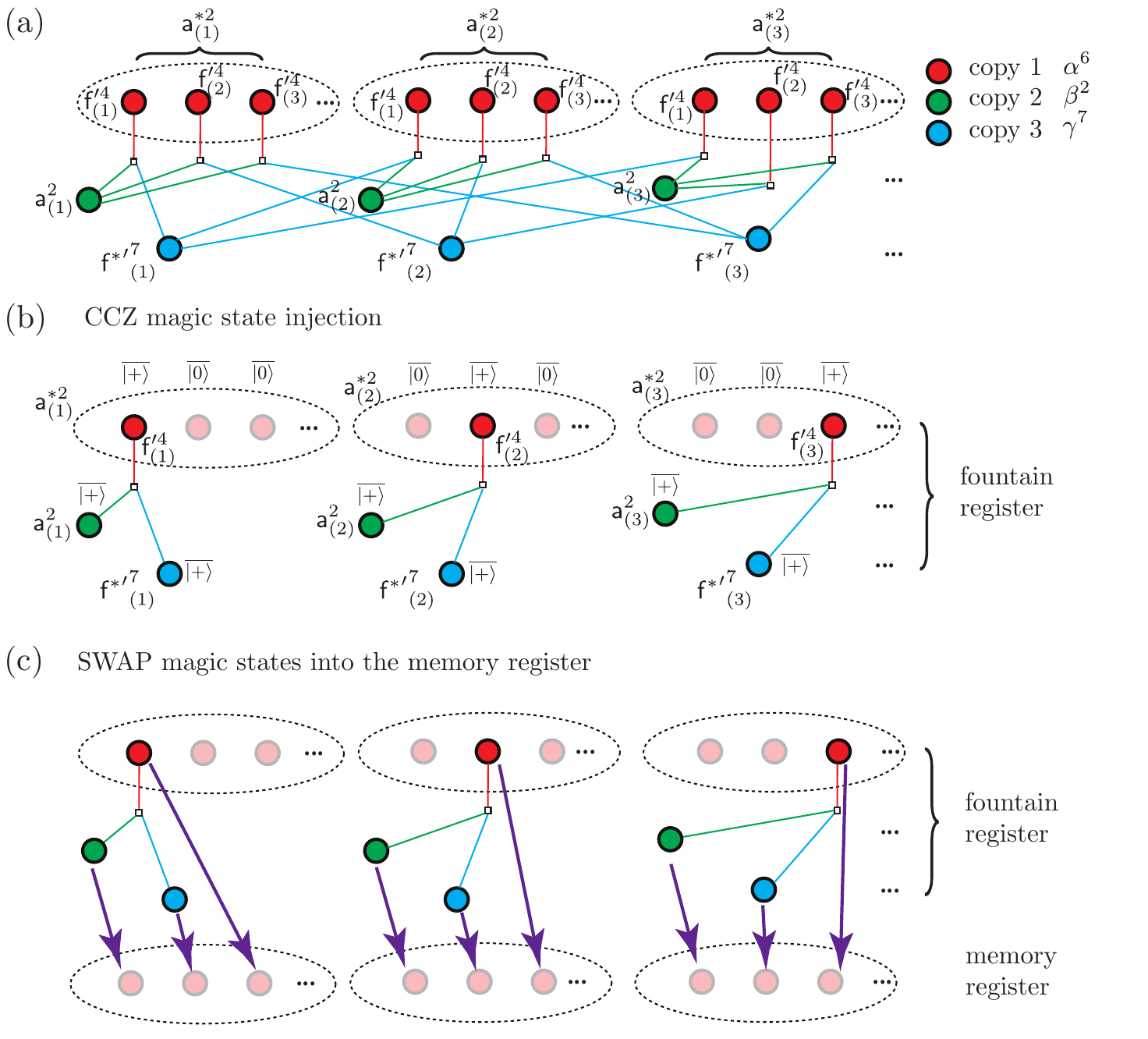}
\caption{ Illustration of the logical gate structure and magic state fountain in the homological product code.  (a) The interaction hypergraph involving three non-identical copies of codes.  The logical qubits in copy 1 has two cocycle labels $\as^{*2}_{(i)}$ and $\fs^{'4}_{(j)}$, while those in copy 2 and copy 3 has one cocycle label. (b) Illustration of the injection of non-overlapping CCZ magic states into the fountain register.  One deliberately sets a fraction of logical qubits in copy 1 at $\lo{\ket{0}}$ state which effectively turns off all the hyperedges (CCZ's) connected two these logical qubits.  The remaining $\Theta(\sqrt{N})$ CCZ magic states are hence non-overlapping. (c) After injecting the magic states into the fountain register, one further SWAPs these resource states into the memory register via logical Clifford gates and then use gate teleportation to turn them into parallelizable logical CCZ gates. }\label{fig:interaction-hypergraph}
\end{figure*}

As has been first introduced in Ref.~\cite{zhu2023non}, the collective logical CCZ gate structure is encoded into an \textit{interaction hypergraph} $G_h=(V, E_h)$.   The vertex $v \in V$ represents a logical qubit labeled by the cocycle label and copy number, such as $(\alpha^6; 1)$, $(\beta^2; 2)$ or $(\gamma^7; 3)$, while a hyperedge $e_h \in E_h$ coupling logical qubits corresponds to a triple intersection point or equivalently a non-trivial cup product such as  $\int_{\widetilde{\M}^{15}} {\alpha^{6} \cup \beta^{2} \cup \gamma^{7}}=1$ in the current code.   An example of the interaction hypergraph is given by Fig.~\ref{fig:hypergraph_magic_state}(a) where the three copies of codes are identical, as in the case of the thickened 3D hypergraph code in Sec.~\ref{sec:non-Clifford_hypergraph}. 

As has been pointed out in Ref.~\cite{zhu2023non}, the interaction hypergraph has a one-to-one correspondence with the \textit{quantum hypergraph states} \cite{Rossi_2013, chen2023magic, Takeuchi:2019_hypergraph} as the generalization of graph states. One can define a 3-uniform hypergraph state on a hypergraph $G_h=(V, E_h)$ as follows: one assigns to each vertex a qubit initialized in the $\ket{+}$ state; for each hyperedge, one perform a CCZ gate between the three connected qubits (vertices) labeled by $v_1, v_2, v_3$. We then obtain the following 3-uniform hypergraph state:
\be
\ket{g_3} = \prod_{\{v_1, v_2, v_3\} \in E_h} \text{CCZ}(v_1, v_2, v_3)\ket{+}^{\otimes m},
\ee
where $m=|V|$ represents the total number of qubits (vertices). We also call this state a hypergraph magic state, since it is beyond the stabilizer description  \cite{chen2023magic, zhu2023non}.  When considering our collective logical CCZ gate $U$ and its corresponding interaction hypergraph, we can first initialize all the logical qubits in the logical state $\overline{\ket{+}}$, and then apply the collective logical CCZ gate $U$ which hence produces the hypergraph magic state, as illustrated in Fig.~\ref{fig:hypergraph_magic_state}(b).  We call this type of scheme \textit{magic state fountain}, which directly injects high-fidelity magic states into an qLDPC code block instead of doing a state distillation with multiple rounds. This idea has already appeared in Ref.~\cite{zhu2023non}.

We now investigate the logical gate structure from the $\ZZ_2$ triple intersection structure in Eq.~\eqref{eq:cocycle_list} and Eq.~\eqref{eq:triple_intersection}, and then connect it to the magic state fountain scheme.

As has been discussed above, there are $\Theta(N)$ inequivalent choices of $\alpha^6 = \as^{*2} \otimes \fs'^4$ coming from the $\Theta(\sqrt{N})$ choices for both $\as^{*2}$ and $\fs'^4$.  For each choice of $\alpha^6$,  there is a unique pair of $\beta^2$ and $\gamma^7$ (involving the dual components $\as_2$ and ${\fs^*_7}'$) that has non-trivial triple cup product (intersection) with $\alpha^6$ due to Poincar\'e duality, although the different pairs can still share the same $\beta^2$ or $\gamma^7$. The total number of logical CCZ's implemented by the constant-depth circuit $U$ is hence $n_\text{CCZ} = \Theta(N)$ corresponding to a constant magic rate $r_M=n_\text{CCZ}/\Theta(N) = O(1)$. This magic rate also quantifies the complexity of the corresponding hypergraph magic state injected to the code when applying $U$ to the logical state $\lo{\ket{+}}^{\otimes K}$, which grows linearly with the number of qubits $N$.  All logical qubits in the total code $\tilde{\C} =\C_{(1)} \otimes \C_{(2)} \otimes \C_{(3)}$ participate in the logical gate $U$, therefore the addressing rate is $r_A $$=$$\Theta(N)/ \Theta(N) $$=$$ O(1)$.

We then delve more deeply into the logical gate structure by looking at the interaction hypergraph as shown in Fig.~\ref{fig:interaction-hypergraph}(a). Since $\alpha^6 = \as^{*2} \otimes \fs'^4$,  we can divide the logical qubits in copy 1 labeled by $\alpha^6$ (red circles) into $\Theta(\sqrt{N})$ groups (dashed ellipse) with different $\as^{*2}$ label, while in each group we choose all possible $\fs'^4$ labels and hence have $\Theta(\sqrt{N})$ logical qubits.  These $\Theta(N)$ logical qubits labeled by $\alpha^6$ in copy 1 are then coupled to $\Theta(\sqrt{N})$ logical qubits labeled by $\beta^2$ in copy 2 and $\gamma^7$ in copy 3 respectively via logical CCZ. Note that since $\beta^{2} = \as^{2} \otimes  \gs'^0$ and $\gamma^{7} = \cs^0 \otimes {{\fs^*}'}^7$, and $\gs'^0$ as well as ${{\fs^*}'}^7$ are unique choices, we can just use label $\as^2$ and ${{\fs^*}'}^7$  to label the qubits in copy 2 and 3 respectively. In particular, each logical qubit in copy 1 with cocycle label $\as^{*2}$ couples to a unique logical qubit in copy 2 with the dual cocycle label $\as^2$ according to Poincar\'e  duality.  Similarly,   each logical qubit in copy 1 with cocycle label $\fs'^4$ couples to a unique logical qubit in copy 2 with the dual cocycle label ${{\fs^*}'}^7$.   This completes the description of the structure of the collective logical CCZ gate and the corresponding hypergraph magic state, as  illustrated in Fig.~\ref{fig:interaction-hypergraph}(a). 

As pointed out before, the complexity of the hypergraph magic state here grows linearly with the system size $\Theta(N)$, and is expected to be hard for a classical computer to simulate.  So such a high-complexity state may be useful for demonstrate quantum advantage with a constant-depth circuit.  On the other hand, for the purpose of universal quantum computation with the magic state fountain scheme, we may want to reduce the complexity of this hypergraph magic state. 

\begin{figure*}[t]
\includegraphics[width=1.4\columnwidth]{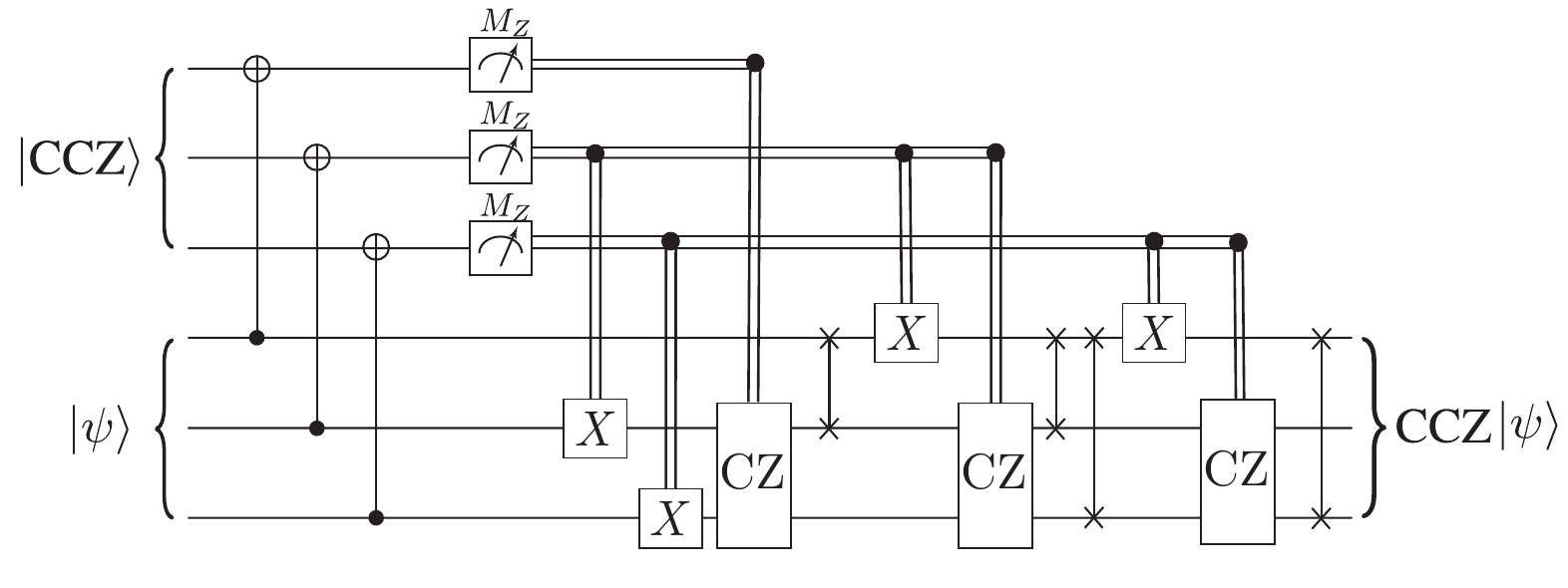}
\caption{The gate teleportation gadget from Ref. \cite{beverland2020lower} that converts a CCZ magic state into a CCZ gate acted on the input state $\ket{\psi}$. }\label{fig:CCZ_gadget}
\end{figure*}

We consider a \textit{fountain register} composed of all three different copies of qLDPC codes, as shown in Fig.~\ref{fig:interaction-hypergraph}(b).  We first initialize all the logical qubits in copy 2 and 3 (green and blue) into $\lo{\ket{+}}$ states.  We then initialize $\Theta(\sqrt{N})$ logical qubits in copy 1 (red) with different labels for both $ \as^{*2} $ and $\fs'^4$  into the logical state $\lo{\ket{+}}$, while ``turning off" the rest of the logical qubits  into state  $\lo{\ket{0}}$.  All the hyperedges coupled to the turned-off logical qubits in state  $\lo{\ket{0}}$ are effectively removed, so are the corresponding logical CCZ's.   The hypergraph magic state hence becomes a tensor product of $\Theta(\sqrt{N})$ non-overlapping  CCZ magic states, i.e., $\Motimes_{j=1}^{\Theta(\sqrt{N})}\overline{\ket{\text{CCZ}}}_j$, as illustrated in Fig.~\ref{fig:interaction-hypergraph}(b).

We then SWAP the $\Theta(\sqrt{N})$ CCZ  magic states into the memory register via logical Clifford gates, either using the combination of targeted logical CZ via a constant-depth circuit and logical-$X$ measurements which has been previously studied in Ref.~\cite{zhu2023non}, or using various lattice surgery schemes for logical Pauli measurements in qLDPC codes \cite{cohen22, huang2023homomorphic,  xu2024fast, cross2024improved, williamson2024low, swaroop2024universal}.  We now repeat this inject-and-SWAP process for $\Theta(\sqrt{N})$ rounds, which can then inject $\Theta(N)$ CCZ magic states into the memory register.  Now we can consume this $\Theta(N)$ CCZ magic states to implement logical CCZ gates via gate teleportation with the logical circuit from Ref. \cite{beverland2020lower} and shown in Fig.~\ref{fig:CCZ_gadget}.

As shown above, one can inject $\Theta(\sqrt{N})$ CCZ  magic states in parallel in a single round using $N$ physical qubits. The single-round state injection rate in the current scheme is hence 
\be
r_I=\Theta(\sqrt{N})/N=\Theta \left(\frac{1}{\sqrt{N}}\right) = \Theta \left(\frac{1}{D}\right). 
\ee
In contrast, when using the same number of $N$ physical qubits to form a single copy of 3D color code defined on a 3D cube with boundaries (equivalent to three copies of 3D surface codes) \cite{Kubica:2015br} with code parameters $[[N, 3, \Theta(N^{1/3})]]$, one can only inject a single CCZ magic states in a single round. More generally, when using $K$ copies of 3D color code with code parameter scaling $N=\Theta(KD^3)$, the corresponding  CCZ magic state injection rate is 
\be
r'_I=K/N=\Theta \left(\frac{1}{D^3}\right).
\ee
We hence know that the 3D thickened homological product codes has better single-round injection rate than the 3D color code with the same distance.   Now, the $\Theta(\sqrt{N})$-round state injection for the thickened homological product code has a constant injection rate, i.e.,  $\tilde{r}_I=O(1)$, while the $\Theta(\sqrt{N})$-round injection rate for the 3D color code still approaches to zero when $N \rightarrow \infty$ (equivalently $D \rightarrow \infty$).   

In terms of the quality of the magic states, with $\Theta(N)$ qubits, the thickened homological product codes can produce $\Theta(N)$ magic states with an effective distance $D=\Omega(\sqrt{N})$ in $\Theta(\sqrt{N})$ rounds. In contrast, even for a single copy of 3D color code producing a single magic state, the effective distance can only be $\Theta(N^{1/3})$. 

We hence summarize the above finding in the following corollary of Theorem \ref{theorem:homological_product}:
\begin{corollary}\label{corollary:homological}
	For the family of thickened 3D homological product codes defined on the triangulation of a 15-manifold from Theorem \ref{theorem:homological_product}, one can use a constant-depth circuit to fault-tolerantly prepare $\Theta(\sqrt{N})$ non-overlapping logical CCZ magic states with distance $\Omega(\sqrt{N})$ in a single shot without disttilation. 
\end{corollary}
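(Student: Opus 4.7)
The plan is to exploit the interaction-hypergraph structure of the collective logical CCZ gate $U$ established in Theorem~\ref{theorem:homological_product} and Sec.~\ref{sec:fountain}, and to exhibit an explicit pattern of logical $\lo{\ket{+}}/\lo{\ket{0}}$ initializations that collapses $U$ onto $\Theta(\sqrt{N})$ disjoint triples. First, I would recall the labelings from Eqs.~\eqref{eq:cocycle_list}--\eqref{eq:triple_intersection}: copy~1 carries $\Theta(N)=\Theta(n^2)$ logical qubits indexed by pairs $(\as^{*2}_i,\fs'^{4}_j)$ with $i,j\in\{1,\dots,\Theta(\sqrt{N})\}$; copies~2 and~3 each carry $\Theta(\sqrt{N})$ logical qubits indexed by $\as^{2}_i$ and ${{\fs^*}'}^{7}_j$ respectively. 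Poincar\'e duality on each factor manifold forces the non-trivial triple cup product to pair the copy-1 qubit labeled $(i,j)$ with the copy-2 qubit labeled $i$ and the copy-3 qubit labeled $j$, and with no other triple.

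Next, I would fix any bijection $\pi\colon\{1,\dots,\Theta(\sqrt{N})\}\to\{1,\dots,\Theta(\sqrt{N})\}$ (say the identity) and select the ``matching'' subset $\mathcal{S}=\{(i,\pi(i))\}$ of copy-1 logical qubits. I would then initialize every qubit in $\mathcal{S}$, together with all copy-2 and copy-3 logical qubits, in $\lo{\ket{+}}$, and initialize every copy-1 qubit outside $\mathcal{S}$ in $\lo{\ket{0}}$. Applying the constant-depth circuit $U$ from Eq.~\eqref{eq:unitary_15D_simplicial} acts trivially on every triple that touches an unselected copy-1 qubit, since $\overline{\mathrm{CCZ}}$ stabilizes any state with a factor in $\lo{\ket{0}}$. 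Because $\mathcal{S}$ is a matching in both coordinates, each copy-2 qubit $\as^{2}_i$ and each copy-3 qubit ${{\fs^*}'}^{7}_{\pi(i)}$ participates in exactly one surviving triple, so the output collapses to
\[
U\,\lo{\ket{\psi_{\mathrm{in}}}} \;=\; \Motimes_{i=1}^{\Theta(\sqrt{N})}\lo{\ket{\mathrm{CCZ}}}_{\,i},
\]
a tensor product of $\Theta(\sqrt{N})$ logical CCZ magic states supported on mutually disjoint logical-qubit triples.

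For the distance and fault-tolerance claim, each qubit in every triple is a code qubit of the subsystem code $\tilde{\C}$ whose effective distance is $\Omega(\sqrt{N})$ by Theorem~\ref{theorem:homological_product}, so each prepared $\lo{\ket{\mathrm{CCZ}}}_{i}$ inherits that protection. The circuit $U$ is a depth-$O(1)$ local circuit of physical CCZ's (cf.\ Eq.~\eqref{eq:logical_gate_higher_form} applied to the 15-dimensional decomposition $q_1{+}q_2{+}q_3 = 6{+}2{+}7$), so any single physical fault can produce only an $O(1)$-weight correlated error across the three blocks, which is correctable by the standard constant-depth/transversal fault-tolerance argument combined with one round of subsystem-code syndrome extraction.

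The main obstacle I expect is the fault-tolerant preparation of the raw logical $\lo{\ket{+}}$ and $\lo{\ket{0}}$ states on a constant-rate subsystem qLDPC code before $U$ is applied; unlike the surface code, there is no automatic ``product state plus one round of stabilizer measurement'' recipe that yields $\Omega(\sqrt{N})$ distance. I would handle this either by invoking existing measurement-based preparation of logical Pauli eigenstates on qLDPC codes (via the targeted logical Pauli measurement protocols referenced in the introduction) or by appealing to a single-shot preparation bound using the expansion properties inherited from the good classical and quantum input codes. Once this initialization is in place, the claim that $\Theta(\sqrt{N})$ non-overlapping CCZ magic states at effective distance $\Omega(\sqrt{N})$ can be injected in a single shot by a constant-depth circuit follows directly.
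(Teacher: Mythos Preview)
Your proposal is correct and follows essentially the same approach as the paper: you use the interaction-hypergraph structure from Eq.~\eqref{eq:cocycle_list} and the Poincar\'e-duality pairing to select a ``diagonal'' matching $\{(i,\pi(i))\}$ of copy-1 logical qubits, set the rest to $\lo{\ket{0}}$ so their hyperedges are pruned, and read off $\Theta(\sqrt{N})$ disjoint CCZ triples---this is exactly the paper's argument in Sec.~\ref{sec:fountain} and Fig.~\ref{fig:interaction-hypergraph}(b). Your concern about fault-tolerant preparation of the initial $\lo{\ket{+}}/\lo{\ket{0}}$ states is legitimate and is not addressed by the paper either; it is implicitly deferred to existing qLDPC logical-measurement protocols.
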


We note that although the current construction can only parallelize $\Theta(\sqrt{N})$ logical CCZ's in a single round rather than $\Theta(N)$, it may not be a bottleneck for this computing scheme at the current stage since the state-of-the-art logical Clifford gates cannot be fully parallelized, i.e., achieving $\Theta(N)$ gates per round with a constant space overhead.  A naive estimate is as follows: the lattice-surgery-based logical measurements can address at most $O(\sqrt{N})$  non-overlapping logical operators for a distance-$O(\sqrt{N})$ code. Assuming one can only do logical measurements on non-overlapping logicals with a constant space overhead, one may only be able to parallelize $O(\sqrt{N})$ logical Clifford gates, which is just as good as the parallelizability of CCZ magic state injection in the current construction.  

We also note that we have not estimated the time overhead and prallelizability for the SWAP operation as well as the CCZ magic state consumption.   We will leave these for future study.

\section{Deformation retraction to the hidden CW complexes and low-overhead scheme for practical implementation}\label{sec:deformation_retraction}

In the previous sections, we have been focused on constructing homological codes on the triangulation $\L$ of the manifolds from the input classical or quantum qLDPC codes, and realize logical CCZ gates via triple cup products defined on the triangulation $\L$.
The manifold constructed in Secs.~\ref{sec:non-Clifford_hypergraph} and \ref{sec:good_product} have dimensions 12 and 15 respectively.  Cautious readers might question whether such constructions would be useful for practical implementation, especially in near-term realizations with $O(100)$ to $O(1000)$ qubits.  First of all, dimensionality should not be a major concern for practicality in the context of qLDPC codes, since most of these codes are not geometrically local and hence require long-range connectivity.  Therefore, connectivity rather than the dimensionality will instead be the major concern. As we can see in the previous manifold constructions, the manifold has inherited the combinatorial properties from the skeleton codes through handle attachment and each handle only contains $O(1)$ constant number of simplices. Therefore, the manifolds have essentially the same connectivity as its input code at coarse-grained scale.   For example, the 2D hypergraph product code has been shown to be embeddable into constant number of layers with long-range connection without a crossing within each layer \cite{tremblay2022constant}. Therefore a thickened 2D hypergraph product code defined on the manifold should inherit such connectivity.   A similar embeddability for a thickened good qLDPC code has also been hinted in Ref.~\cite{portnoy2023local}.  Similarly, a thickened 3D hypergraph (homological) product code on the manifold in Sec.~\ref{sec:non-Clifford_hypergraph} (Sec.~\ref{sec:good_product}) should have a similar connectivity as the skeleton 3D hypergraph (homological) product code. 

Some remaining concern would be the stabilizer weight on high dimensions as well as how large the constant overhead is when subdividing each handle into simplices.  To  address this, we will show in the following a more compact code realization via deformation retraction to a CW complex.

 \subsection{Deformation retraction}\label{sec:retraction_detail}
Here, let us present more details on deformation retraction introduced in Sec.~\ref{sec:handle_8-manifold}.  The deformation retraction maps the manifold described by the handle complex $\L_h$ (e.g. Eq.~\ref{eq:long_chain}) to a CW (cellular) complex $\L_c$. As pointed out in Ref.~\cite{freedman:2020_manifold_from_code, guemard2025lifting}, 
the handle complex and the CW complex is isomorphic: $\L_h \cong \L_c$. 

In $r$-dimension, the deformation retraction $\R$  retracts a $k$-handle to its core, i.e., a $k$-cell $D^k$:
\be\label{eq:retraction_k-handle}
\R: D^k \times D^{r-k} \rightarrow D^k. 
\ee
Similarly, $\R$ retracts the dressed ``$k$-handle" to its dressed core $N^k$, i.e., a dressed ``$k$-cell": 
\be\label{eq:retraction_dressed_handle}
\R: N^k \times D^{r-k} \rightarrow N^k.
\ee

\begin{figure*}
    \centering
    \includegraphics[width=2\columnwidth]{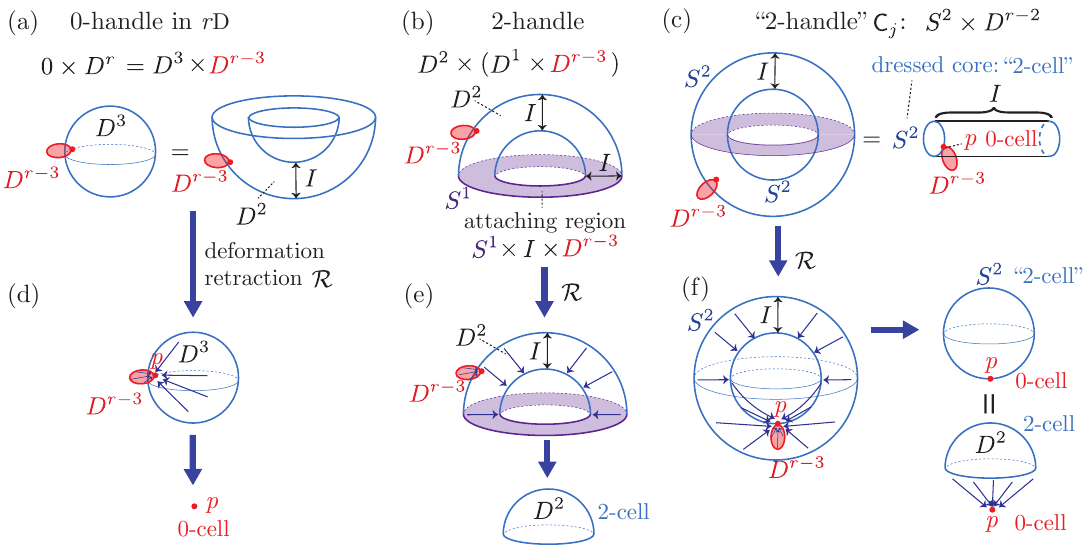}
    \caption{Anatomy of the dressed ``2-handle". (a) A 0-handle in $r$D can be deformed into a half-shell $D^2 \times I$ thickened by $D^{r-3}$ facing south. (b) A 2-handle deformed into a thickened half-shell facing north. The attaching region (purple) is an thickened annulus on the equator. (c) The 2-handle is attached to the 0-handle along the thickened annulus at the equator, which forms the ``2-handle" as a whole shell $S^2 \times I$ thickened by $D^{r-3}$. It is equivalent to the more abstract presentation for the ``2-handle" on the right, where the ``2-cell" $S^2$ forms the dressed core.  (d)  Deformation retraction of the 0-handle to a single point $p$ as the 0-cell.  (e) Retracting the 2-handle to a north hemisphere, i.e., a 2-cell $D^2$. (f) Retracting the dressed ``2-handle", where the north part is retracted to a north hemisphere and the south part is retracted to a point $p$, which becomes a ``2-cell" $S^2$ along with a point $p$ at the south pole.  This is equivalent to a 2-cell (disk $D^2$) attached to the 0-cell (vertex $p$).   }
    \label{fig:anatomy}
\end{figure*}

We can apply the deformation retraction to both the manifold built from the classical code in Sec.~\ref{sec:handle_construction} where bits are placed on 3-cells (with dimension $r\ge 8$)  and the construction in Sec.~\ref{sec:4-manifold} where bits are placed on 2-cells with no separation with the spurious (co)cycle dimension, as well as the Freedman-Hastings manifold built form the quantum code presented in Sec.~\ref{sec:good_product} \cite{freedman:2020_manifold_from_code}.   To demonstrate the scheme concretely, we focus on the manifold where bits are placed on 3-cells following Sec.~\ref{sec:handle_construction} and choose $r \ge 8$, such that there is a separation in dimensions between the spurious and logical (co)cycles\footnote{As mentioned before, although $r=7$ should also work, we choose $r \ge 8$ for conceptually simpler situation where the left and right portions are separated by trivial group $0$ and trivial boundary maps in the middle.}.  As we will see, in the CW complex approach we actually do not care much about the total dimension $r$, and the construction easier if the handles and their dual handles have different dimensions (unlike the 4-manifold construction where the dual of the 2-handles are still 2-handles).

\begin{figure}[t]
    \centering
    \includegraphics[width=1\linewidth]{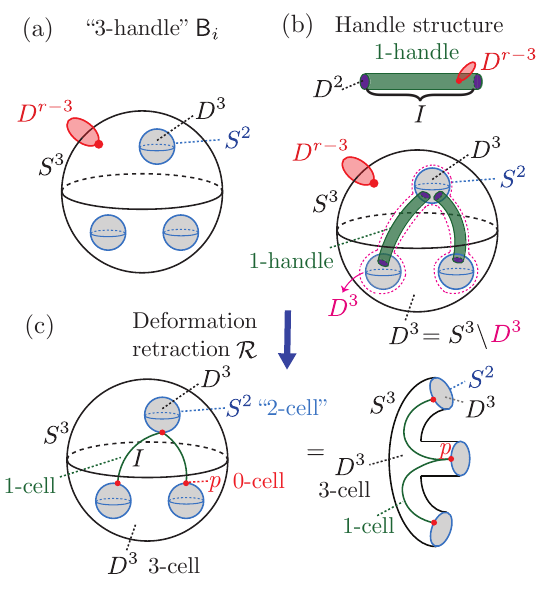}
    \caption{The anatomy of the dressed ``3-handle". (a) A realistic 3D presentation of the ``3-handle".  The 3-sphere should be viewed as the entire interior 3D space of a 3-ball with its boundary being identified to a single point. Three thickened 3-balls $D^3 \times D^{r-3}$ (grey) are removed, with the boundary being the thickened 2-sphere $S^2 \times D^{r-3}$.  (b) On the 3D dressed-core, the ``3-handle" contains two 1-handles  connecting the three thickened $S^2$ boundaries, together with the three removed $D^3$ it forms a single  3-ball $D^3$ (highlighted by the pink dashed lines). The complement in the 3-sphere is just a 3-ball $D^3=S^3 \backslash D^3$, which is the 3-handle when thickened by $D^{r-3}$. (c) Deformation retracting two 1-handles and one 3-handles into two 1-cells and one 3-cell, and the ``3-handle" into its dressed core.  It also connects to three ``2-cell" $S^2$ and three 0-cell $p$ on the neighboring ``2-handles".   One can deform the drawing to a more abstract representation on the right with the same style as Fig.~\ref{fig:dictionary}(d).     }
    \label{fig:retraction_3-handle}
\end{figure}

In order to understand the details of deformation retraction, we need to analyze the anatomy of the handle structure for the dressed ``2-handles" and ``3-handles" respectively.   The handle structure of the ``2-handles" $\mathsf{C}_j$ has been discussed briefly in Sec.~\ref{sec:handle_construction} (for $r=5$).  Here we give a more detailed geometric instruction.  The handle attachment scheme here can be considered as the higher-dimensional generalization of the attachment of the 1-handle to the 0-handle forming a (0,1)-handlebody in 2D illustrated in Fig.~\ref{fig:handle_introduction} in  Sec.~\ref{sec:handle_construction}.   We start with the 0-handle in $r$ dimensions, i.e., $h_0=(0 \times D^r, \varnothing)$, which can also be expressed as a 3-disk (3-ball) thickened by an $(r-3)$-disk, i.e., $D^3 \times D^{r-3}$, as represented in Fig.~\ref{fig:anatomy}(a).   The 3D part can further deform it into a half-shell, i.e., a thickened hemisphere (2-disk): $D^2 \times I$.  Now we introduce the 2-handle $h_2=(D^2 \times D^{r-2}, S^1 \times D^{r-2})$.  It will be convenient to also represent it as a thickened half-shell as $(D^2 \times I) \times D^{r-3}$, as shown in Fig.~\ref{fig:handle_introduction}(b) where we have flipped it to enclose the north hemisphere. The attaching region (purple) is a thickened annulus $(S^1\times I) \times D^{r-3}$.   We then attach the 2-handle $h_2$ to the 0-handle $h_0$ along the attaching region of $h_2$.  This can be visualized as attaching two thickened half-shells along the thickened annulus in the equator, which results in a thickened whole shell $(S^2 \times I) \times D^{r-3}$, as illustrated in Fig.~\ref{fig:anatomy}(c).  We then deform it into our familiar presentation in Fig.~\ref{fig:dictionary} where the 2-sphere is abstractly represented by a blue circle.

We now investigate the deformation retraction of the ``2-handle".  First, the 0-handle $0 \times D^r$ ($r$-dimensional ball), is retracted to its core, i.e., a single point (vertex) $p$ as a 0-cell, as illustrated in Fig.~\ref{fig:anatomy}(d).  Next, the 1-handle $D^2 \times D^{r-2}$ is retracted to its code $D^2$, which becomes a 2-cell. We can understand this in the thickened half-shell picture in Fig.~\ref{fig:anatomy}(e), where the thickness of the shell and annulus is reduced to zero while in the thickening dimensions the $(r-3)$-ball  $D^{r-3}$ is shrunk to a single point.   Finally, since the ``2-handle" is composed of one 0-handle and one 2-handle, we can now simultaneously retract its components [see Fig.~\ref{fig:anatomy}(f)].  On the north part, we retract the 2-handle into a north hemisphere (disk) $D^2$.  On the south part, the entire 0-handle is retracted to a point $p$ at the southpole, which effectively identify the equator of the north hemisphere into the southpole $p$. Therefore, the ``2-handle" is now retracted to its dressed core $S^2$, which is a dressed ``2-cell" which contains one 0-cell (vertex) $p$ at the southpole and one ``2-cell" $S^2$.  Here the quotation mark emphasizes that its not a genuine 2-cell which should be a 2-disk.  However, $S^2$ together with a vertex $p$ (0-cell) nested on it can just be interpreted as a genuine 2-cell $D^2$ attached to the vertex $p$, as illustrated in (f).   Therefore, we do end up with a genuine CW complex.

We now analyze the handle structure of the dressed ``3-handle" $\mathsf{B}_i$, as illustrated in Fig.~\ref{fig:retraction_3-handle}(a) with a more realistic 3D presentation of its dressed core $N_i$  instead of the more abstract presentation in Fig.~\ref{fig:dictionary}(d).  Here, one should think of $S^3$ being the entire 3D space inside a 3-ball $D^3$ with its boundary $S^2$ being identified to a single point.  Recall that the dressed core has the form of a punctured 3-sphere  $N^3_i=S^3 \backslash \sqcup_{m=1}^{f(i)}D^3_m$ with $f(i)$ 3-balls (grey shade) being removed, left with $f(i)$ disconnected $S^2$ boundaries ($f(i)=3$ in the illustration of Fig.~\ref{fig:retraction_3-handle}). The entire ``3-handle" is the thickening of its dressed core, i.e., $\mathsf{B}_i = N_i \times D^{r-3}$.  

To clarify its handle structure, we start with the $f(i)$ disconnected thickened $S^2$ boundaries ($S^2 \times D^{r-3}$) and attach handles with increasing indexes, as illustrated in Fig~\ref{fig:retraction_3-handle}(b). We first add $(f(i)-1)$ 1-handles 
\[ h_1=\bigg(I \times (D^2\times D^{r-3}), S^0 \times (D^2\times D^{r-3})\bigg),\]
where $S^0$ represents two endpoints of the interval $I$. The attaching region is hence two copies of thickened 2-disk $D^2 \times D^{r-3}$ (purple). Now we attach the   $(f(i)-1)$ 1-handles to connect the $f(i)$ disconnected $S^2$ boundaries. Here, each 1-handle is attached to two different thickened $S^2$ boundaries,  which can be regarded as attaching to the boundary of dressed ``2-handles" $\mathsf{C}_j$ connected to the ``3-handle" $\mathsf{B}_i$.   Now within the 3-sphere $S^3$ that contains the dressed core, the $(f(i)-1)$ 1-handles along with the $f(i)$ removed 3-balls $D^3$ (grey) form a single connected component homeomorphic to a 3-ball $D^3$ (highlighted by the pink dashed lines). When we remove this this connected component $D^3$ from the 3-sphere $S^3$, the complement becomes $S^3\backslash D^3 =D^3$, i.e., another 3-disk.  This complement, when thickened by $D^{r-3}$, just forms a three handle $h_3 $$=$$ (D^3 \times D^{r-3}, S^2 \times D^{r-3})$, with the attaching region being the thickened 2-sphere $S^2 \times D^{r-3}$ (the region highlighted by purple dashed lines).  We hence obtain the complete handle structure of the dressed ``3-handle" $\B_i$: one 3-handle, $(f(i)-1)$ 1-handles.

\begin{figure*}[hbt]
    \centering
    \includegraphics[width=2\columnwidth]{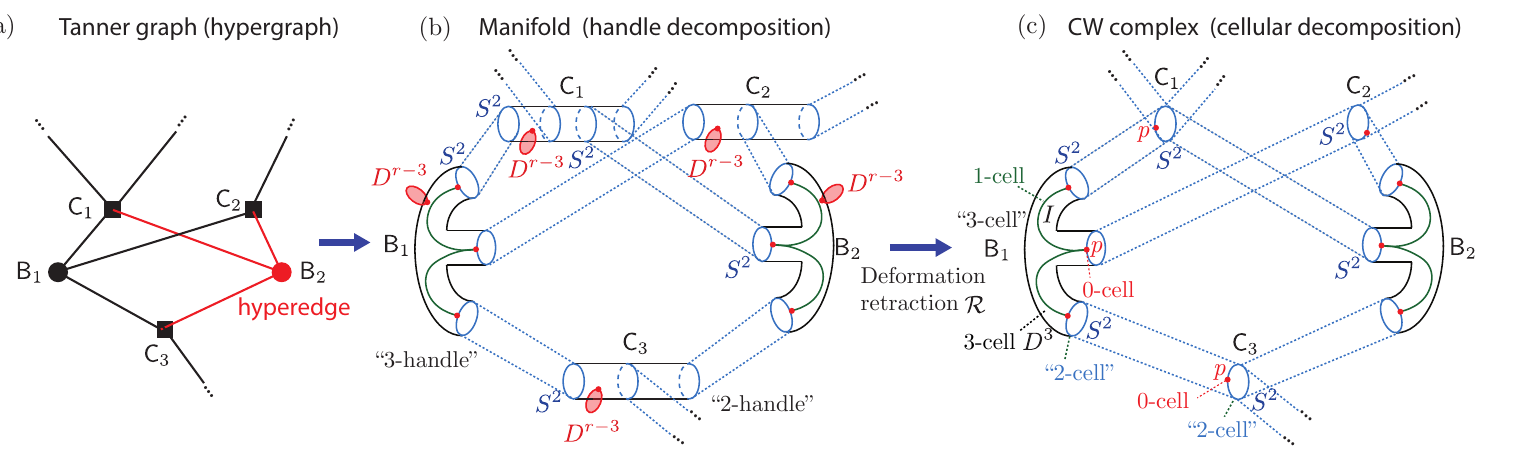}
    \caption{(a,b) Mapping the input Tanner graph to the handle decomposition for the 3-handlebody $H$ of a $r$-manifold. The cells are showed on the dressed core of the ``handles". (c) Deformation retraction to a CW complex. The ``3-handle" are retracted to its dressed core, a ``3-cell" containing one 3-cell, $(f(i)-1)$ 1-cells, $f(i)$ $S^2$ boundaries each with a 0 cell $p$.  Each ``2-cell" contains one $S^2$ and one 0-cell $p$, both identified with the $S^2$ boundaries and 0-cell $p$ on the neighboring ``3-cells".  We see that a single ``2-cell" can be attached to more than two 3-cell, which shows the CW complex is not the discretization of a manifold.}
    \label{fig:dictionary_retraction}
\end{figure*}

We now analyze the deformation retraction of the dressed ``3-handle".  First, the entire ``3-handle" is retracted to its dressed core $N^3_i=S^3 \backslash \sqcup_{m=1}^{f(i)}D^3_m$  according to Eq.~\eqref{eq:retraction_dressed_handle}, which is a dressed ``3-cell", as illustrated in Fig.~\ref{fig:retraction_3-handle}(c).  During this process, each 1-handle $h_1=I \times D^{r-1}$ is retracted to its core $I$ (green), which now becomes a single 1-cell (edge). Meanwhile, the 3-handle $h_3=D^3 \times D^{r-3}$ is retracted to a single 3-cell $D^3$, which is the complement of the $(f(i)-1)$ 1-cells and the $f$ 3-balls $D^3$ inside the 3-sphere $S^3$.    
Now we also investigate the interaction between a ``3-handle"  and its neighboring ``2-handles".
Each 1-handle should be attached to the boundary of the 0-handles in the neighboring ``2-handles". Recall that each ``2-handle" is composed of one 2-handle and one 0-handle, and the latter is exactly the 1-handle is attached to. Now under deformation retraction of the entire manifold, the 0-handle is retracted to a 0-cell $p$ (red vertex).  Therefore, each 1-cell $I$ in the dressed ``3-cell" is now attached to two 0-cells (red vertices) in the neighboring dressed ``2-cells".   Finally, the dressed ``2-handle" has three disconnected boundary components, i.e., the thickened 2-sphere $S^2 \times D^{r-3}$, which are attached to the single 2-handle $D^2 \times D^{r-3}$ in the neighboring dressed ``2-handle". Under deformation retraction, the 2-handle is retracted to the ``2-cell" $S^2$ which can be further decomposed as a 2-cell $D^2$ attached to a 0-cell $p$, as has been shown in Fig.~\ref{fig:anatomy}(f).  So the cell structure of the dressed ``3-cells" in the CW complex $\L_c$ is composed of one 3-cell $D^3$, $(f(i)-1)$ 1-cells $D^1=I$ and in addition $f(i)$ 0-cells $p$ and 2-cells $D^2$ on its boundary, which will be identified with the corresponding cells on the neighboring dressed ``2-cells".

From the above analysis, we can conclude the following fact:
\begin{fact}\label{fact:only_one_cell}
The dressed ``2-cells" and ``3-cells" only contain exactly one bare 2-cells and 3-cells respectively.  
\end{fact}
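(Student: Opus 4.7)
The plan is to establish the fact by direct bookkeeping on the handle anatomies already constructed in Section~\ref{sec:retraction_detail} (see Figs.~\ref{fig:anatomy} and \ref{fig:retraction_3-handle}), together with the rule for deformation retraction given in Eqs.~\eqref{eq:retraction_k-handle} and \eqref{eq:retraction_dressed_handle}. Since every bare $k$-handle retracts to exactly one bare $k$-cell, it suffices to inspect the canonical handle decomposition of each dressed ``handle'' and count the number of top-index handles it contains.

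For the dressed ``2-cell'', I would first recall from Fig.~\ref{fig:anatomy}(a-c) that the dressed ``2-handle'' $\mathsf{C}_j=(S^2\times I)\times D^{r-3}$ admits the canonical handle decomposition consisting of exactly one $0$-handle (the thickened south hemisphere $D^2\times I\times D^{r-3}$ of Fig.~\ref{fig:anatomy}(a)) and exactly one $2$-handle (the thickened north hemisphere of Fig.~\ref{fig:anatomy}(b)), glued along the thickened equatorial annulus $(S^1\times I)\times D^{r-3}$. Applying $\R$ termwise, the unique $0$-handle collapses to a single vertex $p$ and the unique $2$-handle collapses to a single $2$-disk $D^2$ (see Fig.~\ref{fig:anatomy}(d--f)), producing the dressed ``2-cell'' $S^2 = D^2\cup_{\partial D^2\sim p} \{p\}$. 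Hence the dressed ``2-cell'' contains exactly one bare $2$-cell.

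For the dressed ``3-cell'', I would use the decomposition of the dressed core $N^3_i=S^3\setminus\sqcup_{m=1}^{f(i)}D^3_m$ depicted in Fig.~\ref{fig:retraction_3-handle}(b). The handle anatomy there consists of $(f(i)-1)$ $1$-handles (the green tubes connecting the $f(i)$ boundary spheres) plus the complement inside $S^3$ of these tubes together with the $f(i)$ removed $3$-balls. The key geometric observation is that the union of the $f(i)$ removed $3$-balls together with the $(f(i)-1)$ connecting $1$-handles is a single tree-like, contractible region whose closure is a standardly embedded $3$-ball in $S^3$ (highlighted by the dashed curve in Fig.~\ref{fig:retraction_3-handle}(b)); by the Sch\"onflies/Alexander theorem in dimension three, its complement $S^3\setminus D^3$ is also a single $3$-ball, which after thickening by $D^{r-3}$ yields exactly one $3$-handle. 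Applying $\R$ then collapses this unique $3$-handle to a single bare $3$-cell and each of the $(f(i)-1)$ $1$-handles to a single $1$-cell, so the dressed ``3-cell'' contains exactly one bare $3$-cell.

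The only step that requires any care is the assertion that the union of the removed $3$-balls and the connecting $1$-handles can be taken to be a standard (unknotted) $3$-ball in $S^3$, so that its complement is again a $3$-ball. This is legitimate in our construction because the attaching maps of the $1$-handles are chosen combinatorially from the Tanner graph of $\bar{\C}$ and do not carry any linking or knotting data; one can always realize the $1$-handles as disjoint, unknotted arcs joining the punctures, after which the standard Sch\"onflies theorem applies. Once this is granted, the counting of top-index handles in both dressed ``handles'' is immediate, and Fact~\ref{fact:only_one_cell} follows.
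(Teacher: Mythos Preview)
Your proposal is correct and follows essentially the same approach as the paper: the fact is stated there as an immediate conclusion of the handle anatomy worked out in Section~\ref{sec:retraction_detail}, namely that each dressed ``2-handle'' decomposes into exactly one 0-handle and one 2-handle (Fig.~\ref{fig:anatomy}) and each dressed ``3-handle'' into exactly one 3-handle and $(f(i)-1)$ 1-handles (Fig.~\ref{fig:retraction_3-handle}), so that after retraction each contains a single top-dimensional bare cell. Your invocation of the Sch\"onflies theorem to justify that the complement in $S^3$ is a single $3$-ball is a nice explicit touch that the paper leaves implicit.
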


\subsection{Global structure of the CW complex and Poincar\'e duality}

Equipped with the handle anatomy and retraction procedure provided in the previous subsection, we now take a look at the global structure of the CW complex, as illustrated in Fig.~\ref{fig:dictionary_retraction}, where we use the same Tanner graph example as in Fig.~\ref{fig:dictionary}.  For any skeleton classical code $\bar{\C}=\text{Ker}(\bar{\Hs})$ and its associated Tanner graph $G_T$ [see Fig.~\ref{fig:dictionary_retraction}(a)], we can build a manifold $\M^r$ with the handle construction.  We illustrate the 3-handlebody $H$ of the manifold in Fig.~\ref{fig:dictionary_retraction}(b), whose double gives the entire manifold, i.e., $\M= \D H$. Now applying the deformation retraction $\R$ gives rise to the right portion of the following cellular (CW) chain complex $\L_c$ (set $r=8$) from $C_0$ to $C_3$:
\begin{align}\label{eq:long_chain_CW}
& C_8 \rightarrow C_7 \rightarrow C_{6} \xrightarrow[]{\hat{\partial}^T = \bar{\mathsf{H}}^T}  C_5 \rightarrow 0  \rightarrow C_3 \xrightarrow[]{\hat{\partial} = \bar{\mathsf{H}}} C_{2} \rightarrow C_1 \rightarrow C_0, \cr
&\qquad \qquad \qquad  \qquad \qquad \qquad \qquad \ \   \text{check}  \qquad \text{bit} 
\end{align}
which is completely isomorphic to the handle chain complex $\L_h$.  The corresponding CW complex from 0-cells up to 3 cells is shown in Fig.~\ref{fig:dictionary_retraction}(c). The higher cells corresponding to the left portion of the cellular (CW) chain complex in Eq.~\eqref{eq:long_chain_CW} are not shown in (c).      
The detailed cell structure of each ``2-cell" and ``3-cell" is shown explicitly.    Note that under deformation retraction,  each ``2-handle" is shrunk to a single ``2-cell" $S^2$ along with a 0-cell $p$, which is then attached to in general more than two ``3-cells" along their $S^2$ boundaries.  This leads to the following fact:
\begin{fact}
The CW complex $\L_h$ obtained from deformation retraction,  unlike the triangulation $\L$ of the manifold $\M^r$,  is in general no longer a discretization of the manifold $\M^r$ or equivalently a combinatorial manifold.   
\end{fact}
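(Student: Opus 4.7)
The plan is to establish a purely local obstruction that prevents $\L_c$ from carrying any manifold structure compatible with its CW attaching maps. First I would recall the standard criterion: a regular CW complex is a (topological) manifold only if every point has a Euclidean neighborhood, which in particular forces the link of every cell to be a sphere of the appropriate dimension. Equivalently, for any codimension-$1$ cell of a subcomplex, removing a small neighborhood of an interior point must disconnect the ambient space into exactly two components.

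Next I would exhibit the branching configuration already described in the caption of Fig.~\ref{fig:dictionary_retraction}. By Fact~\ref{fact:only_one_cell} and the anatomy of the retraction in Sec.~\ref{sec:retraction_detail}, each check $\mathsf{C}_j$ in the skeleton Tanner graph $G_T$ yields exactly one dressed ``2-cell'' whose 2-sphere component $S^2_j$ is identified with the attaching $S^2$ boundary of every adjacent dressed ``3-cell'' $\mathsf{B}_i$. Thus $S^2_j$ is glued to $g(j)$ distinct ``3-cells'', where $g(j) = |\{\, i : \mathsf{C}_j \in I_i \,\}|$ is the number of bits incident to $\mathsf{C}_j$. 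For any non-degenerate skeleton classical LDPC code $\bar{\C}$ of interest in this paper (in particular the symmetric construction $\bar{\Hs} = \Hs^T\Hs$ of Lemma~\ref{lemma:classical_code}), one has $g(j) \ge 3$ for at least one check $\mathsf{C}_j$.

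Then I would pick a point $p$ in the relative interior of such an $S^2_j$ and analyze a small open neighborhood $U$ of $p$ in $\L_c$. By the product structure of the retraction in Eq.~\eqref{eq:retraction_dressed_handle}, $U$ is homeomorphic to $D^2 \times W$, where $W$ is a ``book with $g(j)$ pages'': the wedge of $g(j)$ half-disks $\{D^{r-3}_m\}_{m=1}^{g(j)}$ glued along their common boundary axis $\{0\}$. Removing the axis $\{0\} \times D^2$ from $U$ disconnects $U$ into exactly $g(j)$ components, one per incident ``3-cell''. On the other hand, removing any locally flat $2$-disk from an $r$-ball with $r\ge 3$ leaves a connected set. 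Hence $U$ cannot be homeomorphic to any Euclidean space, so $\L_c$ is not even a topological manifold, let alone a combinatorial/PL one.

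Finally, I would remark that this obstruction is intrinsic to the deformation retraction: the original handlebody $H$ and its double $\M^r = \D H$ are manifolds precisely because each dressed ``2-handle'' is thickened by $D^{r-2}$, which smooths out the branching into a legitimate $r$-ball neighborhood of each interior point on $S^2_j \times D^{r-2}$. Collapsing the co-core $D^{r-2}$ destroys exactly this thickening, producing the book-like singularity above. The main (mild) obstacle is simply phrasing the book-of-pages local model cleanly; no systolic or homological input is needed, since the claim is purely a local topological statement about CW attaching data.
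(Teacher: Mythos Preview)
Your approach is the same as the paper's: both identify the branching at a dressed ``2-cell'' $S^2_j$ where $g(j)\ge 3$ dressed ``3-cells'' meet as the obstruction to $\L_c$ being a combinatorial manifold. The paper states this as a one-line observation immediately preceding the Fact (and in the caption of Fig.~\ref{fig:dictionary_retraction}); your book-of-pages local model is a welcome elaboration of that sentence.

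There is, however, a dimensional slip in your local model. After the retraction $\R$ of Eq.~\eqref{eq:retraction_dressed_handle}, each dressed ``3-handle'' $N^3_i\times D^{r-3}$ collapses to its three-dimensional core $N^3_i$, so the piece of each adjacent ``3-cell'' near $p\in S^2_j$ is $D^2\times[0,\epsilon)$. The pages of your book $W$ are therefore half-intervals $D^1$, not $D^{r-3}$; the co-core $D^{r-3}$ is precisely what $\R$ kills. Citing Eq.~\eqref{eq:retraction_dressed_handle} for the ``product structure'' is thus backwards: that equation describes what is being \emph{removed}, not what survives in $\L_c$. The argument is easily repaired---$D^2$ times a $g(j)$-pod is still not locally Euclidean---but your comparison in the last step should be with a $3$-ball, not an $r$-ball.

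A second, milder gap (which the paper also leaves implicit) is that your neighborhood $U$ is computed only within the $3$-skeleton; you do not verify that the dual cells (the $5$-, $6$-, $7$-, $8$-cells coming from $H^*$ in Eq.~\eqref{eq:long_chain_CW}) cannot attach so as to ``fill in'' the book singularity at $p$ and restore an $r$-dimensional Euclidean neighborhood. At the informal level at which the Fact is stated this is acceptable, but a fully rigorous version would trace the attaching maps of those higher cells back through the doubling $\M^r=\D H$.
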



We now consider defining a classical code $\C$ on the CW complex $\L_c$.  Note that the dressed ``2-cells" and ``3-cells" have one to one correspondence to the checks $\mathsf{C}_j$ and bits $\mathsf{B}_i$ due to the sequence of mappings from the skeleton classical code $\bar{\C}=\text{Ker}(\bar{\Hs})$ to the manifold and then to the CW complex.  Now due to Fact \ref{fact:only_one_cell}, each ``2-cell" and ``3-cell" only contain one genuine 2-cell $D^2$ and 3-cell $D^3$ respectively.  We now place a bit on each 3-cell and a check on each 2-cell of the CW complex $\L_c$, and define the classical code as 
\be
\C=H_3(\L_c; \ZZ_2)=\text{Ker}(\partial_3)=\text{Ker}(\bar{\Hs}). 
\ee
Due to the isomorphism between a portion of the CW complex (level 2 and 3) to the chain complex $\mathcal{X}$ of the skeleton classical code,  this new code is exactly  the same as the input skelton code, i.e., $\C=\bar{\C}=\text{Ker}(\bar{\Hs})$, same as their transposed code $\C^T=\bar{\C}^T =\text{Ker}(\bar{\Hs}^T)$. 

Note that the above equivalence also applies to the Freedman-Hastings mapping from quantum code to the 11D manifold in Ref.~\cite{freedman:2020_manifold_from_code}. 
As mentioned before, the $X$-check, qubit, and $Z$-check are mapped to dressed ``3-handles", ``4-handles" and ``5-handles" respectively, and each of them only contain exactly one bare 3-handle, 4-handle and 5-handle respectively.  When deformation retracted to an 11D CW complex $\L_c$, the dressed ``handles" are retracted to dressed ``3-cells", ``4-cells" and ``5-cells" respectively, each of which contain exactly one bare 3-cell, 4-cell and 5-cell respectively. We define a new quantum code $\C= \mathbb{C}^{|H_4(\L_c; \ZZ_2)|}$ by placing one $X$-check, qubit and $Z$-check on each 3-cell, 4-cell and 5-cell respectively.    Since a portion of the CW complex $\L_c$ (from level 3 to 5) is isomorphic to the chain complex $\mathcal{X}$ of the skeleton quantum code $\bar{\C}$,  the new quantum code is exactly the same as the skeleton quantum code, i.e., $\C=\bar{\C}$.   We hence reach the following claim:  
\begin{claim}
For a given skelton classical or sparsely liftable quantum LDPC code $\bar{\C}$, there exists a bounded degree CW complex $\L_c$ which admits cup product, and an LDPC code $\C$ defined on it such that $\C=\bar{\C}$. 
\end{claim}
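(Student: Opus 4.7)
The plan is to construct $\L_c$ as the CW complex obtained by deformation retracting the manifold $\M^r$ (with $r\ge 8$ in the classical case, $r=11$ in the quantum case) built in Sec.~\ref{sec:handle_construction} respectively via the Freedman-Hastings mapping recalled in Theorem~\ref{theorem:FH} from the skeleton code $\bar{\C}$. Concretely, I would retract each dressed ``$k$-handle'' to its dressed core $N^k$ as in Eq.~\eqref{eq:retraction_dressed_handle}, realizing the isomorphism $\L_h\cong\L_c$ between the handle chain complex and the cellular chain complex invoked in Sec.~\ref{sec:retraction_detail}. The code $\C$ is then defined by placing a bit (qubit) on every bare $k$-cell sitting inside a dressed ``$k$-cell'' that carries a bit (qubit) in the skeleton data, and placing checks on the bare cells of the adjacent dimensions coming from the dressed ``$(k\pm 1)$-cells''.

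First I would verify the equality $\C=\bar{\C}$. By Fact~\ref{fact:only_one_cell}, each dressed ``2-cell'' and ``3-cell'' contains exactly one bare 2-cell and one bare 3-cell, and analogously in the quantum case each dressed ``3-cell'', ``4-cell'', and ``5-cell'' contains exactly one bare cell of the corresponding dimension. Consequently, the middle stripe of the cellular chain complex $\L_c$ is literally isomorphic, as a chain complex over $\ZZ_2$ with labeled generators, to the two-term complex $\bar C_1\xrightarrow{\bar{\Hs}}\bar C_0$ (respectively the three-term complex of the quantum code), yielding $\C=\bar{\C}$ and $\C^T=\bar{\C}^T$.

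Next I would check bounded degree and the LDPC property of $\L_c$. The LDPC property of $\bar{\C}$ bounds, by an $O(1)$ constant, the branching number $f(i)=|I_i|$ in the attachment rule Eq.~\eqref{eq:attaching_rule}, so each dressed ``3-cell'' has only $O(1)$ boundary $S^2$ components; the handle anatomy illustrated in Fig.~\ref{fig:retraction_3-handle} then contributes only $O(1)$ additional $0$-, $1$-, $2$-, and $3$-cells, and dually on the upside-down copy. Composing these finite local counts yields a globally bounded cellular degree, which in turn gives constant check and bit weights for $\C$.

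The main obstacle, and the most delicate point, is to ensure that $\L_c$ genuinely \emph{admits} a cup product in the sense required by Lemma~\ref{lemma_gate_3}: the simplicial formula Eq.~\eqref{eq:cup_def} depends on a global vertex ordering of a simplicial structure, which a generic CW complex lacks. My plan to circumvent this is twofold. First, since $\L_c$ is obtained by deformation retraction from the triangulation $\L$ of $\M^r$, the retraction $r:\L\to\L_c$ and an inclusion $i:\L_c\hookrightarrow\L$ form a chain-homotopy equivalence, so one can define the cellular cup product by pullback, $\alpha\cup_c\beta := r^{\,*}\bigl(i^{\,*}\alpha\cup i^{\,*}\beta\bigr)$, and all triple-cup-product pairings $\int_{\widetilde{\M}}\alpha\cup\beta\cup\gamma$ computed in Sec.~\ref{sec:non-Clifford_hypergraph} and Sec.~\ref{sec:good_product} transport unchanged to $\L_c$. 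Because the isomorphism $\L_h\cong\L_c$ preserves the Poincar\'e duality structure inherited from the doubled handlebody, $\L_c$ is a Poincar\'e complex, so the top-dimensional triple-intersection pairing is well defined and coincides with the one on $\L$. Second, for the explicit constant-depth circuit realization one invokes the cellular approximation theorem to pick an acyclic carrier (or a local simplicial refinement only inside each dressed cell) that realizes the cup product at the cochain level; this only changes the representative circuit, not the logical gate. Combining the isomorphism $\C=\bar{\C}$, the bounded cellular degree, and this induced cup-product structure establishes the claim.
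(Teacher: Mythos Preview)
Your proposal is correct and follows essentially the same route as the paper: construct $\L_c$ by retracting each dressed ``$k$-handle'' to its dressed core, invoke Fact~\ref{fact:only_one_cell} to identify the bit/check (or $X$-check/qubit/$Z$-check) stripe of $\L_c$ with the skeleton chain complex so that $\C=\bar{\C}$, and deduce bounded degree from the LDPC and sparse-liftability hypotheses on $\bar{\C}$.

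The only place your treatment differs in emphasis is the cup-product justification. The paper does not transport the cup product through the homotopy equivalence with the full manifold triangulation $\L$; instead it simply subdivides $\L_c$ locally into a simplicial complex $\L_c^\Delta$ (each $k$-cell $D^k$ becoming a single $k$-simplex, each $S^k$ the boundary of a $(k{+}1)$-simplex) and applies Eq.~\eqref{eq:cup_def} there, either by redefining the code on $\L_c^\Delta$ or by summing the simplicial contributions back to a cell-level formula. Your pullback definition $r^*(i^*\alpha\cup i^*\beta)$ is valid at the cohomology level but, as you note, does not by itself yield the local cochain formula needed for the constant-depth circuit; your second suggestion of a local simplicial refinement inside each dressed cell is precisely what the paper does, so the two approaches converge.
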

\nin Here, bounded degree means each cell is only adjacent to $O(1)$ cells, similar to the condition for the triangulation $\L$, which is gauranteed by the LDPC condition of the input code and sparse liftability. See also Ref.~\cite{guemard2025lifting} for a similar discussion. The discussion of the cup product will be presented in Sec.~\ref{sec:CW_cup}.

The above claim is crucial for practical implementation, since it tells us we can build a classical or quantum code with a \textit{hidden CW complex structure} using exactly the same number of (qu)bits and checks as the input code.
Note that although the CW complex $\L_c$ built from the skeleton classical code  is $r$-dimensional, the part the store the classical information, i.e., bits and checks, is only the portion of the chain complex containing 2-cells and 3-cells.   Similarly, for the CW complex built from the skeleton quantum code, the quantum information are are only stored in the 3-term portion of the chain complex containing 3-cells, 4-cells and 5-cells respectively, which are those need to be implemented in the device.  
The hidden CW complex structure including levels aside from (qu)bits and checks, which could be stored in a classical computer, is used to design the logical gates via cohomology operation but does not encode any message.  

The only additional subtlety to bear in mind is that for our construction of thickened 3D hypergraph or homological product codes, we also need to take product of the CW complexes $\L_c$. 
Therefore, cells not at the dimensions of the (qu)bits and checks may also participate in the composition of the cells in the product codes.  However, each dressed ``cells" only contain a small amount of lower dimensional cells.   For example, in the classical code case, each ``2-cell" only contains one 0-cell, and each ``3-cell" only contains $f(i)$ 1-cell, where $f(i)$ is the number of checks that the bit $\mathsf{B}_i$ is adjacent to.   Similar for the higher cells with the dual dimensions.   Therefore, the product code we construct with the CW complexes is also very compact with a very small constant overhead compared to the  product of the skeleton codes.   

In addition, we note that this scheme is applicable to arbitrary skeleton classical or quantum code without requiring any specific local combinatorial structure like Ref.~\cite{golowich2024quantum, lin2024transversal, breuckmann2024cups}, and should hence be achievable with near-term codes with O(100) to O(1000) qubits, such as the homological product of a bivariate bicycle code \cite{Bravyi:2024wc} and a classical expander code, or the tricycle codes  based on balanced product construction \cite{jacob2025single, menon2025magic}.

Finally and more interestingly,  due to the isomorphism between the CW and handle complex $\L_c \cong \L_h$,  the Poincar\'e duality isomorphism $H_k(\L_h; \ZZ_2) $$\cong$$ H^{r-k}(\L_h^*; \ZZ_2)$ in the handle-complex of the manifold is still preserved in the CW complex: $H_k(\L_c; \ZZ_2) $$\cong$$ H^{r-k}(\L_c^*; \ZZ_2)$, where $\L_h^*$ denotes the dual handle chain complex, e.g.,  Eq.~\eqref{eq:long_chain_dual}, and $\L_c^*$ denotes the corresponding dual CW complex obtained from the deformation retraction.   Such a CW complex equipped with Poincar\'e duality is called a \textit{Poincar\'e complex} in the literature:
\begin{definition}
An $r$-dimensional CW complex $\L_c$ is  called a Poincar\'e complex if there exists the following isomorphism:
 	\be
 	H_k(\L_c) \cong H^{r-k}(\L_c^*).
 	\ee
\end{definition}
\nin Note that in general the above definition applies to general coefficients in stead of just $\ZZ_2$ coefficients, which is the focus of the present study.


\subsection{Cup product on the CW complex}\label{sec:CW_cup}

\begin{figure}
    \centering
    \includegraphics[width=1\linewidth]{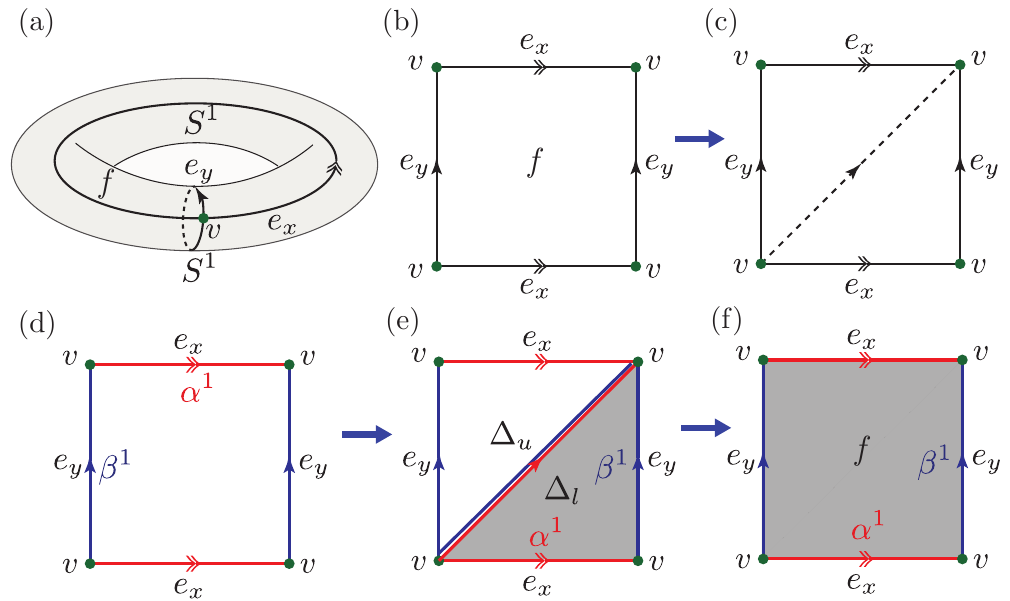}
    \caption{(a) The CW complex of a torus is composed of one 0-cell (vertex) $v$, two 1-cells (edges) $e_x$ and $e_y$ supported on two intersecting circles $S^1$, and one 2-cell (face) $f$. (b) A representation of the torus by identifying the opposite edges and all the corners.  (c) One can subdivide it into a simplicial complex with two triangles.  (d) The configuration of two cocycles $\alpha^1$ (red) and $\beta^1$ (blue), with the highlighted edges taking value 1 for the  cocycle with the corresponding color. (e) After the subdivision, the cocycles are extended to diagonal edges. One then evaluates the cup product on the subdivided complex. Only the lower-right triangle (grey) evaluates to 1 in this example.  (f) One adds the contribution from the two triangles and get the cup product evaluated for the square face $f$, which is 1 in this example (highlighted grey).  }
    \label{fig:CW_cup}
\end{figure}

Similar to the case of simplicial complex, cohomology operations including cup products are also mathematically well-defined on the CW complex \cite{Hatcher:2001ut}.  On the other hand, there is no convenient direct formula for the cup product evaluation on the CW complex like Eq.~\eqref{eq:cup_def} in the case of simplicial complex.
Therefore, one solution is to first subdivide the CW complex $\L_c$ into a simplicial complex $\L_c^\Delta$, and then use Eq.~\eqref{eq:cup_def} to evaluate the cup product. For example, each $k$-cell which is a $k$-ball $D^k$ can be represented by a single $k$-dimensional simplex, while the ``$k$-cell" corresponding to a $k$-sphere $S^k$ can be represented by the $(k-1)$-simplices on the boundary of a single $k$-simplex.  Along this line, we can use the following three aproaches:
\vspace{0.15 in}

\nin (1) One conceptually simple option is to directly define the classical or quantum code on the subdivided simplicial complex $\L_c^\Delta$. This will leads to a small constant overhead since the subdivided simplicial complex  $\L_c^\Delta$ is no longer the triangulation of the manifold $\M^r$, and the new code is  more compact than the homology code defined on the triangulation of $\M^r$. The constant overhead proportional to number of simplices in the subdivided cells corresponding to the (qu)bits and checks. Note that the new code is not a topological code even in the general sense (allowing non-Euclidean geometry).  Nevertheless, this simplicial complex $\L_c^\Delta$ is a \textit{Poincar\'e complex} which inherits the Poincar\'e duality of the manifold.  

In this approach, the dimension $r$ of the underlying manifold $\M^r$ and the CW complex, as well as the dimension $k$ of the (qu)bits,  do mildly affect the qubit overhead and stabilizer weight (connectivity).   For example, in the classical code the bits located on a $k$-cell $D^k$ which can be triangulated into a $k$-simplex, and it will be adjacent to  $(k+1)$ number of $(k-1)$-simplex on the faces of the $k$-simplex, which means each bit will participate in $k+1$ checks.  The additional overhead on the number of checks is also proportional to $k+1$. Now if we put bits on 3-cells as the focus of this section, or on 2-cells in the 4-manifold construction in Sec.~\ref{sec:4-manifold}, the number of checks that each bit participate is 4 and 3 respectively. Similarly, the overhead on the number of checks will be proportional to a factor of 4 and 3 respectively. Note that even in the skeleton code, a single bit will be typically coupled to 2 or more checks, and is likely to have a similar or even large number compared to 3 or 4. Therefore, the connectivity is not necessarily worsened and could perhaps even be improved when doing the subdivision. The only disadvantage is the additional space overhead. The total dimension $r$ also matters when taking the product of the simplicial complex  $\L_c^\Delta$ built from the classical or quantum codes.   

\vspace{0.15 in}
\nin (2) An even more economic approach is to keep the codes $\C$ defined exactly on the CW complex $\L_c$, while using the subdivided simplicial complex $\L_c^\Delta$ to derive the evaluation rule on the CW complex $\L_c$.  It is then possible not to introduce extra (qu)bits and checks, and hence reach the minimal overhead, which is the same as the skeleton classical/quantum codes $\bar{\C}$ (before taking the product).   Note that in this appraoch, the dimension of the CW complex and its underlying manifold does not really affect the practical implementation.  

We use a toy example to instruct the general recipe. We consider the evaluation of the cup product on a CW complex of a torus $S^1 \times S^1$, as shown in Fig.~\ref{fig:CW_cup}(a, b).  The CW complex contains one vertex $v$ and two edges $e_x$ and $e_y$ corresponding to the two $S^1$.   Note that in the illustration in (b), the opposite edges are identified, and so do the vertices at the four corners.  One could still imagine that these four identified vertices have a particular order, which gives rise to the so-called branching structure with arrows on the edges pointing from lower order to higher order.  Now in order to apply the simplicial cup product formula in Eq.~\eqref{eq:cup_def},  we can subdivide the CW complex into a simplicial complex by adding a diagonal edge, as shown in (c). 

Since we are always dealing with logical states in the code space, we will always consider the cup product of cocycles. For example, two different cocycles $\alpha^1$ (red) and  $\beta^1$ (blue) on the CW complex are shown in Fig.~\ref{fig:CW_cup}(d), while highlighted edge means it takes value 1 on the corresponding cocycle (0 otherwise). 
Now when subdividing the CW complex into a simplicial complex, we need to assign the cochain value on the newly added edge, which should keep the cocycle class $[\alpha]$ or $[\beta]$ the same before and after the subdivision.  In this case, both the $\alpha$ and $\beta$ cocycle should take value 1 on the diagonal edge.  Now we can apply the simplcial cup product formula in Eq.~\eqref{eq:cup_def}:
\begin{align}
(\alpha^1 \cup \beta^1)(\Delta_l)= \alpha^1(e_x) \beta^1(e_y), \cr
(\alpha^1 \cup \beta^1)(\Delta_u)= \alpha^1(e_y) \beta^1(e_x),
\end{align}
where $\Delta_l$ and $\Delta_u$ represents the lower and upper triangle respectively.   For this particular choice of $\alpha^1$ and $\beta^1$, we have the
first and second terms equaling to 1 and 0 respectively, where the lower triangle with value 1 is highlighted in (e).   It turns out in this special example, we are not using the cochain on the added edge, but in general it may be used.   Now we add the contribution of the two triangles together, such that we can derive the following formula for the CW complex without diagonal edge:
\be
(\alpha^1 \cup \beta^1) (f)= \alpha^1(e_x) \beta^1(e_y) + \alpha^1(e_y) \beta^1(e_x). 
\ee
With the current choice of $\alpha^1$ and $\beta^1$,  we will have $(\alpha^1 \cup \beta^1) (f)=1$, as illustrated in (f).   

The above recipe can be applied to general CW complex,  for example $S^q \times S^q$.  The formula for general cells may vary case by case.    We will leave the specific details of deriving  the cup-product formula for different types of cells in our CW complex construction for qLDPC codes to an upcoming work.   

\vspace{0.15 in}

\nin (3) There is yet another approach which sits in the middle of the previous two approaches. We can still consider the quantum code $\C$ to be defined on the product CW complex $\L_c$.  The subdivision of the CW complex $\L_c$ into a simplicial complex $\L_c^\Delta$ can be physically implemented by a constant-depth circuit $V$ with a constant overhead of additional ancilla qubits initialized in $\ket{0}$ or $\ket{+}$ and then entangled with the code qubits by $V$, which has been demonstrated in Appendix D of Ref.~\cite{Lavasani2019universal} (see also Ref.~\cite{Zhu:2018CodeLong}).  Through this constant-depth circuit and ancillas, we can transform the code $\C$ to a new code $\C'$ defined on the simplicial complex $\L_c^\Delta$, where one can easily use the standard cup product evaluation to implement the constant-depth circuit $U$ corresponding to logical CCZ gate.  
Now the composition of these two types of  unitaries 
$U'=V^\dag U V$ is still a constant-depth circuit which implements a logical CCZ gate directly acted on the original code $\C$ defined on the CW complex $\L_c$ along with ancillas.  Note that after applying the final $V^\dag$ in the composed circuit, the ancillas are again disentangled from the code qubits.  The point is that we never need to do stabilizer measurements or error correction on the code $\C'$ defined on the subdivided simplicial complex $\L_c^\Delta$, since its existence is transient in the middle step of the composed circuit $U'=V^\dag U V$. One only needs to do the stabilizer measurements and error correction on the original code $\C$ defined on the CW complex, which is more compact.

\section{Discussion and outlook}
 Although we have picked good classical and qLDPC codes as input of our construction to reach desirable asymptotic scaling, we emphasize that the construction we have introduced in this paper is quite general: one can use the Tanner graphs of any classical or quantum codes as input for the manifold and product constructions, and the formalism will give rise to the collective logical CCZ gates. The input codes can also be general CSS codes instead of qLDPC codes.   The only difference in this case is that the constructed manifold may not have bounded local geometry.  In particular, this method will also be suitable for the near-term realization of small- and intermediate-scale codes with $O(100)$ to $O(1000)$ qubits such as the homological product of a bivariate bicycle code \cite{Bravyi:2024wc} and a classical expander code, or the tricycle codes based on balanced product construction \cite{jacob2025single, menon2025magic}.  Various computational topology packages, such as the CGAL library \cite{cgal:eb-24b}, can be used to numerically construct the triangulation of manifolds or the CW complexes.

In this paper, we have been focusing on the homological product construction \cite{Bravyi:2014bq}. Nevertheless, the construction also straightforwardly applies to the more general balanced product construction \cite{Breuckmann:2021_balanced}. More concretely, one can map the classical or quantum code in the balanced product construction into a manifold first, and then take a balanced product of the produced manifolds, which is essentially a fibre-bundle construction of manifolds \cite{Freedman_systole_2002, fiberbundlecode21}.   This may lead to constructions with even better distance parameters which can now go beyond the $\sqrt{N}$-distance barrier and eventually construct non-Clifford logical gates in an asymptotically good qLDPC code.  We leave this for future exploration. 

Another future direction is to optimize the parallelizability in the magic state fountain scheme by introducing a more separable intersection structure, and combine it with highly parallelizable logical Clifford gates \cite{xu2024fast} to achieve a fault-tolerant computing scheme with very low space-time overhead.  Even more interestingly, the implementation of such logical non-Clifford gates on quantum locally testable codes may also shed light on the study of quantum PCP conjecture, where the connection between polylog PCP and fault-tolerance has been recently pointed out in Ref.~\cite{anshu2024circuit}.

In a related work \cite{Hsin2024:classifying}, we have systematically classified various types of cohomology operations that can be used to implement logical gates via constant-depth circuits beyond $k$-fold cup products (the so-called color code paradigm), including Steenrod squares and new combinations of higher cup products called higher Pontryagin powers.  These new cohomology operations can lead to more exotic logical gates other than the $C^{n-1}Z$ family such as fine single-qubit rotation including $T$-gate and $R_k$-gate, as well as controlled rotation $CR_k$-gates etc.  It would be very interesting to further incorporate these new cohomology operations with the high-rate and large-distance qLDPC codes, which for example can provide fast $T$-state injection to the magic state fountain.   A systematic classification of logical gates in qLDPC codes via cohomology operations and emergent symmetries will also be an interesting potential direction.

\noindent{\it Acknowledgements} --- We thank Elia Portnoy and Michael Freedman for insightful discussion and early collaboration on this project, including the contribution of various original ideas. We appreciate the discussion with Virgile Guemard for pointing out the connection between the manifold and CW complex through deformation retraction as also stated in Ref.~\cite{guemard2025lifting}. We thank Matthew Hastings for clarifying the sparse-liftability of the balanced product construction. We thank Louis Golowich for the discussion on the related topics. We also thank Andrew Cross, Shehryar Sikander, Ben Brown, Po-Shen Hsin, Ryohei Kobayashi, and Maissam Barkeshli for the previous collaboration on related projects. G.Z. is supported by the U.S. Department of Energy, Office of Science, National Quantum Information Science Research Centers, Co-design Center for Quantum Advantage (C2QA) under contract number DE-SC0012704.

\begin{appendix}\label{app:sparse-liftability}
\section{Sparse liftability}
In order for the constructed manifold to have a bounded local geometry corresponding to the LDPC condition for the homological code built on its triangulation $\L$,  we need to make sure the $\ZZ$-lift in Def.~\ref{def:lift} is sparse:
\begin{definition}\cite{freedman:2020_manifold_from_code}
    We say a $\Z$-lift is sparse, if the sum of absolute values of the $i^\text{th}$ column of lifted boundary map (parity check matrix) $\tilde{\partial}$ dnoted by $\tilde{f}(i)$ satisfies the condition $\tilde{f}(i)= O(1)$.
\end{definition}

In the following, we analyze the sparse liftability of a hypergraph product code and balanced product code respectively.  
\subsection{Lifting hypergraph product codes}

Note that in the construction of thickened hypergraph product code, we directly lift each classical code with a naive lift to build a manifold, and further take the product of manifolds.   Here, we consider directly lift the quantum code and construct the manifold following the Freedman-Hastings mapping.

We start with two $\ZZ_2$ 1-complexes $\cA$ and $\cB$ corresponding to two classical codes, and then construct the product $\ZZ_2$ 2-complex $\cC=\cA \otimes \cB$.   The chain group of the product complex can be expressed as
$\cC_r =\bigoplus_{r=p+q}\cA_p \otimes \cB_q$, with $p,q=0,1$.  We can hence compose the $r^\text{th}$ boundary map in the product complex $\partial_r^\C$ as:
\begin{equation}\label{eq:Z2-boundary}
    \partial^\C_{(p,q)}= \partial^\cA_p  \otimes I + I \otimes \partial^\cB_q, 
\end{equation}
and 
\begin{equation}
    \partial^\C_r= \bigoplus_{p+q=r} \partial^\C_{(p,q)}, 
\end{equation}
where $(p,q)$ indicates the $(p+q)$-cell in $\C$ which is a product of the $p$-cell in $\cA$ and the $q$-cell in $\cB$. Note that here we are discussing a general situation, while for the hypergraph product of classical codes the only non-trivial boundary maps in $\cA$ and $\cB$ are just $\partial^\cA_1$ and $\partial^\cB_1$.

We hence have the exactness condition (mod 2):
\begin{align}
\partial_1 \circ \partial_2 =& (\partial^\cA_1  \otimes I + I \otimes \partial^\cB_1) \circ (\partial^\cA_1  \otimes I + I \otimes \partial^\cB_1)  \cr
=& \partial^\cA_1 \otimes \partial^\cB_1 + \partial^\cA_1 \otimes \partial^\cB_1=0 \ \text{mod} \ 2.
\end{align}

We now prove the following lemma:
\begin{lemma}
Any given hypergrpah product code constructed as a tensor product of two classical codes with $\Z_2$ or $\Z_4$ variables corresponding to the product complex $\cC=\cA \otimes \cB$ and boundary maps $\partial_r$ admits a sparse $\Z$ lift to a lifted complex $\tilde{\cC}=\tilde{\cA} \otimes \tilde{\cB}$ with  lifted boundary maps $\tilde{\partial}_r$. 
\end{lemma}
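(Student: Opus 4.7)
The plan is to produce an explicit sparse $\Z$-lift by (i) naively lifting each factor complex entrywise, and (ii) assembling the product with the standard signed Koszul convention for the tensor product of chain complexes, which is precisely what is needed to make $\tilde{\partial}^2 = 0$ hold over $\Z$ rather than only mod 2.

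Concretely, for each classical factor $\cA$ (and similarly $\cB$), I would define the lift $\tilde{\cA}$ by choosing entrywise representatives of $\partial^\cA_1$ in $\Z$. In the $\Z_2$ case, take the naive lift $0 \mapsto 0$, $1 \mapsto 1$. In the $\Z_4$ case, pick representatives in $\{-1,0,1,2\}$ so that each nonzero entry has absolute value at most $2$; this guarantees the absolute column sum of $\tilde{\partial}^\cA_1$ is at most $2$ times the Hamming column weight of $\partial^\cA_1$, which is $O(1)$ by the LDPC hypothesis. Then I would define the lifted product complex by
\begin{equation}
\tilde{\partial}_{(p,q)} \;=\; \tilde{\partial}^\cA_p \otimes I \;+\; (-1)^p\, I \otimes \tilde{\partial}^\cB_q,
\end{equation}
and
$\tilde{\partial}_r = \bigoplus_{p+q=r} \tilde{\partial}_{(p,q)}$.

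The first thing to verify is that this is actually a lift: reducing $\tilde{\partial}_{(p,q)}$ mod $2$ (resp.\ mod $4$) turns $(-1)^p$ into $+1$ and recovers Eq.~\eqref{eq:Z2-boundary}, so $\tilde{\partial}_r \bmod 2 = \partial^\cC_r$ as required by Definition \ref{def:lift}. The second is the chain-complex condition over $\Z$. A direct computation on a generator $a \otimes b \in \tilde{\cA}_1 \otimes \tilde{\cB}_1$ gives
\begin{align}
\tilde{\partial}_1 \tilde{\partial}_2 (a \otimes b)
&= \tilde{\partial}_1\!\left[(\tilde{\partial}^\cA_1 a)\otimes b - a \otimes (\tilde{\partial}^\cB_1 b)\right] \cr
&= (\tilde{\partial}^\cA_1 a)\otimes(\tilde{\partial}^\cB_1 b) - (\tilde{\partial}^\cA_1 a)\otimes(\tilde{\partial}^\cB_1 b) = 0,
\end{align}
where the cancellation is produced by the $(-1)^p$ sign on the two length-one paths through $\tilde{\cA}_0 \otimes \tilde{\cB}_1$ and $\tilde{\cA}_1 \otimes \tilde{\cB}_0$. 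This is just the standard fact that the signed tensor product of two chain complexes is a chain complex; the key point is that it works for the specific signed lifts we chose, whereas the naive unsigned lift would give $2(\tilde{\partial}^\cA_1 a)\otimes(\tilde{\partial}^\cB_1 b)$ which is nonzero in $\Z$.

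Finally I would check sparsity of the columns of $\tilde{\partial}_r$. Each column of $\tilde{\partial}^\cA_p \otimes I$ has the same absolute column sum as the corresponding column of $\tilde{\partial}^\cA_p$, which is $O(1)$ by the LDPC/sparseness assumption on $\cA$, and the analogous bound holds for $I \otimes \tilde{\partial}^\cB_q$; the signed sum then has $\tilde{f}(i) = O(1)$ uniformly in the index of $\tilde{\cC}_r$. The only place I expect any friction is the $\Z_4$ case, where one must be careful that the chosen signed representatives in $\{-1,0,1,2\}$ really reduce to the original $\Z_4$ entries and that mod-$4$ exactness of the unlifted complex, combined with the Koszul signs, does not force additional $\pm 2$ corrections to maintain strict $\tilde{\partial}^2 = 0$ over $\Z$; I expect this to be handled by the same signed-tensor-product identity above, since the cancellation is pointwise in $\tilde{\partial}^\cA_1 a \otimes \tilde{\partial}^\cB_1 b$ and does not rely on the size of the entries. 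Everything else is a routine bookkeeping of signs and column weights.
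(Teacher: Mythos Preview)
Your proposal is correct and follows essentially the same approach as the paper: naively lift each factor complex entrywise, then form the tensor product with the Koszul sign $(-1)^p$ so that the cross terms cancel over $\Z$, and read off sparsity from the factor lifts. The only cosmetic differences are your choice of $\{-1,0,1,2\}$ rather than $\{0,1,2,3\}$ for the $\Z_4$ representatives and your explicit check that the construction reduces correctly (note that mod $4$ the sign $(-1)^p$ does \emph{not} become $+1$, but this is harmless since a $\Z_4$ tensor-product complex must already carry the Koszul sign for $\partial^2=0$ to hold, so the reduction still matches).
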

\begin{proof}
We first lift the 1-complexes $
\cA$ and $\cB$ over $\ZZ_2$  to 1-complexes $
\tilde{\cA}$ and $\tilde{\cB}$ over $\Z$ respectively.  In particular, we use a naive lift that $(0 \ \text{mod} \ 2) \rightarrow 0$ and   $(1 \ \text{mod} \ 2) \rightarrow 1$.  If we start with $\ZZ_4$-chain complexes instead, we will do the following naive lift: $(0 \ \text{mod} \ 4) \rightarrow 0$,    $(1 \ \text{mod} \ 4 ) \rightarrow 1$,  $(2 \ \text{mod} \ 4) \rightarrow 2$ and  $(3 \ \text{mod} \ 4) \rightarrow 3$.   We hence get a sparse lift of the boundary maps (parity check matrices) in the 1-complexes from $\ZZ_2$ ($\ZZ_4$) to $\Z$:
\begin{equation}
    \partial_p^\cA \rightarrow \tilde{\partial}_p^\cA, \quad   \partial_p^\cB \rightarrow \tilde{\partial}_p^\cB.
\end{equation}

Now we can define the $\ZZ$-lifted boundary map in the product complex $\tilde{\cC}$$=$$\tilde{\cA} \otimes \tilde{\cB}$ as
\begin{equation}\label{eq:lifted_boundary}
    \tilde{\partial}^\C_{(p,q)}= \tilde{\partial}^\cA_p  \otimes I + (-1)^p  I \otimes \tilde{\partial}^\cB_q. 
\end{equation}
Note that we have an extra sign $(-1)^p$ compared to the $\ZZ_2$ case in Eq.~\eqref{eq:Z2-boundary}.
We hence obtain the following exactness condition: 
\begin{align}\label{eq:lifted_exactness}
\tilde{\partial}_1 \circ \tilde{\partial}_2 =& (\tilde{\partial}^\cA_1  \otimes I + I \otimes \tilde{\partial}^\cB_1) \circ (\tilde{\partial}^\cA_1  \otimes I - I \otimes \tilde{\partial}^\cB_1)  \cr
=& \tilde{\partial}^\cA_1 \otimes \tilde{\partial}^\cB_1 - \tilde{\partial}^\cA_1 \otimes \tilde{\partial}^\cB_1=0.
\end{align}
Note that $\tilde{\partial}_1 \circ \tilde{\partial}_2 =0$ holds exactly instead of mod 2, so the boundary map  definition  for the product complex in Eq.~\eqref{eq:lifted_boundary} is valid for $\Z$ or $\Z_N$ coefficients.   

We see that such a $\ZZ$-lift in Eq.~\eqref{eq:lifted_boundary}  either from $\Z_2$ or $\Z_4$ preserves the sparsity of the lifted  boundary map (parity-check matrix) $\tilde{\partial}^\C_r$, since the components $\tilde{\partial}^\cA_p$ and $\tilde{\partial}^\cB_q$ are sparse.

\end{proof}
We note that the product of naive lifts, as used in the above proof, is different from the naive lift of the product,  which in general cannot  satisfy the exactness condition in Eq.~\eqref{eq:lifted_exactness}.  

\subsection{Lifting balanced product codes}

We now generalize the sparse liftability of the hypergraph product code to the more general cases of the balanced product codes, which can also be viewed as a fiber bundle.

\begin{lemma}
Any given balanced product code constructed from two classical codes with $\Z_2$ or $\Z_4$ variables (the associated  1-complexes being $\cA$ and $\cB$) corresponding to the 2-complex $\cC=\cA \otimes_G \cB$ admits a sparse $\Z$ lift ($G$ is a ring).
\end{lemma}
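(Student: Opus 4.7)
The plan is to reduce this lemma to the hypergraph product case by lifting the \emph{group ring structure} first, and then reusing the sign-trick argument from the previous lemma verbatim over the lifted group ring.  Recall that a balanced product $\cC = \cA \otimes_G \cB$ is formed from chain complexes $\cA$ of right modules over $R[G]$ and $\cB$ of left modules over $R[G]$ (with $R = \Z_2$ or $R = \Z_4$ here), with boundary maps that are $R[G]$-equivariant; the balanced tensor product $\otimes_{R[G]}$ identifies $ag \otimes b$ with $a \otimes gb$.  The first step is to lift the coefficient ring by the naive map $R[G] \to \Z[G]$ applied entry-wise on group-ring coefficients, i.e.\ $(0 \bmod 2) \mapsto 0$, $(1 \bmod 2)\mapsto 1$ (and similarly for $\Z_4 \to \Z$).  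Because this naive lift is defined pointwise on group-ring basis elements $g \in G$, it automatically commutes with the left- and right-multiplication actions of $G$.

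Next, I would lift the boundary maps $\partial^\cA_1$ and $\partial^\cB_1$ of the two classical codes entry-by-entry over the group ring, producing $\tilde\partial^\cA_1 : \tilde\cA_1 \to \tilde\cA_0$ and $\tilde\partial^\cB_1 : \tilde\cB_1 \to \tilde\cB_0$ as $\Z[G]$-module maps.  Since $\cA$ and $\cB$ are $1$-complexes, there is no exactness condition that needs to be checked \emph{within} a single factor; one only needs to check that the naive lifts remain $G$-equivariant, which is immediate because the action of $G$ permutes the group-ring basis, and the lift acts only on integer coefficients.  With the lifts in hand, I would define the lifted balanced product $\tilde\cC = \tilde\cA \otimes_{\Z[G]} \tilde\cB$ with the sign-twisted boundary
\[
 \tilde\partial^\cC_{(p,q)} \;=\; \tilde\partial^\cA_p \otimes I \;+\; (-1)^p\, I \otimes \tilde\partial^\cB_q,
\]
exactly as in Eq.~\eqref{eq:lifted_boundary}, but now with $\otimes$ interpreted as $\otimes_{\Z[G]}$.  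This descends to the balanced tensor product precisely because $\tilde\partial^\cA$ and $\tilde\partial^\cB$ are $\Z[G]$-equivariant.

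The verification $\tilde\partial_1 \circ \tilde\partial_2 = 0$ over $\Z$ is then an exact copy of the calculation in Eq.~\eqref{eq:lifted_exactness}: the two cross terms $\tilde\partial^\cA_1 \otimes \tilde\partial^\cB_1$ appear with opposite signs from the sign twist and cancel exactly, not merely modulo $2$.  Sparsity is inherited from the factors: each column of $\tilde\partial^\cA_p \otimes I$ over $\Z[G]$ has the same support pattern as the corresponding column of $\partial^\cA_p$ over $R[G]$, and similarly for the $\cB$ factor; after quotienting by the diagonal $G$-action one obtains at most $O(1)$ nonzero entries per column of $\tilde\partial^\cC_{(p,q)}$, because the column weights of the two input classical codes are $O(1)$ by assumption.

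The main obstacle I anticipate is bookkeeping rather than substance: one must ensure that the naive coefficient lift, the $\Z[G]$-equivariance, and the descent to the quotient $\otimes_{\Z[G]}$ all compose coherently, so that $\tilde\partial^\cC$ is both a well-defined $\Z$-linear map on the balanced product and genuinely preserves sparsity after identifying $G$-orbits.  Since the lift is done entry-by-entry on group-ring coefficients and the sign twist lives only in the ambient $\Z$-structure (not in $G$), these compatibilities hold automatically, and the argument reduces to the same computation that worked for the hypergraph product.
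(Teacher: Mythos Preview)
Your proposal is correct and follows essentially the same approach as the paper: lift each factor naively, use the sign-twisted boundary formula, and check exactness via cancellation of the cross terms. The only real difference is cosmetic---you phrase everything in terms of the group ring $\Z[G]$ and $\Z[G]$-equivariance, whereas the paper works directly with the balanced-product equivalence relation $(x\cdot g)\otimes y = x\otimes(g\cdot y)$ and checks the lifted boundary respects it; your group-ring framing is arguably cleaner and makes the preservation of $G$-equivariance under the naive lift more transparent, but the substance of both arguments is identical.
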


\begin{proof}
We start with the two 1-complexes $\cA$ and $\cB$ associated with the classical codes with the associated boundary map being $\partial^\cA \equiv \partial^\cA_1$ and $\partial^\cB= \partial^\cB_1$.  We first do a naive lift from $\ZZ_2$ ($\ZZ_4$) to $\Z$: 
\begin{equation}
\partial^\cA \rightarrow \tilde{\partial}^\cA, \quad   \partial^\cB \rightarrow \tilde{\partial}^\cB,
\end{equation}
which is a sparse lift.   This gives rise to the lifted chain complex $\tilde{\cA}$ and $\tilde{\cB}$.

We then define the $\Z$-lifted boundary map in the $\Z$-lifted balanced product 2-complex $\tilde{\cC}=\tilde{\cA} \otimes_G \tilde{\cB}$ as:
\begin{equation}\label{eq:lifted_boundary_balanced}
    \tilde{\partial}^\C (x \otimes y)= \tilde{\partial}^\cA x  \otimes y + (-1)^p  x \otimes \tilde{\partial}^\cB y, 
\end{equation}
where $x \in \tilde{\cA}_p$ and $y \in \tilde{\cB}_q$ are $p$-chain and $q$-chain respectively ($p,q=0,1$).  For the balanced product we have the equivalence relation between right and left action of $g \in G$:  
\begin{equation}
(x \cdot g)\otimes y =x \otimes (g\cdot y).
\end{equation}
Note that the lifted boundary map respect the above equivalence relation:
\begin{align}
\tilde{\partial}\big( (x \cdot g) \otimes y \big)
&= \tilde{\partial}^{\cA}(x \cdot g) \otimes y
   + (-1)^p \, (x \cdot g) \otimes \tilde{\partial}^{\cB} y \cr
&= (\tilde{\partial}^{\cA} x) \cdot g \otimes y
   + (-1)^p \, x \cdot g \otimes \tilde{\partial}^{\cB} y \cr
&= \tilde{\partial}^{\cA} x \otimes (g \cdot y)
   + (-1)^p \, x \otimes \big( g \cdot \tilde{\partial}^{\cB} y \big) \cr
&= \tilde{\partial} \big( x \otimes (g \cdot y) \big).
\end{align}

We then check the exactness condition:
\begin{align}
\tilde{\partial}^2(x \otimes y)
&= \tilde{\partial}\big( \tilde{\partial}^{\cA} x \otimes y \;+\; (-1)^p\, x \otimes \tilde{\partial}^{\cB} y \big) \cr
&= (\tilde{\partial}^{\cA})^2 x \otimes y 
\;+\; (-1)^{p-1} \, \tilde{\partial}^{\cA} x \otimes \tilde{\partial}^{\cB} y \cr
&\quad\;+\; (-1)^p \, \tilde{\partial}^{\cA} x \otimes \tilde{\partial}^{\cB} y
\;+\; (-1)^{2p} \, x \otimes (\tilde{\partial}^{\cB})^2 y \cr
&= 0 
\;+\; \big[ (-1)^{p-1} + (-1)^p \big] \tilde{\partial}^{\cA} x \otimes \tilde{\partial}^{\cB} y
\;+\; 0 \cr
&= 0.
\end{align}

We hence have successfully lifted the boundary map and chain complex.  The lift is sparse since the lifted boundary map $\tilde{\partial}^\C$ (parity check matrix) in Eq.~\eqref{eq:lifted_boundary_balanced} is a sum of two sparse matrices.

\end{proof}
\end{appendix}

\bibliography{mybib_merge.bib, TI.bib}

\end{document}